\theoremstyle{definition} 
\newtheorem{defn}{Definition}
\theoremstyle{assumption} 
\theoremstyle{proposition} 
\newtheorem{prop}{Proposition}
\theoremstyle{lemma} 
\newtheorem{lemma}{Lemma}
\theoremstyle{proposition} 
\newtheorem{rem}{Remark}
\theoremstyle{definition} 
\newtheorem{example}{Example}
\newtheorem{theorem}{Theorem}
\begin{document}

\title{\vspace{-50pt}Generalized Poisson Difference Autoregressive Processes\thanks{Authors' research used the SCSCF and HPC multiprocessor cluster systems at University Ca' Foscari. This work was funded in part by the Universit\'e Franco-Italienne "Visiting Professor Grant" UIF 2017-VP17\_12, by the French government under management of Agence Nationale de la Recherche as part of the "Investissements d'avenir" program, reference ANR-19-P3IA-0001 (PRAIRIE 3IA Institute) and by the Institut Universitaire de France. 
}}
\author{\hspace{-20pt} Giulia Carallo$^{\dag}$\setcounter{footnote}{3}\footnote{Corresponding author: \href{mailto:giulia.carallo@unive.it}{giulia.carallo@unive.it} (G. Carallo). Other contacts:  \href{mailto:r.casarin@unive.it}{r.casarin@unive.it} (R. Casarin). \href{mailto:Christian.Robert@ceremade.dauphine.fr}{Christian.Robert@ceremade.dauphine.fr} (C. P. Robert).} \hspace{15pt}
Roberto Casarin$^{\dag}$ \hspace{15pt} Christian P. Robert$^{\ddag\sharp}$ 
        \\ 
        \vspace{-7pt}
        \\
        {\centering {\small{$^\dag$Ca' Foscari University of Venice, Italy}}}\\
        {\centering {\small{$^\ddag$Universit\'e Paris-Dauphine, France}}}\\
          {\centering {\small{$^\sharp$University of Warwick, United Kindom}}}        
}
\vspace{-5pt}
\date{}
\maketitle

\vspace{-10pt}

\textbf{Abstract.}  
This paper introduces a new stochastic process with values in the set $\mathbb{Z}$ of integers with sign. The increments of process are Poisson differences and the dynamics has an autoregressive structure. We study the properties of the process and exploit the thinning representation to derive stationarity conditions and the stationary distribution of the process. We provide a Bayesian inference method and an efficient posterior approximation procedure based on Monte Carlo. Numerical illustrations on both simulated and real data show the effectiveness of the proposed inference.\\

\medskip

\textbf{Keywords}: Bayesian inference, Counts time series, Cyber risk, Poisson Processes.

\medskip 

\textbf{MSC2010 subject classifications.} Primary 62G05, 62F15, 62M10, 62M20.\\

\section{Introduction}

In many real-world applications, time series of counts are commonly observed given the discrete nature of the variables of interest. Integer-valued variables 
 appear very frequently in many fields, such as medicine (see \cite{CarRoyLam1999}), epidemiology (see \cite{Zeg1988} and \cite{DavDunWan1999}), finance (see \cite{LieMolPoh2006} and \cite{RydShe2003}), economics (see \cite{Fre1998} and \cite{FreCab2004}), in social sciences (see \cite{PedKar2011}), sports (see \cite{ShaMoy2016}) and oceanography (see \cite{CunVasBou2018}). In this paper, we build on Poisson models, which is one of the most used model for counts data and propose a new model for integer-valued data with sign based on the generalized Poisson difference (GPD) distribution. An advantage in using this distribution relies on the possibility to account for overdispersed data with more flexibility, with respect to the standard Poisson difference distribution, a.k.a. Skellam distribution. Despite of its flexibility, GPD models have not been investigated and applied to many fields, yet. \cite{ShaMoy2014} proposed a GPD distribution obtained as the difference of two underling generalized Poisson (GP) distributions with different intensity parameters. They showed that this distribution is a special case of the GPD by \cite{Con1986} and studied its properties. They provided a Bayesian framework for inference on GPD and a zero-inflated version of the distribution to deal with the excess of zeros in the data. \cite{ShaMoy2016} showed empirically that GPD can perform better than the Skellam model.
 
As regards to the construction method, two main classes of models can be identified in the literature: parameter driven and observation driven. In parameter-driven models the parameters are functions of an unobserved stochastic process, and the observations are independent conditionally on the latent variable. In the observation-driven models the parameter dynamics is a function of the past observations.  Since this paper focuses on the observation-driven approach, we refer the reader to \cite{MacZuc1997} for a review of parameter-driven models.

Thinning operators are a key ingridient for the analysis of observation-driven models. The mostly used thinning operator is the binomial thinning, introduced by \cite{SteVHa1979} for the definition of self-decomposable distribution for positive integer-valued random variables. In mathematical biology, the binomial thinning can be interpreted as natural selection or reproduction, and in probability it is widely applied to study integer-valued processes. The binomial thinning has been generalized along different directions. \cite{Lat1998} proposed a generalized binomial thinning where individuals can reproduce more than once. \cite{KimPar2008} introduced the signed binomial thinning, in order to allows for negative values. \cite{Joe1996} and \cite{ZheBasDat2007} introduced the random coefficient thinning to account for external factors that may affect the coefficient of the thinning operation, such as unobservable environmental factors or states of the economy. When the coefficient follows a beta distribution one obtain the beta-binomial thinning (\cite{McK1985}, \cite{McK1986} and \cite{Joe1996}). \cite{AlOAly1992}, proposed the iterated thinning, which can be used when the process has negative-binomial marginals. \cite{AlzAlO1993} introduced the quasi-binomial thinning, that is more suitable for generalized Poisson processes. \cite{ZhaWanZhu2010} introduced the signed generalized power series thinning operator, as a generalization of \cite{KimPar2008} signed binomial thinning. Thinning operation can be combined linearly to define new operations such as the binomial thinning difference (\cite{Fre2010}) and the quasi-binomial thinning difference (\cite{CunVasBou2018}). For a detailed review of the thinning operations and their properties different surveys can be consulted: \cite{MacZuc1997}, \cite{KedFok2005}, \cite{McK2003}, \cite{Wei2008}, \cite{ScoWeiGou2015}. In this paper, we apply the quasi-binomial thinning difference.

In the integer-valued autoregressive process literature, thinning operations have been used either to define a process, such as in the literature on integer-valued autoregressive-moving average models (INARMA), or to study the properties of a process, such as in the literature on integer-valued GARCH (INGARCH). INARMA have been firstly introduced by \cite{McK1986} and \cite{AlOAlz1987} by using the binomial thinning operator. \cite{DuLi1991} extended to the higher order $p$ the first-order INAR model of \cite{AlOAlz1987}. \cite{KimPar2008} introduced an integer-valued autoregressive process with signed binomial thinning operator, INARS($p$), able for time series defined on $\mathbb{Z}$. \cite{AndKar2014}introduced SINARS, that is a special case of INARS model with Skellam innovations. In order to allow for negative integers, \cite{Fre2010} proposed a true integer-valued autoregressive model (TINAR(1)), that can be seen as the difference between two independent Poisson INAR(1) process. \cite{AlzAlO1993} have studied an integer-valued ARMA process with Generalized Poisson marginals while \cite{AlzOma2014} proposed a Poisson difference INAR(1) model. \cite{CunVasBou2018} firstly applied the GPD distribution to build a stochastic process. The authors proposed an INAR with GPD marginals and provided the properties of the process, such as mean, variance, kurtosis and conditional properties. 

\cite{RydShe2000} introduced heteroskedastic integer-valued processes with Poisson marginals. Later on, \cite{Hei2003} introduced an autoregressive conditional Poisson model and \cite{Fer2006} proposed the INGARCH process.  Both models have Poisson margins. \cite{Zhu2012} defined a INGARCH process to model overdispersed and underdispersed count data with GP margins and \cite{AloAlzOma2018} proposed a Skellam model with GARCH dynamics for the variance of the process. \cite{KooLitLuc2014} proposed a Generalized Autoregressive Score (GAS) Skellam model. In this paper, we extend \cite{Fer2006} and \cite{Zhu2012} by assuming GPD marginals for the INGARCH model, and use the quasi-binomial thinning difference to study the properties of the new process.

%

Another contribution of the paper regards the inference approach. In the literature, maximum likelihood estimation has been widely investigated for integer-valued processes, whereas a very few papers discuss Bayesian inference procedures. \cite{CheLee2016} introduced Bayesian zero-inflated GP-INGARCH, with structural breaks. \cite{ZhuLi2009} proposed a Bayesian Poisson INGARCH(1,1) and \cite{Chen2016} a Bayesian Autoregressive Conditional Negative Binomial model. In this paper, we develop a Bayesian inference procedure for the proposed GPD-INGARCH process and a Markov chain Monte Carlo (MCMC) procedure for posterior approximation. One of the advantages of the Bayesian approach is that extra-sample information on the parameters value can be included in the estimation process through the prior distributions. Moreover, it can be easily combined with a data augmentation strategy to make the likelihood function more tractable.

We apply our model to a cyber-threat dataset and contribute to cyber-risk literature providing evidence of temporal patterns in the mean and variance of the threats, which can be used to predict threat arrivals. Cyber threats are increasingly considered as a top global risk for the financial and insurance sectors and for the economy as a whole \citep[e.g.][]{EIOPA2}. As pointed out in \cite{Hass16}, the frequency of cyber events substantially increased in the past few years and cyber-attacks occur on a daily basis. Understanding cyber-threats dynamics and their impact is critical to ensure effective controls and risk mitigation tools. Despite these evidences and the relevance of the topic, the research on the analysis of cyber threats is scarce and scattered in different research areas such as cyber security \citep{Agra18}, criminology \cite{Bren04}, economics \cite{Anderson610} and sociology. In statistics there are a few works on modelling and forecasting cyber-attacks. \cite{Xu2017} introduced a copula model to predict effectiveness of cyber-security. \cite{Wer17} used an autoregressive integrated moving average model to forecast the number of daily cyber-attacks. \cite{Edwards2015HypeAH} apply Bayesian Poisson and negative binomial models to analyse data breaches and find evidence of over-dispersion and absence of time trends in the number of breaches. See \cite{Hus2018} for a review on modelling cyber threats.

The paper is organized as follows. In Section 2 we introduce the parametrization used for the GPD and define the GPD-INGARCH process. Section 3 aims at studying the properties of the process. Section 4 presents a Bayesian inference procedure. Section 5 and 6 provide some illustration on simulated and real data, respectively. Section 7 concludes.

\section{Generalized Poisson Difference INGARCH}\label{Model}
A random variable $X$ follows a Generalized Poisson (GP) distribution if and only if its probability mass function (pmf) is
\begin{equation}
P_{x}(\theta, \lambda)=\frac{\theta(\theta +x\lambda)^{x-1}}{x!} e^{-\theta-x\lambda}\;\;\;\;\;\;\;\;\; x=0, 1, 2, \ldots
\end{equation}
with parameters $\theta >0$ and $0\leq \lambda < 1$ \citep[see][]{Con1986}. We denote this distribution with $GP(\theta,\lambda)$. Let $X\sim GP(\theta_{1}, \lambda)$ and $Y\sim GP(\theta_{2}, \lambda)$ be two independent GP random variables, \cite{Con1986} showed that the probability distribution of $Z=(X-Y)$ follows a Generalized Poisson Difference distribution (GDP) with pmf:
\begin{equation}\label{Eq31_maintext}
P_z(\theta_1,\theta_2,\lambda)=e^{-\theta_{1}-\theta_{2}-z\lambda} \sum_{y=0}^{\infty} \frac{\theta_{2}(\theta_{2}+y\lambda)^{y-1}}{y!} \frac{\theta_{1}(\theta_{1}+(y+z)\lambda)^{y+z-1}}{(y+z)!} e^{-2y\lambda}
\end{equation}
where $z$ takes integer values in the interval $(-\infty, +\infty)$ and $0<\lambda <1$ and $\theta_{1},\theta_{2} >0$ are the parameters of the distribution. See Appendix \ref{App:GPD} for a more general definition of the GPD with possibly negative $\lambda$.

In the following Lemma we state the convolution property of the GPD distribution since which will be used in this paper. Appendix \ref{Proof:convolution} provides an original proof of this result.
\begin{lemma}[\textbf{Convolution Property}]\label{Lemma:convo}
The sum of two independent random GPD variates, $X+Y$, with parameters $(\theta_{1},\theta_{2},\lambda)$ and $(\theta_{3},\theta_{4},\lambda)$ is a GPD variate with parameters $(\theta_{1}+\theta_{3},\theta_{2}+\theta_{4},\lambda)$. The difference of two independent random GPD variates, $X-Y$, with parameters $(\theta_{1},\theta_{2},\lambda)$ and $(\theta_{3},\theta_{4},\lambda)$ is a GPD variate with parameters $(\theta_{1}+\theta_{4},\theta_{2}+\theta_{3},\lambda)$.
\end{lemma}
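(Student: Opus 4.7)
The plan is to exploit the definitional representation of the GPD as a difference of two independent generalized Poisson variates together with the convolution property of the GP family itself. Let $X\sim GPD(\theta_1,\theta_2,\lambda)$ and $Y\sim GPD(\theta_3,\theta_4,\lambda)$ be independent. By the construction given just before equation~(\ref{Eq31_maintext}), I can realize $X$ and $Y$ on a common probability space as $X=U_1-V_1$ and $Y=U_2-V_2$, where $U_1\sim GP(\theta_1,\lambda)$, $V_1\sim GP(\theta_2,\lambda)$, $U_2\sim GP(\theta_3,\lambda)$, $V_2\sim GP(\theta_4,\lambda)$ are four mutually independent generalized Poisson variates (the crucial assumption being that the four GPs share the common shape parameter $\lambda$).

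The workhorse is the convolution property of the GP distribution: if $A\sim GP(\theta_a,\lambda)$ and $B\sim GP(\theta_b,\lambda)$ are independent, then $A+B\sim GP(\theta_a+\theta_b,\lambda)$. This is classical \citep[Consul 1989,][]{Con1986} and admits either a combinatorial pmf-level verification via an Abel--Jensen type identity, or a probability generating function argument using the characterization of the GP pgf as $\exp(\theta(\eta(s)-1))$ with $\eta(s)$ solving $\eta=s\exp(\lambda(\eta-1))$. Either route shows that $\lambda$ plays a ``shape'' role preserved under convolution while $\theta$ adds.

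For the sum, I would then regroup
$$X+Y=(U_1+U_2)-(V_1+V_2),$$
so that the GP convolution gives $U_1+U_2\sim GP(\theta_1+\theta_3,\lambda)$ and $V_1+V_2\sim GP(\theta_2+\theta_4,\lambda)$; these two sums are independent because they are functions of disjoint subsets of the four underlying independent variates. Applying the definition of the GPD yields $X+Y\sim GPD(\theta_1+\theta_3,\theta_2+\theta_4,\lambda)$. For the difference, the symmetric trick
$$X-Y=(U_1+V_2)-(U_2+V_1)$$
identifies $U_1+V_2\sim GP(\theta_1+\theta_4,\lambda)$ and $U_2+V_1\sim GP(\theta_2+\theta_3,\lambda)$, again independent, giving $X-Y\sim GPD(\theta_1+\theta_4,\theta_2+\theta_3,\lambda)$.

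The genuinely non-trivial step is the GP convolution property, which this plan imports as a known result; if one insisted on a self-contained derivation, one would instead have to attack the convolution directly at the level of the explicit pmf, computing $\sum_{k\in\mathbb{Z}} P_k(\theta_1,\theta_2,\lambda)\,P_{z-k}(\theta_3,\theta_4,\lambda)$, interchanging the two inner series that appear in~(\ref{Eq31_maintext}), reindexing, and collapsing the resulting multiple sum using an Abel--Hurwitz / generalized Vandermonde identity to recover $P_z(\theta_1+\theta_3,\theta_2+\theta_4,\lambda)$. That route is bookkeeping-heavy but routine, and is presumably the ``original proof'' referenced in Appendix~\ref{Proof:convolution}; the representation-based argument above is much cleaner and is how I would organize the argument in practice.
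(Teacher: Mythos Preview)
Your proposal is correct and is essentially the same argument the paper gives: represent each GPD variate as a difference of independent GP variates, regroup, invoke the GP convolution property (stated in the paper as Theorem~\ref{Thm1}, due to Consul--Jain), and read off the GPD parameters. Your speculation that Appendix~\ref{Proof:convolution} contains a direct pmf-level computation is off---the paper's ``original proof'' is precisely the clean representation-based argument you wrote out, not the Abel--Hurwitz route.
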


We use an equivalent pmf and a re-parametrization of the GPD, which are better suited for the definition of a INGARCH model. A random variable $Z$ follows a GPD if and only if its probability distribution is
\begin{small}
\begin{equation}\label{e214rmaintext}
P_z(\mu,\sigma^2,\lambda)= e^{-\sigma^{2}-z\lambda} \hspace{-18pt}\sum_{s=\max(0,-z)}^{+\infty}\!\! \frac{1}{4}\frac{\sigma^{4}+\mu^{2}}{s!(s+z)!}\!\!\left[ \frac{\sigma^{2}+\mu}{2}+(s+z)\lambda\right]^{s+z-1}\!\! \left[ \frac{\sigma^{2}-\mu}{2}+s\lambda\right]^{s-1}\!\! e^{-2\lambda s}
\end{equation}
\end{small}
We denote this distribution with $GPD(\mu,\sigma^2,\lambda)$. 
\begin{rem}\label{RemEquiv}
The probability distribution in Eq. \ref{e214rmaintext} is equivalent to the one in Eq. \ref{Eq31_maintext} up to the reparametrization $\mu = \theta_{1}-\theta_{2}$ and $\sigma^{2}=\theta_{1}+\theta_{2}$. See Appendix \ref{App:proofs} for a proof.
\end{rem}

The mean, variance, skewness and kurtosis of a GDP random variable can be obtained in close form by exploiting the representation of the GDP as difference between independent GP random variables.
\begin{rem}\label{remMom}
Let $Z\sim GPD(\mu,\sigma^2,\lambda)$, then mean and variance are:
\begin{equation}
E(Z)=\frac{\mu}{1-\lambda}, \qquad V(Z)=\frac{\sigma^{2}}{(1-\lambda)^{3}}
\end{equation}
and the Pearson skewness and kurtosis are:
\begin{equation}
S(Z)= \frac{\mu}{\sigma^{3}} \frac{(1+2\lambda)}{\sqrt{1-\lambda}}, \qquad K(Z)=3+\frac{1+8\lambda+6\lambda^{2}}{\sigma^{2}(1-\lambda)}
\end{equation}
\end{rem}
See Appendix \ref{App:proofs} for a proof.\\
\begin{figure}[t]
\centering
\begin{tabular}{cc}
(a) GPD$(\mu,\sigma^2,\lambda)$ distribution for $\sigma^{2}=10$ &(b) GPD$(\mu,\sigma^2,\lambda)$ distribution for $\mu=2$\\
\includegraphics[scale=0.4]{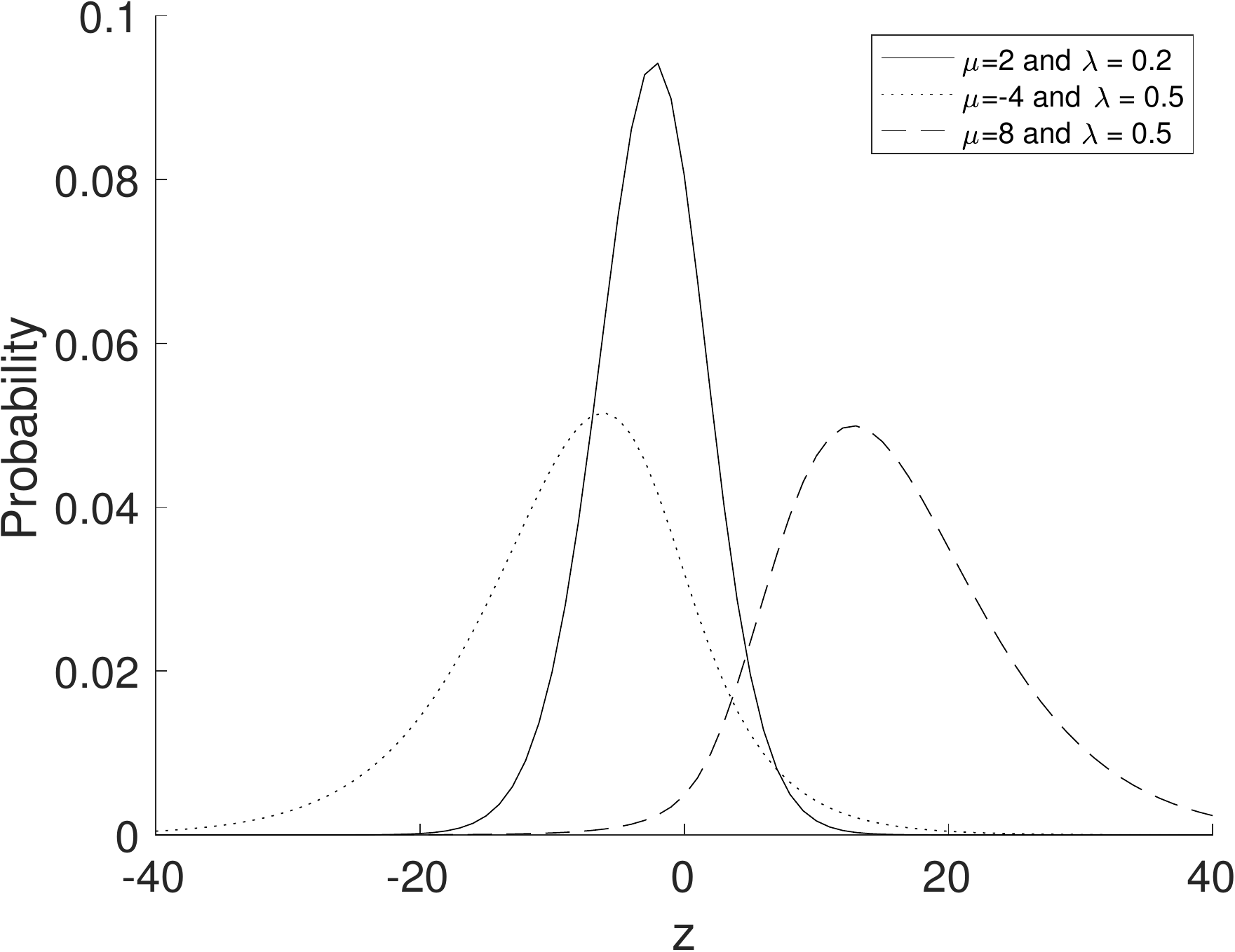}&
\includegraphics[scale=0.4]{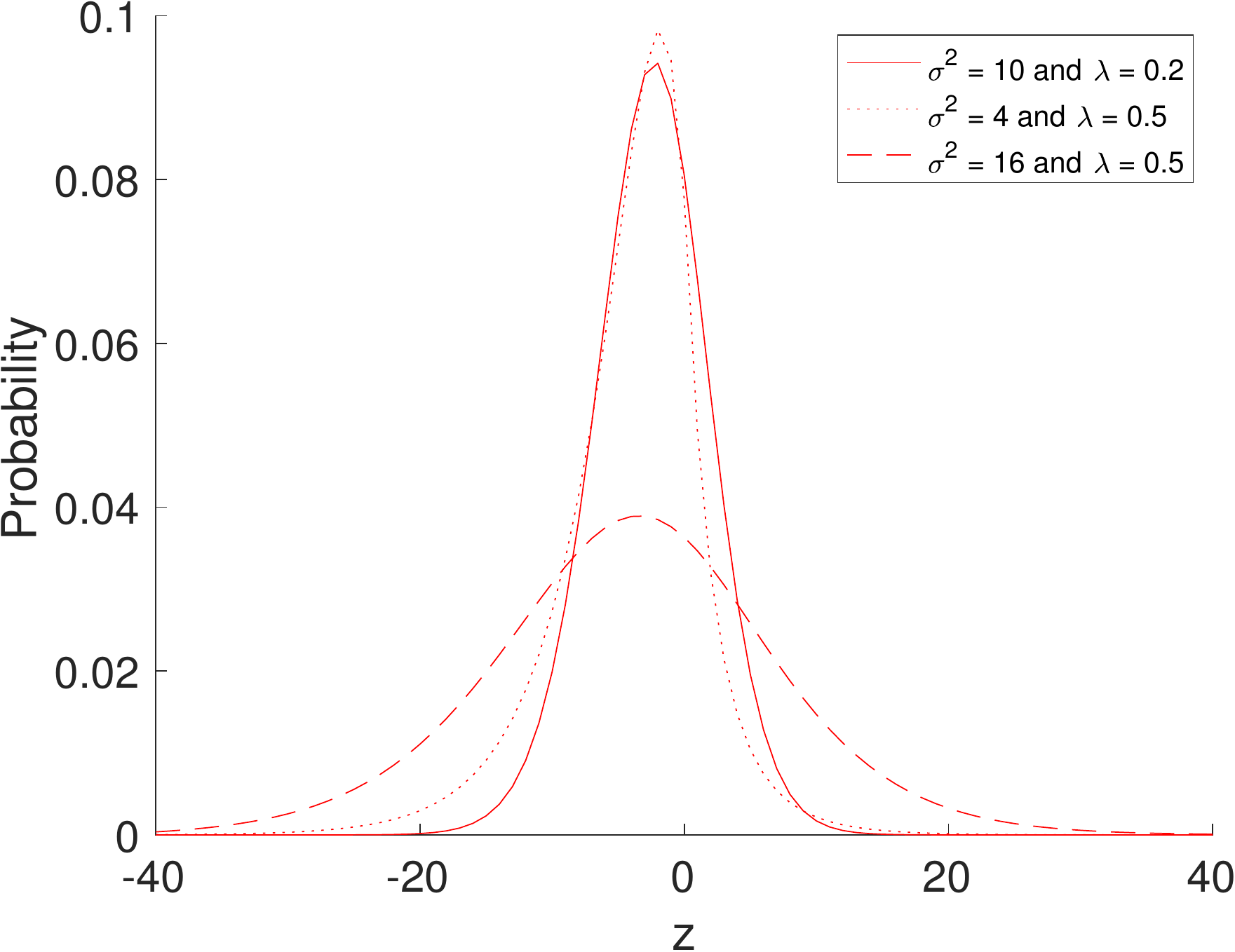}\\
\end{tabular}
\caption{Generalized Poisson difference distribution GPD$(\mu,\sigma^2,\lambda)$ for some values of $\lambda$, $\mu$ and $\sigma^{2}$. The distribution with $\lambda=0.2$, $\mu=2$ and $\sigma^{2}=10$ (solid line) is taken as baseline in both panels.}
\label{fig:1maintext}
\end{figure}

Figure \ref{fig:1maintext} shows the sensitivity of the probability distribution with respect to the location parameter $\mu$ (panel a), the scale parameter $\sigma^{2}$ (panel b) and the skewness parameter $\lambda$ (different lines in each plot). For given values of $\lambda$ and $\mu$, when $\sigma^{2}$ decreases the dispersion of the GPD increases (dotted and dashed lines in the right plot). For given values of $\lambda$ and $\sigma^{2}$, the distribution is right-skewed for $\mu=8$, which corresponds to $S(Z)=0.7155$, and left-skewed for $\mu=-4$, which corresponds to $S(Z)=-0.3578$, (dotted and dashed lines in the left plot). See Appendix \ref{App:GPD} for further numerical illustrations.

Differently from the usual GARCH$(p,q)$ process (e.g., see \cite{FraZak2019}), the INGARCH$(p,q)$ is defined as an integer-valued process $\lbrace Z_{t}\rbrace_{t \in \mathbb{Z}}$, where $Z_{t}$ is a series of counts. Let $\mathcal{F}_{t-1}$ be the $\sigma$-field generated by $\lbrace Z_{t-j}\rbrace_{j\geq 1}$, then the GPD-INGARCH$(p,q)$ is defined as 
\begin{equation*}\label{EqG2}
Z_{t}\vert \mathcal{F}_{t-1} \sim GPD(\tilde{\mu}_{t},\tilde{\sigma}^{2}_{t},\lambda)
\end{equation*}
with
\begin{equation}\label{ingarch}
\frac{\tilde{\mu}_{t}}{1-\lambda} =\mu_{t}=\alpha_{0}+\sum_{i=1}^{p}\alpha_{i}Z_{t-i}+\sum_{j=1}^{q} \beta_{j}\mu_{t-j}
\end{equation}
where $\tilde{\mu}_{t-j}=\mu_{t-j}(1-\lambda)$, $\alpha_{0}\in \mathbb{R}$, $\alpha_{i}\geq 0$, $\beta_{j}\geq 0$, $i=1,\ldots,p$, $p\geq 1$, $j=1,\ldots, q$, $q \geq 0$. For $q=0$ the model reduces to a GPD-INARCH$(p)$ and for $\lambda=0$ one obtains a Skellam INGARCH$(p,q)$ which extends to Poisson differences the Poisson INGARCH$(p,q)$ of \cite{Fer2006}. From the properties of the GPD, the conditional mean $\mu_t=E(Z_{t}\vert \mathcal{F}_{t-1})$ and variance $\sigma^2_t=V(Z_{t}\vert \mathcal{F}_{t-1})$ of the process are:
\begin{eqnarray}
\mu_{t}&=&\frac{\tilde{\mu}_{t}}{1-\lambda},\quad \sigma^{2}_{t}=\frac{\tilde{\sigma}^{2}_{t}}{(1-\lambda)^{3}}
\end{eqnarray}
respectively. 
In the application, we assume $\sigma^{2}_{t}=\vert \mu_{t} \vert \phi$. Following the parametrization defined in Remark \ref{RemEquiv}, we need to impose the constrain $\phi >(1-\lambda)^{-2}$, in order to have a well-defined GPD distribution. In Fig. \ref{GARCH11}, we provide some simulated examples of the GPD-INGARCH$(1,1)$ process for different values of the parameters $\alpha_{0}$, $\alpha_{1}$ and $\beta_{1}$.
\begin{figure}[h!]	
\begin{tabular}{cc}
\textbf{Low persistence} & \textbf{High persistence}\\
$(\alpha_{1}=0.23$, $\beta_{1}=0.25)$ & $(\alpha_{1}=0.32,\beta_{1}=0.59)$\\
(a) {\small $\alpha_{0}=1.55$, $\lambda=0.4$, $\phi=3$} & (b) {\small $\alpha_{0}=1.55$, $\lambda=0.4$, $\phi=3$}\\
\includegraphics[height=4cm,width=7.5cm]{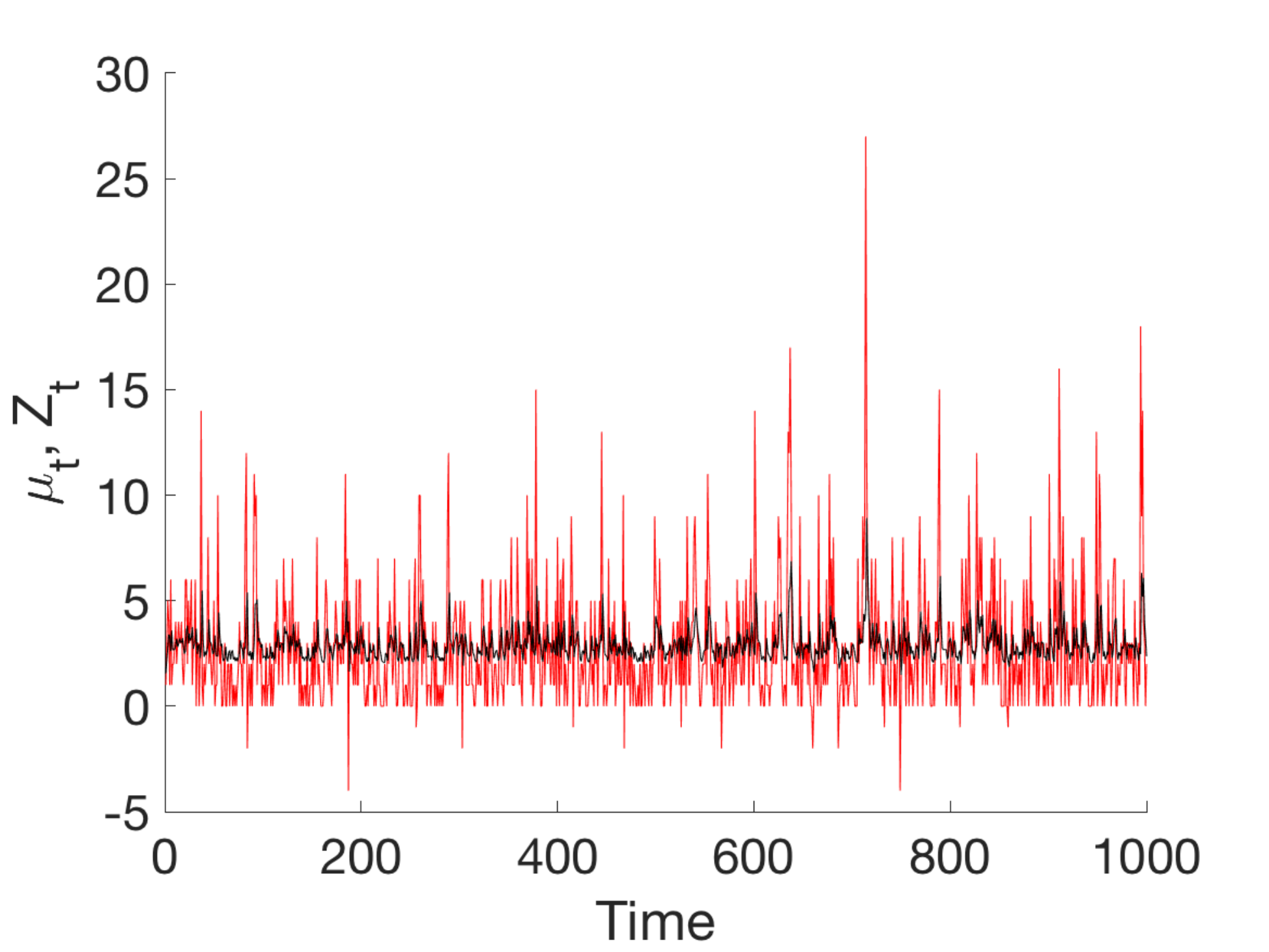}&
\includegraphics[height=4cm,width=7.5cm]{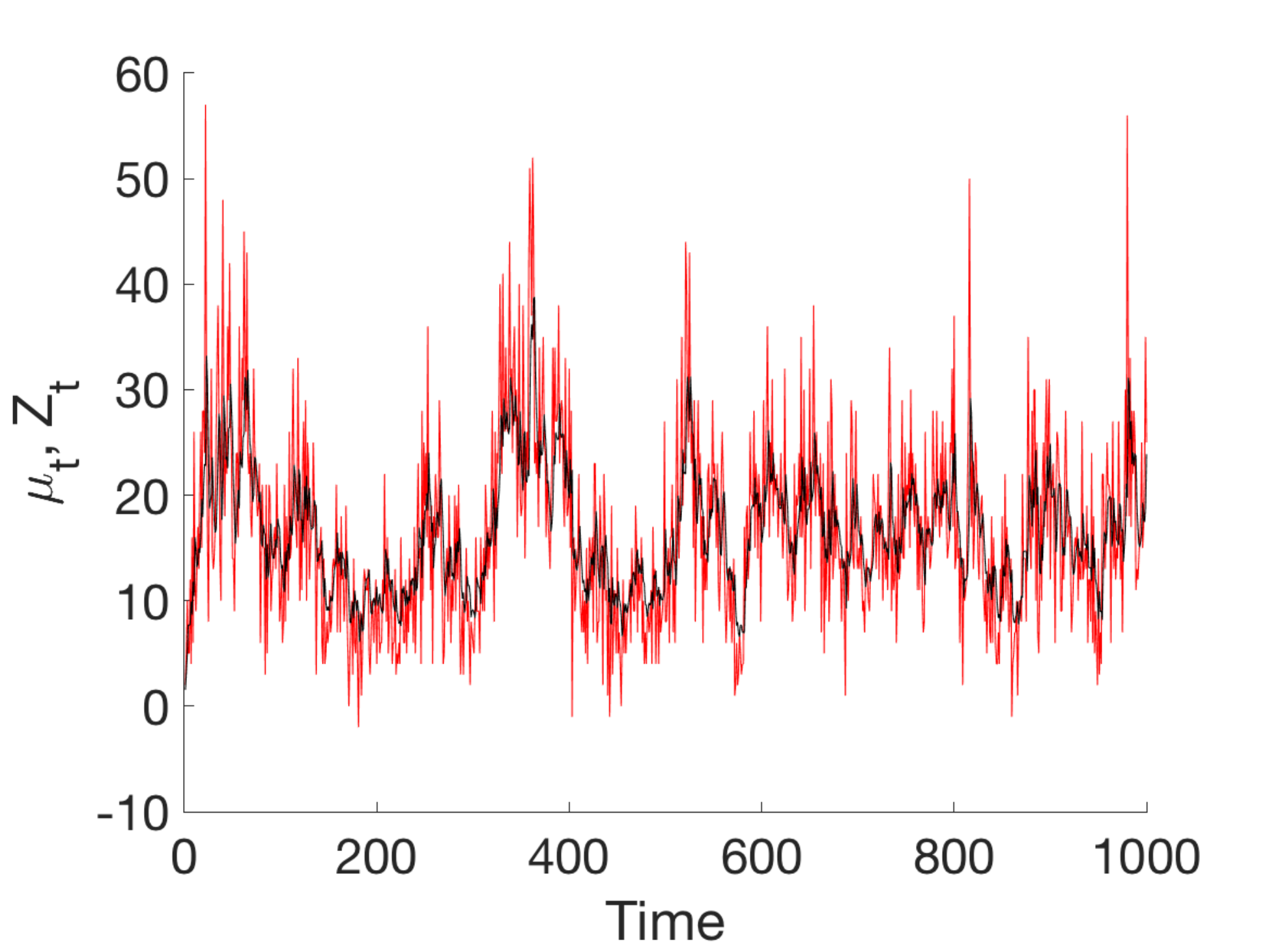}\\
(c) {\small $\alpha_{0}=-1.55$, $\lambda=0.4$, $\phi=3$} & (d) {\small $\alpha_{0}=-1.55$, $\lambda=0.4$, $\phi=3$} \\
\includegraphics[height=4cm,width=7.5cm]{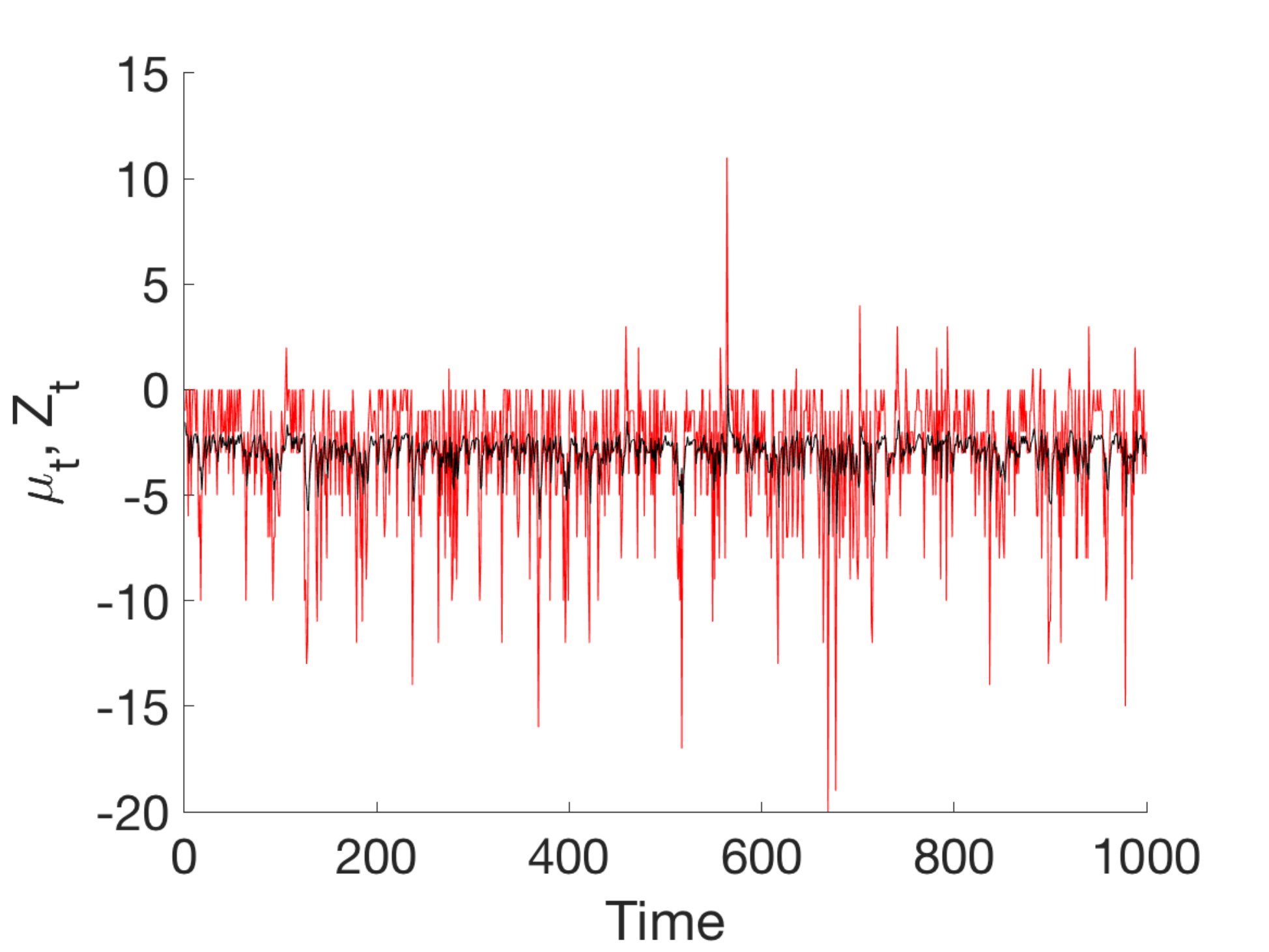}&
\includegraphics[height=4cm,width=7.5cm]{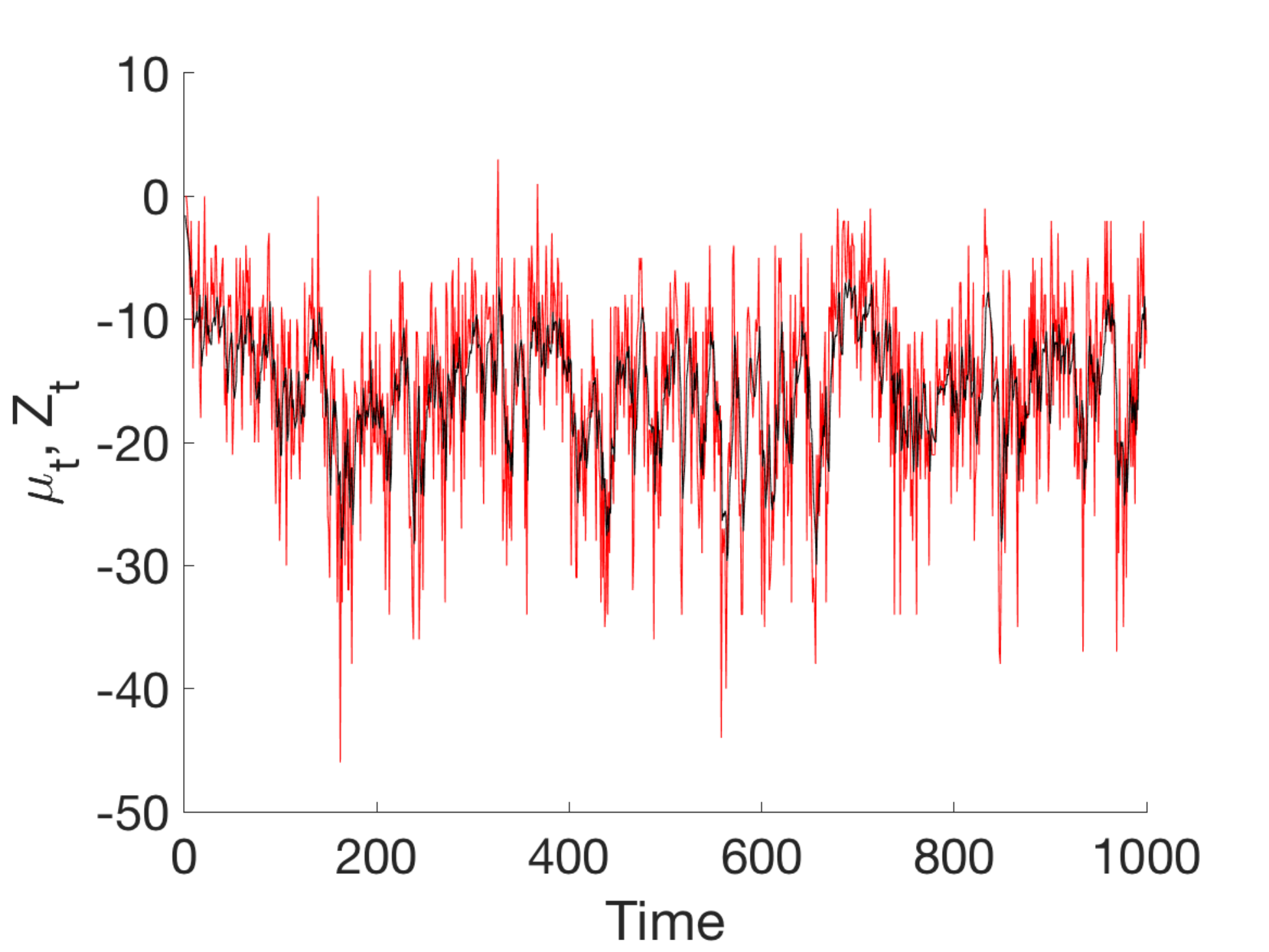}\\
\\
(e) {\small $\alpha_{0}=1.55$, $\lambda=0.1$, $\phi=3$} & (f) {\small $\alpha_{0}=1.55$, $\lambda=0.7$, $\phi=3$} \\
\includegraphics[height=4cm,width=7.5cm]{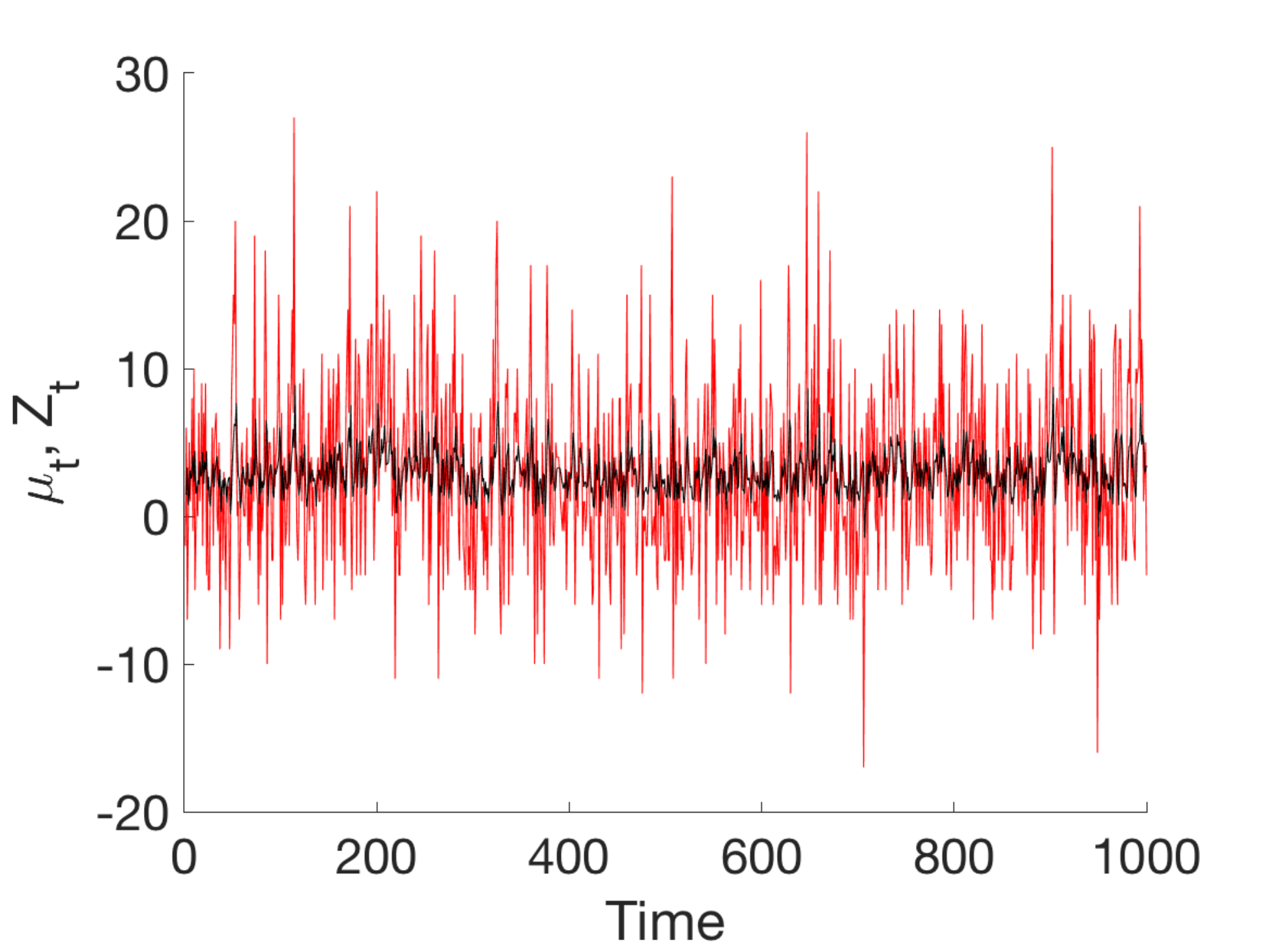}&
\includegraphics[height=4cm,width=7.5cm]{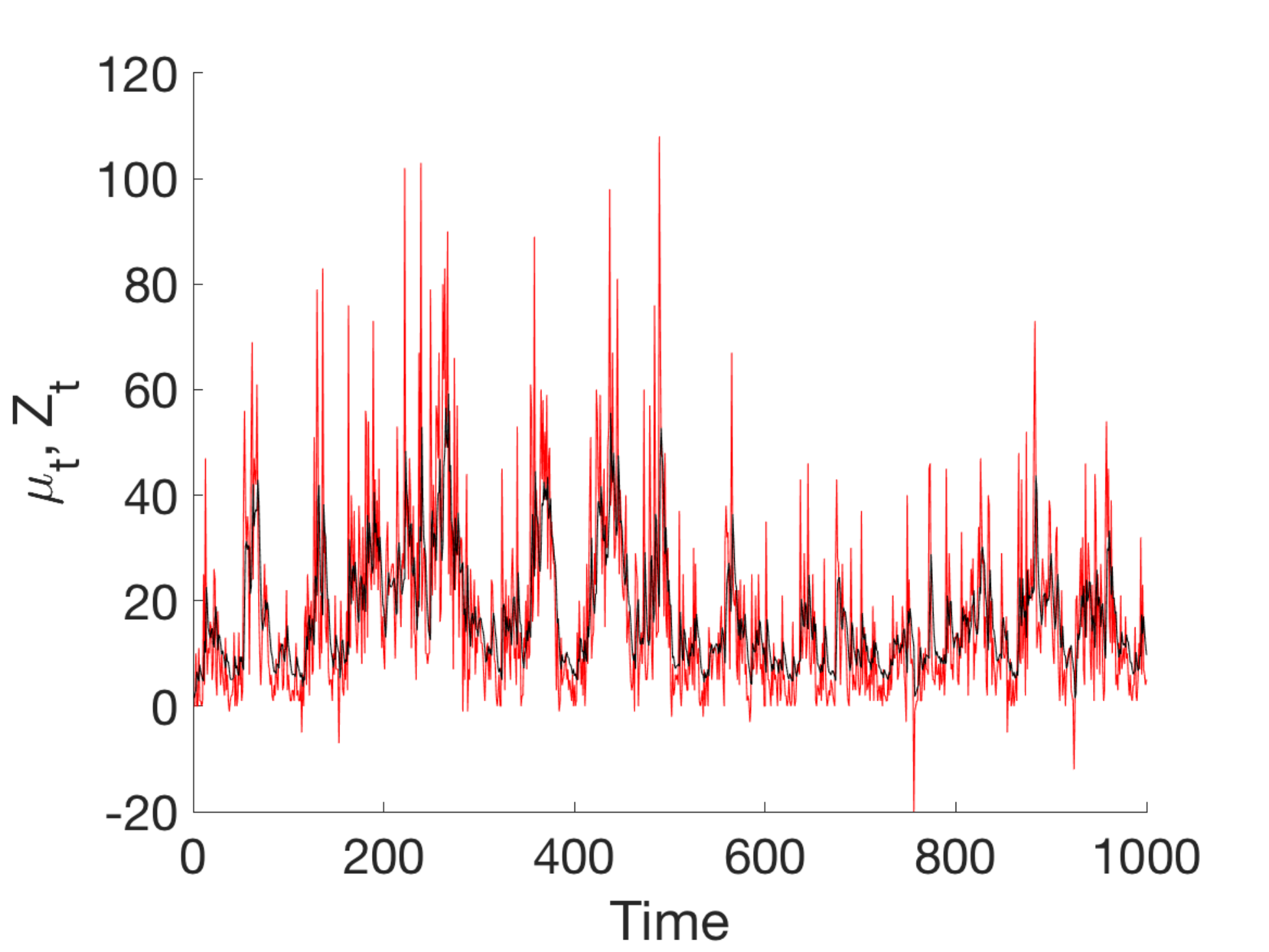}\\
\end{tabular}
\caption{Simulated INGARCH$(1,1)$ paths for different values of the parameters $\alpha_{0}$, $\alpha_{1}$ and $\beta_{1}$. In Panels from (a) to (d) the effect of $\alpha_{0}$ ($\alpha_{0}>0$ in the first line and $\alpha_{0}<0$ in the second line) with $\lambda=0.4$ and $\phi=3$. In Panels (e) and (f) the effect of lambda ($\lambda=0.1$ left and $\lambda=0.7$ right) in the two settings.}
\label{GARCH11}
\end{figure}

Simulations from a GPD-INGARCH are obtained as differences of GP sequences
\begin{eqnarray}\label{StRep}
Z_{t}&=&X_t-Y_t,\, X_{t} \sim GP(\theta_{1t}, \lambda),\,Y_{t} \sim GP(\theta_{2t}, \lambda)\nonumber
\end{eqnarray}
where 
\begin{equation}
\theta_{1t} = \frac{\sigma^{2}_{t}+\mu_{t}}{2}, \quad \theta_{2t} = \frac{\sigma^{2}_{t}-\mu_{t}}{2}.
\end{equation} 
Each random sequence is generated by the branching method in \cite{Fam1997}, which performs faster than the inversion method for large values of $\theta_{1t}$ and $\theta_{2t}$. We considered two parameter settings: low persistence, that is $\alpha_{1}+\beta_{1}$ much less than 1 (first column in Fig. \ref{GARCH11}) and high persistence, that is $\alpha_{1}+\beta_{1}$ close to 1 (second column in Fig. \ref{GARCH11}). The first and second line show paths for positive and negative value of the intercept $\alpha_{0}$, respectively. The last line illustrates the effect of $\lambda$ on the trajectories with respect to the baselines in Panels (a) and (b). Comparing (I.a) and (I.b) in Fig. \ref{GARCH12} one can see that increasing $\beta_{1}$ increases serial correlation and the kurtosis level (compare (II.a) and (II.b)).

\begin{figure}[t]
\centering
\def\arraystretch{0.2}
\begin{tabular}{cc}
\vspace{0.5cm}
\small{(I) Autocorrelation function} & \small{(II) Unconditional histograms}\\

{\small (I.a) \hspace{24 mm} (I.b)} & {\small (II.a) \hspace{24 mm}(II.b)}\\
\begin{minipage}{.5\textwidth}
\includegraphics[width=3.5cm, height=3cm]{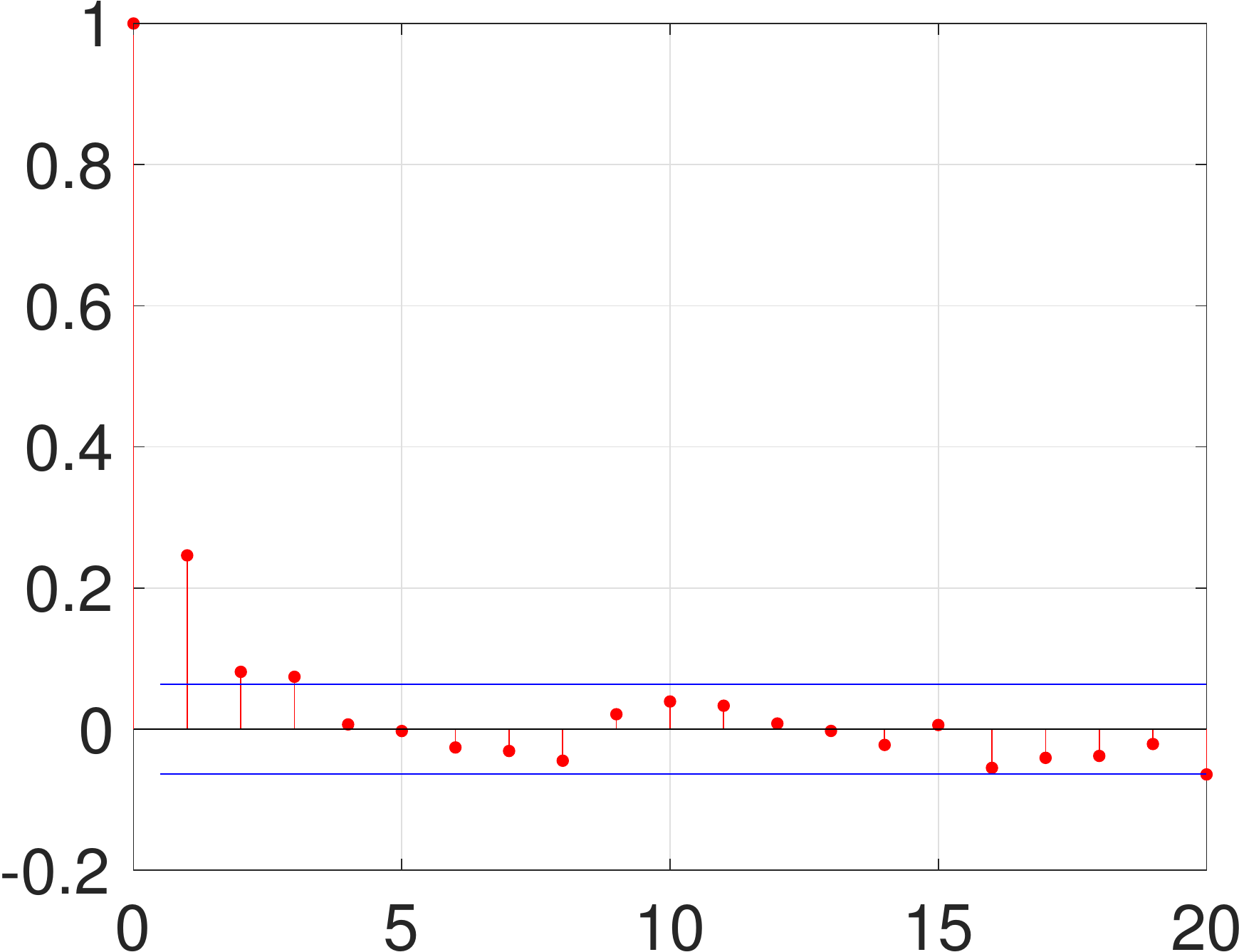}
\includegraphics[width=3.5cm, height=3cm]{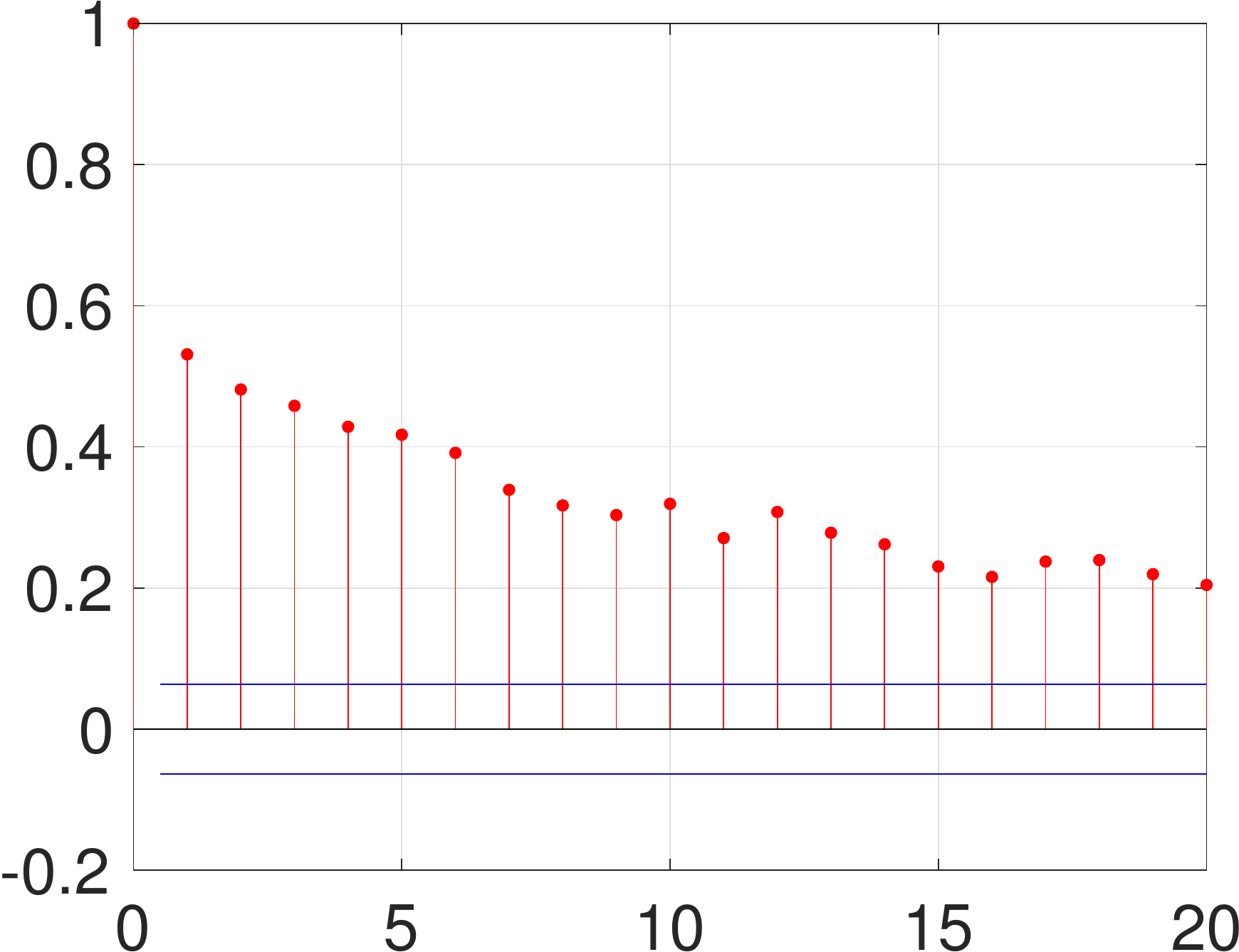}
\end{minipage} & \begin{minipage}{.5\textwidth}
\includegraphics[width=3.5cm, height=3cm]{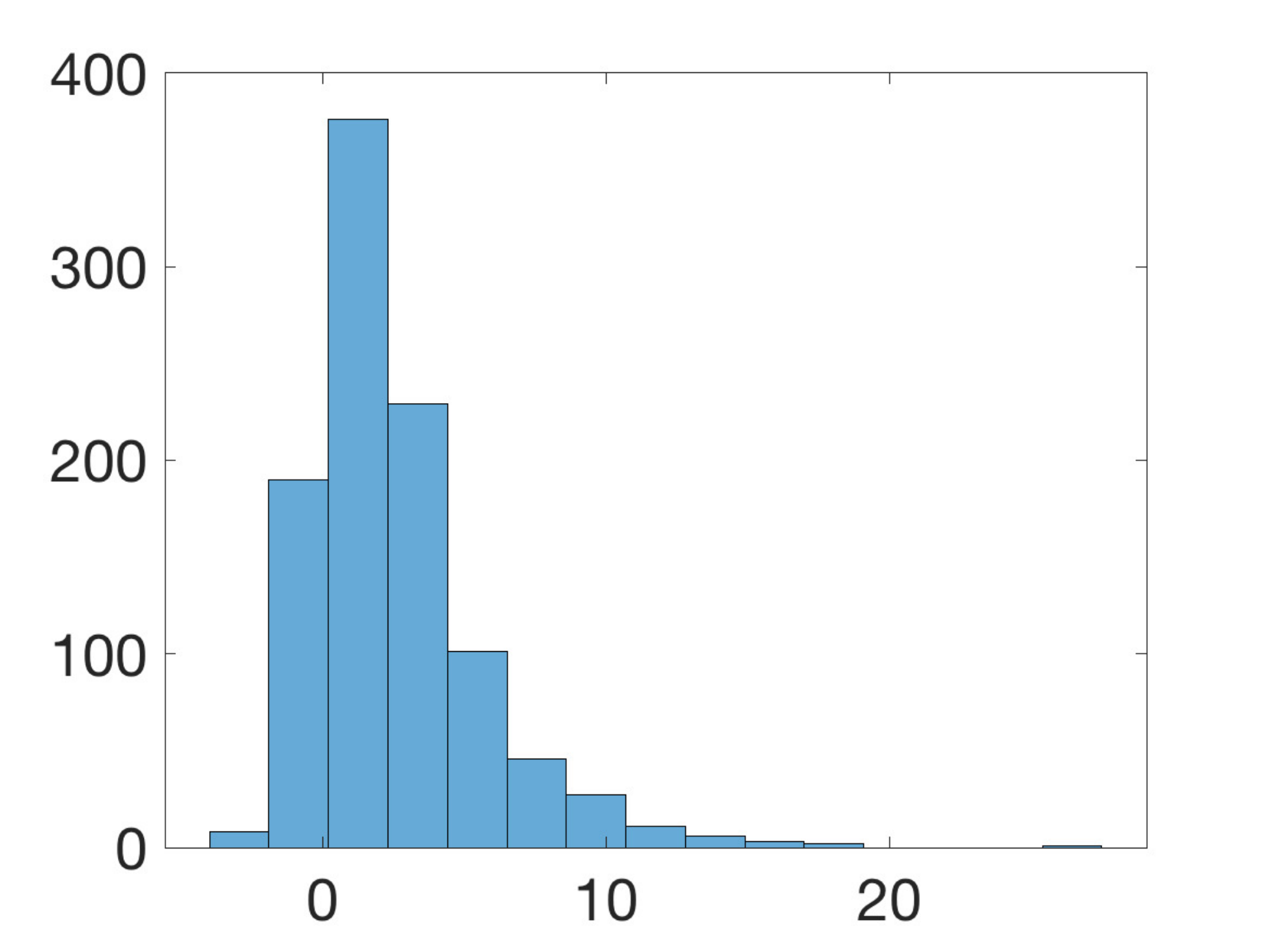} 
\includegraphics[width=3.5cm, height=3cm]{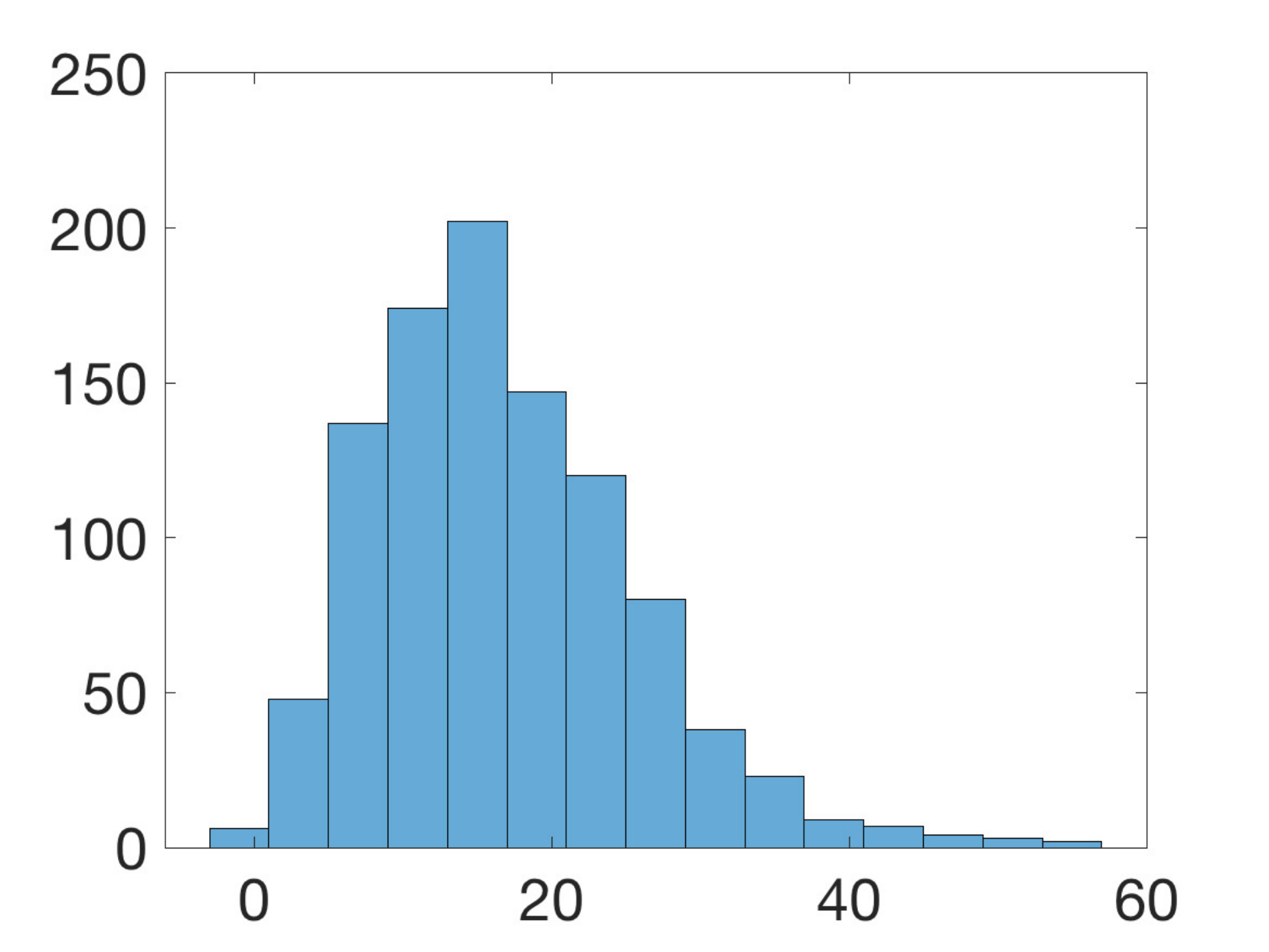}   
\end{minipage} \\

{\small (I.c) \hspace{24 mm} (I.d)} & {\small (II.c) \hspace{24 mm} (II.d)}\\
\begin{minipage}{.5\textwidth}
\includegraphics[width=3.5cm, height=3cm]{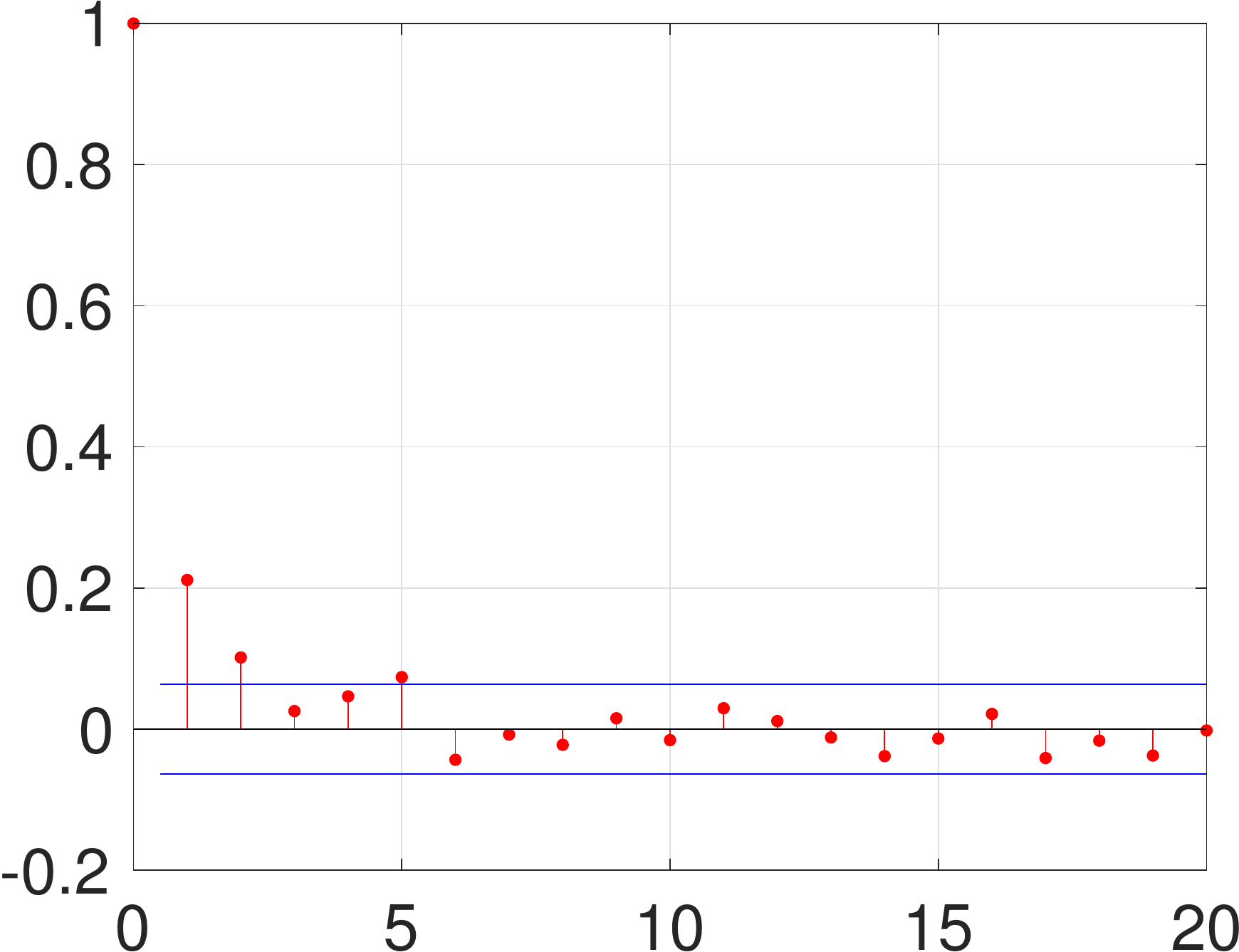} 
\includegraphics[width=3.5cm, height=3cm]{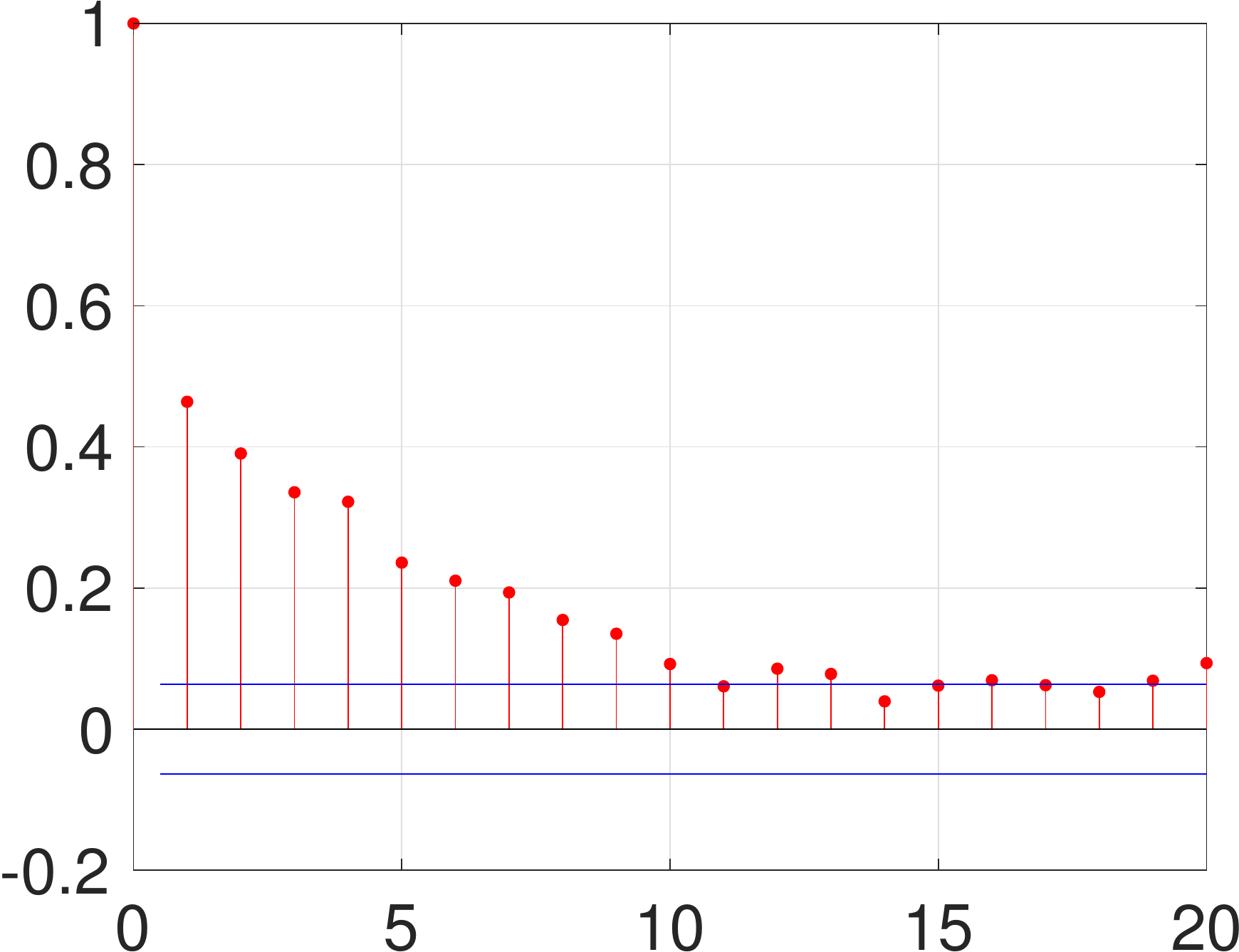} 
\end{minipage} & \begin{minipage}{.5\textwidth}
\includegraphics[width=3.5cm, height=3cm]{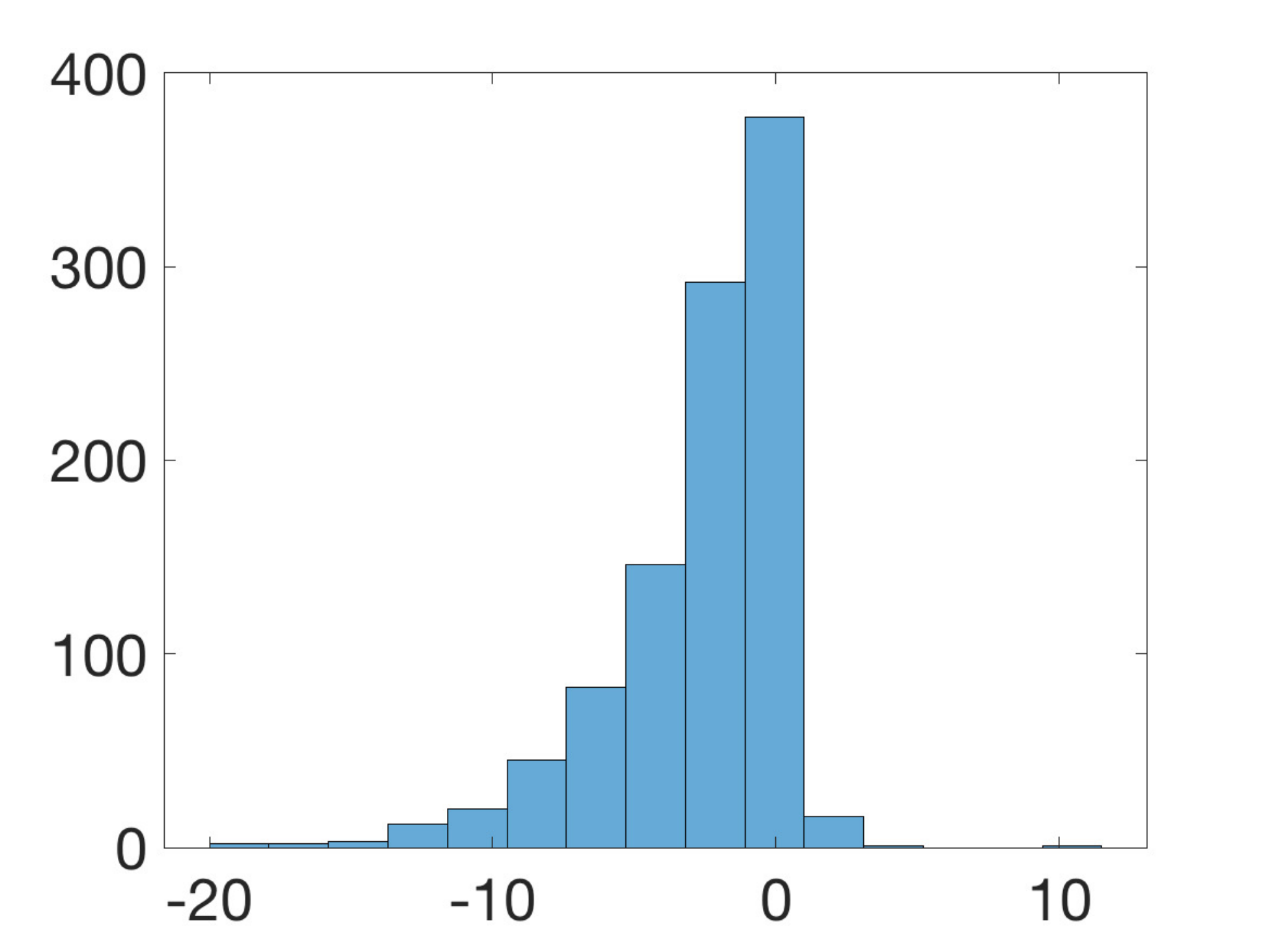} 
\includegraphics[width=3.5cm, height=3cm]{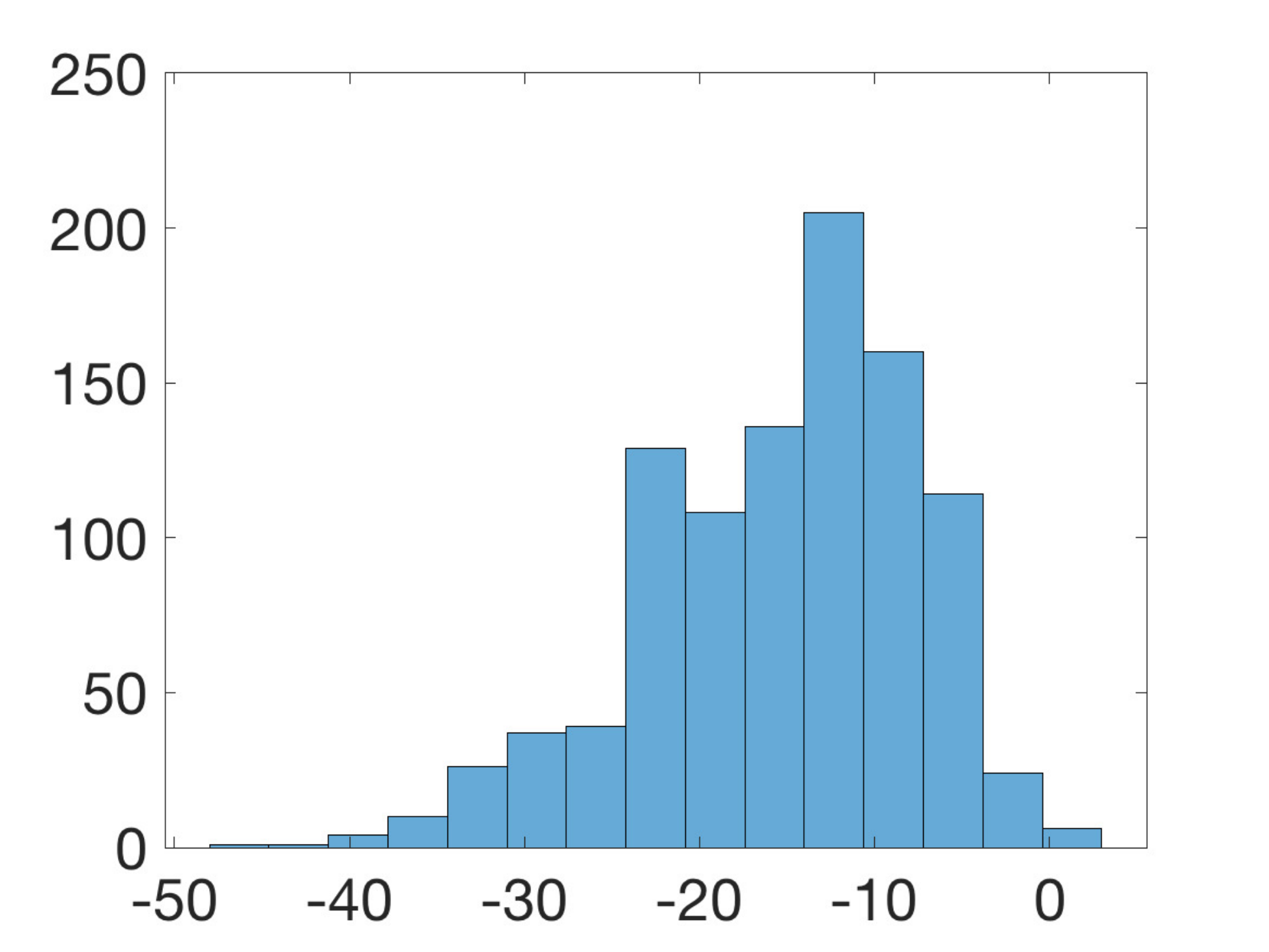} 
 \end{minipage}\\
 
{\small (I.e) \hspace{24 mm} (I.f)} & {\small (II.e) \hspace{24 mm} (II.f)} \\
\begin{minipage}{.5\textwidth}
\includegraphics[width=3.5cm, height=3cm]{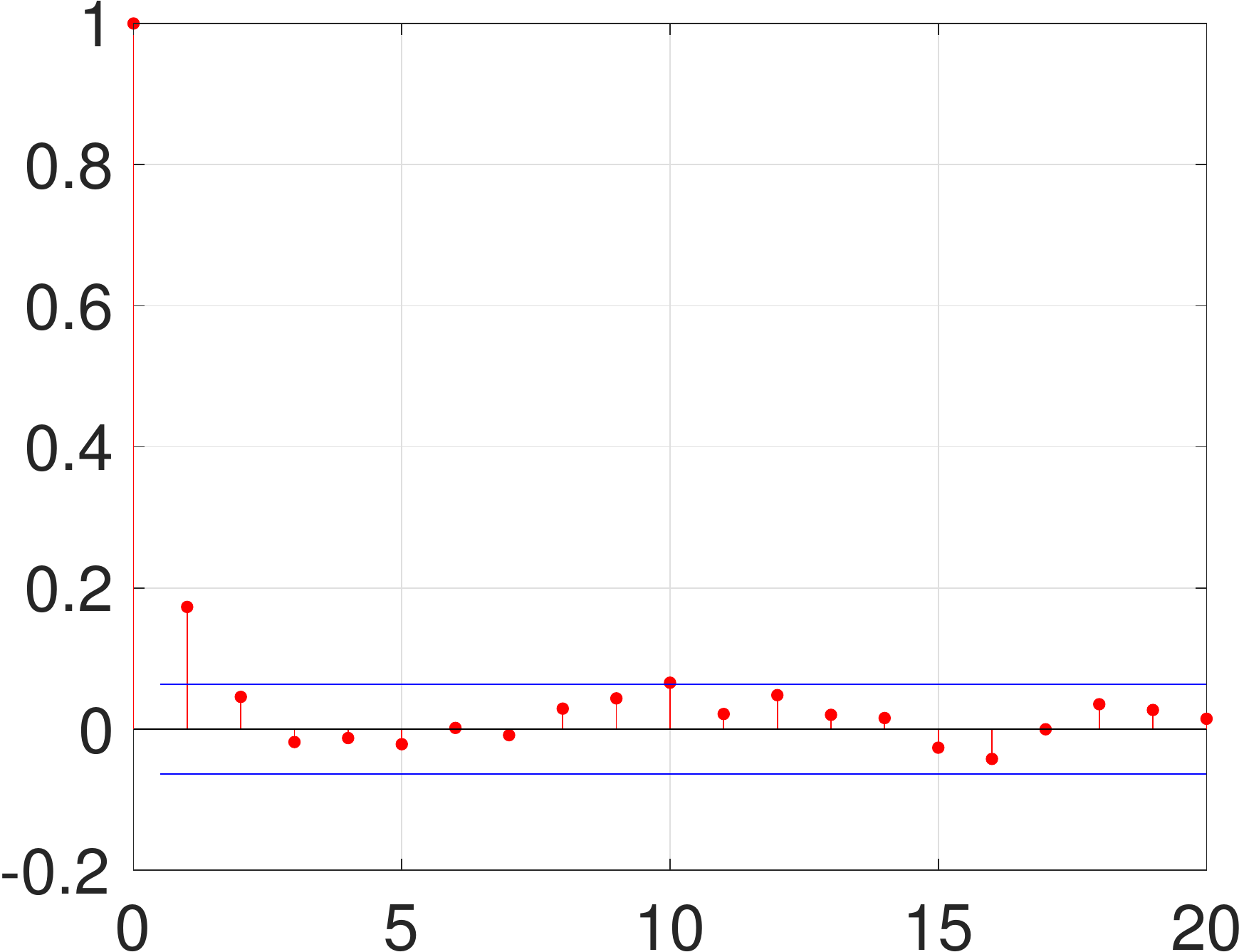} 
\includegraphics[width=3.5cm, height=3cm]{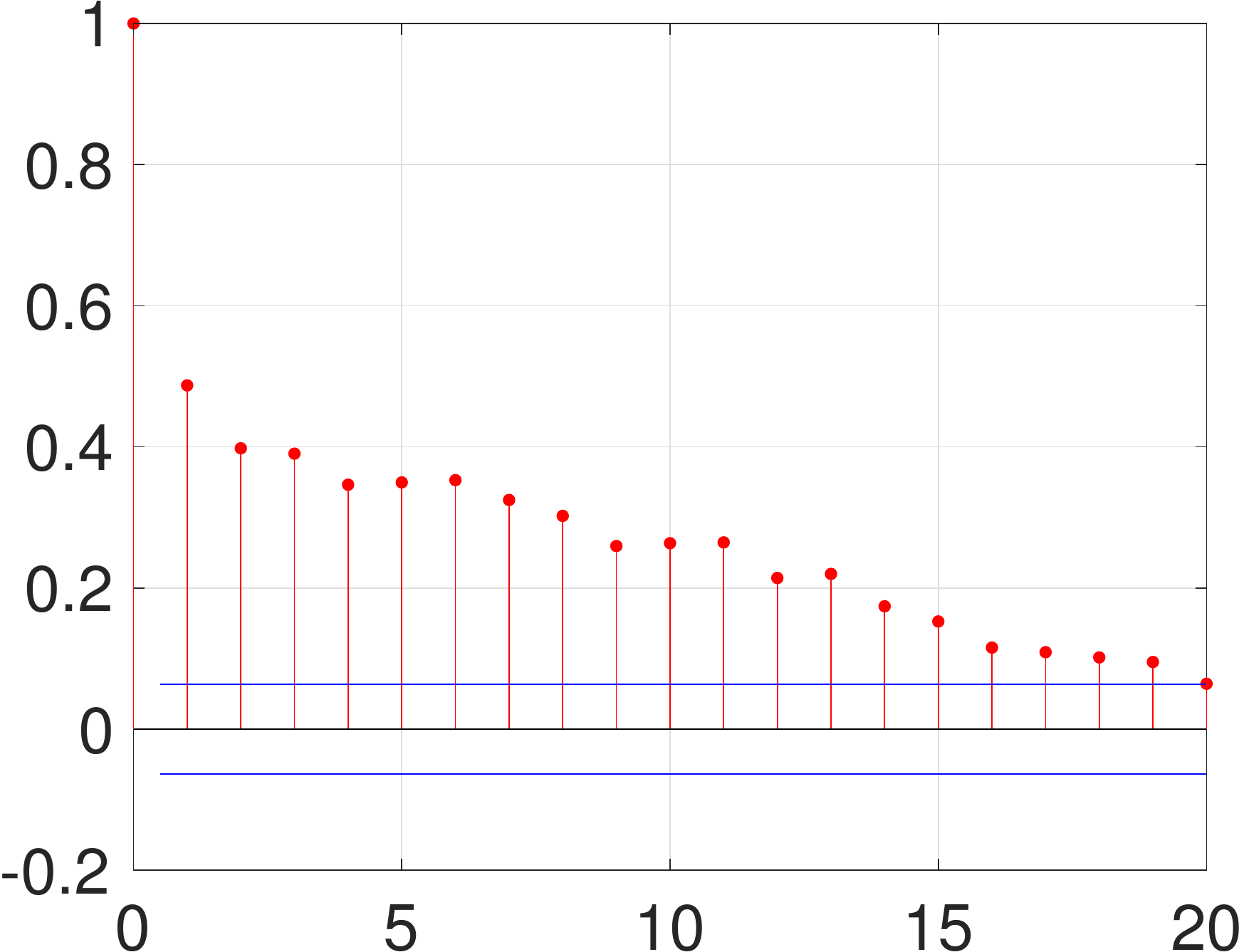} 
\end{minipage} & \begin{minipage}{.5\textwidth}
\includegraphics[width=3.5cm, height=3cm]{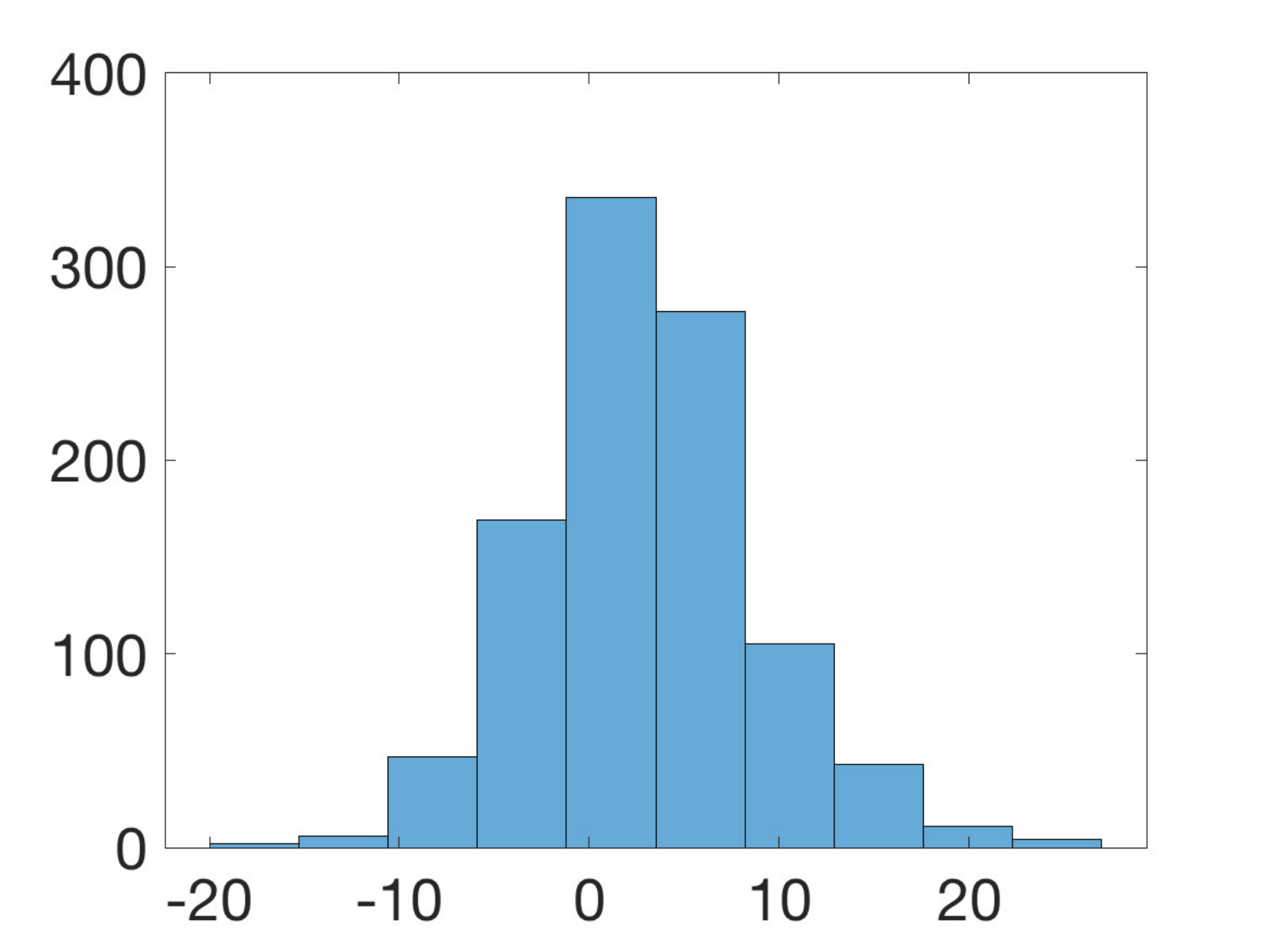} 
\includegraphics[width=3.5cm, height=3cm]{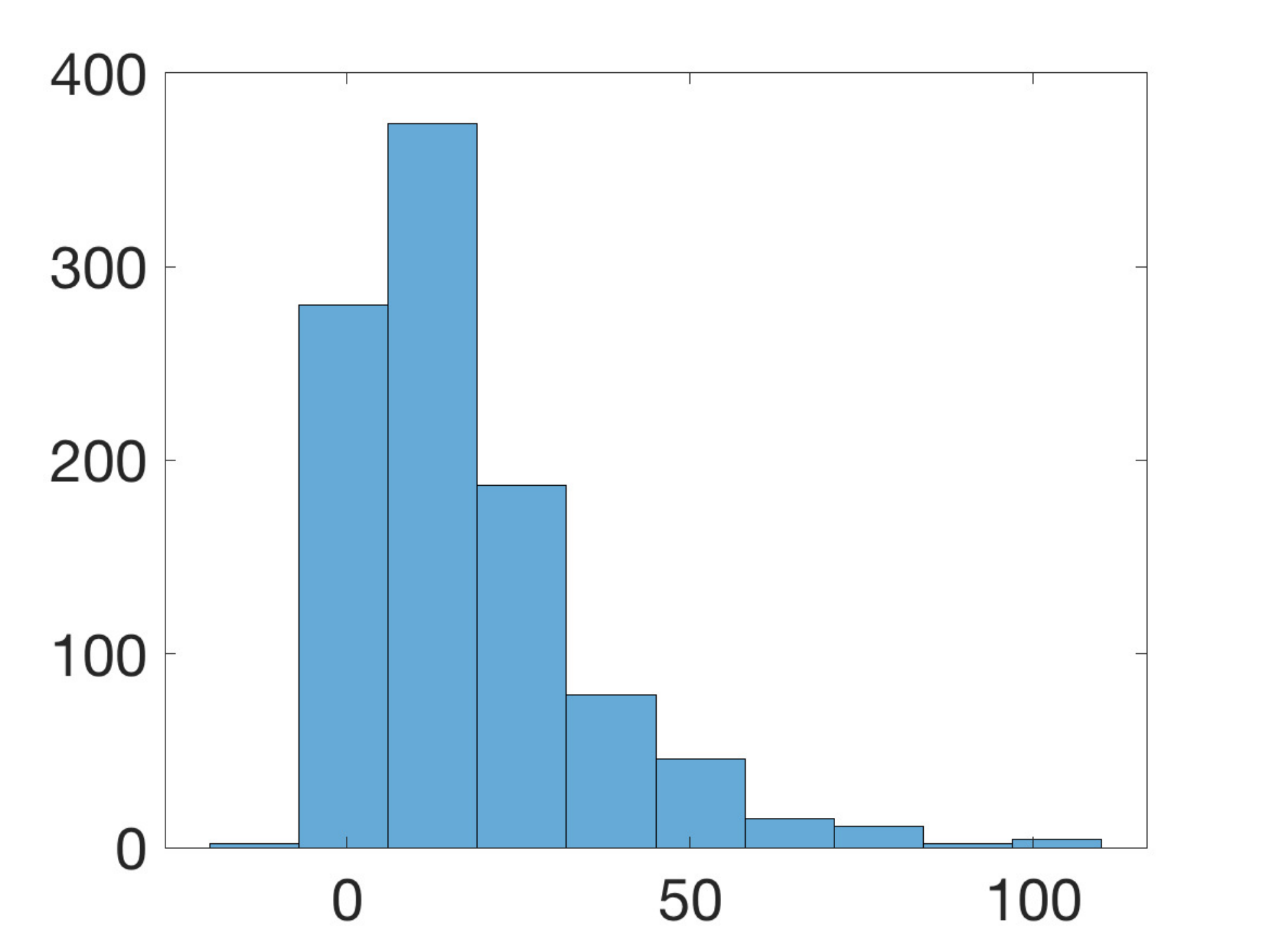}  \end{minipage}\\
\end{tabular}
\caption{Autocorrelation functions (Panel I) and unconditional distributions (Panel II) of $\lbrace Z_{t} \rbrace_{t \in \mathcal{Z}}$ for the different cases presented in Fig. \ref{GARCH11} (different columns in each panel).}
\label{GARCH12}
\end{figure}

We provide a necessary condition on the parameters $\alpha_{i}$ and $\beta_{j}$ that will ensure that a second-order stationary process has an INGARCH representation. First define the two following polynomials: $D(B)=1-\beta_{1}B-\ldots -\beta_{q}B^{q}$ and $G(B)=\alpha_{1}B+\ldots+\alpha_{p}B^{p}$, where B is the backshift operator. Assume the roots of $D(z)$ lie outside the unit circle. For non-negative $\beta_{j}$ this is equivalent to assume $D(1)=\sum_{j=1}^{q}\beta_{j}<1$. Then, the operator $D(B)$ has inverse $D^{-1}(B)$ and it is possible to write
\begin{equation}
\mu_{t}=D^{-1}(B)(\alpha_{0}+G(B)Z_{t})=\alpha_{0}D^{-1}(1)+H(B)Z_{t}
\end{equation}
where $H(B)=G(B)D^{-1}(B)=\sum_{j=1}^{\infty} \psi_{j}B^{j}$ and $\psi_{j}$ are given by the power expansion of the rational function $G(z)/D(z)$ in the neighbourhood of zero. If we denote $K(B)=D(B)-G(B)$ we can write the necessary condition as in the following proposition.
\begin{prop}\label{P1}
A necessary condition for a second-order stationary process $\lbrace Z_{t} \rbrace_{t \in \mathbb{Z}}$ to satisfy Eq. \ref{ingarch} is that $K(1)=D(1)-G(1) > 0$ or equivalently $\sum_{i=1}^{p}\alpha_{i} +\sum_{j=1}^{q}\beta_{j}<1$.
\end{prop}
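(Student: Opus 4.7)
The plan is to take unconditional expectations of equation~\ref{ingarch} under the second-order stationarity hypothesis, exploit the tower property to collapse the conditional-mean terms, and then upgrade the resulting mean identity into the strict inequality $K(1)>0$ via an ARMA reformulation of $\{Z_{t}\}$.

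First, I would observe that second-order stationarity implies $E(Z_{t})\equiv m$ is finite and independent of $t$. Since $\mu_{t}=E(Z_{t}\mid\mathcal{F}_{t-1})$ by the very definition of the GPD-INGARCH process, the tower property gives $E(\mu_{t-j})=E(Z_{t-j})=m$ for each $j\ge 0$. Applying $E$ to both sides of equation~\ref{ingarch} and using linearity yields
\begin{equation*}
m \;=\; \alpha_{0}+m\sum_{i=1}^{p}\alpha_{i}+m\sum_{j=1}^{q}\beta_{j},
\end{equation*}
that is, $m\,K(1)=\alpha_{0}$. This identity already excludes $K(1)=0$ whenever $\alpha_{0}\ne 0$, and when $K(1)\ne 0$ pins down the unconditional mean as $m=\alpha_{0}/K(1)$.

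To promote the necessary condition from $K(1)\ne 0$ to the strict positivity $K(1)>0$, I would introduce the martingale-difference innovation $\epsilon_{t}=Z_{t}-\mu_{t}$ and substitute $\mu_{t-j}=Z_{t-j}-\epsilon_{t-j}$ into equation~\ref{ingarch}. Regrouping terms, with the convention $\alpha_{i}=0$ for $i>p$ and $\beta_{j}=0$ for $j>q$, produces the ARMA-type representation
\begin{equation*}
Z_{t} \;=\; \alpha_{0}+\sum_{i=1}^{\max(p,q)}(\alpha_{i}+\beta_{i})Z_{t-i}+\epsilon_{t}-\sum_{j=1}^{q}\beta_{j}\epsilon_{t-j}.
\end{equation*}
Second-order stationarity of $\{Z_{t}\}$ then forces the autoregressive polynomial $\Phi(z)=1-\sum_{i\ge 1}(\alpha_{i}+\beta_{i})z^{i}$ to have all its roots strictly outside the closed unit disk. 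Since the coefficients $\alpha_{i}+\beta_{i}$ are non-negative, $\Phi(0)=1>0$ while $\Phi(1)=K(1)$; if $K(1)\le 0$ the intermediate value theorem would place a real root of $\Phi$ in $(0,1]$, contradicting stationarity. Therefore $K(1)>0$, equivalently $\sum_{i=1}^{p}\alpha_{i}+\sum_{j=1}^{q}\beta_{j}<1$.

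The main obstacle is precisely this last step: the first-moment identity $mK(1)=\alpha_{0}$ only rules out $K(1)=0$, so the strict sign has to be extracted from the second-order structure rather than from the mean alone. The ARMA rewrite combined with the root-location argument that leverages the non-negativity of the GARCH coefficients is, I expect, the crucial piece; without the non-negativity of $\alpha_{i}$ and $\beta_{j}$ the same polynomial sign reasoning would not apply.
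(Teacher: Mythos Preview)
Your argument is correct and its first half coincides with the paper's proof: both take expectations through the tower property to obtain the identity $m\,K(1)=\alpha_{0}$ (the paper routes this through the infinite expansion $\mu_{t}=\alpha_{0}D^{-1}(1)+\sum_{j\geq 1}\psi_{j}Z_{t-j}$ rather than applying $E$ directly to equation~\ref{ingarch}, but the content is identical).

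Where you depart from the paper is in the second half. The paper's proof simply records $\mu=\alpha_{0}K(1)^{-1}$ and then asserts that $K(1)>0$ is necessary, without distinguishing this from $K(1)\neq 0$. Your ARMA rewrite $\Phi(B)Z_{t}=\alpha_{0}+\Theta(B)\epsilon_{t}$ together with the intermediate-value-theorem argument on $\Phi$---which crucially uses the non-negativity of $\alpha_{i}+\beta_{i}$ to trap a real root in $(0,1]$ whenever $\Phi(1)\leq 0$---supplies the missing justification for the strict sign. This is the standard GARCH-type argument and it genuinely strengthens the paper's presentation; the paper buys brevity at the cost of leaving the inequality direction unproved, while your route closes that gap at the expense of invoking the causal-ARMA root condition.
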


\begin{proof}
See Appendix \ref{App:proofs}
\end{proof}

\section{Properties of the GPD-INGARCH}\label{Sec:PropModel}
We study the properties of the process by exploiting a suitable thinning representation following the strategy in \cite{Fer2006} and \cite{Zhu2012} for Poisson and Generalized Poisson INGARCH, respectively. We use the quasi-binomial thinning as defined in \cite{Wei2008} and the thinning difference (\cite{CunVasBou2018}) operators.

\subsection{Thinning representation}
We show that the INGARCH process can be obtained as a limit of successive approximations. Let us define: 
\begin{equation}\label{seqX}
X_{t}^{(n)}=\begin{cases} 0, & n<0 \\
(1-\lambda) U_{1t}, & n=0\\
(1-\lambda) U_{1t}+(1-\lambda)\sum_{i=1}^{n} \sum_{j=1}^{\frac{X_{t-i}^{(n-i)}}{(1-\lambda)}} V_{1t-i,i,j}, & n>0
\end{cases}
\end{equation}
and 
\begin{equation}\label{seqY}
Y_{t}^{(n)}=\begin{cases} 0, & n<0 \\
(1-\lambda) U_{2t}, & n=0\\
(1-\lambda) U_{2t}+(1-\lambda)\sum_{i=1}^{n} \sum_{j=1}^{\frac{Y_{t-i}^{(n-i)}}{(1-\lambda)}} V_{2t-i,i,j}, & n>0
\end{cases}
\end{equation}
where $\lbrace U_{1t} \rbrace_{t \in \mathbb{Z}}$ and $\lbrace U_{2t} \rbrace_{t \in \mathbb{Z}}$ are sequences of independent GP random variables and for each $t \in \mathbb{Z}$ and $i \in \mathbb{N}$, $\lbrace V_{1t,i,j} \rbrace_{j \in \mathbb{N}}$ and $\lbrace V_{2t,i,j} \rbrace_{j \in \mathbb{N}}$ represent two sequences of independent integer random variables. Moreover, assume that all the variables $U_{s}$, $V_{t,i,j}$, with $s \in \mathbb{Z}$, $t \in \mathbb{Z}$, $i \in \mathbb{N}$ and $j \in \mathbb{N}$, are mutually independent.

It is possible to show that $X_{t}^{(n)}$  and $Y_{t}^{(n)}$ have a thinning representation. We define a suitable thinning operation, first used by \cite{AlzAlO1993} and follow the notation in \cite{Wei2008}, let $\rho_{\theta,\lambda} \circ$ be the quasi-binomial thinning operator, such that it follows a QB($\rho$,$\theta /\lambda$,$x$).

\begin{prop}
If $X$ follows a GP($\lambda$,$\theta$) distribution and the quasi-binomial thinning is performed independently on $X$, then $\rho_{\theta,\lambda} \circ X$ has a GP($\rho \lambda$,$\theta$) distribution.
\end{prop}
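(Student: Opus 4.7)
My plan is to compute the pmf of $Y := \rho_{\theta,\lambda}\circ X$ directly via the law of total probability, conditioning on the value of $X$. Writing
\[ P(Y=y) = \sum_{x=y}^{\infty} P(X=x)\, P(Y=y\mid X=x), \]
I would substitute the generalized Poisson pmf $P(X=x) = \theta(\theta+x\lambda)^{x-1}e^{-\theta-x\lambda}/x!$ and the quasi-binomial QB$(\rho,\theta/\lambda,x)$ pmf for the conditional probability. After reindexing with $k = x - y$ and pulling out every factor that does not depend on $k$, the aim is to recognise the resulting inner sum as a generalized Poisson normalisation identity, so that the whole expression collapses to the pmf of a GP whose dispersion parameter is $\rho\lambda$ instead of $\lambda$ (with the same $\theta$).

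The key algebraic tool is a generalized Abel identity, equivalent to the fact that a GP pmf sums to one:
\[ \sum_{k=0}^{\infty} \frac{(\mu+k\nu)^{k-1}}{k!}\,e^{-k\nu} \;=\; \frac{e^{\mu}}{\mu}. \]
After the manipulations above, the inner sum in $k$ should take exactly this shape for an appropriate choice of $\mu$ and $\nu$ that depend on $\rho$, $\theta$, $\lambda$ and $y$. Evaluating it in closed form and matching the remaining $y$-dependent prefactor against the target GP pmf with parameter $\rho\lambda$ then completes the identification.

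The main obstacle is the bookkeeping that precedes the invocation of the Abel identity. The QB$(\rho,\theta/\lambda,x)$ pmf contributes factors of the form $\rho(\rho+(\lambda/\theta)y)^{y-1}$ and $(1-\rho)\bigl((1-\rho)+(\lambda/\theta)(x-y)\bigr)^{x-y-1}$, whose exponents depend on both the summation index and the target value $y$; these must combine with the GP factor $(\theta+x\lambda)^{x-1}$ so that a single factor $(\theta+y\rho\lambda)^{y-1}$ is extracted and the residual becomes a pure GP normalisation in $k$. Showing that the $(\lambda/\theta)$-shifted QB terms and the $\lambda$-shifted GP term rearrange cleanly enough for the Abel identity to fire is the one nontrivial step; the rest is substitution and cancellation of $e^{-k\lambda}$ factors. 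An alternative route would exploit the Galton--Watson/Lagrangian Poisson representation of GP and interpret the QB thinning as a suitable random thinning of the total progeny, but I expect the direct pmf computation sketched above to be the most transparent path to the stated conclusion.
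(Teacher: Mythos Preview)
The paper does not give its own proof of this proposition; it simply cites Alzaid and Al--Osh (1993). Your direct pmf computation is correct and is essentially the argument in that reference.

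The step you flag as potentially delicate is in fact clean. With the paper's convention that GP$(\lambda,\theta)$ has pmf $\lambda(\lambda+x\theta)^{x-1}e^{-\lambda-x\theta}/x!$ and the conditional law of the thinning being QB$(\rho,\theta/\lambda,x)$, the key cancellation is that the GP numerator factor $(\lambda+x\theta)^{x-1}$ and the QB denominator $(1+x\theta/\lambda)^{x-1}$ differ only by $\lambda^{x-1}$. Once that is pulled out, no mixed $(x,y)$-dependent power survives: after reindexing $k=x-y$ the inner sum is exactly
\[
\sum_{k\ge 0}\frac{\bigl((1-\rho)\lambda+k\theta\bigr)^{k-1}}{k!}\,e^{-k\theta}
\;=\;\frac{e^{(1-\rho)\lambda}}{(1-\rho)\lambda},
\]
by your Abel/GP-normalisation identity, and the remaining $y$-dependent prefactor is already $\rho\lambda(\rho\lambda+y\theta)^{y-1}e^{-\rho\lambda-y\theta}/y!$, the GP$(\rho\lambda,\theta)$ pmf. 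One caution on bookkeeping: the paper is internally inconsistent about whether the second QB parameter is $\theta/\lambda$ or $\lambda/\theta$ (compare the statement preceding the proposition with Definition~1 and Appendix~A.4); the computation goes through with $\theta/\lambda$, not the $\lambda/\theta$ you wrote in your sketch.
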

\begin{proof}
See \cite{AlzAlO1993}.
\end{proof}

Both $X_{t}^{(n)}$ and $Y_{t}^{(n)}$ in Eq. \ref{seqX} and \ref{seqY} admit the representation
\begin{equation}
X_{t}^{(n)}= (1-\lambda) U_{1t}+(1-\lambda)\sum_{i=1}^{n} \varphi_{1i}^{(t-i)} \circ \left( \frac{X_{t-i}^{(n-i)}}{1-\lambda}\right) , \; n>0
\end{equation}
and
\begin{equation}
Y_{t}^{(n)}= (1-\lambda) U_{2t}+(1-\lambda)\sum_{i=1}^{n} \varphi_{2i}^{(t-i)} \circ \left( \frac{Y_{t-i}^{(n-i)}}{1-\lambda}\right) , \; n>0
\end{equation}
where $\varphi \circ X$ is the quasi-binomial thinning operation. See Appendix \ref{App:distrib} for a definition.

In the following we introduce the thinning difference operator and show that $Z_{t}^{(n)}=X_{t}^{(n)}-Y_{t}^{(n)}$ has a thinning representation.
\begin{defn}
Let $X\sim GP(\theta_{1},\lambda)$ and $Y\sim GP(\theta_{2},\lambda)$ be two independent random variables and $Z=X-Y$, then $Z\sim GPD(\mu,\sigma^{2},\lambda)$, with $\mu=\theta_{1}-\theta_{2}$ and $\sigma^{2}=\theta_{1}+\theta_{2}$. We define the new operator $\diamond$ as:
\begin{equation}\label{thin}
\rho \diamond Z\vert Z  \overset{\text{d}}{=} (\rho_{\theta_{1},\lambda}\circ X)-(\rho_{\theta_{2},\lambda}\circ Y)\vert (X-Y)
\end{equation}
where $(\rho_{\theta_{1},\lambda}\circ X)$ and $(\rho_{\theta_{2},\lambda}\circ Y)$ are the quasi-binomial thinning operations such that $(\rho_{\theta_{1},\lambda}\circ X)\vert X=x \sim QB(p,\lambda / \theta_{1},x)$ and $(\rho_{\theta_{2},\lambda}\circ Y)\vert Y=y \sim QB(p,\lambda / \theta_{2},y)$. The symbol ``$A\overset{\text{d}}{=} B$" means that the random variables $A$ and $B$ have the same distribution.
\end{defn}

See \cite{CunVasBou2018} for an application of the thinning operation to GPD-INAR processes and Appendix \ref{App:QBth}  for further details. Using the new operator as defined in Eq. \ref{thin}, we can represent $Z_{t}^{(n)}$ as follows.
\begin{prop}\label{Zrepr}
The process $Z_{t}^{(n)}=X_{t}^{(n)}-Y_{t}^{(n)}$ has the representation:
\begin{equation}\label{Zthin}
Z_{t}^{(n)}= (1-\lambda) U_{t}+(1-\lambda)^{2}\sum_{i=1}^{n} \varphi_{i}^{(t-i)} \diamond \left( \frac{Z_{t-i}^{(n-i)}}{1-\lambda}\right), \; n>0
\end{equation}
where $\varphi_{i}^{(\tau)} \diamond$ indicates the sequence of random variables with mean $\psi_{i}/(1-\lambda),$ involved in the thinning operator at time $\tau$ and $\lbrace U_{t} \rbrace_{t \in \mathbb{Z}}$ is a sequence of independent GPD random variables with mean $\psi_{0}/(1-\lambda)$ with $\psi_{0}=\alpha_{0}/D(1)$.
\end{prop}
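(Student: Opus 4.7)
The plan is to obtain Eq.~(\ref{Zthin}) by subtracting the two thinning representations of $X_{t}^{(n)}$ and $Y_{t}^{(n)}$ derived just above, then reassembling the innovation term via Lemma~\ref{Lemma:convo} and the lagged terms via the definition of the operator $\diamond$.

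Substituting the two representations into $Z_{t}^{(n)}=X_{t}^{(n)}-Y_{t}^{(n)}$ produces an intercept-like term $(1-\lambda)(U_{1t}-U_{2t})$ plus a sum whose $i$-th summand is
\[
(1-\lambda)\Bigl[\varphi_{1i}^{(t-i)}\!\circ\!\bigl(X_{t-i}^{(n-i)}/(1-\lambda)\bigr)\;-\;\varphi_{2i}^{(t-i)}\!\circ\!\bigl(Y_{t-i}^{(n-i)}/(1-\lambda)\bigr)\Bigr].
\]
For the innovation, Lemma~\ref{Lemma:convo} applied to the independent GP variates $U_{1t}$ and $U_{2t}$ gives that $U_{t}:=U_{1t}-U_{2t}$ is GPD-distributed; matching expectations identifies its mean as $\psi_{0}/(1-\lambda)$ with $\psi_{0}=\alpha_{0}/D(1)$ inherited from Section~\ref{Model}. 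For the $i$-th lag, the bracketed quantity is, by construction, a paired difference of independent quasi-binomial thinnings acting on two GP populations whose difference equals $Z_{t-i}^{(n-i)}/(1-\lambda)$. Applying the definition of $\diamond$ rewrites this bracket as $\varphi_{i}^{(t-i)}\diamond(Z_{t-i}^{(n-i)}/(1-\lambda))$ in distribution, conditional on $Z_{t-i}^{(n-i)}$; the outer $(1-\lambda)$ together with the scaling inherited from the normalized argument of $\diamond$ accounts for the exponent $(1-\lambda)^{2}$ stated in~(\ref{Zthin}).

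The main obstacle is the distributional identification in the lagged step, because $\diamond$ is defined by conditioning on the difference $X-Y$ whereas the subtracted thinnings in the expansion are naturally conditioned on $(X,Y)$ separately. To justify the replacement, I would appeal to the mutual independence of all families $U_{j,\cdot}$ and $V_{j,\cdot,\cdot,\cdot}$ stipulated in the construction~(\ref{seqX})--(\ref{seqY}), so that the conditional joint law of the paired thinnings depends on the latent $(X_{t-i}^{(n-i)},Y_{t-i}^{(n-i)})$ only through $Z_{t-i}^{(n-i)}$ after marginalization. An induction on the lag index $n$ then propagates the representation through the recursion, and the identification of the coefficients $\psi_{i}$ with those of the power expansion of $G(B)/D(B)$ introduced after Proposition~\ref{P1} follows by matching the first two conditional moments on both sides.
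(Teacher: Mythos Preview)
Your approach is essentially the paper's: subtract the thinning representations of $X_t^{(n)}$ and $Y_t^{(n)}$, set $U_t:=U_{1t}-U_{2t}$, and identify each lagged difference of quasi-binomial thinnings with the $\diamond$ operator acting on $Z_{t-i}^{(n-i)}$. The paper's proof is in fact purely formal --- four displayed lines of algebra --- and does not address the conditioning subtlety you raise, nor does it verify the means $\psi_i/(1-\lambda)$ and $\psi_0/(1-\lambda)$; those are simply asserted in the statement and left unproved. So your marginalization and moment-matching remarks go beyond what the paper actually supplies, and the issue you flag (that $\diamond$ is defined conditional on $X-Y$ while the construction conditions on the pair $(X,Y)$) is genuine but glossed over there as well.
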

\begin{proof}
See Appendix \ref{App:proofs}
\end{proof}

The proposition above shows that $Z_{t}^{(n)}$ is obtained through a cascade of thinning operations along the sequence $\lbrace U_{t} \rbrace_{t \in \mathbb{Z}}$. For example:
\begin{equation*}
Z_{t}^{(0)} = (1-\lambda)U_{t}
\end{equation*}
\begin{equation*}
\begin{split}
Z_{t}^{(1)} &= (1-\lambda) U_{t} +(1-\lambda)^{2}\left[ \varphi_{1}^{(t-1)} \diamond (Z_{t-1}^{(1-1)}/(1-\lambda))\right] \\
&= (1-\lambda)U_{t}+(1-\lambda)^{2}( \varphi_{1}^{(t-1)} \diamond U_{t-1})
\end{split}
\end{equation*}
\begin{equation*}
\begin{split}
Z_{t}^{(2)} &= (1-\lambda)U_{t} +(1-\lambda)^{2} \left[  \varphi_{1}^{(t-1)} \diamond (Z_{t-1}^{(2-1)}/(1-\lambda))+ \varphi_{2}^{(t-2)} \diamond (Z_{t-2}^{(2-2)}/(1-\lambda)) \right] \\
&= (1-\lambda)U_{t}+(1-\lambda)^{2} \left[ \varphi_{1}^{(t-1)}\diamond U_{t-1} + \varphi_{1}^{(t-1)} \diamond (\varphi_{1}^{(t-2)} \diamond U_{t-2})+\varphi_{2}^{(t-2)} \diamond U_{t-2} \right]
\end{split}
\end{equation*}
\begin{equation*}
\begin{split}
Z_{t}^{(3)} &= (1-\lambda)U_{t} + (1-\lambda)^{2} [ \varphi_{1}^{(t-1)} \diamond (Z_{t-1}^{(3-1)}/(1-\lambda))+ \varphi_{2}^{(t-2)} \diamond (Z_{t-2}^{(3-2)}/(1-\lambda))+\\
&+  \varphi_{3}^{(t-3)} \diamond (Z_{t-3}^{(3-3)}/(1-\lambda)) ]\\
&= (1-\lambda)U_{t}+(1-\lambda)^{2} [ \varphi_{1}^{(t-1)}\diamond U_{t-1} + \varphi_{1}^{(t-1)} \diamond (\varphi_{1}^{(t-2)} \diamond U_{t-2})+\\
&+ \varphi_{2}^{(t-2)} \diamond U_{t-2}+\varphi_{1}^{(t-1)} \diamond (\varphi_{1}^{
(t-2)}\diamond(\varphi_{1}^{(t-3)}\diamond U_{t-3}))+\\
&+\varphi_{1}^{(t-1)} \diamond (\varphi_{2}^{(t-3)} \diamond U_{t-3}) + \varphi_{2}^{(t-2)} \diamond (\varphi_{1}^{(t-3)} \diamond U_{t-3}) + \varphi_{3}^{(t-3)} \diamond U_{t-3} ] .
\end{split}
\end{equation*}

 Since $Z_{t}^{(n)}$ is a finite weighted sum of independent GPD random variables, the expected value and the variance of $Z_{t}^{(n)}$ are well defined. Moreover, it can be seen that $E[Z_{t}^{(n)}]$ does not depend on $t$ but only on $n$, hence it can be denoted as $\mu_{n}$. Using Proposition \ref{Zrepr} and $\mu_{k}=0$ if $k<0$, it is possible to write $\mu_{n}$ as follows
\begin{equation}
\begin{split}
\mu_{n}&=(1-\lambda)E[U_{t}]+(1-\lambda)^{2}\sum_{i=1}^{n} E\left[ \varphi_{i}^{(t-i)}\diamond \left( \frac{Z_{t-i}^{(n-i)}}{1-\lambda}\right)\right] \\
&= \psi_{0}+\sum_{j=1}^{\infty} \psi_{j}\mu_{n-j} = D^{-1}(B)\alpha_{0}+H(B)\mu_{n}
\end{split} 
\end{equation}
from which it follows $D(B)\mu_{n}=G(B)\mu_{n} + \alpha_{0}\Leftrightarrow K(B)\mu_{n}=\alpha_{0}$, where $K(B)=D(B)-G(B)$.
From the last equation it can be seen that the sequence $\lbrace \mu_{n}\rbrace$ satisfies a finite difference equation with constant coefficients. The characteristic polynomial is $K(z)$ and all its roots lie outside the unit circle if $K(1)>0$. Under the assumption $K(1)>0$, the following holds true.
\begin{prop}\label{asprop}
If $K(1)>0$ then the sequence $\lbrace Z_{t}^{(n)} \rbrace_{n \in \mathbb{N}}$ has an almost sure limit.
\end{prop}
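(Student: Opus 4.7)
The plan is to establish almost sure convergence by proving that the sequence of $L^1$-increments $a_n := \sup_{t\in\mathbb{Z}} E|Z_t^{(n+1)}-Z_t^{(n)}|$ is summable. Once this is shown, Tonelli's theorem gives that $\sum_n |Z_t^{(n+1)}-Z_t^{(n)}|$ has finite expectation and is therefore almost surely finite, so the telescoping representation
\[
Z_t^{(n)} = Z_t^{(0)} + \sum_{k=0}^{n-1}\bigl(Z_t^{(k+1)}-Z_t^{(k)}\bigr)
\]
converges almost surely as $n\to\infty$.

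The core step is to derive a recursive bound on $a_n$. Using the thinning representation of Proposition \ref{Zrepr} (with the convention $Z_\tau^{(k)}=0$ for $k<0$) and coupling the two constructions at levels $n+1$ and $n$ through the common driving noise $\{U_\tau\}$ and auxiliary sequences $\{V_{\cdot,i,j}\}$ defined in (\ref{seqX})--(\ref{seqY}), I would write
\[
Z_t^{(n+1)}-Z_t^{(n)} = (1-\lambda)^2\sum_{i=1}^{n+1}\Bigl[\varphi_i^{(t-i)}\!\diamond\!\tfrac{Z_{t-i}^{(n+1-i)}}{1-\lambda}-\varphi_i^{(t-i)}\!\diamond\!\tfrac{Z_{t-i}^{(n-i)}}{1-\lambda}\Bigr].
\]
Conditioning on the sigma-field generated by the $Z_{t-i}^{(\cdot)}$ and invoking the mean-preserving property of the quasi-binomial thinning difference together with the triangle inequality, one obtains a convolution-type bound
\[
a_n \leq \sum_{i=1}^{n}\psi_i\,a_{n-i} + C\,\psi_{n+1},
\]
where $\{\psi_i\}_{i\geq 1}$ are the coefficients of $H(z)=G(z)/D(z)$ and $C$ is a finite constant arising from the boundary term $i=n+1$, which involves only $U_{t-(n+1)}$ since $Z_{t-(n+1)}^{(-1)}=0$.

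The remainder is a generating-function argument. Since $K(1)>0$ by hypothesis, Proposition \ref{P1} gives $\sum_{j\geq 1}\psi_j=H(1)=G(1)/D(1)<1$; furthermore, the assumption that the roots of $D(z)$ lie outside the unit disk makes $\{\psi_j\}$ itself geometrically decaying. Standard manipulations on the associated power series then force $\{a_n\}$ to be geometrically bounded and hence summable, and the Tonelli step above concludes the proof.

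The main obstacle is the contraction step: because $\rho_{\theta,\lambda}\!\circ\! X$ is a random functional of $X$ rather than a deterministic map, bounding $E|\rho\diamond X-\rho\diamond Y|$ in terms of $E|X-Y|$ for signed integer-valued $X,Y$ requires a careful conditioning on the shared auxiliary variables $\{V_{\cdot,i,j}\}$ used in both constructions. This parallels the analogous arguments in \cite{Fer2006} for Poisson INGARCH and \cite{Zhu2012} for generalized Poisson INGARCH, and needs to be adapted here to the signed thinning difference operator $\diamond$ of Definition 1.
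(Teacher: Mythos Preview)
Your approach is technically plausible but takes a much harder road than the paper. The paper does \emph{not} work directly with $Z_t^{(n)}$ at all: it simply invokes the decomposition $Z_t^{(n)}=X_t^{(n)}-Y_t^{(n)}$ into the two nonnegative GP-type sequences of (\ref{seqX})--(\ref{seqY}), appeals to the already-established almost sure convergence of each piece in the generalized Poisson INGARCH setting (\cite{Zhu2012}, following \cite{Fer2006}), and then observes the elementary fact that almost sure limits are preserved under linear combinations, so $X_t^{(n)}-Y_t^{(n)}\to X_t-Y_t$ a.s. That is the whole argument.

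The reason this shortcut works is precisely the point you flag as your ``main obstacle.'' For the nonnegative components, the coupling through the shared $V_{\cdot,i,j}$ makes $X_t^{(n)}$ (and $Y_t^{(n)}$) \emph{monotone nondecreasing} in $n$: increasing $n$ only adds further nonnegative summands, since $X_{t-i}^{(n+1-i)}\ge X_{t-i}^{(n-i)}$ recursively. Monotonicity plus boundedness of the means $\mu_n$ (which follows from $K(1)>0$) gives almost sure convergence immediately, with no need for any $L^1$-contraction estimate on the thinning. Your telescoping/convolution recursion $a_n\le\sum_i\psi_i a_{n-i}+C\psi_{n+1}$ is essentially the \cite{Fer2006} mechanism rewritten for the signed process, but to justify it you need $E|\varphi_i\diamond A-\varphi_i\diamond B|\le \psi_i\,E|A-B|$ for \emph{signed} $A,B$, and the signed operator $\diamond$ does not obviously inherit such a Lipschitz bound from the quasi-binomial thinning; mean-preservation alone is not enough once absolute values enter. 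In practice, any attempt to push this through leads you back to bounding $|Z_t^{(n+1)}-Z_t^{(n)}|$ by the sum of the $X$- and $Y$-increments separately --- i.e., back to the decomposition.

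So your outline is not wrong, but it reinvents (with extra difficulty) what the paper gets for free by splitting $Z_t^{(n)}$ into its two monotone GP halves and citing \cite{Zhu2012}.
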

\begin{proof}
See Appendix \ref{App:proofs}.
\end{proof}
\begin{prop}\label{msprop}
If $K(1) > 0$ then the sequence $\lbrace Z_{t}^{(n)} \rbrace$ has a mean-square limit.
\end{prop}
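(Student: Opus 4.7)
The plan is to establish that $\{Z_t^{(n)}\}_{n \in \mathbb{N}}$ is Cauchy in $L^2$; together with the almost sure limit provided by Proposition \ref{asprop}, this will yield a well defined mean-square limit that coincides with the a.s.\ one. The basic tool is the cascade representation of Proposition \ref{Zrepr}, which expresses $Z_t^{(n)}$ as a finite sum of mutually independent terms built by nested $\diamond$-thinnings from the innovations $\{U_{t-i}\}_{0 \le i \le n}$, so that its first two moments can be computed exactly.

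First, I would examine the difference $Z_t^{(n+m)} - Z_t^{(n)}$ for $m \ge 1$. Unwinding the recursion in Eq. \ref{Zthin} and using the pattern already visible in the explicit expansions for $Z_t^{(0)},\ldots,Z_t^{(3)}$ displayed after Proposition \ref{Zrepr}, every term involving $U_{t-i}$ with $i \le n$ appears in both $Z_t^{(n+m)}$ and $Z_t^{(n)}$ and cancels. What remains is a sum of $\diamond$-cascades terminating in innovations $U_{t-i}$ with $i > n$, each weighted by a product of thinning coefficients whose means sum, by construction, to $\psi_i$ (the coefficients of $H(B)=G(B)/D(B)$).

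Second, I would bound the $L^2$ norm of each residual term. Using the quasi-binomial thinning variance formula recalled in the appendix and the law of total variance applied along each cascade, the second moment of a term rooted at $U_{t-i}$ is bounded by a constant multiple of $\psi_i^2$ plus a lower-order contribution from the additional variance injected by the thinning at each layer; the latter is also controlled by the same $\psi$-coefficients because the means of the $\varphi_j^{(\cdot)}$ are $\psi_j/(1-\lambda)$. Collecting these bounds produces an inequality of the form
\[
E\bigl[(Z_t^{(n+m)} - Z_t^{(n)})^2\bigr] \le C \sum_{i>n} \psi_i^2
\]
for some constant $C<\infty$ independent of $n$ and $m$. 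Under $K(1)>0$, the polynomial $D(z)$ has all its roots outside the closed unit disk (since $\beta_j\ge 0$ and $D(1)\le K(1)+G(1)$ with $G(1)\ge 0$ forces $D(1)>0$), so $H(z)$ is analytic in a neighborhood of the closed unit disk and the $\psi_i$ decay geometrically. The tail therefore vanishes as $n\to\infty$, yielding the Cauchy property and hence the mean-square limit by completeness of $L^2$.

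The main obstacle will be the bookkeeping of the variance contributions through the $\diamond$-cascade: unlike the mean, which propagates linearly via the $\psi$-coefficients and led cleanly to $K(B)\mu_n=\alpha_0$, the variance accumulates an extra term at every layer from the conditional variance of each quasi-binomial thinning, and these contributions must be summed over all paths in the cascade. The argument goes through because (i) all the variables in the cascade are mutually independent, so cross-covariances vanish, and (ii) the same geometric decay of the $\psi_i$ that ensures summability of the first-order residuals also dominates the second-order ones. Finally, uniqueness of limits in probability (combined with the a.s.\ limit from Proposition \ref{asprop}) identifies the $L^2$ limit with the almost sure one, completing the proof.
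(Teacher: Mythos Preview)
Your approach is plausible but takes a genuinely different and much more laborious route than the paper. The paper's proof occupies only a few lines: it recalls that $Z_t^{(n)}=X_t^{(n)}-Y_t^{(n)}$, invokes the $L^2$ convergence of each of $X_t^{(n)}$ and $Y_t^{(n)}$ already established for GP-INGARCH sequences in \cite{Zhu2012}, and then applies the elementary fact that $L^2$ limits are preserved under linear combinations. No cascade bookkeeping, no Cauchy estimate, no tail bound on the $\psi_i$ is needed; the hard analytic work is outsourced to the GP case and the result follows by linearity.

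Your direct Cauchy argument is closer in spirit to the original \cite{Fer2006} proof for the Poisson case and has the merit of being self-contained (it does not rely on an external result for the component sequences). The cancellation you describe for the terms rooted at $U_{t-i}$ with $i\le n$ is legitimate because the construction couples all $Z_t^{(n)}$ to the \emph{same} underlying $U$ and $V$ variables, so these terms cancel as random variables, not merely in law. The weak point of your sketch is the variance bookkeeping through the $\diamond$-cascade, which you yourself flag as the main obstacle: the bound $E[(Z_t^{(n+m)}-Z_t^{(n)})^2]\le C\sum_{i>n}\psi_i^2$ is asserted rather than derived, and making it precise requires tracking the extra variance injected at each thinning layer along every path. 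This can be done, but it is exactly the tedium the paper avoids by passing to the components.
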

\begin{proof}
See Appendix \ref{App:proofs}.
\end{proof}
\subsection{Stationarity}

Given Proposition \ref{asprop}, if we can show that $\lbrace Z_{t}^{(n)} \rbrace$ is a strictly stationary process, for any given $n$, then also its almost sure limit $\lbrace Z_{t} \rbrace_{t \in \mathbb{Z}}$ will be a strictly stationary process. In order to show stationarity for $\lbrace Z_{t}^{(n)} \rbrace$, we follow a procedure similar to the one in \cite{Fer2006}. Let us define the probability generating function (pgf) $g_{\mathbf{W}}(\mathbf{t})$ of the random vector $\mathbf{W}=(W_{1},\ldots,W_{k})$
\begin{equation}
g_{\mathbf{W}}(\mathbf{t}) = E\left[ \prod_{i=1}^k t_{i}^{W_{i}}\right] = \sum_{\mathbf{W}\in \mathbb{N}^{k}} p(\mathbf{w})\prod_{i=1}^k t_{i}^{W_{i}}
\end{equation}
where $p(\mathbf{W})=Pr(\mathbf{W}=(W_{1}, \ldots, W_{k})')$ and $\mathbf{t}=(t_{1}, \ldots, t_{k})' \in \mathbb{C}^{k}$.
The probability generating function has the following properties.

\begin{prop}\label{prop:pgf}
Let $\mathbf{Z}_{1\ldots k}^{(n)}=(Z_{1}^{(n)},\ldots, Z_{k}^{(n)})$ be a subsequence of $\lbrace Z_{t}^{(n)}\rbrace_{t \in \mathbb{Z}}$ where, without loss of generality, we choose the first $k$ periods. Let $\mathbf{X}_{1\ldots k}^{(n)}=(X_{1}^{(n)},\ldots, X_{k}^{(n)})$ and $\mathbf{Y}_{1\ldots k}^{(n)}=(Y_{1}^{(n)},\ldots, Y_{k}^{(n)})$ be such that $\mathbf{Z}_{1\ldots k}^{(n)} = (\mathbf{X}_{1\ldots k}^{(n)} - \mathbf{Y}_{1\ldots k}^{(n)})'$ then
\begin{equation}\label{Eq:diffpgf}
g_{\mathbf{Z}_{1\ldots k}}(\mathbf{t})=g_{\mathbf{X}_{1\ldots k}}(\mathbf{t})g_{\mathbf{Y}_{1\ldots k}}(\mathbf{t^{-1}})
\end{equation}
\end{prop}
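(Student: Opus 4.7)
The plan is to prove Proposition \ref{prop:pgf} as a short factorization argument based on the independence of the two constituent processes. The structure of the statement is exactly the standard rule for the pgf of a difference of independent integer-valued random vectors, so the main work is checking that the independence hypothesis really holds for the construction in Eq.\ \ref{seqX}--\ref{seqY}.

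First I would verify independence of $\mathbf{X}_{1\ldots k}^{(n)}$ and $\mathbf{Y}_{1\ldots k}^{(n)}$. Reading off Eq.\ \ref{seqX}, each $X_{t}^{(n)}$ is a measurable function of the innovations $\{U_{1s}: s\le t\}$ together with the secondary variables $\{V_{1\tau,i,j}: \tau\le t-1,\, i\ge 1,\, j\ge 1\}$; analogously, each $Y_{t}^{(n)}$ is a measurable function of $\{U_{2s}, V_{2\tau,i,j}\}$. The ambient assumption on the excerpt states that all of $U_{1\cdot}, U_{2\cdot}, V_{1\cdot,\cdot,\cdot}, V_{2\cdot,\cdot,\cdot}$ are mutually independent, so $\mathbf{X}_{1\ldots k}^{(n)}$ and $\mathbf{Y}_{1\ldots k}^{(n)}$ are measurable with respect to independent $\sigma$-algebras and are therefore independent as random vectors.

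Next I would apply the definition of the joint pgf directly. Writing $Z_{i}^{(n)}=X_{i}^{(n)}-Y_{i}^{(n)}$ and splitting exponents,
\begin{equation*}
g_{\mathbf{Z}_{1\ldots k}}(\mathbf{t})
=E\Bigl[\prod_{i=1}^{k} t_{i}^{X_{i}^{(n)}-Y_{i}^{(n)}}\Bigr]
=E\Bigl[\prod_{i=1}^{k} t_{i}^{X_{i}^{(n)}}\cdot \prod_{i=1}^{k} t_{i}^{-Y_{i}^{(n)}}\Bigr].
\end{equation*}
Independence of the two vectors lets the expectation factorise, yielding
\begin{equation*}
g_{\mathbf{Z}_{1\ldots k}}(\mathbf{t})
=E\Bigl[\prod_{i=1}^{k} t_{i}^{X_{i}^{(n)}}\Bigr]\,E\Bigl[\prod_{i=1}^{k} (t_{i}^{-1})^{Y_{i}^{(n)}}\Bigr]
=g_{\mathbf{X}_{1\ldots k}}(\mathbf{t})\,g_{\mathbf{Y}_{1\ldots k}}(\mathbf{t}^{-1}),
\end{equation*}
which is exactly Eq.\ \ref{Eq:diffpgf}.

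I do not expect a serious obstacle: the only non-routine point is the independence step, and that is immediate from the way Eq.\ \ref{seqX}--\ref{seqY} segregate the innovations into two disjoint families. One minor caveat worth flagging in the write-up is the domain of $\mathbf{t}$: since $\mathbf{Y}_{1\ldots k}^{(n)}$ takes values in $\mathbb{N}^{k}$ the map $\mathbf{t}\mapsto g_{\mathbf{Y}_{1\ldots k}}(\mathbf{t}^{-1})$ is well-defined on $\{|t_{i}|\ge 1\}$, and the identity holds on the intersection of the natural domains of the three pgfs (in particular on the distinguished torus $|t_{i}|=1$), which is the range needed for all standard applications of the result.
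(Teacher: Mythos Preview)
Your proof is correct and follows the same approach as the paper's own proof, which is a one-line factorisation using independence of $\mathbf{X}$ and $\mathbf{Y}$ to split the expectation. Your version is in fact more careful: you explicitly justify the independence from the construction in Eq.\ \ref{seqX}--\ref{seqY} and flag the domain issue for $\mathbf{t}^{-1}$, neither of which the paper spells out.
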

\begin{proof} 
See Appendix \ref{App:proofs}
\end{proof}
Using the probability generating function, in the following we know the stationarity of the process.
\begin{prop}
$\lbrace Z_{t}^{(n)} \rbrace_{t \in \mathbb{Z}}$ is a strictly stationary process, for any fixed value of n.
\end{prop}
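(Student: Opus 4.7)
The plan is to reduce the claim to strict stationarity of the two component sequences $\{X_{t}^{(n)}\}_{t \in \mathbb{Z}}$ and $\{Y_{t}^{(n)}\}_{t \in \mathbb{Z}}$ defined in \ref{seqX}--\ref{seqY}, and then to establish each by induction on $n$. Because the innovations $\{U_{1t}, V_{1,t,i,j}\}$ driving $X^{(n)}$ are, by construction, mutually independent of the innovations $\{U_{2t}, V_{2,t,i,j}\}$ driving $Y^{(n)}$, the two processes $\{X_{t}^{(n)}\}$ and $\{Y_{t}^{(n)}\}$ are independent. By Proposition \ref{prop:pgf}, for every $k \ge 1$ and every shift $h \in \mathbb{Z}$ the joint pgf of $(Z_{1+h}^{(n)}, \ldots, Z_{k+h}^{(n)})$ factorises as $g_{\mathbf{X}_{1+h,\ldots,k+h}^{(n)}}(\mathbf{t}) \, g_{\mathbf{Y}_{1+h,\ldots,k+h}^{(n)}}(\mathbf{t}^{-1})$, so shift-invariance of each factor gives shift-invariance of the joint pgf of $\mathbf{Z}^{(n)}$, and hence strict stationarity of $\{Z_{t}^{(n)}\}$.

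I would then prove strict stationarity of $\{X_{t}^{(n)}\}$ by induction on $n$; the argument for $\{Y_{t}^{(n)}\}$ is identical. For $n=0$, the identity $X_{t}^{(0)} = (1-\lambda) U_{1t}$ together with the i.i.d.\ hypothesis on $\{U_{1t}\}_{t}$ makes $\{X_{t}^{(0)}\}$ an i.i.d., hence trivially stationary, sequence. For the inductive step, I strengthen the hypothesis to joint strict stationarity of the vector process $\{(X_{t}^{(0)}, X_{t}^{(1)}, \ldots, X_{t}^{(n-1)})\}_{t}$, which follows from the same induction because at each order $m$ the variable $X_{t}^{(m)}$ is a measurable, time-homogeneous function $F_{m}$ of the innovations indexed by times $\le t$. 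Inspecting \ref{seqX}, $X_{t}^{(n)}$ is then a measurable function $F_{n}$ of $U_{1t}$, of the i.i.d.\ array $\{V_{1,t-i,i,j}\}_{i=1,\ldots,n,\, j \ge 1}$, and of $(X_{t-1}^{(n-1)}, \ldots, X_{t-n}^{(0)})$; crucially $F_{n}$ itself does not depend on $t$. The time shift $t \mapsto t+h$ therefore acts on $X_{t}^{(n)}$ exactly by shifting each innovation argument by $h$, and since the joint law of the innovations and of the lagged variables is invariant under that shift (by the i.i.d.\ hypothesis and the induction hypothesis, respectively), the finite-dimensional distributions of $\{X_{t}^{(n)}\}$ are shift-invariant.

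The one bookkeeping point I expect to need care is the random upper limit $X_{t-i}^{(n-i)}/(1-\lambda)$ in the inner sum of \ref{seqX}: one has to verify that this data-dependent truncation does not destroy the shift-equivariance of $F_{n}$. This is handled by viewing $\{V_{1,\tau,i,j}\}_{j \ge 1}$ as a full i.i.d.\ sequence available at each pair $(\tau,i)$, so that the sum is a measurable function of the entire sequence together with the random bound $X_{t-i}^{(n-i)}$; the shift then acts on the time index $\tau$ alone and leaves joint laws unchanged. With this observation in place both induction steps close, and combining with the pgf reduction of the first paragraph delivers strict stationarity of $\{Z_{t}^{(n)}\}_{t \in \mathbb{Z}}$ for every fixed $n$.
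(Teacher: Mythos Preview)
Your proposal is correct, and it reaches the same destination as the paper but by a somewhat different route. The paper works directly with $Z^{(n)}$: it conditions on the entire innovation block $\mathbf{U}_{1-n\ldots k}=(\mathbf{U}_{1,1-n\ldots k},\mathbf{U}_{2,1-n\ldots k})$, observes that once these are fixed the remaining randomness comes from the i.i.d.\ arrays $\{V_{1,\tau,\eta,j}\}$ and $\{V_{2,\tau,\eta,j}\}$ entering in a time-homogeneous fashion, so the conditional law of $\mathbf{Z}_{1\ldots k}^{(n)}$ given $\mathbf{U}=\mathbf{v}$ equals that of $\mathbf{Z}_{1+h\ldots k+h}^{(n)}$ given the shifted $\mathbf{U}=\mathbf{v}$, and then integrates out using the i.i.d.\ property of the $U$'s. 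No induction on $n$ and no separate treatment of $X$ and $Y$.

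Your approach instead (i) uses Proposition~\ref{prop:pgf} plus independence of the two innovation families to reduce to strict stationarity of $\{X_t^{(n)}\}$ and $\{Y_t^{(n)}\}$ separately, and (ii) proves each by induction on $n$, phrasing $X_t^{(n)}$ as a time-homogeneous measurable function of the underlying i.i.d.\ array. This is a clean and arguably more transparent organization; the only place to be careful is the one you flagged yourself, namely that the ``joint law of the innovations and of the lagged variables'' is shift-invariant, which requires more than the two marginals separately. You have the right fix already in the text: since every $X_{t}^{(m)}$ is a time-homogeneous function of the full innovation array $\{U_{1,s}\}\cup\{V_{1,\tau,i,j}\}$, joint shift-invariance of any finite collection (mixing new innovations and lagged variables) follows from shift-invariance of the array. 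With that made explicit, the induction closes. Both arguments are ultimately the same observation---$Z_t^{(n)}$ is a time-homogeneous function of an i.i.d.\ field---but yours packages it recursively while the paper packages it via conditioning on $\mathbf{U}$.
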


\begin{proof}
Let $k$ and $h$ be two positive integers. As pointed out by \cite{Fer2006}, \cite{BroDav1991} show that to prove strictly stationarity we only need to show that
\begin{equation}\label{eqjd}
\mathbf{Z}_{1+h\ldots k+h}^{(n)} = (Z_{1+h}^{(n)},\ldots ,Z_{k+h}^{(n)})' \;\mbox{   and   } \; \mathbf{Z}_{1\ldots k}^{(n)} = (Z_{1}^{(n)},\ldots ,Z_{k}^{(n)})'
\end{equation}
have the same joint distribution, where we can rewrite both vectors in Eq. \ref{eqjd} as
\begin{equation}
\begin{split}
\mathbf{Z}_{1+h\ldots k+h}^{(n)} &= (\mathbf{X}_{1+h\ldots k+h}^{(n)} - \mathbf{Y}_{1+h\ldots k+h}^{(n)})'\\
&= ((X_{1+h}^{(n)}- Y_{1+h}^{(n)}),\ldots ,(X_{k+h}^{(n)}-Y_{k+h}^{(n)}))'
\end{split}
\end{equation}
and
\begin{equation}
\begin{split}
\mathbf{Z}_{1\ldots k}^{(n)} &= (\mathbf{X}_{1\ldots k}^{(n)} - \mathbf{Y}_{1\ldots k}^{(n)})'\\
&= ((X_{1}^{(n)}- Y_{1}^{(n)}),\ldots ,(X_{k}^{(n)}-Y_{k}^{(n)}))'
\end{split}
\end{equation}
To show that the two vectors have the same probability generating function, we first write the pgfs of $X$, $Y$ and $Z$ as shown above.
\begin{equation}
\begin{split}
g_{\mathbf{X}_{1\ldots k}^{(n)}}(\mathbf{t}) &= E\left[ \prod_{j=1}^{k}t_{j}^{X_{j}^{(n)}}\right]= E \left[ E_{\mathbf{X}_{1 \ldots k}^{(n)} \vert \mathbf{U}_{1,1-n \ldots k}}\left[ \prod_{j=1}^{k}t_{j}^{X_{j}^{(n)}}\right] \right] \\
&= \sum_{\mathbf{v_{1}} \in \mathbb{N}^{(k+n)}} E_{\mathbf{X}_{1 \ldots k}^{(n)} \vert \mathbf{U}_{1,1-n \ldots k}=\mathbf{v}_{1}}\left[\prod_{j=1}^{k}t_{j}^{X_{j}^{(n)}}\right] Pr\left( \mathbf{U}_{1,1-n \ldots k}=\mathbf{v}_{1} \right) 
\end{split}
\end{equation}
\begin{equation}
\begin{split}
g_{\mathbf{Y}_{1\ldots k}^{(n)}}(\mathbf{t})
= \sum_{\mathbf{v_{2}} \in \mathbb{N}^{(k+n)}} E_{\mathbf{Y}_{1 \ldots k}^{(n)} \vert \mathbf{U}_{2,1-n \ldots k}=\mathbf{v}_{2}}\left[\prod_{j=1}^{k}t_{j}^{Y_{j}^{(n)}}\right] Pr\left( \mathbf{U}_{2,1-n \ldots k}=\mathbf{v}_{2} \right) 
\end{split}
\end{equation}

\begin{equation}
G_{\mathbf{Z}_{1\ldots k}^{(n)}}(\mathbf{t}) 
= \sum_{\mathbf{v} \in \mathbb{N}^{(k+n)}} E_{\mathbf{Z}_{1 \ldots k}^{(n)} \vert \mathbf{U}_{1-n \ldots k}=\mathbf{v}}\left[ \prod_{j=1}^{k}t_{j}^{(X_{j}^{(n)}-Y_{j}^{(n)})}\right] Pr\left( \mathbf{U}_{1-n \ldots k}=\mathbf{v} \right) 
\end{equation}

By the thinning representation, for a any given value $\mathbf{u}_{1,t-n\ldots t+k} = (u_{1,t-n}, \ldots , u_{1,t+k})'$ of the vector $\mathbf{U}_{1,t-n \ldots t+k}= (U_{1,t-n}, \ldots , U_{1,t+k})'$ and $\mathbf{u}_{2,t-n\ldots t+k} = (u_{2,t-n}, \ldots , u_{2,t+k})'$ of the vector $\mathbf{U}_{2,t-n \ldots t+k}= (U_{2,t-n}, \ldots , U_{2,t+k})'$, the components of the vectors $(X_{1}^{(n)},\ldots ,X_{k}^{(n)})'$ and $(Y_{1}^{(n)},\ldots ,Y_{k}^{(n)})'$ are computed using a set of well-determined variables from the sequences $\mathcal{V}_{1,\tau,\eta}$ and $\mathcal{V}_{2,\tau,\eta}$, where $\tau=t-n, \ldots , t+k-1$  and $\eta = 1, \ldots , n$. Therefore, if $\mathbf{U}_{1,t-n \ldots t+k}$ and $\mathbf{U}_{1,t-n+h \ldots t+k+h}$ are both fixed to the same value $\mathbf{v_{1}}$ and $\mathbf{U}_{2,t-n \ldots t+k}$ and $\mathbf{U}_{2,t-n+h \ldots t+k+h}$ are both fixed to the same value $\mathbf{v_{2}}$, it follows that the conditional distribution of
\begin{equation*}
\mathbf{Z}_{1+h\ldots k+h}^{(n)} = ((X_{1+h}^{(n)}- Y_{1+h}^{(n)}),\ldots ,(X_{k+h}^{(n)}-Y_{k+h}^{(n)}))'
\end{equation*}
and 
\begin{equation*}
\mathbf{Z}_{1\ldots k}^{(n)} = ((X_{1}^{(n)}- Y_{1}^{(n)}),\ldots ,(X_{k}^{(n)}-Y_{k}^{(n)}))'
\end{equation*}
given $\mathbf{U}_{t-n \ldots t+k}$ and $\mathbf{U}_{t-n+h \ldots t+k+h}$, are the same. Accordingly, 
\begin{equation*}
E_{\mathbf{Z}_{1+h \ldots k+h}^{(n)} \vert \mathbf{U}_{1-n+h \ldots k+h}=\mathbf{v}}\left[\prod_{j=1}^{k}t_{j}^{Z_{j+h}^{(n)}}\right] = E_{\mathbf{Z}_{1 \ldots k}^{(n)} \vert \mathbf{U}_{1-n \ldots k}=\mathbf{v}}\left[\prod_{j=1}^{k}t_{j}^{Z_{j}^{(n)}}\right]
\end{equation*}
and, since 
\begin{equation*}
Pr\left( \mathbf{U}_{1-n+h \ldots k+h}=\mathbf{v} \right) =Pr\left( \mathbf{U}_{1-n \ldots k}=\mathbf{v} \right),
\end{equation*}
it is possible to write
\begin{equation*}
\begin{split}
g_{\mathbf{Z}_{1\ldots k}^{(n)}}(\mathbf{t}) &=\sum_{\mathbf{v} \in \mathbb{Z}^{(k+n)}}  E_{\mathbf{Z}_{1+h \ldots k+h}^{(n)} \vert \mathbf{U}_{1-n+h \ldots k+h}=\mathbf{v}}\left[\prod_{j=1}^{k}t_{j}^{Z_{j+h}^{(n)}}\right] Pr\left( \mathbf{U}_{1-n+h \ldots k+h}=\mathbf{v} \right)\\
&= g_{\mathbf{Z}_{1+h \ldots k+h}^{(n)}}(\mathbf{t})
\end{split}
\end{equation*}
and claim that $\mathbf{Z}_{1+h\ldots k+h}^{(n)}$ and $\mathbf{Z}_{1\ldots k}^{(n)}$ have the same joint distribution.
\end{proof}

\begin{prop}
The process $\lbrace Z_{t}\rbrace_{t \in \mathbb{Z}}$ is a strictly stationary process.
\end{prop}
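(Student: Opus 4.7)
The plan is to deduce strict stationarity of the limit process $\{Z_t\}_{t\in\mathbb{Z}}$ from the strict stationarity of the approximating sequences $\{Z_t^{(n)}\}_{t\in\mathbb{Z}}$ (established in the preceding proposition) by passing to the limit in $n$ along finite-dimensional distributions. Concretely, to verify strict stationarity it suffices, for every $k\geq 1$ and every shift $h\in\mathbb{Z}$, to check that
\[
(Z_1,\ldots,Z_k) \stackrel{d}{=} (Z_{1+h},\ldots,Z_{k+h}).
\]

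First, I would invoke Proposition \ref{asprop}: under the standing hypothesis $K(1)>0$, for each fixed $t$ we have $Z_t^{(n)} \to Z_t$ almost surely as $n\to\infty$. Applied coordinatewise, this gives joint almost sure convergence
\[
\bigl(Z_1^{(n)},\ldots,Z_k^{(n)}\bigr) \xrightarrow[n\to\infty]{\text{a.s.}} (Z_1,\ldots,Z_k),
\qquad
\bigl(Z_{1+h}^{(n)},\ldots,Z_{k+h}^{(n)}\bigr) \xrightarrow[n\to\infty]{\text{a.s.}} (Z_{1+h},\ldots,Z_{k+h}),
\]
and hence convergence in distribution of both $k$-tuples.

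Next, I would apply the previous proposition, which states that $\{Z_t^{(n)}\}_{t\in\mathbb{Z}}$ is strictly stationary for every fixed $n$. In particular, for each $n$ the vectors $(Z_1^{(n)},\ldots,Z_k^{(n)})$ and $(Z_{1+h}^{(n)},\ldots,Z_{k+h}^{(n)})$ share the same joint distribution; equivalently, by Proposition \ref{prop:pgf}, they have the same probability generating function on $\mathbb{C}^k$. Convergence in distribution of the two sequences, combined with equality of distributions at every finite $n$, forces the two limiting joint distributions to coincide (e.g.\ by convergence of pgfs on a suitable neighbourhood of the unit polydisk, or simply because $P(W_n\leq x)=P(V_n\leq x)$ for all continuity points and all $n$ passes to the limit). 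Since $k$ and $h$ were arbitrary, all finite-dimensional distributions of $\{Z_t\}$ are shift-invariant, which is the definition of strict stationarity.

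The only delicate point is the limit interchange: one must ensure that almost sure convergence of the approximations transfers to equality of the limiting joint laws. Because the $Z_t^{(n)}$ are $\mathbb{Z}$-valued and converge a.s.\ to $Z_t$, the limit vectors are themselves $\mathbb{Z}$-valued, so convergence in distribution on $\mathbb{Z}^k$ is equivalent to pointwise convergence of the probability mass functions on the (countable) atoms. Equality of the pmfs of the two shifted vectors at every finite $n$ then passes to the limit atom by atom, eliminating any need for a tightness argument. With this step in hand the proof closes immediately.
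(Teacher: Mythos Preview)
Your argument is correct and follows exactly the route the paper intends. In fact, the paper gives no separate proof of this proposition: it simply remarks, just before establishing strict stationarity of $\{Z_t^{(n)}\}$, that ``if we can show that $\{Z_t^{(n)}\}$ is a strictly stationary process, for any given $n$, then also its almost sure limit $\{Z_t\}_{t\in\mathbb{Z}}$ will be a strictly stationary process,'' and then states the present proposition without further justification. Your write-up supplies the missing details of that limit passage and is, if anything, more careful than the paper on this point.
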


\begin{prop}\label{prop:moments}
The first two moments of $\lbrace Z_{t}\rbrace_{t \in \mathbb{Z}}$ are finite.
\end{prop}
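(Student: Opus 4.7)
The plan is to deduce finiteness of the first two moments of $Z_t$ directly from the two convergence results already established for the approximating sequence $\{Z_t^{(n)}\}_{n\in\mathbb{N}}$. Proposition \ref{msprop} asserts that, under the standing assumption $K(1)>0$, the sequence converges in mean square, i.e.\ in $L^{2}(\Omega,\mathcal{F},P)$. Since $L^{2}$ is a complete normed space, the $L^{2}$-limit, call it $\widehat{Z}_{t}$, must itself lie in $L^{2}$, so that $E[\widehat{Z}_{t}^{\,2}]<\infty$. In particular the second moment of this limit is finite.

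The next step is to identify $\widehat{Z}_{t}$ with $Z_{t}$. Convergence in mean square implies convergence in probability, and hence there exists a subsequence $\{Z_{t}^{(n_{k})}\}_{k}$ converging a.s.\ to $\widehat{Z}_{t}$. By Proposition \ref{asprop}, the full sequence $\{Z_{t}^{(n)}\}$ already converges almost surely to $Z_{t}$, so the same holds along the subsequence. Uniqueness of almost sure limits then yields $Z_{t}=\widehat{Z}_{t}$ almost surely, hence $E[Z_{t}^{2}]<\infty$. Cauchy--Schwarz (or Jensen's inequality) gives $|E[Z_{t}]|\le E[|Z_{t}|]\le (E[Z_{t}^{2}])^{1/2}<\infty$, so the first moment is finite as well.

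The genuine work behind this argument is already absorbed in Propositions \ref{asprop} and \ref{msprop}; the only obstacle at this stage is the clean identification of the $L^{2}$ and almost sure limits, which is handled by the subsequence argument above. As a sanity check and alternative route, one may verify the conclusion by direct computation on the thinning representation of Proposition \ref{Zrepr}: taking expectations through the cascade yields the finite-difference equation $K(B)\mu_{n}=\alpha_{0}$, whose solution converges to $\alpha_{0}/K(1)$ under $K(1)>0$; computing second moments using the independence of the innovations $\{U_{t}\}$, the variance formula for the quasi-binomial thinning, and the diamond operator yields a recursion whose coefficients decay geometrically under the same condition. Passing $n\to\infty$ and invoking mean-square convergence then gives $E[Z_{t}]=\lim_{n\to\infty}\mu_{n}$ and $V(Z_{t})=\lim_{n\to\infty}V(Z_{t}^{(n)})$, both finite, which recovers the proposition.
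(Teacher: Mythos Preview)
Your argument is correct, but it proceeds along a different path from the paper's own proof. The paper does not invoke completeness of $L^{2}$ or identify the mean-square and almost sure limits abstractly; instead it goes back to the decomposition $Z_{t}=X_{t}-Y_{t}$ into two independent GP components and quotes \cite{Zhu2012} for the finiteness of the first two moments of $X_{t}$ and $Y_{t}$ separately, then concludes by linearity of expectation and additivity of the variance under independence.

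Your route has the advantage of being self-contained: once Propositions~\ref{asprop} and~\ref{msprop} are in hand, membership of $Z_{t}$ in $L^{2}$ is immediate from completeness, and no further appeal to the component-wise results of \cite{Zhu2012} is needed. The paper's route, by contrast, is more concrete, yielding explicit bounds $E[Z_{t}]\le C_{1}+C_{1}'$ and $V(Z_{t})\le C_{2}+C_{2}'$ in terms of the constants from the GP case. Your alternative sketch via the thinning recursion is closer in spirit to what the paper does elsewhere (e.g., in deriving the unconditional mean and the autocovariance equations), but it is not the argument the paper uses for this particular proposition.
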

\begin{proof}
See Appendix \ref{App:proofs}.
\end{proof}

\subsection{Conditional law of $\lbrace Z_{t}^{(n)}\rbrace_{t \in \mathbb{Z}}$ given $\mathcal{F}_{t-1}$}
To verify that the distributional properties of the sequence are satisfied, we will follow the same arguments in \cite{Fer2006} adjusted for our sequence. Given $\mathcal{F}_{t-1}=\sigma (\lbrace Z_{u}\rbrace_{u\leq t-1})$, for $t \in \mathbb{Z}$, let
\begin{equation*}
\mu_{t}=\alpha_{0}D^{-1}(1)+\sum_{j=1}^{n} \psi_{j}Z_{t-j}.
\end{equation*}
The sequence $\lbrace \mu_{t} \rbrace$ satisfies
\begin{equation}
\mu_{t}=\alpha_{0} + \sum_{i=1}^{p} \alpha_{i}Z_{t-i}+\sum_{j=1}^{q} b_{j}\mu_{t-j}.
\end{equation}
Moreover, recalling that $Z_{t}=X_{t}-Y_{t}$, for a fixed t, we can consider three sequences, $\lbrace r_{1t}^{(n)} \rbrace_{n \in \mathbb{N}}$, $\lbrace r_{2t}^{(n)} \rbrace_{n \in \mathbb{N}}$ and $\lbrace r_{t}^{(n)} \rbrace_{n \in \mathbb{N}}$, defined by 
\begin{equation}
r_{1t}^{(n)}=(1-\lambda)U_{1t}+(1-\lambda)\sum_{i=1}^{n} \sum_{j=1}^{X_{t-i}} V_{1t-i,i,k}
\end{equation}
\begin{equation}
r_{2t}^{(n)}=(1-\lambda)U_{2t}+(1-\lambda)\sum_{i=1}^{n} \sum_{j=1}^{Y_{t-i}} V_{2t-i,i,k}.
\end{equation}
and
\begin{equation}
r_{t}^{(n)}=r_{1t}^{(n)}-r_{2t}^{(n)}.
\end{equation}
As claimed by \cite{Fer2006}, there is a subsequence $\lbrace n_{k} \rbrace$ such that $r_{t}^{(n_{k})}$ converges almost surely to $Z_{t}$. We know that
\begin{equation}\label{r1}
X_{t} - r_{1t}^{(n)} = (X_{t} - X_{t}^{(n)} ) + (X_{t}^{(n)} - r_{1t}^{(n)})
\end{equation}
and 
\begin{equation}\label{r2}
Y_{t} - r_{2t}^{(n)} = (Y_{t} - Y_{t}^{(n)} ) + (Y_{t}^{(n)} - r_{2t}^{(n)}).
\end{equation}
Since $X_{t}^{(n)} \overset{a.s.}{\longrightarrow} X_{t}$ and $Y_{t}^{(n)} \overset{a.s.}{\longrightarrow} Y_{t}$, we know that the first term in both Eq. \ref{r1} and \ref{r2} goes to zero. Therefore, we can write
\begin{equation}\label{rz}
\begin{split}
Z_{t} - r_{t}^{(n)} &= (X_{t} - Y_{t}) - (r_{1t}^{(n)} - r_{2t}^{(n)})\\
&= \left[ (X_{t} - X_{t}^{(n)})- (Y_{t} - Y_{t}^{(n)})\right] +  \left[ (X_{t}^{(n)} - r_{1t}^{(n)})- (Y_{t}^{(n)} - r_{2t}^{(n)})\right]\\
&= (Z_{t} - Z_{t}^{(n)})+ \left[ (X_{t}^{(n)} - Y_{t}^{(n)})- (r_{1t}^{(n)}- r_{2t}^{(n)})\right]\\
&=  (Z_{t} - Z_{t}^{(n)})+ \left[ Z_{t}^{(n)} - (r_{1t}^{(n)}- r_{2t}^{(n)})\right],
\end{split}
\end{equation}
and, as before, $(Z_{t} - Z_{t}^{(n)})$ goes to zero since we have proven almost sure convergence.\\
We have now to show that the second term in the last line of Eq. \ref{rz} goes to zero, for this purpose we need to find a sequence 
\begin{equation*}
W_{t}^{(n)}= (r_{1t}^{(n)}- r_{2t}^{(n)}) - Z_{t}^{(n)}
\end{equation*}
that converges almost surely to zero. For this reason it is more suitable to rewrite the previous sequence as follows
\begin{equation}
\begin{split}
W_{t}^{(n)} &= (r_{1t}^{(n)} - r_{2t}^{(n)}) - (X_{t} - Y_{t})\\
&= (r_{1t}^{(n)} - X_{t} ) - (r_{2t}^{(n)} - Y_{t})
\end{split}
\end{equation}
\cite{Fer2006} show that 
\begin{equation*}
\lim_{n \to \infty} E  \left[ (r_{1t}^{(n)} - X_{t}) \right]  = 0
\end{equation*}
\begin{equation}
\lim_{n \to \infty} E  \left[ (r_{2t}^{(n)} - Y_{t}) \right]  = 0
\end{equation}
therefore, we can conclude that also 
\begin{equation}\label{limZ}
\lim_{n \to \infty} E  \left[ (r_{t}^{(n)} - Z_{t}) \right]  = 0.
\end{equation}
Equation \ref{limZ} implies that $W_{t}^{(n)}$ converges to zero in $L^{1}$, therefore there exist a subsequence $W_{t}^{(n_{k})}$ converging almost surely to the same limit. From this it follows directly that the distributional properties of $X_{t}$ are satisfied.

Since $r_{1t}^{(n_{k})} \overset{a.s.}{\longrightarrow} X_{t}$ and $r_{2t}^{(n_{k})} \overset{a.s.}{\longrightarrow} Y_{t}$, it is also true $r_{t}^{(n_{k})} \overset{a.s.}{\longrightarrow} Z_{t}$. Hence, 
\begin{equation*}
r_{t}^{(n)}\vert \mathcal{F}_{t-1} \overset{a.s.}{\longrightarrow} Z_{t} \vert \mathcal{F}_{t-1}.
\end{equation*}
However, 
\begin{equation*}
r_{t}^{(n)}\vert \mathcal{F}_{t-1} = (r_{1t}^{(n)} - r_{2t}^{(n)})\vert \mathcal{F}_{t-1}
\end{equation*}
and from \cite{Zhu2012} we know that both $r_{1t}^{(n)}$ and $r_{2t}^{(n)}$ have a Generalized Poisson distribution. Since the difference of two GP distributed random variables is GPD distributed, we can write
\begin{equation}
r_{t}^{(n)}\vert \mathcal{F}_{t-1} \sim GPD \left( \alpha_{0}D^{-1}(1) + \sum_{j=1}^{n} \psi_{j} Z_{t-j}\right)
\end{equation}
and conclude that
\begin{equation}
Z_{t}\vert \mathcal{F}_{t-1} \sim GPD(\tilde{\mu}_{t},\tilde{\sigma}^{2}_{t},\lambda).
\end{equation}
\subsection{Moments of the GPD-INGARCH}
The conditional mean and variance of the process $Z_{t}$ are
\begin{equation*}
E(Z_{t}\vert  \mathcal{F}_{t-1}) = \frac{\tilde{\mu}_{t}}{1-\lambda} = \mu_{t}
\end{equation*}
\begin{equation}
V(Z_{t}\vert  \mathcal{F}_{t-1}) = \frac{\tilde{\sigma}^{2}_{t}}{1-\lambda}  = \phi^{3}\tilde{\sigma}^{2}_{t}
\end{equation}
where $\phi = \frac{1}{1-\lambda}$.

The unconditional mean and variance of the process are
\begin{equation*}
E(Z_{t})=\mu_{t} = \frac{\alpha_{0}}{1- \sum_{i=1}^{p} \alpha_{i} -  \sum_{j=1}^{q} \beta_{j}}
\end{equation*}
\begin{equation}\label{varZ}
\begin{split}
V(Z_{t}) &= E \left[ V \left( Z_{t}\vert  \mathcal{F}_{t-1} \right) \right] + V \left[ E \left( Z_{t}\vert  \mathcal{F}_{t-1} \right) \right]\\
&= E(\phi^{3}\tilde{\sigma}^{2}_{t}) + V(\mu_{t})\\
&=\phi^{3} E(\tilde{\sigma}^{2}_{t}) + V(\mu_{t})
\end{split}
\end{equation}
From Th. 1 in \cite{Wei2009} we know a set of equations from which the variance and autocorrelation function of the process can be obtained.
Suppose $Z_{t}$ follows the INGARCH(p,q) model in Eq. \ref{ingarch} with $\sum_{i=1}^{p} \alpha_{i} + \sum_{j=1}^{q} \beta_{j} < 0$. From Th. 1 part (iii) in \cite{Wei2009}, the autocovariances $\gamma_{Z}(k) = Cov[Z_{t},Z_{t-k}]$ and $\gamma_{\mu}(k) = Cov[\mu_{t},\mu_{t-k}]$ satisfy the linear equations
\begin{equation*}
\gamma_{Z}(k) = \sum_{i=1}^{p} \alpha_{i} \gamma_{Z}(|k-i|) + \sum_{j=1}^{\min(k-1,q)} \beta_{j} \gamma_{Z}(k-j) +\sum_{j=k}^{q} \beta_{j} \gamma_{\mu}(j-k), \;\;\;\; k\geq 1;
\end{equation*}
\begin{equation}\label{autocov}
\gamma_{\mu}(k) = \sum_{i=1}^{min(k,p)} \alpha_{i} \gamma_{\mu}(|k-i|) + \sum_{i=k+1}^{p} \alpha_{i} \gamma_{Z}(i-k) +\sum_{j=1}^{q} \beta_{j} \gamma_{\mu}(|k-j|), \;\;\;\; k\geq 0.
\end{equation}
In order to have an explicit expression for the variance of $\mu_{t}$ and $Z_{t}$ and for the autocorrelations, we consider two special cases as in \cite{Zhu2012} and \cite{Wei2009}. For a proof of the results in these examples, see Section \ref{Proofs:S3}.

\begin{example}[INARCH(1)]\label{Example1}
Consider the INARCH(1) model 
\begin{equation}
\mu_{t} = \alpha_{0} + \alpha_{1} Z_{t-1}
\end{equation}
then the linear equations in  Eq. \ref{autocov}, becomes
\begin{eqnarray*}
\gamma_{Z}(k) &=& \sum_{i=1}^{p} \alpha_{i} \gamma_{Z}(|k-i|) + \delta_{k0} \cdot \mu,\;\;\;\; k\geq 0\\
\gamma_{\mu}(k) &=& \sum_{i=1}^{\min(k,p)} \alpha_{i} \gamma_{\mu}(|k-i|) + \sum_{i=k+1}^{p} \alpha_{i} \gamma_{Z}(i-k), \;\;\;\; k\geq 0.
\end{eqnarray*}
Where the second equation comes from Example 2 in \cite{Wei2009}. We derive the following autocovariances
\begin{equation}
\gamma_{Z}(k)=\begin{cases} \alpha_{1}^{k-1} \gamma_{Z}(1), & \mbox{for } k\geq 2 \\ \alpha_{1}[\phi^{3} E(\tilde{\sigma}^{2}_{t})] + \alpha_{1}V(\mu_{t}), & \mbox{for } k=1
\end{cases}
\end{equation}
\begin{equation}
\gamma_{\mu}(k) =\begin{cases}\alpha_{1}^{k} V(\mu_{t}), & \mbox{for } k\geq 1 \\ \alpha_{1}^{2}[\phi^{3} E(\tilde{\sigma}^{2}_{t})]+\alpha_{1}^{2}V(\mu_{t}), & \mbox{for } k=0
\end{cases}
\end{equation}
Therefore, the variance of $\mu_{t}$ is
\begin{equation}
V(\mu_{t}) = \frac{\alpha_{1}^{2}[\phi^{3} E(\tilde{\sigma}^{2}_{t})]}{1-\alpha_{1}^{2}}
\end{equation}
and the variance of $Z_{t}$ is
\begin{equation}
V(Z_{t}) = \frac{\phi^{3} E(\tilde{\sigma}^{2}_{t})}{1-\alpha_{1}^{2}}
\end{equation}
where $\phi=\frac{1}{1-\lambda}$.\\
Lastly, the autocorrelations are
\begin{equation}
\rho_{\mu}(k) = \alpha_{1}^{k}
\end{equation}
\begin{equation}
\rho_{Z}(k)= \alpha_{1}^{k}
\end{equation}

\end{example}
\bigskip
\begin{example}[INGARCH(1,1)]\label{Example2}
Consider the INGARCH(1,1) model 
\begin{equation}
\mu_{t} = \alpha_{0} + \alpha_{1} Z_{t-1} + \beta_{1} \mu_{t-1}
\end{equation}
From Eq. \ref{autocov},
\begin{equation}
\gamma_{Z}(k)=\begin{cases} (\alpha_{1} + \beta_{1})^{k-1} \gamma_{Z}(1), & \mbox{for } k\geq 2 \\ \alpha_{1}[\phi^{3} E(\tilde{\sigma}^{2}_{t})] + (\alpha_{1}+\beta_{1})V(\mu_{t}), & \mbox{for } k=1
\end{cases}
\end{equation}
We can now determine $V(\mu_{t})$. First note that we have
\begin{equation}\label{covmu}
\gamma_{\mu}(k) =\begin{cases}(\alpha_{1} + \beta_{1})^{k} V(\mu_{t}), & \mbox{for } k\geq 1 \\ \alpha_{1}^{2}[\phi^{3} E(\tilde{\sigma}^{2}_{t})]+(\alpha_{1} + \beta_{1})^{2}V(\mu_{t}), & \mbox{for } k=0
\end{cases}
\end{equation}
where the second equation in Eq. \ref{covmu} is equal to $V(\mu_{t})$. From this latter equation, we can derive the expression for $V(\mu_{t})$
\begin{equation}\label{varmu}
V(\mu_{t}) = \frac{\alpha_{1}^{2}[\phi^{3} E(\tilde{\sigma}^{2}_{t})]}{1-(\alpha_{1} + \beta_{1})^{2}}
\end{equation}
Combining Eq. \ref{varZ} and \ref{varmu}, we can derive a close expression for the variance of $Z_{t}$:
\begin{equation}
V(Z_{t}) = \frac{\phi^{3} E(\tilde{\sigma}^{2}_{t})[1-(\alpha_{1} + \beta_{1})^{2}+\alpha_{1}^{2}]}{1-(\alpha_{1} + \beta_{1})^{2}}
\end{equation}
where $\phi=\frac{1}{1-\lambda}$.

The autocorrelations are given by
\begin{eqnarray}
\rho_{\mu}(k) &=& (\alpha_{1}+\beta_{1})^{k}\\
\rho_{Z}(k)&=& (\alpha_{1}+\beta_{1})^{k-1}\, \frac{\alpha_{1}[1-\beta_{1}(\alpha_{1}+\beta_{1})]}{1-(\alpha_{1} + \beta_{1})^{2}+\alpha_{1}^{2}}
\end{eqnarray}
\end{example}

\section{Bayesian Inference}\label{Sec:BayInf}
We propose a Bayesian approach to inference for GPD-INGARCH, which allows the researcher to include extra-sample information through the prior choice and allows us to exploit the stochastic representation of the GPD and the use of latent variables to make more tractable the likelihood function.

\subsection{Prior assumption}
We assume the following prior distributions. A Dirichlet prior distribution for $\boldsymbol{\varphi} =(\alpha_{1},\ldots, \alpha_{p},\beta_{1},\ldots,\beta_{q})$, $\boldsymbol{\varphi} \sim Dir_{d+1}(c)$, with density:
\begin{equation}
 \pi(\boldsymbol{\varphi})=\frac{\Gamma\left( \sum_{i=0}^{d}c_{i}\right)}{\prod_{i=0}^{d} \Gamma(c_{i})} \prod_{i=1}^{d} \varphi_{i}^{c_{i}-1} \left( 1-\sum_{i=1}^{d}\varphi_{i}\right)^{(c_{0}-1)}
\end{equation}
where $\varphi_{i} \geq 0$ and $\sum_{i=1}^{d} \varphi_{i} \leq 1$. Panel (a) in Fig. \ref{fig:PRIORS} provides the level sets of the joint density function of $\alpha_{1}$ and $\beta_{1}$ with hyper-parameters $c_{0}=3$, $c_{1}=4$ and $c_{2}=3$. We assume a flat prior for $\alpha_{0}$, i.e. $\pi(\alpha_{0}) \propto \mathbb{I}_{\mathbb{R}}(\alpha_{0})$. For $\lambda$ and $\phi$ we assume a joint prior distribution with uniform marginal prior $\lambda \sim \mathcal{U}_{[0,1]}$ and shifted gamma conditional prior $\phi \sim \mathcal{G}a^{*}(a,b,c)$, with density function:
\begin{equation}
\pi(\phi)=\frac{b^{a}}{\Gamma(a)} (\phi-c)^{(a -1)} e^{-b (\phi-c)} \;\;\;\;\; \mbox{for  } \phi > c
\end{equation}
where $c=(1-\lambda)^{-2}$. Panel (b) provides the level sets of the joint density function of $\phi$ and $\lambda$, with hyper-parameters $a=b=5$. The joint prior distribution of the parameters will be denoted by $\pi(\boldsymbol{\theta})=\pi(\boldsymbol{\varphi})\pi(\alpha_{0})\pi(\lambda)\pi(\phi)$.
\begin{figure}[t]
\centering
\begin{tabular}{cc}
$\alpha_{1}$ and $\beta_{1}$ & $\phi$ and $\lambda$ \\
\includegraphics[scale=0.35]{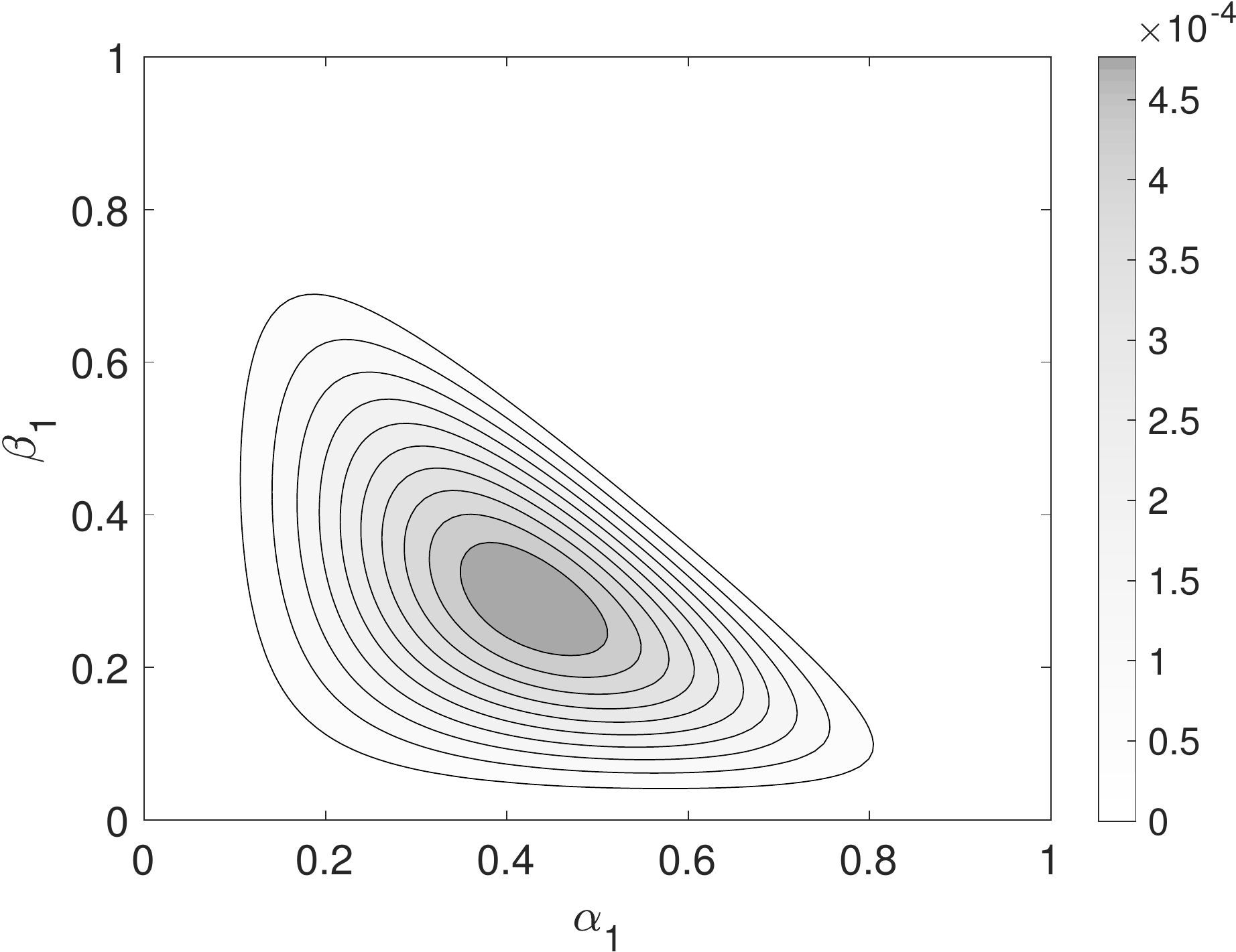} &
\includegraphics[scale=0.35]{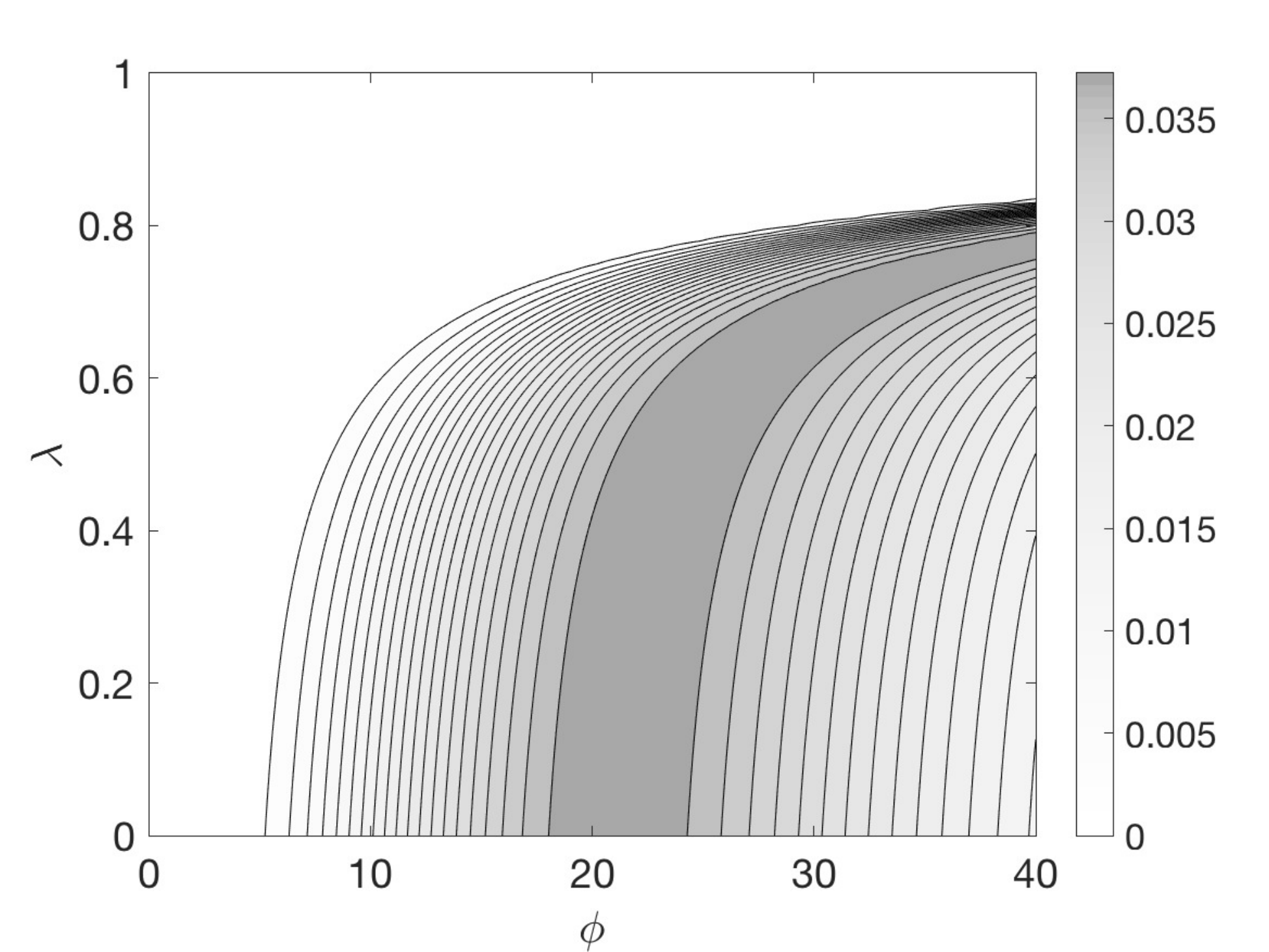} \\
\end{tabular}
\caption{Contour lines of the log-prior density function for $\alpha_{1}$ and $\beta_{1}$ (left) and $\phi$ and $\lambda$ (right). }
\label{fig:PRIORS}
\end{figure}

\subsection{Data augmentation}
Denote the probability distribution of $Z_{t}$ with
\begin{equation}
 f_t(Z_{t}=z \vert \boldsymbol{\theta}) =e^{-\sigma_{t}^{2}-z\lambda} \hspace{-2pt}\sum_{s=\underline{s}}^{+\infty}\!\! \frac{1}{4}\frac{\sigma_{t}^{4}+\mu_{t}^{2}}{s!(s+z)!}\!\!\left[\frac{\sigma^{2}_{t}+\mu_{t}}{2} +(s+z)\lambda\right]^{s+z-1}\!\! \left[ \frac{\sigma^{2}_{t}-\mu_{t}}{2} +s\lambda\right]^{s-1}\hspace{-8pt}e^{-2\lambda s}
\end{equation}
with $\underline{s}=\max(0,-z)$. Since the posterior distribution 
\begin{equation}
\pi(\boldsymbol{\theta} \vert Z_{1:T}) \propto \prod_{t=1}^{T}f_{t}(Z_{t} \vert \boldsymbol{\theta})\pi(\boldsymbol{\theta})
\end{equation}
is not analytically tractable we apply Markov Chain Monte Carlo (MCMC) for posterior approximation in combination with a data-augmentation approach \citep{TanWon1987}. See \cite{RobCas2013} for an introduction to MCMC. As in \cite{KarNtz2006}, we exploit the stochastic representation in Eq. \ref{StRep} and introduce two GPD latent variables $X_{t}$ and $Y_{t}$ with pmfs
\begin{equation}
 f_t(X_{t}=x \vert \theta_{1t},\lambda) = \frac{\theta_{1t}(\theta_{1t} + \lambda x)^{x-1}}{x!} e^{(-\theta_{1t} - \lambda x)}
 \end{equation}
 \begin{equation} f_t(Y_{t}=y \vert \theta_{2t},\lambda) ) = \frac{\theta_{2t}(\theta_{2t} + \lambda y)^{y-1}}{y!} e^{(-\theta_{2t} - \lambda y)}
\end{equation}

Let $Z_{1:T}=(Z_{1},\ldots ,Z_{T})$, $X_{1:T}=(X_{1},\ldots ,X_{T})$ and $Y_{1:T}=(Y_{1},\ldots ,Y_{T})$. The complete-data likelihood becomes
\begin{equation}
\begin{split}
f(Z_{1:T},X_{1:T}Y_{1:T} \vert \boldsymbol{\theta}) &=  \prod_{t=1}^{T} f(Z_{t} \vert X_{t}, Y_{t},\boldsymbol{\theta}) f_{t}(X_{t},Y_{t} \vert \boldsymbol{\theta})\\
&=\prod_{t=1}^{T}  \delta (Z_{t} - X_{t} + Y_{t})f_{t}(X_{t} \vert \boldsymbol{\theta})f_{t}(X_{t}-Z_{t} \vert \boldsymbol{\theta}).
\end{split}
\end{equation}
where $\delta (z-c)$ is the Dirac function which takes value 1 if $z=c$ and 0 otherwise. The joint posterior distribution of the parameters $\boldsymbol{\theta}$ and the two collections of latent variables $X_{1:T}$ and $Y_{1:T}$ is
\begin{equation}
\pi(X_{1:T},Y_{1:T},\boldsymbol{\theta} \vert Z_{1:T}) \propto f(Z_{1:T},X_{1:T}Y_{1:T} \vert \boldsymbol{\theta})\pi(\boldsymbol{\theta})
\end{equation}

\subsection{Posterior approximation}\label{Sec:postapprox}
We apply a Gibbs algorithm  \cite[Ch. 10]{RobCas2013} with a Metropolis-Hastings (MH) steps. In the sampler, we draw the latent variables and the parameters of the model by iterating the following steps:
\begin{enumerate}
\item draw $(X_{t},Y_{t})$ from $f(X_{t},Y_{t} \vert Z_{1:T},\boldsymbol{\theta})$;
\item draw $\boldsymbol{\varphi}$ from $\pi(\boldsymbol{\varphi} \vert Z_{1:T},Y_{1:T},X_{1:T},\boldsymbol{\theta}_{-\varphi})$;
\item draw $\phi$ from $\pi(\phi \vert Z_{1:T},Y_{1:T},X_{1:T},\boldsymbol{\theta}_{-\phi})$;
\item draw $\lambda$ from $\pi(\lambda \vert Z_{1:T},Y_{1:T},X_{1:T},\boldsymbol{\theta}_{-\lambda})$,
\end{enumerate}
where $\boldsymbol{\theta}_{-\eta}$ indicates the collection of parameters excluding the element $\boldsymbol{\eta}$.

The full conditional for the latent variables is
\begin{equation}
(X_{t},Y_{t}) \sim f(Z_{t} \vert X_{t}, Y_{t},\boldsymbol{\theta})f(X_{t},Y_{t} \vert Z_{1:T},\boldsymbol{\theta}).
\end{equation}
We draw from the full conditional distribution by MH. Differently from \cite{KarNtz2006}, we use a mixture proposal distribution which allows for a better mixing of the MCMC chain. At the $j$-th  iteration, we generate a candidate $X_{t}^{*}$  from $GP(\theta_{1t},\lambda)$ with probability $\nu$ and $(X_{t}^{*}-Z_{t})$ from $GP(\theta_{2t},\lambda)$ with probability $1-\nu$, and accept with probability 
\begin{equation}
\varrho = \min \left\{ 1,  \frac{f_t(X^{*}_{t} \vert \theta_{1t},\lambda) f_t(X^{*}_{t}-Z_{t} \vert \theta_{2t},\lambda)}{f_t(X_{t}^{(j-1)} \vert \theta_{1t},\lambda) f_t(X^{(j-1)}_{t}-Z_{t} \vert \theta_{2t},\lambda)} \frac{q(X_{t}^{(j-1)})}{q(X_{t}^{*})}\right\}
\end{equation}
where $q(X_{t})=\nu f(X_{t} \vert \theta_{1t}, \lambda) + (1-\nu) f(X_{t}-Z_{t} \vert \theta_{2t}, \lambda)$ and $X_{t}^{(j-1)}$ is the $(j-1)$-th iteration value of the latent variable $X_{t}$. The method extends to the GPD the technique proposed in \cite{KarNtz2006} for the Poisson differences.

As regards to the parameter $\boldsymbol{\varphi}$, its full conditional distribution is 
\begin{equation}\label{varphiEQ}
\boldsymbol{\varphi} \sim\pi(\boldsymbol{\varphi} \vert Z_{1:T},Y_{1:T},X_{1:T},\boldsymbol{\theta}_{-\varphi}) \propto \pi(\boldsymbol{\varphi})\prod_{t=1}^{T} f_{t}(X_{t},Y_{t}\vert \boldsymbol{\theta}).
\end{equation}
We consider a MH with Dirichlet independent proposal distribution 
\begin{equation}
\boldsymbol{\varphi}^{*} \sim Dir(\textbf{c}^{*})
\end{equation}
where $\textbf{c}^{*} = ({c}_{0}^{*},{c}_{1}^{*}, {c}_{2}^{*})$ and acceptance probability
\begin{equation}
\varrho = 1 \wedge \frac{\pi(\boldsymbol{\varphi}^{*} \vert Z_{1:T},Y_{1:T},X_{1:T},\boldsymbol{\theta}_{-\varphi})}{\pi(\boldsymbol{\varphi}^{j-1} \vert Z_{1:T},Y_{1:T},X_{1:T},\boldsymbol{\theta}_{-\varphi})}.
\end{equation}


The full conditional distribution of $\phi$ is 
\begin{equation}
\pi(\phi \vert Z_{1:T},Y_{1:T},X_{1:T},\boldsymbol{\theta}_{-\phi}) \propto \pi(\phi)\prod_{t=1}^{T} f_{t}(X_{t},Y_{t}\vert \boldsymbol{\theta}).
\end{equation}
We consider the change of variable $\zeta = \log(\phi -c)$ with Jacobian $\exp(\zeta)$ and a MH step with a random walk proposal 
\begin{equation}
\zeta^{*} \sim N(\zeta_{j-1},\gamma^{2})
\end{equation}
where $\zeta_{j-1} = \log (\phi_{j-1}-c)$, $\phi_{j-1}$ is the previous iteration value of the parameter and $c=\frac{1}{(1-\lambda)^{2}}$. The acceptance probability is
\begin{equation}
\varrho = \min \left\{ 1, \frac{\pi(\phi^{*} \vert Z_{1:T},Y_{1:T},X_{1:T},\boldsymbol{\theta}_{-\phi})\exp (\zeta^{*})}{\pi(\phi_{j-1}) \vert Z_{1:T},Y_{1:T},X_{1:T},\boldsymbol{\theta}_{-\phi})\exp(\zeta_{j-1})}  \right\}
\end{equation}
where $\phi^{*}=c+\exp(\zeta^{*})$.

The full conditional distribution of $\lambda$ is
\begin{equation}
\pi(\lambda \vert Z_{1:T},Y_{1:T},X_{1:T},\boldsymbol{\theta}_{-\lambda}) \propto \pi(\lambda)\prod_{t=1}^{T} f_{t}(X_{t},Y_{t}\vert \boldsymbol{\theta}).
\end{equation}
We consider a MH step with Beta random walk proposal 
\begin{equation}
\lambda^{*} \sim Be(s\lambda^{(j-1)},s(1-\lambda^{(j-1)}))
\end{equation}
where $s$ is a precision parameter. The acceptance probability is:
\begin{equation}
\varrho = \min \left\{ 1, \frac{\pi(\lambda^{*} \vert Z_{1:T},Y_{1:T},X_{1:T},\boldsymbol{\theta}_{-\lambda})Be(s\lambda^{*},s(1-\lambda^{*}))}{\pi(\lambda^{(j-1)} \vert Z_{1:T},Y_{1:T},X_{1:T},\boldsymbol{\theta}_{-\lambda})Be(s\lambda^{(j-1)},s(1-\lambda^{(j-1)}))}  \right\}.
\end{equation}

\section{Simulation study}
%
The purpose of our simulation exercises is to study the efficiency of the MCMC algorithm presented in Section \ref{Sec:BayInf}. We evaluated the \cite{Gew92} convergence diagnostic measure (CD), the inefficiency factor (INEFF)\footnote{The inefficiency factor is defined as \begin{equation*}
INEFF = 1+2 \sum_{k=1}^{\infty} \rho(k) 
\end{equation*} 
where $\rho(k)$ is the sample autocorrelation at lag $k$ for the parameter of interest and are computed to measure how well the MCMC chain mixes. An INEFF equal to $n$ tells us that we need to draw MCMC samples $n$ times as many as uncorrelated samples.} and the Effective Sample Size (ESS). 

\begin{table}[t]
\centering
\begin{small}
\begin{tabular}{c|c|c|c|c|c|c}
&\multicolumn{3}{|c}{Low persistence}&\multicolumn{3}{|c}{High persistence}\\
&\multicolumn{3}{|c}{($\alpha=0.25$, $\beta=0.23$, $\lambda=0.4$)}&\multicolumn{3}{|c}{($\alpha=0.53$, $\beta=0.25$, $\lambda=0.6$)}\\
\hline
&$\alpha$&$\beta$&$\lambda$&$\alpha$&$\beta$&$\lambda$\\
\hline
$ACF(1)_{BT}$ & 0.96& 0.97 & 0.97& 0.91& 0.88& 0.98\\
$ACF(10)_{BT}$ &0.86 & 0.83& 0.81& 0.70& 0.52&0.83\\
$ACF(30)_{BT}$ & 0.75 &0.69& 0.63& 0.52 & 0.37& 0.60\\
$ACF(1)_{AT}$ & 0.43 & 0.39 & 0.27 & 0.21& 0.13 &0.16\\
$ACF(10)_{AT}$ & 0.25& 0.18& 0.12& 0.20&0.06 &0.11\\
$ACF(30)_{AT}$ & 0.18& 0.15& 0.07& 0.15& 0.06&0.09\\
\hline
$ESS_{BT}$ & 0.02 & 0.02 & 0.02& 0.02& 0.03&0.02\\
$ESS_{AT}$ & 0.07& 0.07& 0.09&0.09 & 0.12&0.11\\
\hline
$INEFF_{BT}$ & 50.53& 51.07 & 43.88 & 48.39&43.35 &49.25\\
$INEFF_{AT}$ & 26.36 & 27.29 & 13.99 & 17.21&16.84 &12.59\\
\hline
$CD_{BT}$ & 11.81& -28.69 & 0.78& 0.93& -6.27&2.40\\
 & (0.11)& (0.14) & (0.10) & (0.04)&(0.06) &(0.05)\\
$CD_{AT}$ & 5.72 & -13.18 & 0.2 & 0.74 & -3.84 & 1.17\\
 & (0.23)& (0.23) & (0.23) &(0.13) & (0.15)&(0.11)\\
\bottomrule
\end{tabular}
\end{small}
\caption{Autocorrelation function (ACF), effective sample size (ESS) and inefficiency factor (INEFF) of the posterior MCMC samples for the two settings: low persistence and high persistence. The results are averages over a set of 50 independent MCMC experiments on 50 independent datasets of 400 observations each. We ran the proposed MCMC algorithm for 1,010,000 iterations and evaluate the statistics before (subscript BT) and after (subscript AT) removing the first 10,000 burn-in samples, and applying a thinning procedure with a factor of 250. In parenthesis the p-values of the Geweke's convergence diagnostic.}
\label{Stat}
\end{table}
We simulated 50 independent data-series of 400 observations each. We run the Gibbs sampler for 1,010,000 iterations on each dataset, discard the first 10,000 draws to remove dependence on initial conditions, and finally apply a thinning procedure with a factor of 250, to reduce the dependence between consecutive draws.

As commonly used in the GARCH and stochastic volatility literature \citep[e.g., see][ and references therein]{Chib2002, casarin2009online, MCO12, Casetal2019}, we test the efficiency of the algorithm in two different settings: low persistence and high persistence. The true values of the parameters are: $\alpha=0.25$, $\beta=0.23$, $\lambda=0.4$ in the low persistence setting and $\alpha=0.53$, $\beta=0.25$, $\lambda=0.6$ in the high persistence setting.  Table \ref{Stat} shows, for the parameters $\alpha$, $\beta$ and $\lambda$, the INEFF, ESS and ACF averaged over the 50 replications before (BT subscript) and after thinning (AT subscript). 

The thinning procedure is effective in reducing the autocorrelation levels and in increasing the ESS, especially in the high persistence setting. The p-values of the CD statistics indicate that the null hypothesis that two sub-samples of the MCMC draws have the same distribution is accepted. The efficiency of the MCMC after thinning generally improved. On average, the inefficiency measures (19.05), the p-values of the CD statistics (0.18) and the  acceptance rates (0.35) achieved the values recommended in the literature \citep[e.g., see ][]{Robetal1997}.

\section{Real data examples}
\subsection{Accident data}
\begin{figure}[p]
\centering
\begin{tabular}{c}
\includegraphics[height=100pt,width=300pt]{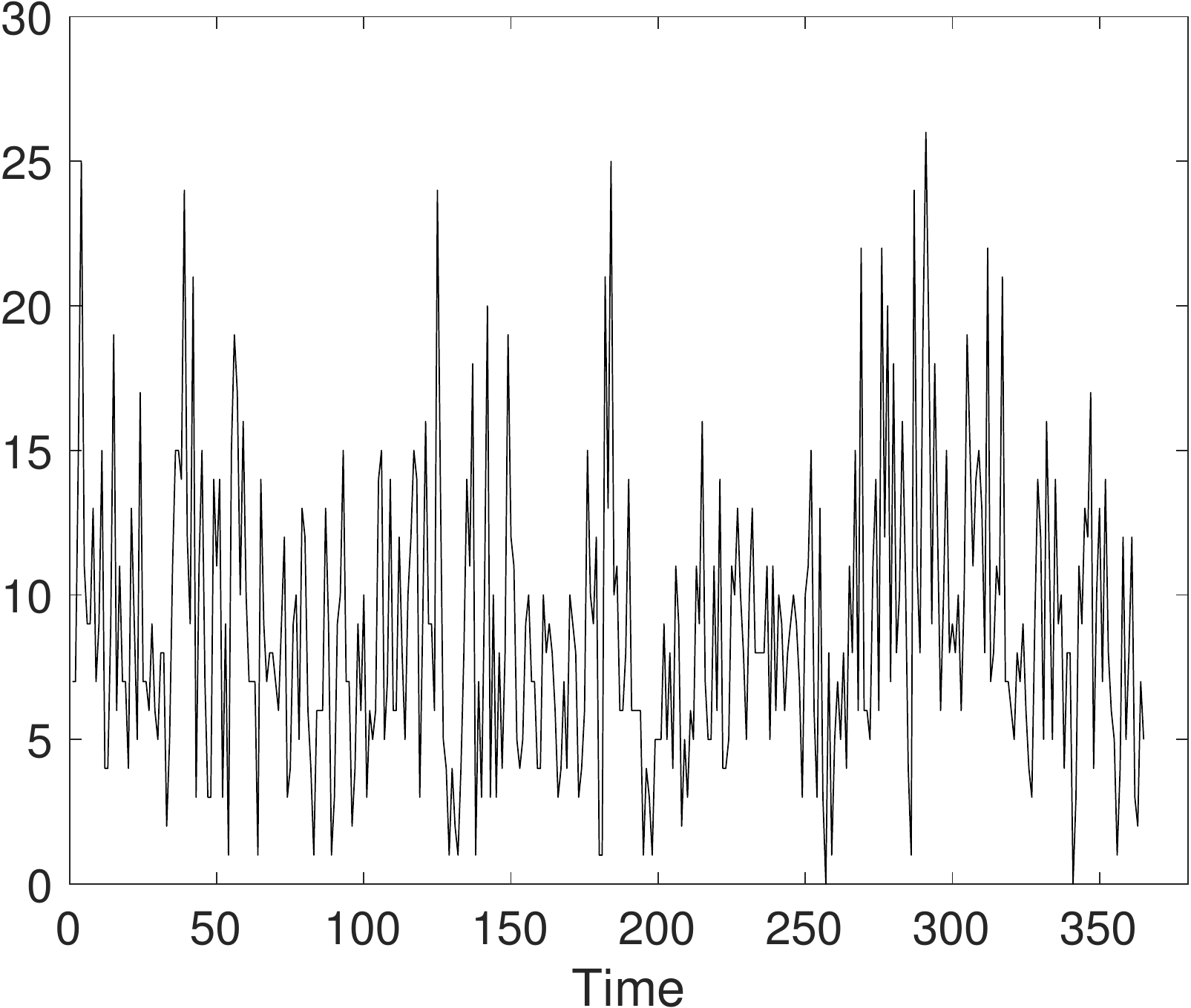} \\
\includegraphics[height=100pt,width=300pt]{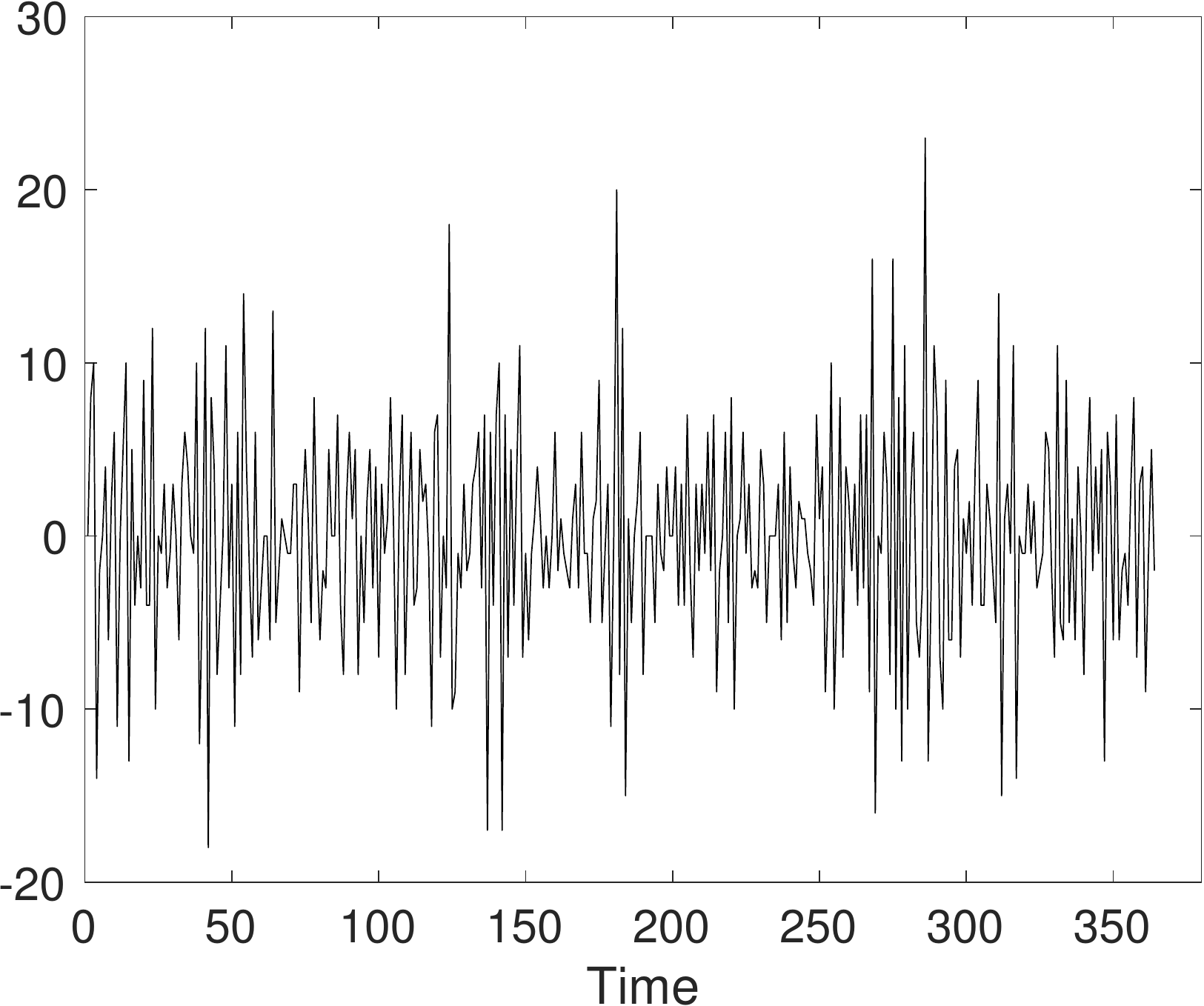}
\end{tabular}
\caption{Frequency (top) and month-on-month changes (bottom) of the accidents at the Schiphol airport in The Netherlands in 2001.}
\label{Dataset2}
\end{figure}
Data in this application are the number of accidents near Schiphol airport in The Netherlands during 2001 (Fig. \ref{Dataset2}). They have been previously considered in \cite{Brietal2008} and \cite{AndKar2014}. The time series of accident counts is non-stationary and should be differentiated \citep{KimPar2008}. We applied our Bayesian estimation procedure, as described in Section \ref{Sec:BayInf}. 
\begin{figure}[p]
\centering
\setlength{\tabcolsep}{-3pt} 
\begin{tabular}{cc}
$\alpha$ & $\beta$\vspace{-2pt}\\
\includegraphics[width=210pt, height=110pt]{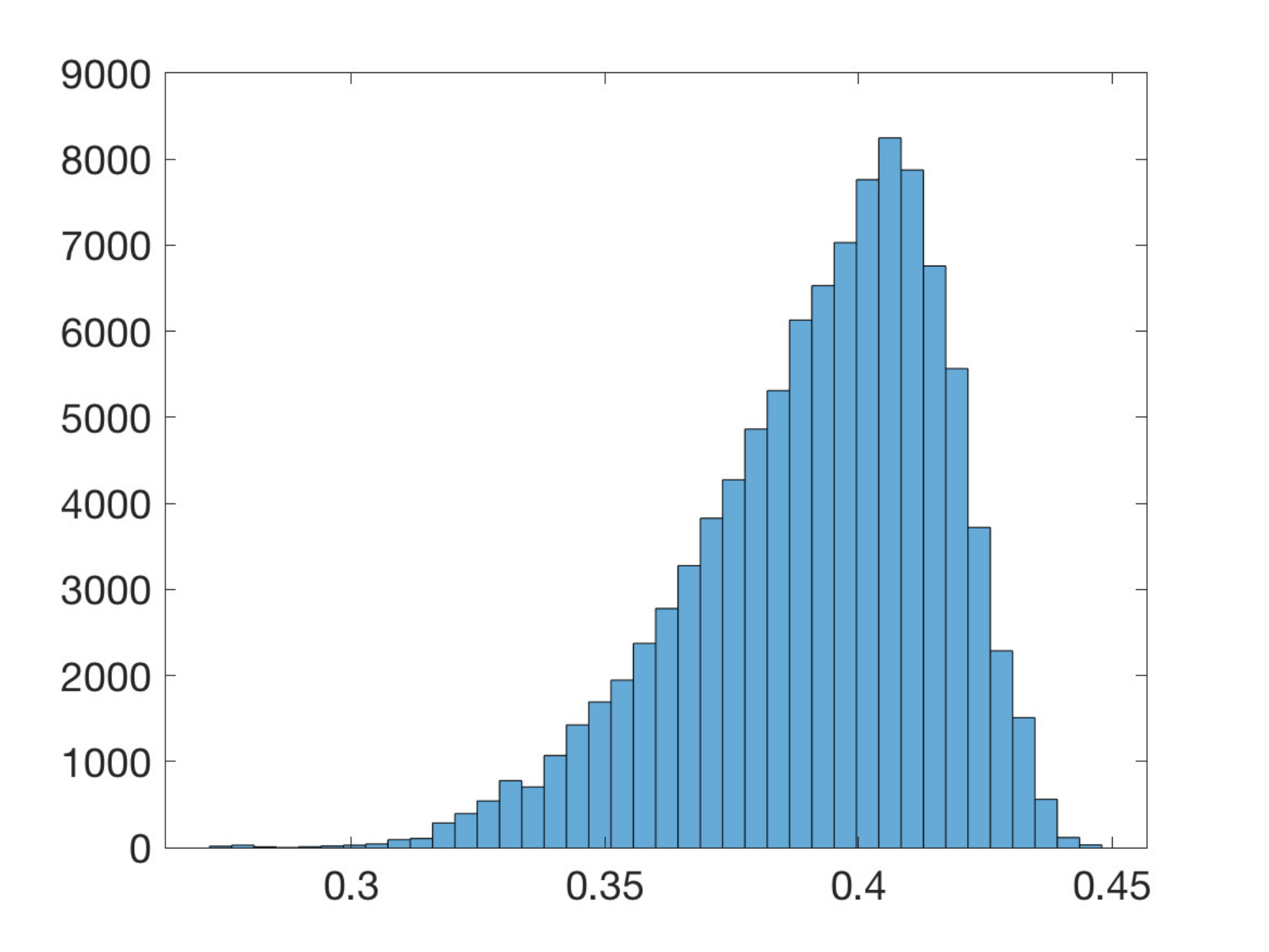} &
\includegraphics[width=210pt, height=110pt]{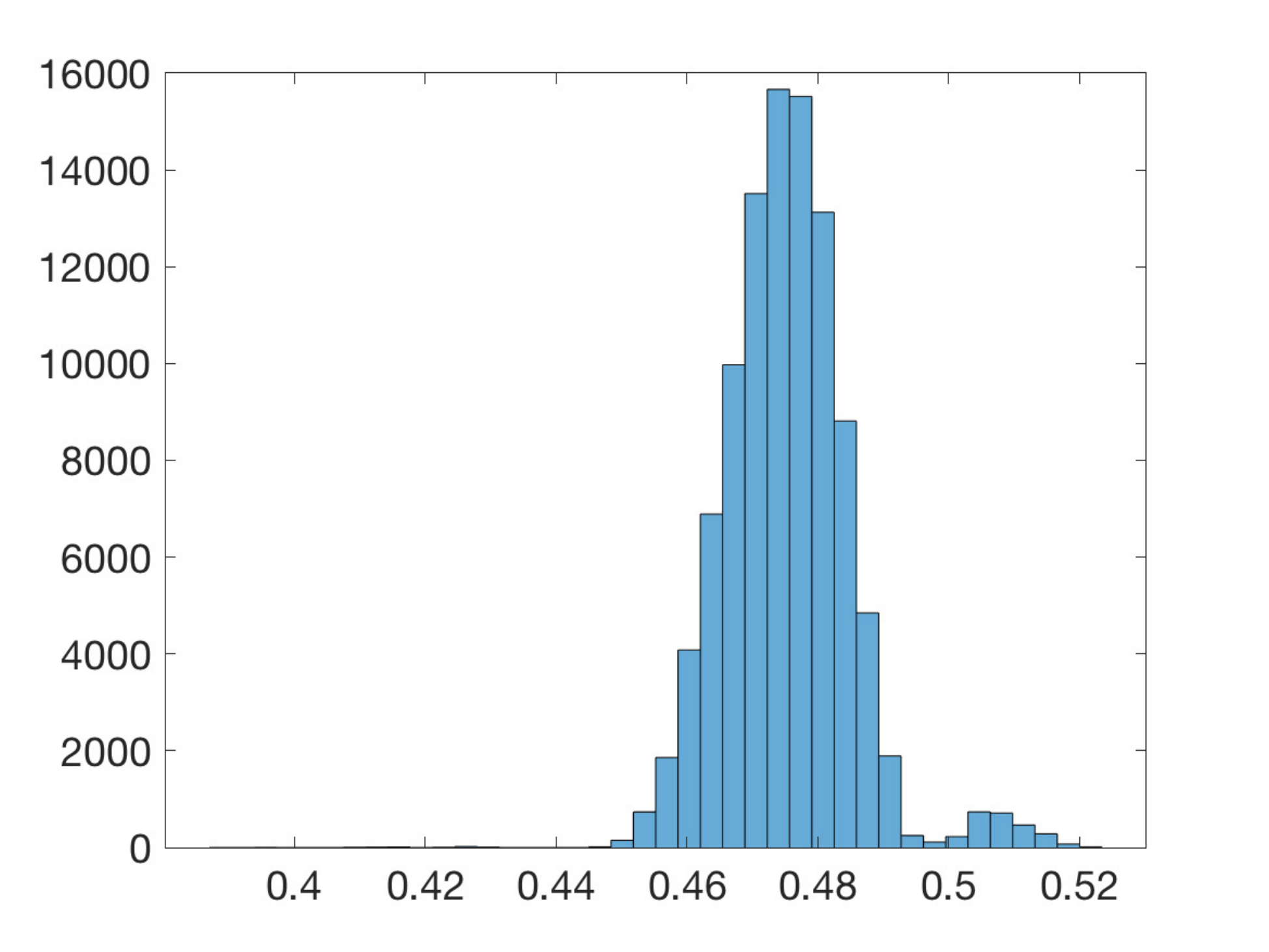} \\
$\lambda$ & $\phi$\vspace{-2pt}\\
\includegraphics[width=210pt, height=110pt]{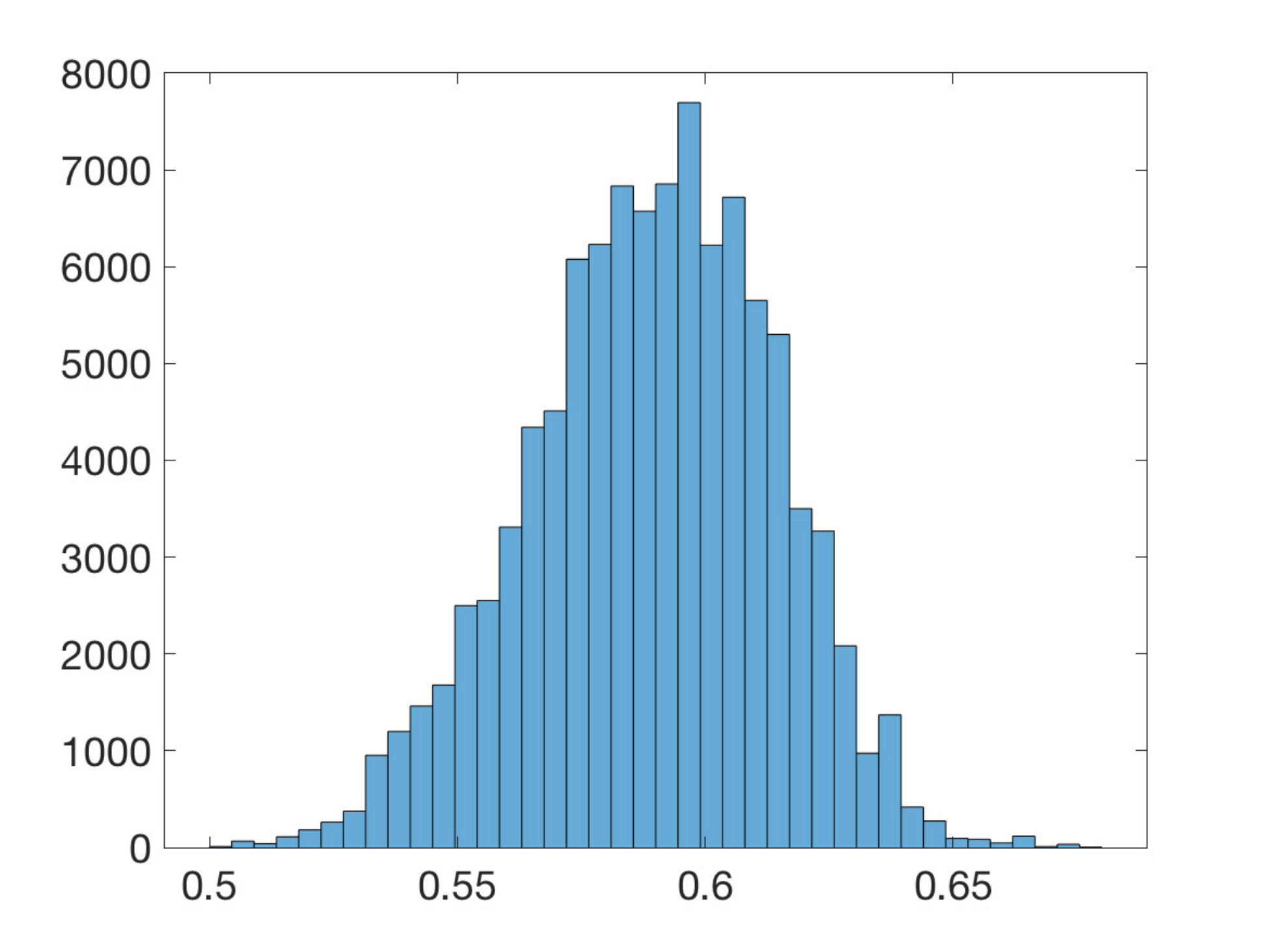} &
\includegraphics[width=210pt, height=110pt]{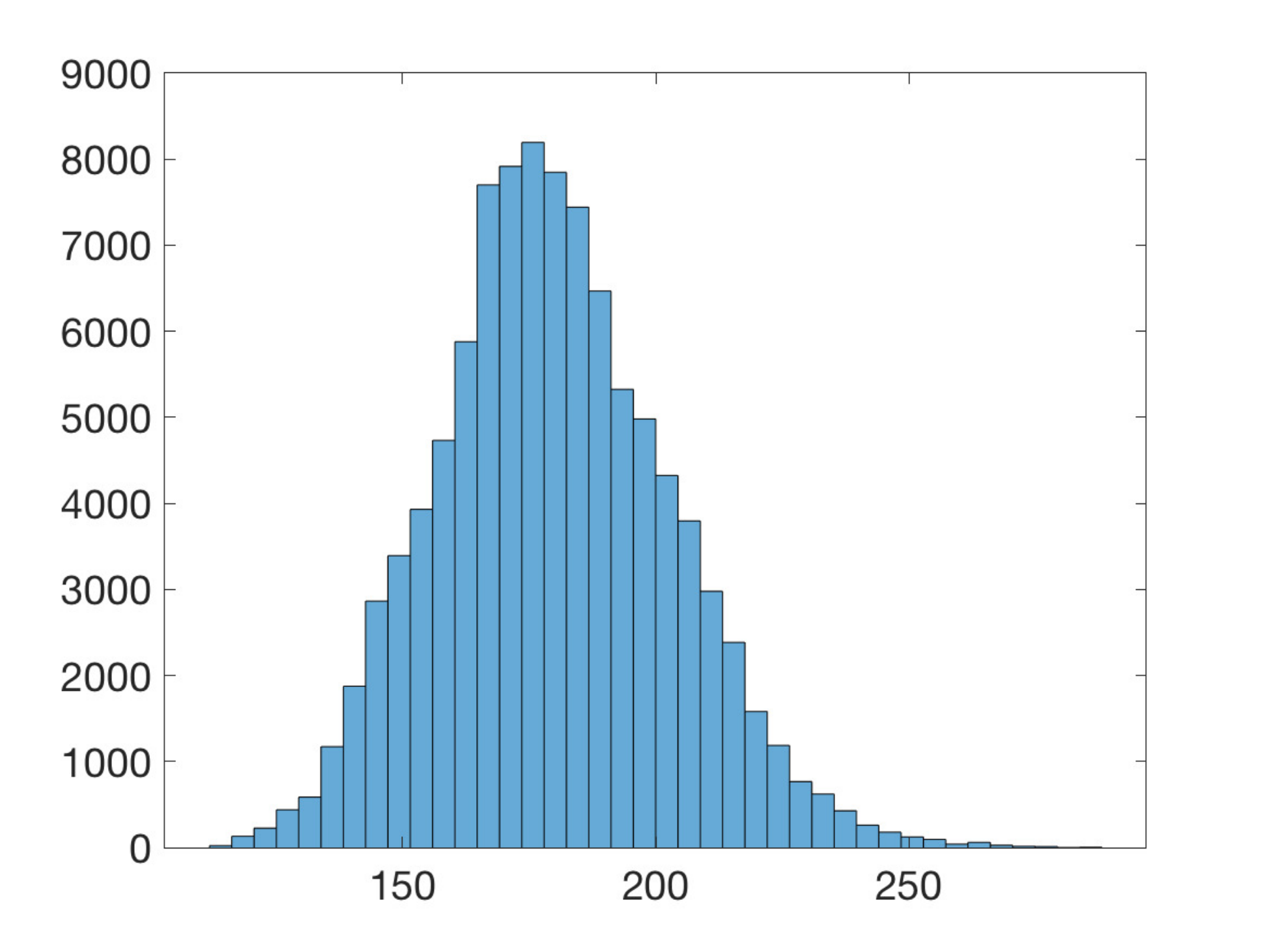} \\
\end{tabular}
\caption{Histograms of the MCMC draws for the parameters of the Schipol's accident data of Fig. \ref{Dataset2}.}
\label{fig:GibbsDrawsReal}
\end{figure}


\begin{table}[t]
\centering
\begin{small}
\begin{tabular}{c|c|c|c}
\specialrule{.1em}{.05em}{.05em} 
\textbf{Parameters} & \textbf{Mean} & \textbf{Std} & \textbf{CI} \\
\hline
\hline
\multicolumn{4}{c}{Model $\mathcal{M}_1$: GPD-INGARCH(1,1)}\\
\hline
$\alpha$ & 0.3920 & 0.0246 & (0.3347, 0.4297)\\
$\beta$   & 0.4753 & 0.0096 & (0.4582, 0.4999)\\
$\lambda$   & 0.5892 & 0.0246 & (0.53833, 0.6349)\\
$\phi$  & 179.7905 & 22.8040 & (138.2406, 226.99)\\
\hline
\multicolumn{4}{c}{Model $\mathcal{M}_2$: PD-INGARCH(1,1) and $\lambda=0$}\\
\hline
$\alpha$ & 0.1121 & 0.0095& (0.1004, 0.1340)\\
$\beta$   & 0.1798 & 0.0101 & (0.1549, 0.1989)\\
$\lambda$   & - & - & -\\
$\phi$  & 94.9340 & 8.6488 & (77.0653, 110.6276)\\
 \hline
\multicolumn{4}{c}{Model $\mathcal{M}_3$: GPD-INARCH(1,0)}\\
\hline
$\alpha$ & 0.2286 & 0.0485& (0.1407, 0.3287)\\
$\beta$   & - & - & -\\
$\lambda$   & 0.5682 & 0.0243 & (0.5195, 0.6166)\\
$\phi$  & 218.6333 & 36.2307 & (155.7151, 297.2252)\\
 \hline
\multicolumn{4}{c}{Model $\mathcal{M}_4$: PD-INARCH(1,0) and $\lambda=0$}\\
\hline
$\alpha$ & 0.1013 & 0.0013 & (0.1000, 0.1050)\\
$\beta$   & - & - & -\\
$\lambda$   & - & - & -\\
$\phi$  & 104.4131 & 7.4362 & (86.4588, 115.8723)\\
\specialrule{.1em}{.05em}{.05em} 
\end{tabular}
\end{small}
\caption{Posterior mean (Mean), 95\% credible intervals (CI), and standard deviation (Std) for different specifications (different panels) of the GPD-INGARCH.}
\label{Tabpar}
\end{table}

In Fig. \ref{fig:GibbsDrawsReal} are presented the histograms for the Gibbs draws for each parameters. Table \ref{Tabpar} presents the parameter posterior mean and standard error and the 95\% credible interval for the unrestricted INGARCH(1,1) model (model $\mathcal{M}_1$). In the data, we found evidence of high persistence in the expected accident arrivals, i.e. $\hat{\alpha}+\hat{\beta}=0.8673$ and heteroskedastic effects, i.e. $\hat{\beta}=0.4753$. Also, there is evidence in favour of overdispersion, $\hat{\lambda}=0.5892$ and overdispersion persistsence $\hat{\phi}=179.7905$. We study the contribution of the heteroskedasticy and persistence by testing some restrictions of the INGARCH(1,1) (models from $\mathcal{M}_2$ to $\mathcal{M}_4$ in Tab. \ref{Tabpar}).

Bayesian inference compares models via the so-called Bayes factor, which is the ratio of normalizing constants of the posterior distributions of two different models (see \cite{CamPet2014} for a review). MCMC methods allows for generating samples from the posterior distributions which can be used to estimate the ratio of normalizing constants.

In this paper we use the method proposed by \cite{Gey1994}. The method consists in deriving the normalizing constants by reverse logistic regression. The idea behind this method is to consider the different estimates as if they were sampled from a mixture of two distributions with probability
\begin{equation}
p_{j}(x,\eta) = \frac{h_{j}(x)\exp(\eta_{j})}{h_{1}(x)\exp(\eta_{1})+h_{2}(x)\exp(\eta_{2})},\,\, j=1,2
\end{equation}
to be generated from the $j$-th distribution of the mixture. \cite{Gey1994} proposed to estimate the log-Bayes factor $\kappa=\eta_{2}-\eta_{1}$ by maximizing the quasi-likelihood function
\begin{equation}
\ell_{n}(\kappa) = \sum_{i=1}^{n}\log p_{1}(X_{i1},\eta_1)+\sum_{i=1}^{n} \log p_{2}(X_{i2},\eta_2)
\end{equation}
where $n$ is the number of MCMC draws for each model and $X_{ij}=\log f(Z_{1:T},X_{1:T}^{(i)},Y_{1:T}^{(i)} \vert \boldsymbol{\theta}^{(i)})$
is the log-likelihood evaluated at the $i$-th MCMC sample for each model of Tab. \ref{Tabpar}.

We performed six reverse logistic regressions, in which we compare pairwise our models. The approximated logarithmic Bayes factors $BF(\mathcal{M}_i,\mathcal{M}_j)$ are given in Tab. \ref{Tab:BF}. It is possible to see that our GPD-INGARCH$(1,1)$, $\mathcal{M}_{1}$, is preferable with respect to the other models. Notice that $\mathcal{M}_{2}$ corresponds to an INGARCH$(1,1)$ where the observations are form a standard Poisson-difference model PD-INGARCH$(1,1)$, $\mathcal{M}_{3}$ corresponds to an autoregressive model, GPD-INARCH$(1,0)$, whereas $\mathcal{M}_{4}$ is a standard Poisson difference augoregressive model, PD-INARCH$(1,0)$.

\begin{table}[t]
\centering 
\begin{small}
\begin{tabular}{l| c c c}
\hline
BF($\mathcal{M}_{i}$,$\mathcal{M}_{j})$ & $\mathcal{M}_{2}$&$\mathcal{M}_{3}$&$\mathcal{M}_{4}$\\
\hline
\multirow{2}*{$\mathcal{M}_{1}$} &  333.45 & 25.19 & 121.44\\ 
  & (5.818) & (0.253)  & (0.521) \\	
\hline
\multirow{2}*{$\mathcal{M}_{2}$} &  &  -226.86 & -300.96\\
&  & (2.024)  & (2.522)\\	
\hline
\multirow{2}*{$\mathcal{M}_{3}$}  &  &   & -73.25\\ 
  &  &   & (0.358) \\	
\hline
\end{tabular}
\end{small}
\caption{Logarithmic Bayes Factor, BF($\mathcal{M}_{i}$,$\mathcal{M}_{j})$, of the model $\mathcal{M}_{i}$ (rows) against model $\mathcal{M}_j$ (columns), with $i<j$. Where $\mathcal{M}_{1}$ is the GPD-INGARCH(1,1), $\mathcal{M}_{2}$ is the PD-INGARCH(1,1) with $\lambda =0$, $\mathcal{M}_{3}$ is the GPD-INARCH(1,0) and $\mathcal{M}_{4}$ is the PD-INARCH(1,0) with $\lambda =0$. Number in parenthesis are standard deviations of the estimated Bayes factors.}
\label{Tab:BF}
\end{table}

\subsection{Cyber threats data}
According to the Financial Stability Board \citep[][pp. 8-9]{FSB1}, a cyber incident is any observable occurrence in an information system that jeopardizes the cyber security of the system, or violates the security policies and procedures or the use policies. Over the past years there have been several discussions on the taxonomy of incidents classification \citep[see, e.g.][]{ENISA}, in this paper we use the classification provided in the \citetalias{Hackweb} dataset. \citetalias{Hackweb} is a well-known cyber-incident website that collects public reports and provides the number of cyber incidents for different categories of threats: crimes, espionage and warfare. 
\begin{figure}[p]
\centering
\begin{tabular}{cc}
Total frequency& Crimes frequency\\
\includegraphics[width=190pt, height=100pt]{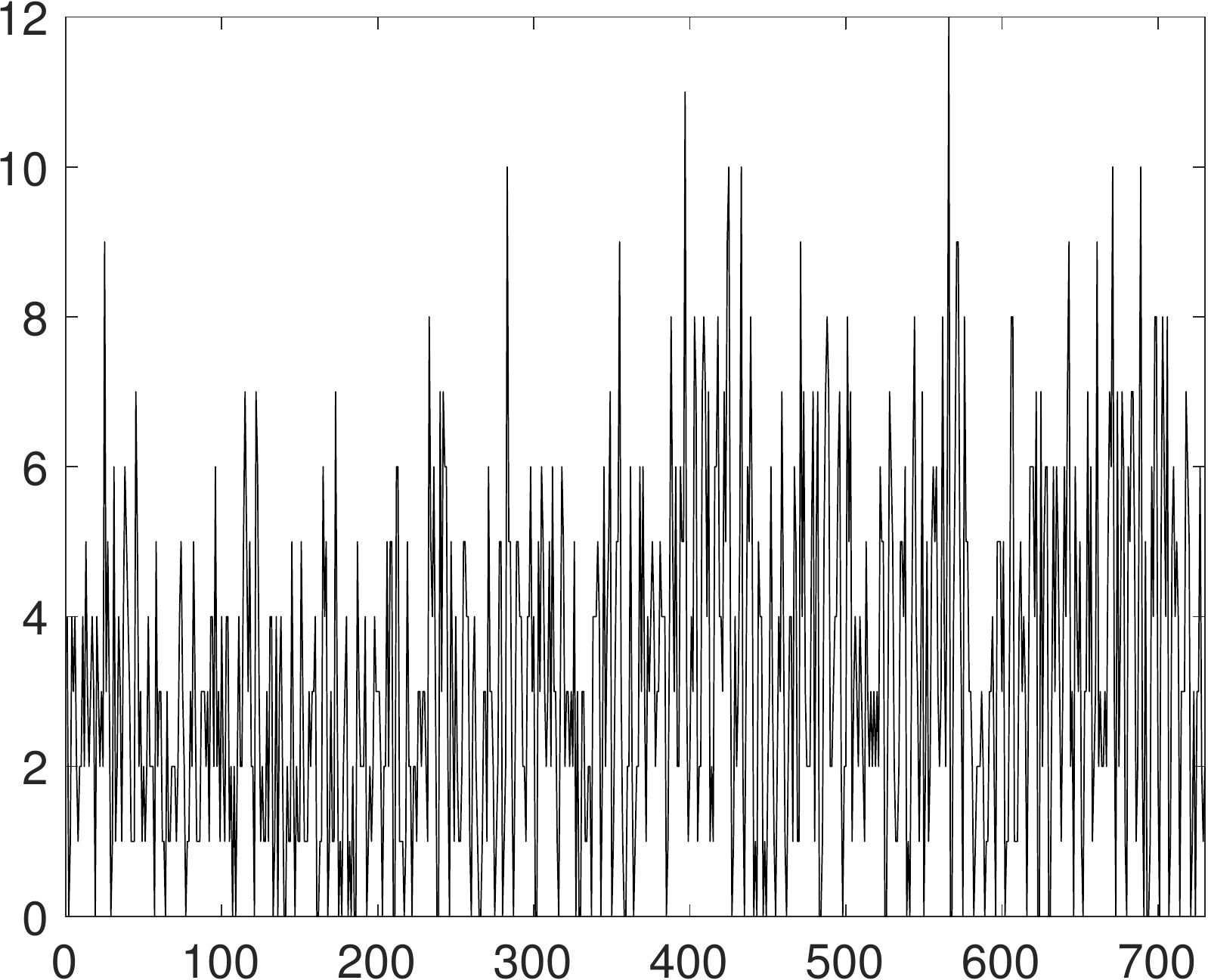}&\includegraphics[width=190pt, height=100pt]{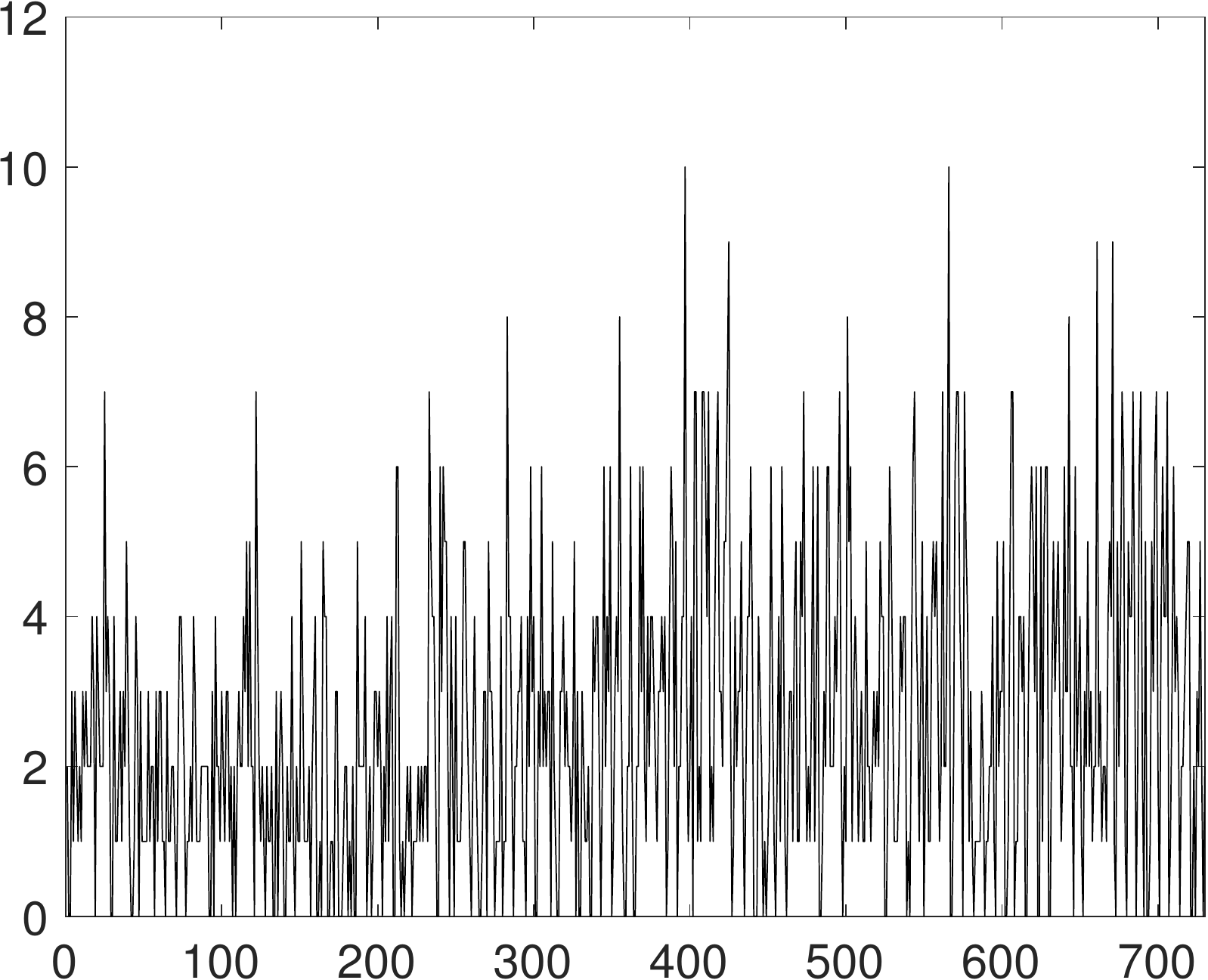}\\
Espionage frequency& Warfare frequency\\
\includegraphics[width=190pt, height=100pt]{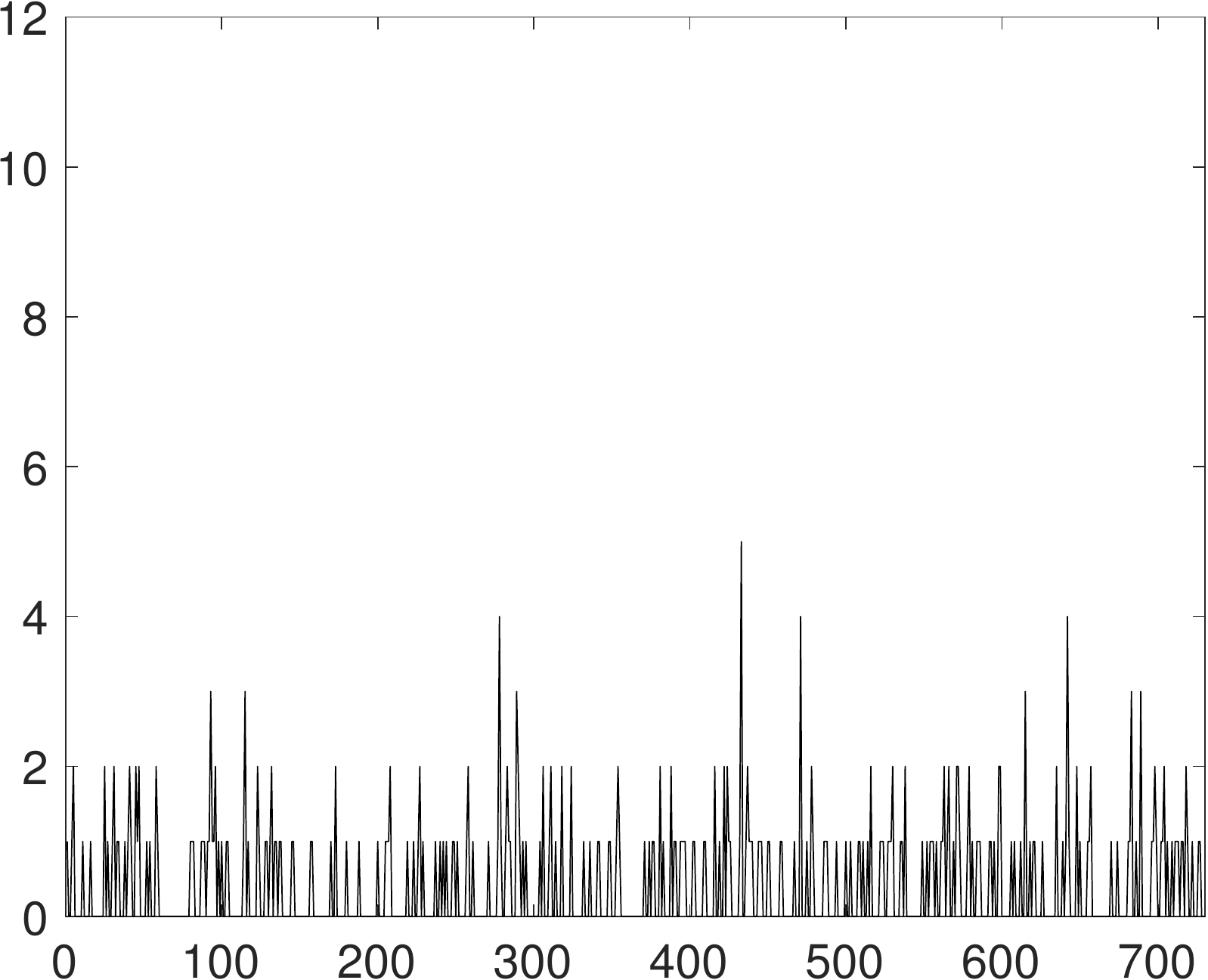}&\includegraphics[width=190pt, height=100pt]{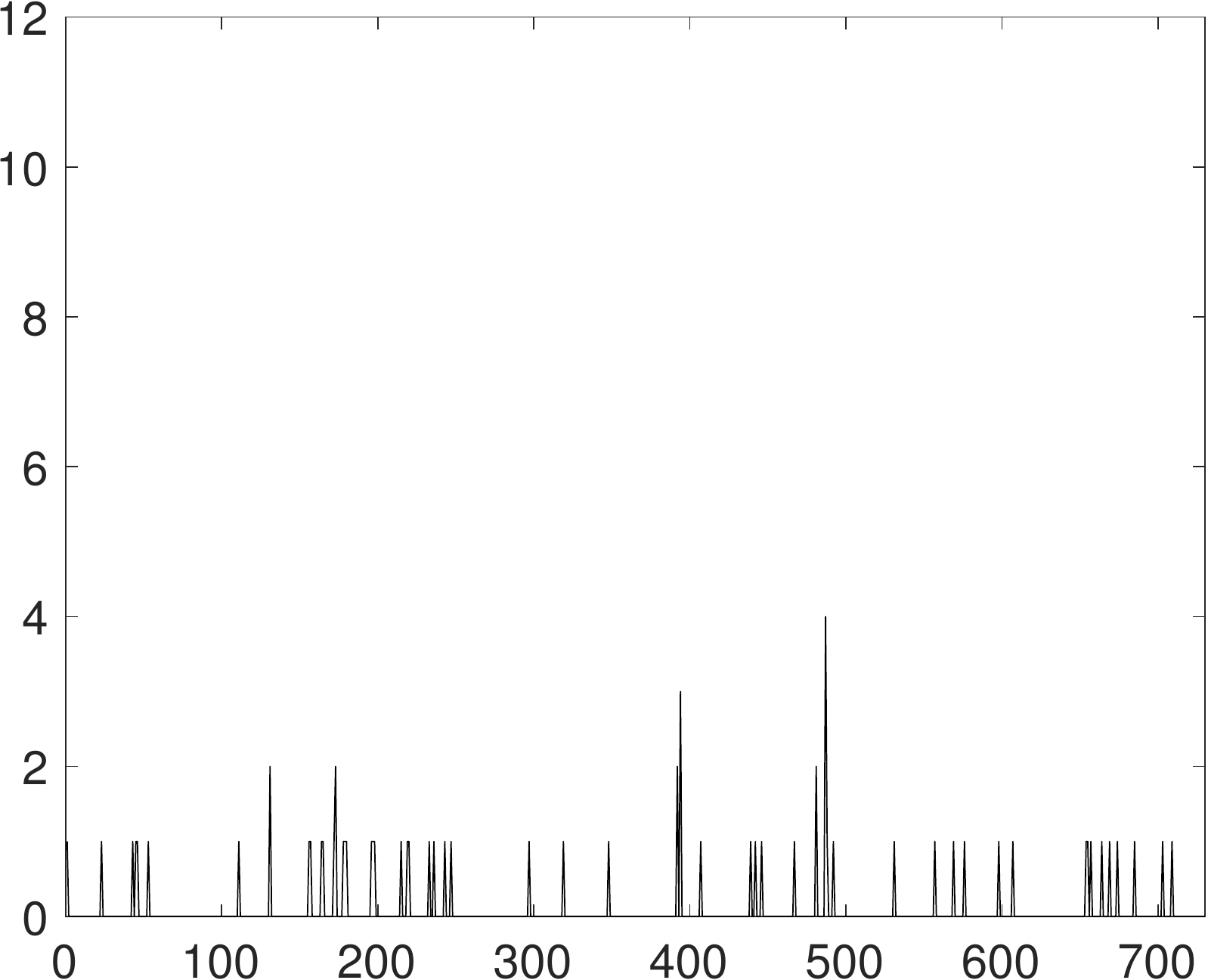}\\
\end{tabular}
\caption{Daily cyber-threats counts between 1st January 2017 and 31st December 2018.}
\label{Fig:Cyb1718levels}
\end{figure}

Figure \ref{Fig:Cyb1718levels} shows the total and category-specific number of cyber attacks at a daily frequency from January 2017 to December 2018. Albeit limited in the variety of cyber attacks the dataset covers some relevant cyber events and is one of the few publicly available datasets \citep{Agra18}. The daily threats frequencies are between 0 and 12 which motivates the use of a discrete distribution. We remove the upward trend by considering the first difference and fit the GPD-INGARCH model proposed in Section \ref{Model}.

We applied our estimation procedure, as described in Section \ref{Sec:BayInf}. As in the previous application, we fix $\alpha_{0}=1.05$ that is coherent with the conditional mean of the time series. We ran the Gibbs sampler for 110000 iterations, where we discarded the first 10000 iterations as burn-in sample. In Fig. \ref{Fig:HistCyb} are presented the histograms for the Gibbs draws for each parameters. 


\begin{figure}[p]
\centering
\setlength{\tabcolsep}{-3pt}
\begin{tabular}{cc}
$\alpha$ & $\beta$\vspace{-2pt}\\
\includegraphics[width=210pt, height=100pt]{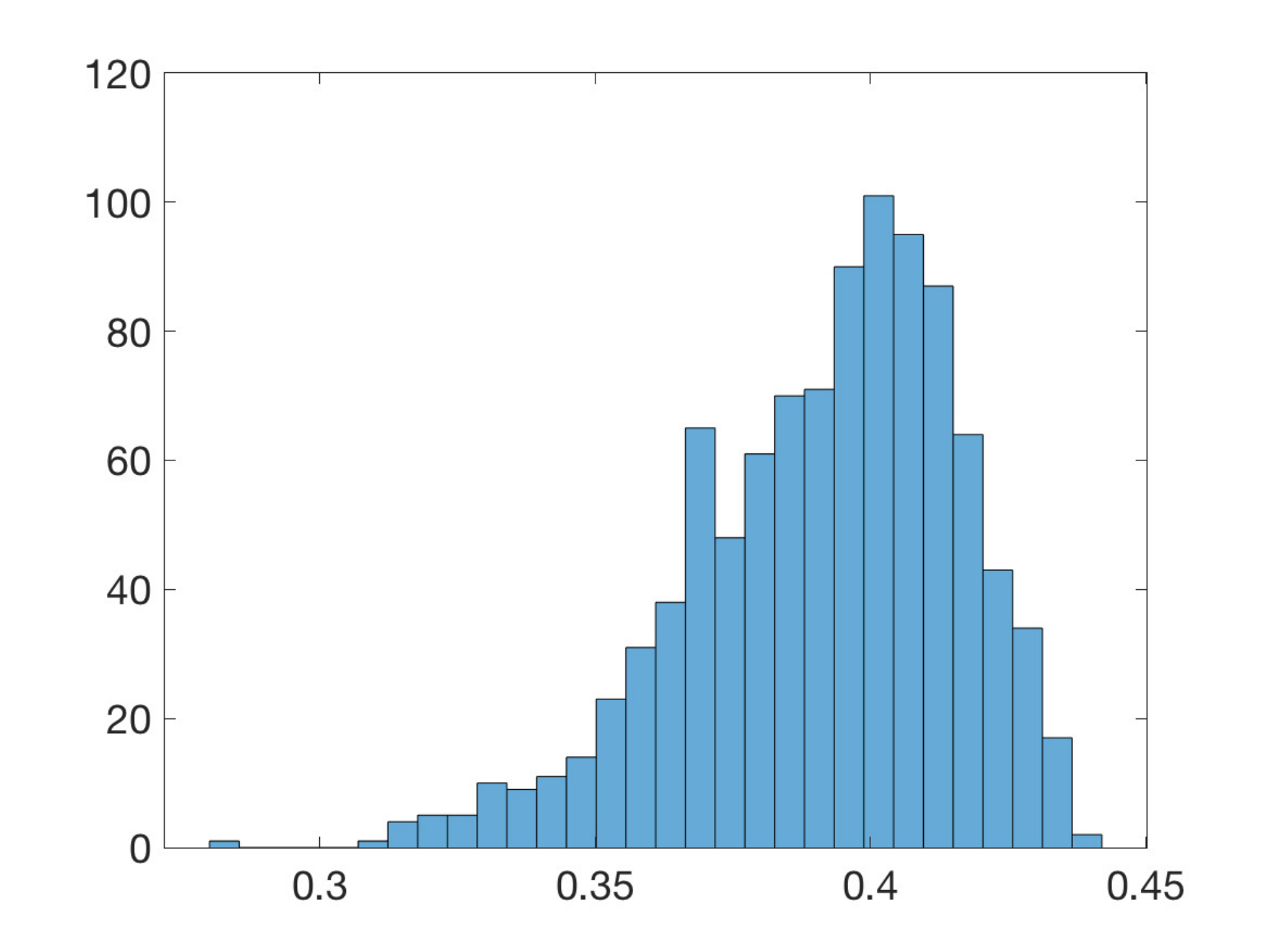} &
\includegraphics[width=210pt, height=100pt]{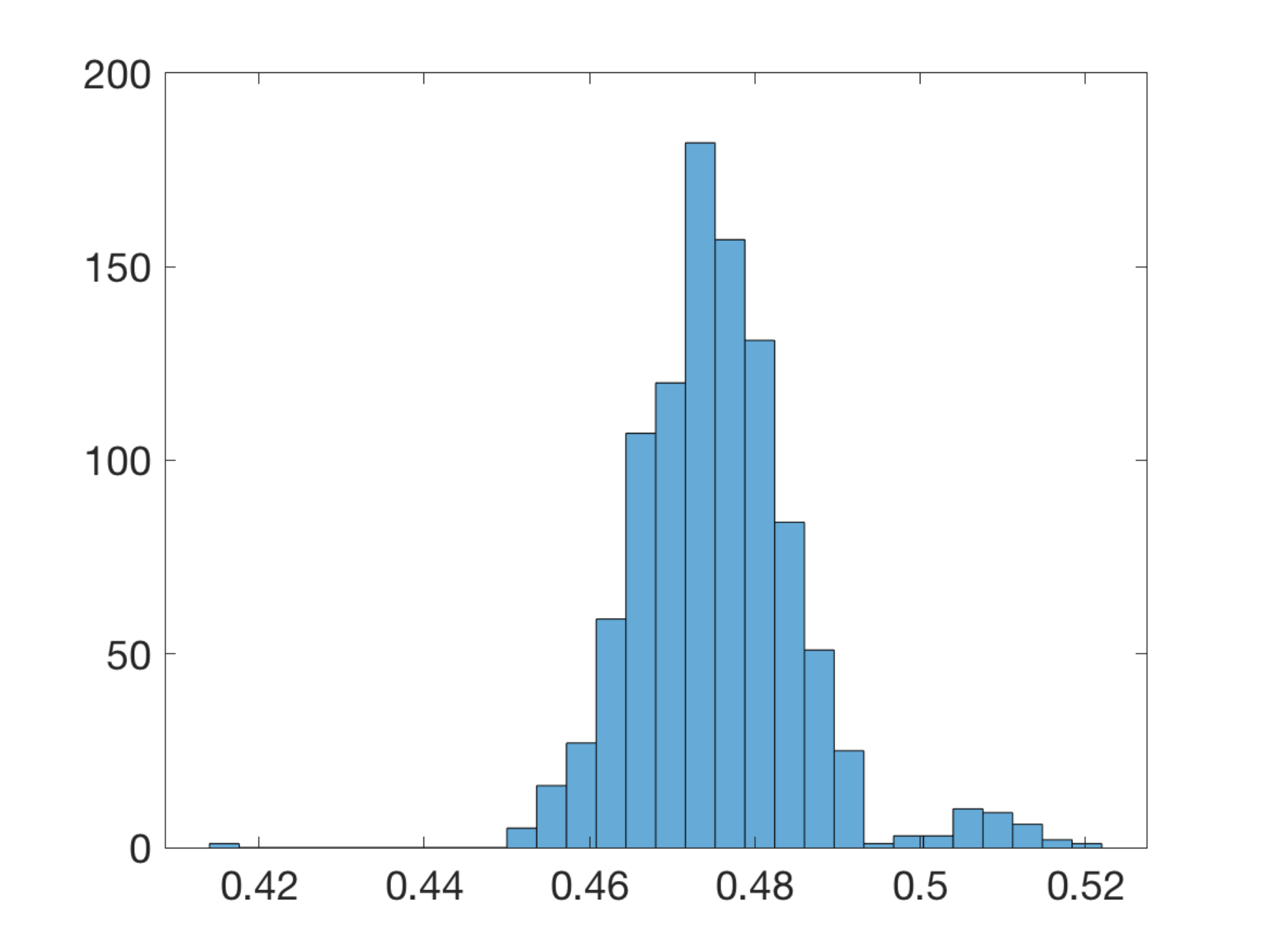} \\
$\lambda$ & $\phi$\vspace{-2pt}\\
\includegraphics[width=210pt, height=100pt]{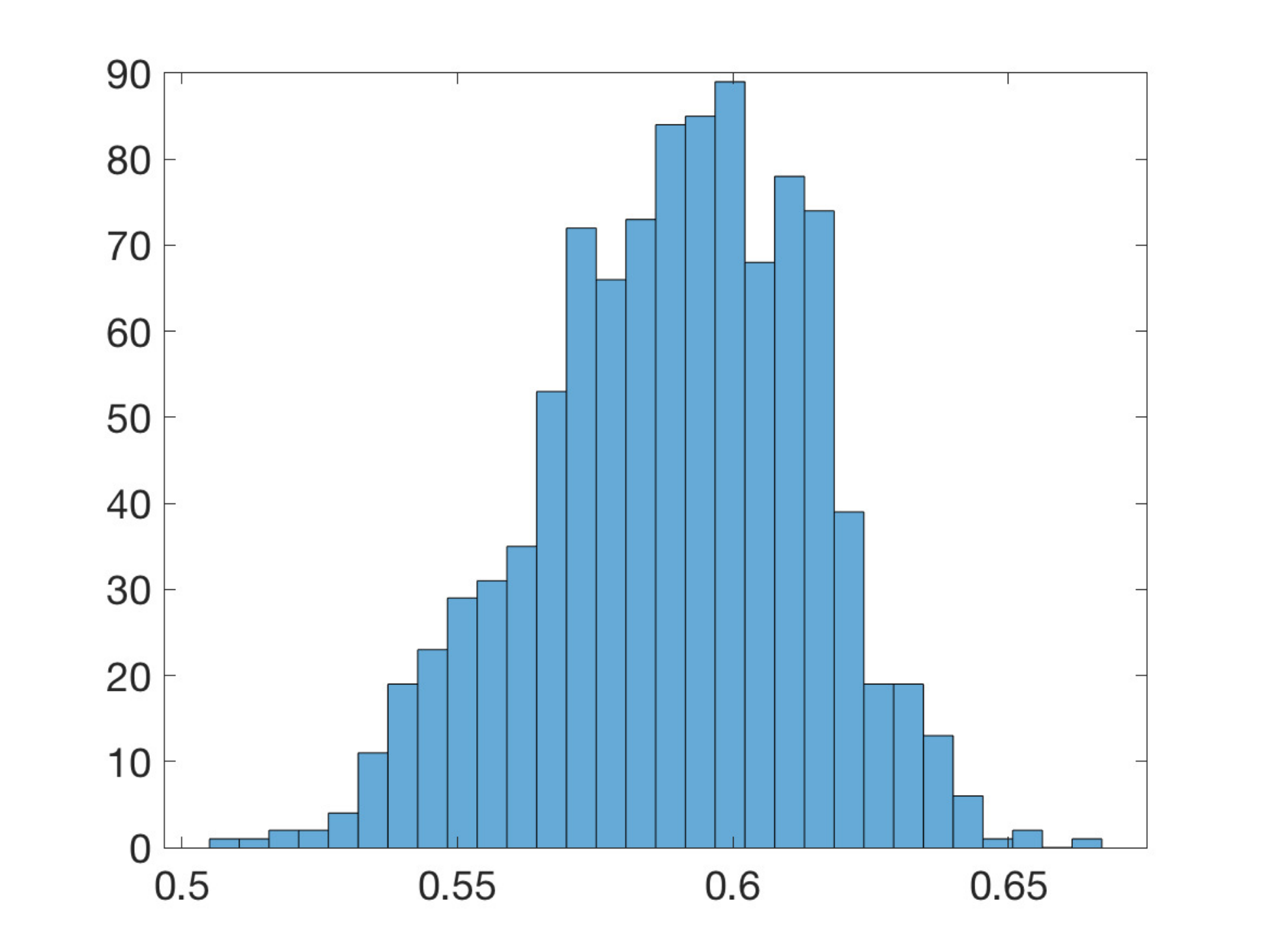} &
 \includegraphics[width=210pt, height=100pt]{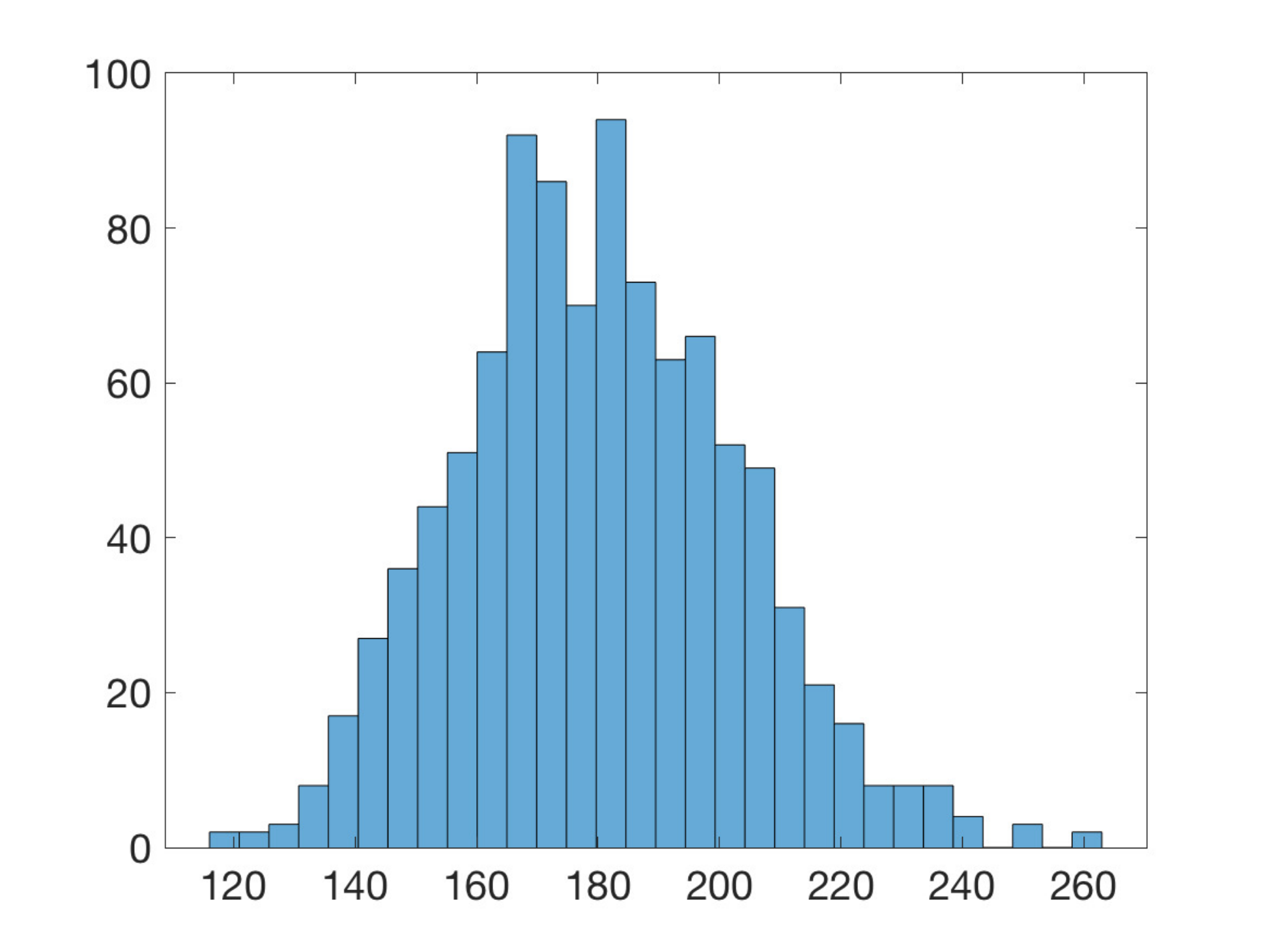} \\
\end{tabular}
\caption{Histograms of the MCMC draws for the parameters.}
\label{Fig:HistCyb}
\end{figure}


Figure \ref{Fig:HistCyb} shows that, as before, it is reasonable to fit a GPD-INGARCH process to the difference of cyber attacks since both the autoregressive parameter $\alpha$ and $\beta$, that represent the heteroskedastic feature of the data, are different from zero. Additionally, the value of $\lambda$ suggest the presence of over-dispersion in the data.

Given the importance of forecasting cyber-attacks, in this section we present the results of one-step-ahead forecasting exercise over a period of 120. We follow an approach based on predictive distributions which quantifies all uncertainty associated with the future number of attacks and is used in a wide range of applications \citep[see, e.g.][ and references therein]{Cab05,Cab11}. We account for parameter uncertainty and approximate the predictive distribution by MCMC. At tht $j$-th MCMC iteration we draw $Z_{T+h}^{(j)}$ from the conditional distribution given past observations and the parameter draw $\boldsymbol{\theta}^{(j)}$
\begin{figure}[t]
\centering
\begin{tabular}{c}
\includegraphics[height=100pt,width=300pt]{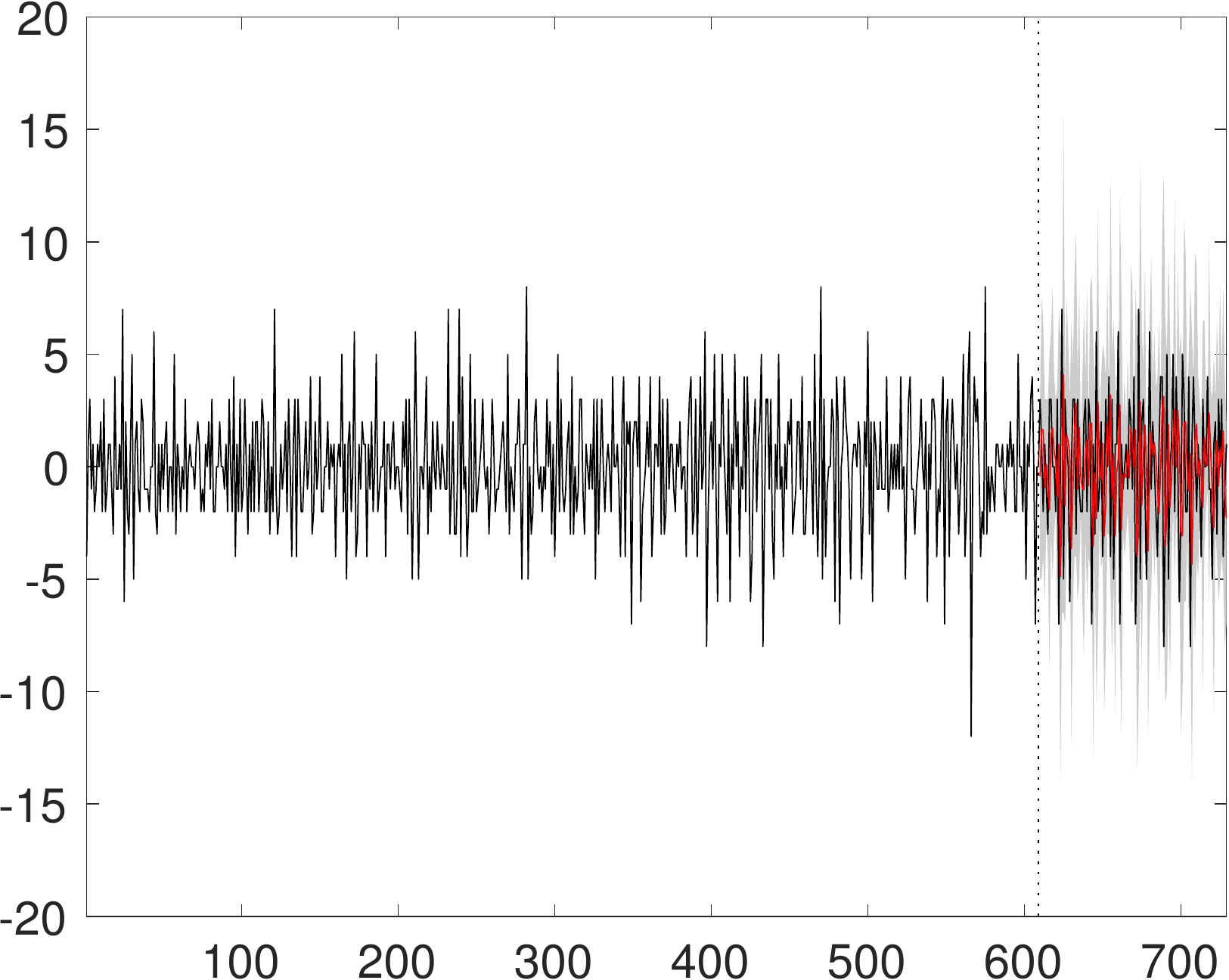} 
\end{tabular}
\caption{Changes in the number of cyber threats and their sequential one-step-ahead forecast (red line) and 95\% HPD region (gray area) between 1st November 2018 (vertical dashed line) and 31st December 2018.}
\label{Fig:ForCyb}
\end{figure}

\begin{equation}
Z_{T+h}^{(j)}\vert \mathcal{F}_{T},\boldsymbol{\theta}^{(j)} \sim GPD \left( \mu_{T+h}^{*(j)}, \sigma^{2^{*}(j)}_{T+h}, \lambda^{(j)}\right) 
\end{equation}
where $j=1,\ldots,J$, denotes the MCMC draw, $h=1,\ldots,H$ the forecasting horizon and
\begin{eqnarray}
\mu_{T+h}^{(j)} &=&\begin{cases} \alpha_{0}^{(j)}+\alpha_{1}^{(j)}Z_{T}+\beta_{1}^{(j)}\mu_{T}^{(j)}, & \mbox{for }h=1 \\ \alpha_{0}^{(j)}+\alpha_{1}^{(j)}Z^{(j)}_{T+h-1}+\beta_{1}^{(j)}\mu_{T+h-1}^{(j)}, & \mbox{for }h=2,\ldots ,H
\end{cases}\\
\sigma^{2\,(j)}_{T+h}&=&|\mu_{T+h}^{(j)}|\phi^{(j)}
\end{eqnarray}

\section{Conclusions}
We introduce a new family of stochastic processes with values in the set of integers with sign. The increments of the process follow a generalized Poisson difference distribution with time-varying parameters. We assume a GARCH-type dynamics, provide a thinning representation and study the properties of the process. We provide a Bayesian inference procedure and an efficient Monte Carlo Markov Chain sampler for posterior approximation. Inference and forecasting exercises on accidents and cyber-threats data show that the proposed GPD-INGARCH model is well suited for capturing persistence in the conditional moments and in the over-dispersion feature of the data.


\bibliographystyle{apalike}
\bibliography{GPD}

\clearpage


\appendix

\renewcommand{\thesection}{A}
\renewcommand{\theequation}{A.\arabic{equation}}
\renewcommand{\thefigure}{A.\arabic{figure}}
\renewcommand{\thetable}{A.\arabic{table}}
\setcounter{table}{0}
\setcounter{figure}{0}
\setcounter{equation}{0}

\section{Distributions used in this paper}\label{App:distrib}
\subsection{Poisson Difference distribution}
The Poisson difference distribution, a.k.a. as Skellam distribution, is a discrete distribution defined as the difference of two independent Poisson random variables $N_{1}-N_{2}$, with parameters $\lambda_{1}$ and $\lambda_{2}$. It has been introduced by \cite{Irw1937} and \cite{Ske1946}.

The probability mass function of the Skellam distribution for the difference $X=N_{1}-N_{2}$ is
\begin{equation}
P(X=x)=e ^{-(\lambda_{1}+\lambda_{2})}\left(\frac{\lambda_{1}}{\lambda_{2}}\right) ^{x/2}I_{\vert x\vert}(2\sqrt{\lambda_{1}\lambda_{2}}), \qquad \mbox{ with } X\in \mathbb{Z}
\end{equation}
where $\mathbb{Z} ={\ldots,-1,0,1,\ldots}$ is the set of positive and negative integer numbers, and $I_{k}(z)$ is the modified Bessel function of the first kind, defined as \citep{AbrSte1965}
\begin{equation}
I_{v}(z)= \left(\frac{z}{2}\right)^{2} \sum_{k=0}^{\infty} \frac{\left( \frac{z^{2}}{4}\right)^{k}}{k!\Gamma (v+k+1)}
\end{equation} 

It can be used, for example, to model the difference in number of events, like accidents, between two different cities or years. Moreover, can be used to model the point spread between different teams in sports, where all scored points are independent and equal, meaning they are single units. Another applications can be found in graphics since it can be used for describing the statistics of the difference between two images with a simple Shot noise, usually modelled as a Poisson process. 

The distribution has the following properties:
\begin{itemize}
\item Parameters: $\lambda_{1}\geq 0$, $\lambda_{2}\geq 0$
\item Support: $\lbrace-\infty, +\infty\rbrace$
\item Moment-generating function: $e^{-(\lambda_{1}+\lambda_{2})+\lambda_{1}e^{t}+\lambda_{2}e^{-t}}$
\item Probability generating function: $e^{-(\lambda_{1}+\lambda_{2})+\lambda_{1}t+\lambda_{2}/t}$
\item Characteristic function: $e^{-(\lambda_{1}+\lambda_{2})+\lambda_{1}e^{it}+\lambda_{2}e^{-it}}$
\item Moments
\begin{enumerate}
\item Mean: $\lambda_{1}-\lambda_{2}$
\item Variance: $\lambda_{1}+\lambda_{2}$
\item Skewness: $\frac{\lambda_{1}-\lambda_{2}}{(\lambda_{1}+\lambda_{2})^{3/2}}$
\item Excess Kurtosis: $\frac{1}{\lambda_{1}+\lambda_{2}}$
\end{enumerate}
\item The Skellam probability mass function is normalized: $\sum_{k=-\infty}^{+\infty} p(k;\lambda_{1},\lambda_{2})=1$ 
\end{itemize}

\subsection{Generalized Poisson distribution}\label{app:GP}
The Generalized Poisson distribution (GP) has been introduced by \cite{ConJai1973} in order to overcome the equality of mean and variance that characterizes the Poisson distribution. In some cases the occurrence of an event, in a population that should be Poissonian, changes with time or dependently on previous occurrences. Therefore, mean and variances are unequal in the data. In different fields a vastness of mixture and compound distribution have been considered, Consul and Jain introduced the GP distribution in order to obtain a unique distribution to be used in the cases said above, by allowing the introduction of an additional parameter.


See \cite{ConFam2006} for some applications of the Generalized Poisson distribution. 
Application of the GP distribution can be find as well in economics and finance. \cite{Con1989} showed that the number of unit of different commodities purchased by consumers in a fixed period of time follows a Generalized Poisson distribution. He gave interpretation of both parameters of the distribution: $\theta$ denote the basic sales potential for the commodity, while $\lambda$ the average rates of liking generated by the product between consumers. \cite{TriEtal1986} provide an application of the GP distribution in textile manufacturing industry. In particular, given the established use of the Poisson distribution in the field, they compare the Poisson and the GP distributions when firms want to increase their profit. They found that the Generalized Poisson, considering different values of the parameters,  always yield larger profits. Moreover, the Generalized Poisson distribution, as studied by \cite{Con1989}, can be used to describe accidents of various kinds, such as: shunting accidents, home injuries and strikes in industries. Another application to accidents has been carried out by \cite{FemCon1995}, where they introduced a bivariate extension to the GP distribution and studied two different estimation methods, i.e. method of moments and MLE, and the goodness of fit of the distribution in accidents statistics. \cite{HubEtal2009} test for the value of the GP distribution extra parameter by means of a Bayesian hypotheses test procedure, namely the Full Bayesian Significance Test. \cite{Fam1997} and \cite{Dem2017} provided different methods of sampling from the Generalized Poisson distribution and algorithms for sampling. 
As regard processes, the GP distribution has been used in different models. For example, \cite{ConFam1992} introduced the GP regression model, while \cite{Fam1993} studied the restricted generalized Poisson regression. \cite{WanFam1997} applied the GP regression model to households' fertility decisions and \cite{FamEtal2004} carried out an application of the GP regression model to accident data.
\cite{ZamIsm2012} develop a functional form of the GP regression model, \cite{Zam2016} introduced a few forms of bivariate GP regression model and different applications using dataset on healthcare, in particular the Australian health survey and the US National Medical Expenditure survey. \cite{Fam2015} provide a multivariate GP regression model, based on a multivariate version of the GP distribution, and two applications: to the healthcare utilizations and to the number of sexual partners.

The Generalized Poisson distribution of a random variable $X$ with parameters $\theta$ and $\lambda$ is given by
\begin{equation}\label{Eq11}
P_{x}(\theta ,\lambda)=\begin{cases} \frac{\theta(\theta + \lambda x)^{x-1}}{x!} e^{(-\theta - \lambda x)}, & x=0,1,2,\ldots \\ 0, &  \mbox{for }x>m\mbox{ if } \lambda<0.\end{cases}
\end{equation}
The GP is part of the class of general Lagrangian distributions. The GP has Generating functions and moments
\begin{itemize}
\item Parameters: 
\begin{enumerate}
\item $\theta >0$
\item $max(-1,-\theta /m)\leq\lambda \leq 1$
\item $m(\geq 4)$ is the largest positive integer for which $\theta + m\lambda > 0$ when $\lambda<0$
\end{enumerate}
\item Moment generating function (mgf): $M_{x}(\beta)=e^{\theta(e^{s}-1)}$, where $z=e^{s}$ and $u=e^{\beta}$
\item Probability generating function (pgf): $G(u)=e^{\theta(z-1)}$, where $z=ue^{\lambda(z-1)}$
\item Moments:
\begin{enumerate}
\item Mean: $\mu = \theta(1- \lambda)^{-1}$
\item Variance: $\sigma^{2}= \theta(1- \lambda)^{-3}$
\item Skewness: $\beta_{1}= \frac{1+2\lambda}{\sqrt{\theta(1-\lambda)}}$
\item Kurtosis: $\beta_{2}=3+ \frac{1+8\lambda+6\lambda^{2}}{\theta(1-\lambda)}$
\end{enumerate}
\end{itemize}
The pgf of the GP is derived by \cite{ConJai1973} by means of the Lagrange expansion, namely:
\begin{equation}\label{Eq22}
z=\sum_{x=1}^{\infty} \frac{u^{x}}{x!}\lbrace D^{x-1}(g(z))^{x} \rbrace_{z=0}
\end{equation}
\begin{equation}\label{Eq23}
\begin{split}
f(z)&=\sum_{x=0}^{\infty}\frac{u^{x}}{x!}D^{x-1}\lbrace (g(z))^{x} f'(z)\rbrace\vert_{z=0}\\
&=f(0)+\sum_{x=1}^{\infty}\frac{u^{x}}{x!}D^{x-1}\lbrace (g(z))^{x} f'(z)\rbrace\vert_{z=0}
\end{split}
\end{equation}
where $D^{x-1}=\frac{d^{x-1}}{dz^{x-1}}$.
In particular, for the GP distribution we have \citep{ConFam2006} :
\begin{equation}
f(z)=e^{\theta(z-1)}\qquad \mbox{and} \qquad g(z)=e^{\lambda(z-1)}
\end{equation}
Now, by setting $G(u)=f(z)$ we have the expression above for the pgf. (see proof in 
\paragraph{Properties}
\cite{ConJai1973}, \cite{Con1989}, \cite{ConFam1986} and \cite{ConFam2006} derived some interesting properties of the Generalized Poisson distribution. 
\begin{theorem}[\textbf{Convolution Property}]\label{Thm1}
The sum of two independent random Generalized Poisson variates, $X+Y$, with parameters $(\theta_{1},\lambda)$ and $(\theta_{2},\lambda)$ is a Generalized Poisson variate with parameters $(\theta_{1}+\theta_{2},\lambda)$.
\end{theorem}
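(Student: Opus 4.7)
My plan is to prove the convolution property via probability generating functions, exploiting the Lagrangian form of the GP pgf that the paper has already recalled in Appendix A.

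First, I would record the pgf of a $GP(\theta,\lambda)$ variate in the form given in Appendix \ref{app:GP}: namely $G(u)=e^{\theta(z-1)}$, where $z=z(u)$ is the analytic solution of the Lagrangian equation $z=u\,e^{\lambda(z-1)}$ with $z(1)=1$. The crucial observation is that this implicit function $z(u)$ depends only on the parameter $\lambda$, not on $\theta$. Hence, if $X\sim GP(\theta_1,\lambda)$ and $Y\sim GP(\theta_2,\lambda)$ are independent, the \emph{same} $z(u)$ governs both pgfs.

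Next, by independence,
\begin{equation*}
G_{X+Y}(u)=G_X(u)\,G_Y(u)=e^{\theta_1(z-1)}\,e^{\theta_2(z-1)}=e^{(\theta_1+\theta_2)(z-1)},
\end{equation*}
with $z=u\,e^{\lambda(z-1)}$ unchanged. This is precisely the pgf of a $GP(\theta_1+\theta_2,\lambda)$ random variable. Since the pgf determines the distribution on $\mathbb{N}$ uniquely, I conclude that $X+Y\sim GP(\theta_1+\theta_2,\lambda)$. For the edge case $\lambda<0$, where the support is truncated at some integer $m$, I would note that the Lagrange expansion formally still delivers the pmf, and that additivity of $\theta$ preserves the truncation bound $\theta+m\lambda>0$ when both summands satisfy it, so the representation remains valid within the domain where the GP is a proper distribution.

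The potential obstacle is handling the Lagrangian inversion rigorously: one must justify that $z(u)$ is a well-defined analytic function in a neighbourhood of $u=0$ and that matching pgfs implies matching pmfs (a standard fact, but worth citing). An alternative route, should one wish to avoid pgf uniqueness subtleties, is a direct combinatorial convolution using Abel's identity
\begin{equation*}
\sum_{k=0}^{n}\binom{n}{k}\theta_1(\theta_1+k\lambda)^{k-1}\theta_2(\theta_2+(n-k)\lambda)^{n-k-1}=(\theta_1+\theta_2)(\theta_1+\theta_2+n\lambda)^{n-1},
\end{equation*}
applied to $\sum_k P_k(\theta_1,\lambda)P_{n-k}(\theta_2,\lambda)$ after factoring out the common exponential $e^{-(\theta_1+\theta_2)-n\lambda}$. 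I expect the pgf route to be cleaner and would present that as the main proof, leaving Abel's identity as a brief remark.
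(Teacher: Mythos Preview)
Your pgf argument is correct and is the standard route: since the Lagrangian relation $z=u\,e^{\lambda(z-1)}$ depends only on $\lambda$, independence gives $G_{X+Y}(u)=e^{(\theta_1+\theta_2)(z-1)}$, which is the $GP(\theta_1+\theta_2,\lambda)$ pgf. The Abel-identity alternative you sketch is also valid and is in fact the direct combinatorial proof.

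Note, however, that the paper does \emph{not} supply its own proof of this theorem: immediately after the statement it writes ``For a proof of Th.~\ref{Thm1} see \cite{ConJai1973}.'' So there is nothing in the paper to compare against beyond that citation. Consul and Jain's original argument is essentially the pgf computation you give (they derive the Lagrangian pgf and the convolution follows by multiplication), so your main route coincides with the cited source. Your remark on the $\lambda<0$ truncated case goes slightly beyond what the paper (or the cited reference) spells out; it is a reasonable caveat but not strictly needed for the use the paper makes of the result, which is confined to $0\le\lambda<1$.
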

For a proof of Th. \ref{Thm1} see \cite{ConJai1973}.
\begin{theorem}[\textbf{Unimodality}]\label{Thm2}
The GP distribution models are unimodal for all values of $\theta$ and $\lambda$ and the mode is at $x=0$ if $\theta e^{-\lambda} <1$ and at the dual points $x=0$ and $x=1$ when $\theta e^{-\lambda}=1$ and for $\theta e^{-\lambda} >1$ the mode is at some point $x=M$ such that:
\begin{equation}
(\theta-e^{-\lambda})(e^{\lambda}-2\lambda)^{-1}<M<a
\end{equation}
where $a$ is the smallest value of M satisfying the inequality 
\begin{equation}
\lambda^{2}M^{2}+M[2\lambda \theta -(\theta+2\lambda)e^{\lambda}]>0
\end{equation}
\end{theorem}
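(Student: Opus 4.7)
The plan is to study the ratio of successive probability masses $r_x = P_{x+1}(\theta,\lambda)/P_x(\theta,\lambda)$, which encapsulates the monotonicity of the pmf and hence identifies the mode.

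Using Eq.~\ref{Eq11}, a direct cancellation yields
\begin{equation}
r_x = \frac{\theta+\lambda(x+1)}{x+1}\left(1+\frac{\lambda}{\theta+\lambda x}\right)^{x-1}e^{-\lambda},
\end{equation}
which, for $x=0$, collapses to $r_0 = P_1/P_0 = \theta\,e^{-\lambda}$. This immediately settles the first two claims: if $\theta e^{-\lambda}<1$ then $P_1<P_0$ and, provided the sequence $\{r_x\}$ is strictly decreasing, $P_{x+1}<P_x$ for all $x\ge 0$, so the unique mode is $x=0$; if $\theta e^{-\lambda}=1$ then $P_0=P_1$ and both are modes.

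The core step is therefore to show that $\{r_x\}$ is strictly decreasing in $x$, i.e.~that the pmf is log-concave on $\mathbb{N}$. I would treat $x$ as a continuous variable, take logarithms, and differentiate; after collecting terms the sign of $\frac{d}{dx}\log r_x$ reduces to an inequality of the form $\log(1+u)<u-\tfrac12 u^2\cdot(\cdot)$ for $u=\lambda/(\theta+\lambda x)\in(0,1)$, which follows from the standard bound $\log(1+u)\le u-u^2/2+u^3/3$. Log-concavity in turn guarantees that $r_x$ crosses the threshold $1$ at most once, so the pmf is unimodal and the mode $M$ is the largest integer with $r_{M-1}\ge 1$, equivalently the unique integer satisfying $r_{M-1}\ge 1>r_M$ (with ties handled as in the $\theta e^{-\lambda}=1$ case).

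For the regime $\theta e^{-\lambda}>1$, the mode lies at some $M\ge 1$, and I would extract the stated bounds from the defining condition $r_M\le 1\le r_{M-1}$. The lower bound comes from $r_{M-1}\ge 1$: writing $(1+\lambda/(\theta+\lambda(M-1)))^{M-2}\le e^{(M-2)\lambda/(\theta+\lambda(M-1))}$ and using $e^\lambda\ge 1+\lambda+\lambda^2/2$, the inequality rearranges after elementary algebra into $M(e^\lambda-2\lambda)>\theta-e^{-\lambda}$, delivering the lower bound. The upper bound is obtained symmetrically from $r_M\le 1$, applying $(1+u)^{M-1}\ge 1+(M-1)u$ and multiplying out, which yields the quadratic $\lambda^2 M^2 + M[2\lambda\theta-(\theta+2\lambda)e^\lambda]>0$; the upper bound $a$ is then the smallest $M$ satisfying this inequality.

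The main obstacle is the monotonicity of $r_x$: the presence of the exponent $x-1$ on a term that itself depends on $x$ makes the derivative of $\log r_x$ algebraically delicate, and the inequality must be sharp enough to cover the full admissible range $\theta>0$, $0\le\lambda<1$ (including the region where $\lambda$ is close to $1$, where the Taylor remainder is weakest). All the remaining work, once monotonicity of $r_x$ is secured, reduces to polynomial manipulation of the two explicit bounds.
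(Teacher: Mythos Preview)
The paper does not actually prove this theorem: after stating it, the authors simply write ``For a proof of Th.~\ref{Thm2} see \cite{ConFam1986}.'' So there is no in-paper argument to compare against; the result is quoted from Consul and Famoye (1986). Your ratio approach via $r_x=P_{x+1}/P_x$ is in fact the classical route used in that literature, and the computation $r_0=\theta e^{-\lambda}$ and the identification of the mode through the single crossing of $r_x=1$ are exactly right.

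That said, your sketch has a scope gap. The theorem is stated ``for all values of $\theta$ and $\lambda$'', and in the GP model (see the parameter constraints in Eq.~\ref{Eq11} and the surrounding text) this includes negative $\lambda$ with $\max(-1,-\theta/m)\le\lambda<1$. Your monotonicity argument for $r_x$ is written only for $\lambda\in[0,1)$: you set $u=\lambda/(\theta+\lambda x)\in(0,1)$ and invoke $\log(1+u)\le u-u^2/2+u^3/3$, but for $\lambda<0$ the quantity $u$ is negative and the Taylor-remainder inequality you use flips direction, so the log-concavity step as written does not cover that regime. You would need a separate (and slightly more delicate) argument for $\lambda<0$, where in addition the support is truncated at $m$. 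The derivations of the two explicit bounds for $M$ are also only sketched; the Bernoulli/exponential inequalities you invoke are of the right type, but turning them into exactly the displayed lower bound $(\theta-e^{-\lambda})(e^{\lambda}-2\lambda)^{-1}$ and the quadratic upper bound requires more careful bookkeeping than you indicate (and again depends on the sign of $\lambda$).
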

For a proof of Th. \ref{Thm2} see \cite{ConFam1986}.

\cite{ConShe1975} and \cite{Con1989} derived some recurrence relations between noncentral moments $\mu_{k}'$ and the cumulants $K_{k}$:
\begin{equation}
(1-\lambda)\mu_{k+1}'=\theta \mu_{k}'+\theta \dfrac{\partial \mu_{k}'}{\partial \theta}+\lambda \dfrac{\partial \mu_{k}'}{\partial \lambda},\qquad\qquad k=0,1,2,\ldots
\end{equation}
\begin{equation}
(1-\lambda)K_{k+1}=\lambda \dfrac{\partial K_{k}}{\partial \lambda}+\theta \dfrac{\partial K_{k}}{\partial \theta}+,\qquad\qquad k=1,2,3,\ldots .
\end{equation}

Moreover, a recurrence relation between the central moments of the GP distribution has been derived:
\begin{equation}
\mu_{k+1}=\frac{\theta k}{(1-\lambda)^{3}}\mu_{k-1}+\frac{1}{1- \lambda}\left\lbrace \dfrac{d\,\mu_{k}(t)}{dt} \right\rbrace_{t=1},\qquad\qquad k=1,2,3,\ldots
\end{equation}
where $\mu_{k}(t)$ is the central moment $\mu_{k}$ with $\theta t$ and $\lambda t$ in place of $\theta$ and $\lambda$.

\subsection{Generalized Poisson Difference distribution}\label{App:GPD}
The random variable $X$ follows a Generalized Poisson distribution (GP) $P_{x}(\theta, \lambda)$ if and only if
\begin{equation}
P_{x}(\theta, \lambda)=\frac{\theta(\theta +x\lambda)^{x-1}}{x!} e^{-\theta-x\lambda}\;\;\;\;\;\;\;\;\; x=0, 1, 2, \ldots
\end{equation}
with $\theta >0$ and $0\leq \lambda < 1$.

Let $X\sim P_{x}(\theta_{1}, \lambda)$ and $Y\sim P_{y}(\theta_{2}, \lambda)$ be two independent random variables, \cite{Con1986} showed that the probability distribution of $D$, $D=X-Y$, is:

\begin{equation}\label{Eq31}
P(D=d)=e^{-\theta_{1}-\theta_{2}-d\lambda} \sum_{y=0}^{\infty} \frac{\theta_{2}(\theta_{2}+y\lambda)^{y-1}}{y!} \frac{\theta_{1}(\theta_{1}+(y+d)\lambda)^{y+d-1}}{(y+d)!} e^{-2y\lambda}
\end{equation}
where d takes all integer values in the interval $(-\infty, +\infty)$. As for the GP distribution, we need to set lower limits for $\lambda$ in order to ensure that there are at least five classes with non-zero probability when $\lambda$ is negative. Hence, we set
\begin{equation*}
\max (-1,-\theta_{1}/m_{1}) < \lambda < 1
\end{equation*}
\begin{equation*}
\max (-1,-\theta_{2}/m_{2}) < \lambda < 1
\end{equation*}
where, $m_{1},m_{2}\geq 4$ are the largest positive integers such that $\theta_{1}+m_{1}\lambda >0$ and $\theta_{2}+m_{2}\lambda >0$.
\begin{prop}
The probability distribution in (\ref{Eq31}) can be written as
\begin{equation}\label{Eq32}
P(D=d)=e^{-(\theta_{1}+\theta_{2})}\cdot e^{-d\lambda} \sum_{y=0}^{\infty} \frac{\theta_{1}\theta_{2}}{y!(y+d)!} (\theta_{2}+y\lambda)^{y-1} (\theta_{1}+(y+d)\lambda)^{y+d-1} \cdot e^{-2y\lambda}=C_{d}(\theta_{1},\theta_{2},\lambda).
\end{equation} 
\end{prop}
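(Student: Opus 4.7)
The plan is essentially to observe that the claimed identity is a purely algebraic rearrangement of equation (\ref{Eq31}), so the proof reduces to regrouping factors inside and outside the series. I will not need any new probabilistic content, and the representation will follow by straightforward manipulation.

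First I would split the leading exponential as $e^{-\theta_{1}-\theta_{2}-d\lambda}=e^{-(\theta_{1}+\theta_{2})}\cdot e^{-d\lambda}$ and pull it out of the sum, exactly as written on the right-hand side of (\ref{Eq32}). Then I would take each summand of (\ref{Eq31}), namely
\begin{equation*}
\frac{\theta_{2}(\theta_{2}+y\lambda)^{y-1}}{y!}\cdot\frac{\theta_{1}(\theta_{1}+(y+d)\lambda)^{y+d-1}}{(y+d)!}\cdot e^{-2y\lambda},
\end{equation*}
and regroup the two scalar factors $\theta_{1},\theta_{2}$ into a common numerator $\theta_{1}\theta_{2}$ and the two factorials into a common denominator $y!(y+d)!$, leaving the two Lagrangian polynomial factors and the $e^{-2y\lambda}$ term intact. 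This yields precisely the summand of (\ref{Eq32}).

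Finally I would define $C_{d}(\theta_{1},\theta_{2},\lambda)$ to be this rewritten series, so that the chain of equalities in the statement holds term-by-term; since the rearrangement only permutes finitely many factors inside each summand, no convergence issue arises and the identity is valid wherever (\ref{Eq31}) is (i.e., under the parameter constraints on $\theta_{1},\theta_{2},\lambda$ recalled just above the proposition). There is essentially no obstacle in this argument; the only mild subtlety is bookkeeping on the sign/support conventions for negative $\lambda$, which I would handle simply by noting that the rearrangement is term-wise and so inherits the support restrictions from (\ref{Eq31}) without modification.
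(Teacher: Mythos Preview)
Your proposal is correct: the proposition is nothing more than a term-by-term regrouping of the factors in (\ref{Eq31}), and the paper treats it exactly this way, stating the rewritten form without a separate proof since the identity is immediate once one splits $e^{-\theta_{1}-\theta_{2}-d\lambda}=e^{-(\theta_{1}+\theta_{2})}e^{-d\lambda}$ and collects $\theta_{1}\theta_{2}/(y!(y+d)!)$. Your remark that the rearrangement is inside each summand and hence raises no convergence issue is the right justification, and it covers the negative-$\lambda$ case as well.
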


Therefore, equation (\ref{Eq32}) is the pgf of the difference of two GP variates, from which is possible to obtain the following particular cases:
\begin{equation}\label{eq33}
\begin{cases}C_{d}(\theta_{1},0,\lambda)=\frac{\theta_{1}(\theta_{1}+d\lambda)^{d-1}}{d!} e^{-\theta_{1}-d\lambda}, & \mbox{for }d=0, 1, 2, \ldots \\ C_{d}(0,\theta_{2},\lambda)=\frac{\theta_{2}(\theta_{2}-d\lambda)^{-d-1}}{(-d)!} e^{-\theta_{2}+d\lambda}, & \mbox{for } d=0, -1, -2, \ldots \\ C_{d}(\theta_{1},\theta_{2},0)=e^{-\theta_{1}-\theta_{2}} (\theta_{1})^{d/2}(\theta_{2})^{-d/2}\mathit{I}_{d}(2\sqrt{\theta_{1}\theta_{2}}), & \\
\end{cases}
\end{equation}
where d is any integer (positive, 0 or negative) and $\mathit{I}_{d}(z)$ is the modified Bessel function of the first kind, of order d and argument $z$.

The last result in equation (\ref{eq33}) is the Skellam distribution \citep{Ske1946}. Therefore, the Skellam distribution is a particular case of the difference of two GP variates.

By \cite{ConShe1973}:
\begin{equation*}
G_{1}(u)=e^{\theta_{1}(t_{1}-1)} \mbox{ , where } t_{1}=ue^{\lambda(t_{1}-1)}
\end{equation*}
and
\begin{equation*}
G_{2}(u)=e^{\theta_{2}(t_{2}-1} \mbox{ , where } t_{2}=u^{-1}e^{\lambda(t_{2}-1)}.
\end{equation*}
Therefore, given that $G(u)=G_{1}(u)G_{2}(u)$, the probability generating function (pgf) of the random variable $D=X-Y$ is
\begin{equation}
G(u)=\exp[(\theta_{1}(t_{1}-1)+(\theta_{2}(t_{2}-1)].
\end{equation}

From the cumulant generating function
\begin{equation*}
\psi(\beta)=\frac{(T_{1}-\beta)\theta_{1}}{\lambda}+\frac{(T_{2}+\beta)\theta_{2}}{\lambda}
\end{equation*}
where
$T_{1}=\beta+\lambda(e^{T_{1}}-1)$ and $T_{2}=-\beta+\lambda(e^{T_{2}}-1)$, it is possible to define the mean,variance, skewness and kurtosis of the distribution. 
\begin{equation}\label{Eq:meanGPD}
L_{1}=\frac{(\theta_{1}-\theta_{2})}{1-\lambda} \mbox{   is the first cumulant and the mean.}
\end{equation}

\begin{equation}\label{Eq:VarGPD}
L_{2}=\frac{(\theta_{1}+\theta_{2})}{(1-\lambda)^{3}} \mbox{   is the second cumulant and the variance.}
\end{equation}

\begin{equation}\label{Eq:SkwGPD}
\beta_{1}=\frac{(\theta_{1}-\theta_{2})^{2}}{(\theta_{1}+\theta_{2})^{3}}\frac{(1+2\lambda)^{2}}{1-\lambda} \mbox{ is the skewness.}
\end{equation}

\begin{equation}\label{Eq:KurGPD}
\beta_{2}=3+\frac{1+8\lambda+6\lambda^{2}}{(\theta_{1}+\theta_{2})(1-\lambda)} \mbox{ is the kurtosis.}
\end{equation}
\\

In Fig. \ref{fig:lambda}-\ref{fig:theta}, we show the GPD for various setting of $\lambda$, $\sigma^2$ and $\mu$.

\begin{figure}[H]
\centering
\begin{tabular}{cc}
(a) GPD for $\mu=-4$ and $\sigma^{2}=8$ &(b) GPD for $\mu=4$ and $\sigma^{2}=8$\\
\\
\includegraphics[scale=0.4]{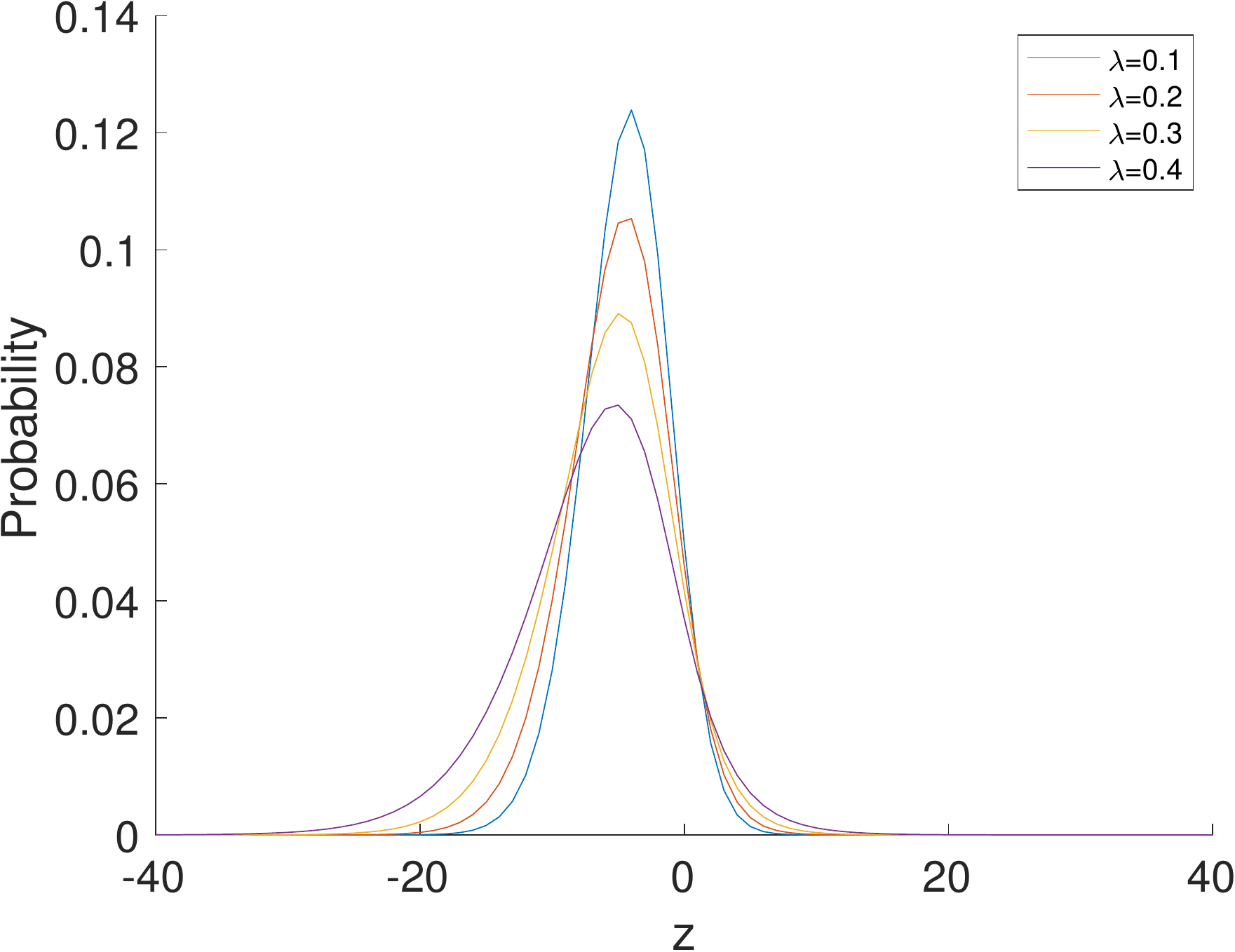}&
\includegraphics[scale=0.4]{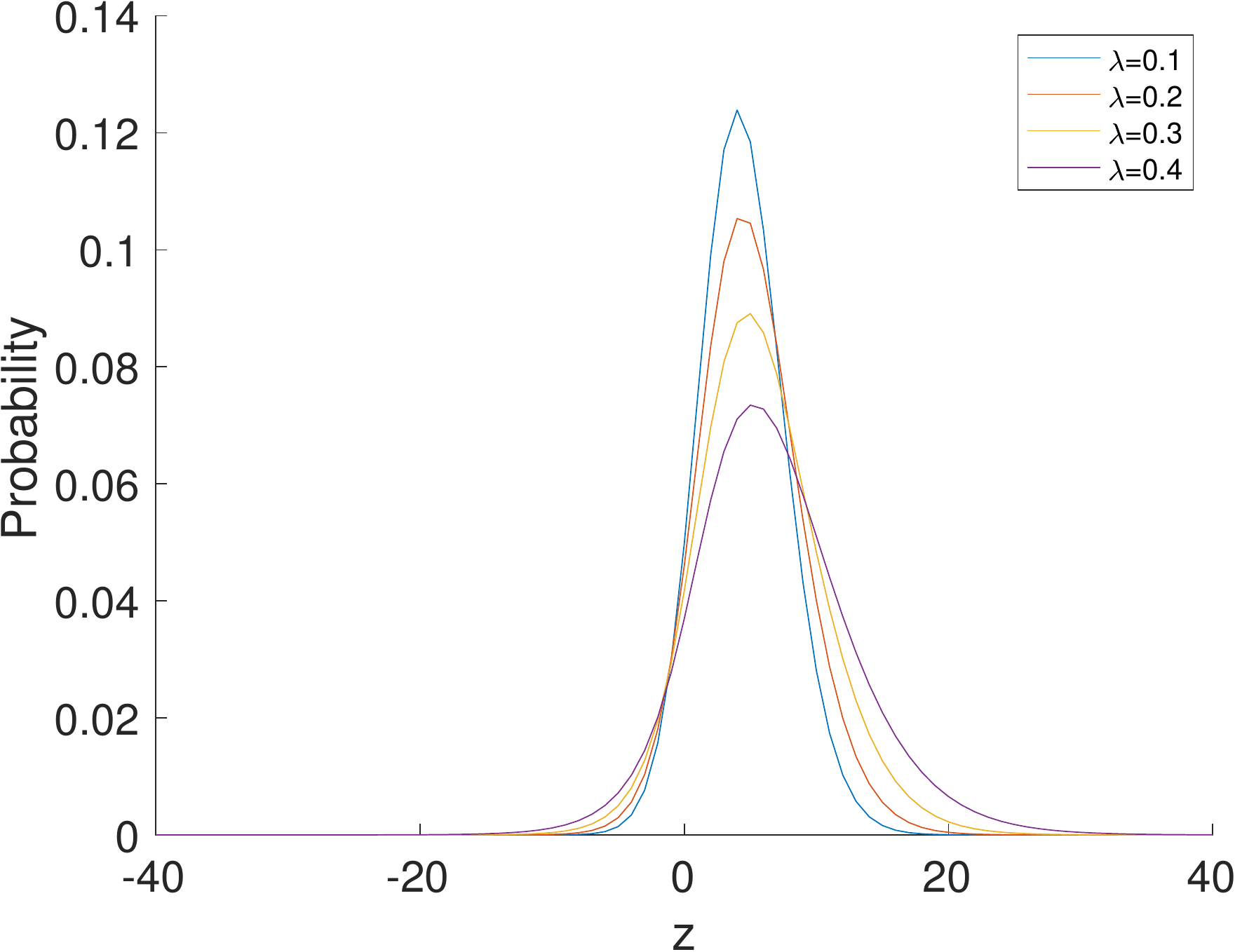}\\
\\
(c) GPD for $\mu=2$ and $\sigma^{2}=5$ &(d) GPD for $\mu=2$ and $\sigma^{2}=15$\\
\\
\includegraphics[scale=0.4]{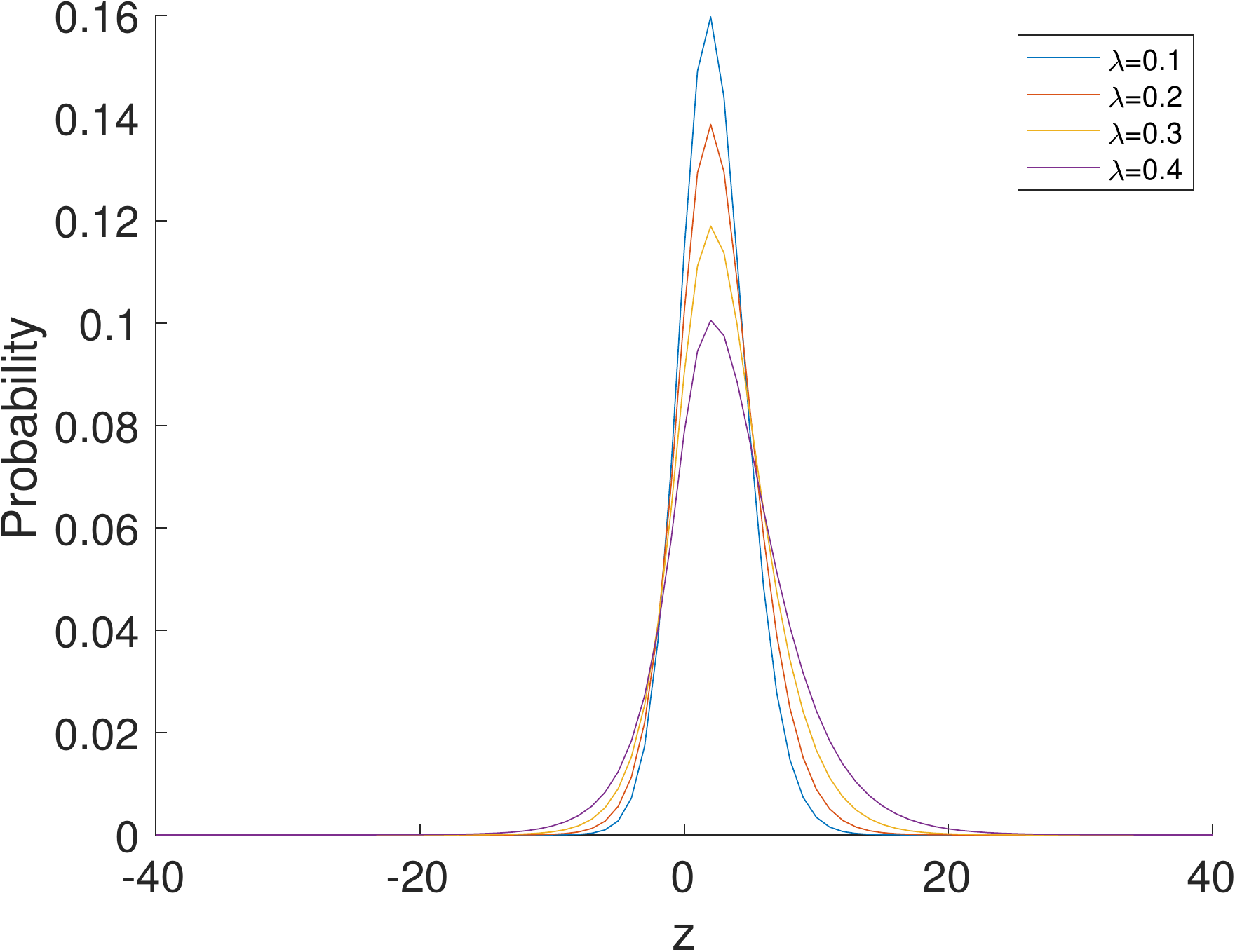}&
\includegraphics[scale=0.4]{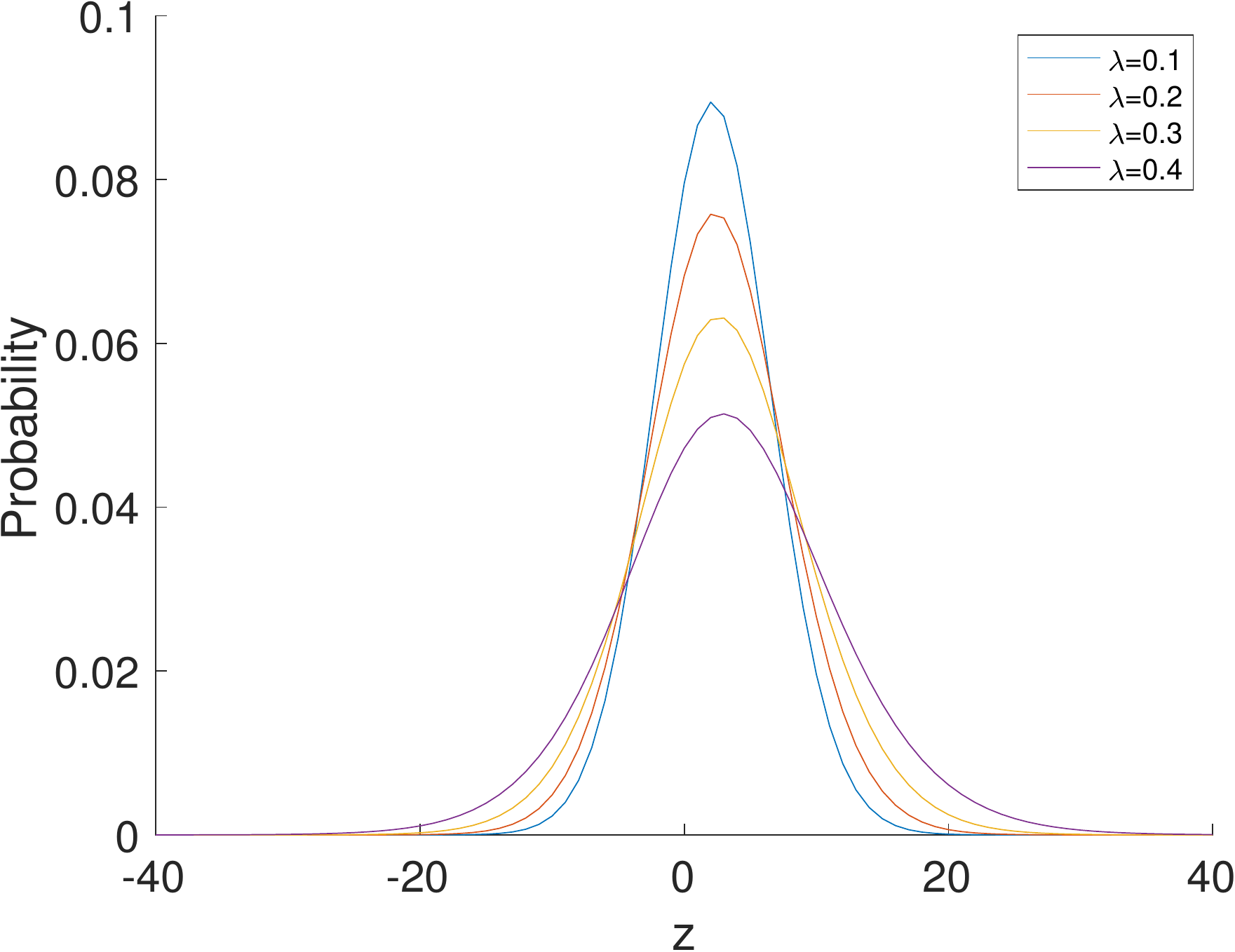}
\end{tabular}
\caption{Generalized Poisson difference distribution for different values of $\lambda$. Panels (a) and (b) show the GPD when $\lambda$ varies, for a fixed value of $\sigma^{2}=8$ and two different values of $\mu$. Panels (c) and (d) show the GPD when $\lambda$ varies, for a fixed value of $\mu=2$ and two different value of $\sigma^{2}$.}
\label{fig:lambda}
\end{figure}

\begin{figure}[H]
\centering
\begin{tabular}{cc}
(a) GPD for different values of $\mu$ &(b) GPD for different values of $\sigma^{2}$\\
\\
\includegraphics[scale=0.4]{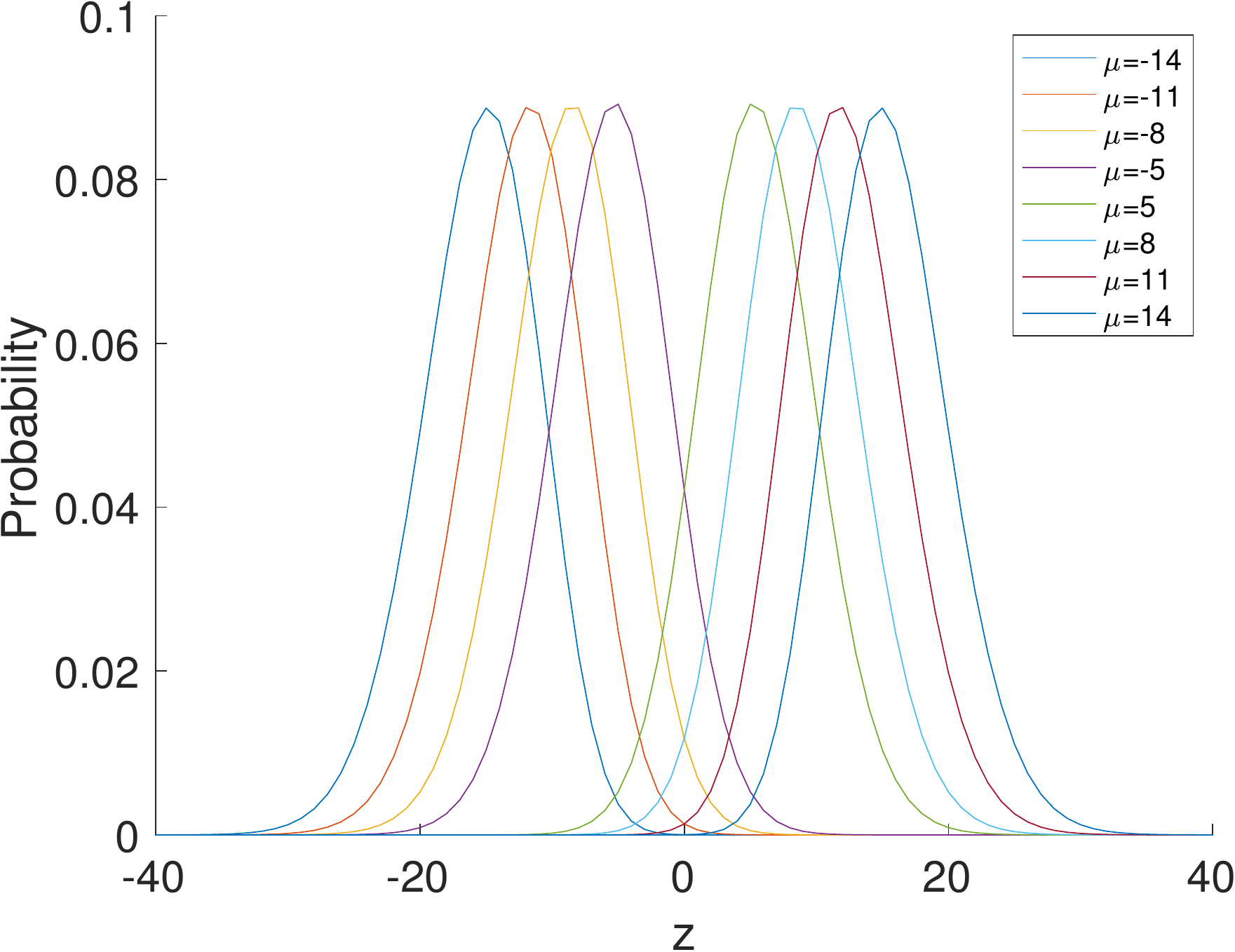}&
\includegraphics[scale=0.4]{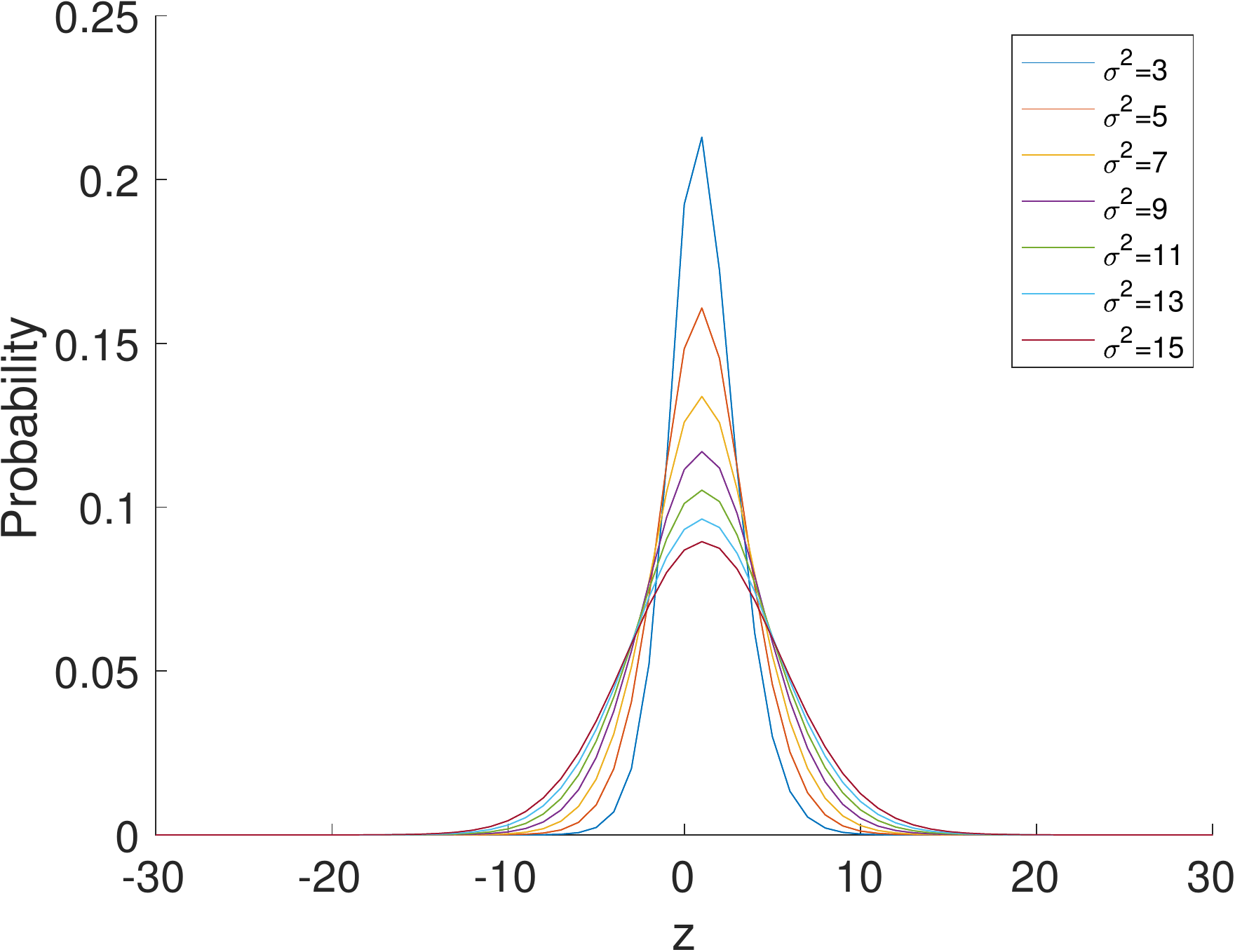}\\ 
\end{tabular}
\caption{Generalized poisson difference distribution for different values of $\mu$ and $\sigma$ and fixed $\lambda$. In panel (a) $\mu$ varies, while $\sigma^{2}=15$ and $\lambda=0.1$ are fixed. In panel (b) $\sigma^{2}$ varies and $\mu=1$ and $\lambda=0.1$ are fixed.}
\label{fig:theta}
\end{figure}

Figure \ref{fig:lambda} shows how the GPD distribution varies when $\lambda$ varies. Given the constraints imposed to the parameters (see section \ref{Model}) here $\lambda=(0,0.1,0.2,0.3,0.4)$ since smaller values and, possibly, negative values do not met the conditions for the selected values of $\theta_{1}$ and $\theta_{2}$. From panel (a) and especially in panel (b) can be seen that when $\lambda$ increases the distribution becomes longer tailed.
From panel (c) and (d) we can see that for fixed values $\mu$, when $\lambda$ decreases, the GPD is more skewed respectively to the right for $\mu>0$ ($\theta_{1}>\theta_{2}$) and to the left for $\mu<0$ ($\theta_{1}<\theta_{2}$). Therefore, the sign of $\mu$ determines the skewness of the GPD. \\

From figure \ref{fig:theta} we can see again, that for positive values of $\mu$ the distribution becomes more right-skewed, panels (a) and (b), and more left-skewed for negative values of $\mu$ in panels (c) and (d). Moreover, here can be seen better that has $\theta_{1}$ increases the distribution has longer tails. 
\subsection{Quasi-Binomial distribution}\label{App:QBth}
A first version of the Quasi-Binomial distribution, defined as QB-I by \cite{ConFam2006}, was investigated by \cite{Con1990} as an urn model. In their definition of the QB thinning, however, \cite{AlzAlO1993} used the QB distribution introduced in the literature by \cite{ConMit1975} and defined by\cite{ConFam2006} as QB-II.
\begin{equation}
P(X=x)=	\binom{n}{x} \frac{ab}{a+b} \frac{(a+x\theta)^{x-1}(b+n\theta - x\theta)^{n-x-1}}{(a+b+n\theta)^{n-1}}
\end{equation}
for $x=0,1,2,\ldots,n$ and zero otherwise. where $a>0$, $b>0$ and $\theta >-a/n$.
However, \cite{AlzAlO1993}, when defining the QB thinning operator, used a particular case of the QB-II distribution:
\begin{equation}
P(X=x)=pq \binom{n}{x} \frac{(p+x\theta)^{x-1}(q+(n-x)\theta)^{n-x-1}}{(1+n\theta)^{n-1}}
\end{equation}
where $a=p$, $q=b$, $0 < q=1-p <1$ and $\theta$ is such that $n\theta < \min(p,q)$. We denote the QB-II with QB(p,$\theta$,n). For large $n$, such that $np \rightarrow \lambda$, the QB distribution tends to the Generalized Poisson distribution.\\

The quasi-binomial (QB) thinning has been introduced by \cite{AlzAlO1993} as a generalization of the binomial thinning to model processes with GP distribution marginals. Unlike the binomial thinning, the QB thinning is able to obtain the distribution of the corresponding innovation. In particular, the authors argued in many counting process is more suitable to consider that the probability of retaining an element may depend on the time and/or the number of already retained elements. They assumed that, at time $t$, the number of retained elements follows a QB distribution. Using the notation in \cite{Wei2008}, the QB thinning is defined as follows:

\begin{prop}[\textbf{Quasi-Binomial Thinning}]
Let $\rho_{\theta,\lambda} \circ$ be the quasi-binomial thinning operator such that $\rho_{\theta,\lambda} \circ$ follows a QB($\rho$,$\theta /\lambda$,$x$). If $X$ follows a GP($\lambda$,$\theta$) distribution and the quasi-binomial thinning is performed independently on $X$, then $\rho_{\theta,\lambda} \circ X$ has a GP($\rho \lambda$,$\theta$) distribution.
\end{prop}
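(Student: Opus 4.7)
The plan is to prove this by a direct computation of the marginal pmf of $Y=\rho_{\theta,\lambda}\circ X$, obtained by conditioning on $X=x$ and summing the QB conditional pmf against the GP marginal pmf. This reduces the proposition to a single algebraic cancellation plus one application of the fact that a GP pmf sums to one.

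First I would write
\begin{equation*}
P(Y=y)=\sum_{x\geq y}P(Y=y\mid X=x)\,P(X=x),
\end{equation*}
substitute the QB$(\rho,\theta/\lambda,x)$ density and the GP$(\lambda,\theta)$ density from Appendix~\ref{app:GP}, and exploit the key identity $(1+x\theta/\lambda)^{x-1}=(\lambda+x\theta)^{x-1}/\lambda^{x-1}$. This makes the $(\lambda+x\theta)^{x-1}$ factor from the GP cancel the denominator of the QB pmf, leaving a joint expression in which $x$ enters only through $\lambda^{x}/x!$, $\binom{x}{y}$, and the factor $((1-\rho)+(x-y)\theta/\lambda)^{x-y-1}$.

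Next I would change variables to $v=x-y$, pull all $y$-dependent terms outside, and observe that what remains is
\begin{equation*}
\sum_{v\geq 0}\frac{((1-\rho)+v\theta/\lambda)^{v-1}\,\lambda^{v}\,e^{-v\theta}}{v!}.
\end{equation*}
Writing $((1-\rho)+v\theta/\lambda)^{v-1}=((1-\rho)\lambda+v\theta)^{v-1}/\lambda^{v-1}$ and setting $a=(1-\rho)\lambda$, this equals $\lambda\sum_{v\geq 0}(a+v\theta)^{v-1}e^{-v\theta}/v!$, which, because the GP$(a,\theta)$ pmf sums to one, equals $\lambda\cdot e^{a}/a=e^{(1-\rho)\lambda}/(1-\rho)$.

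Finally, multiplying everything back together, the $(1-\rho)$ factors cancel, the exponentials combine to $e^{-\rho\lambda-y\theta}$, and a last use of $(\rho+y\theta/\lambda)^{y-1}=(\rho\lambda+y\theta)^{y-1}/\lambda^{y-1}$ produces
\begin{equation*}
P(Y=y)=\frac{\rho\lambda(\rho\lambda+y\theta)^{y-1}}{y!}\,e^{-\rho\lambda-y\theta},
\end{equation*}
which is precisely the GP$(\rho\lambda,\theta)$ pmf. The one delicate point --- and the reason for the specific QB parameter $\theta/\lambda$ in the definition of $\rho_{\theta,\lambda}\circ$ --- is the cancellation in the first step: with any other QB dispersion parameter the factor $(\lambda+x\theta)^{x-1}$ from the GP pmf would fail to neutralize the QB denominator and the marginal of $Y$ would not be of GP form. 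The Abel-type identity used in the second step is then standard.
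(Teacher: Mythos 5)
Your computation is correct, and it is worth noting that the paper does not actually prove this proposition at all: its ``proof'' is the single line ``See \cite{AlzAlO1993}.'' So your argument is not a variant of the paper's route --- it supplies the derivation the paper outsources. The steps all check: with $X$ having pmf $\lambda(\lambda+x\theta)^{x-1}e^{-\lambda-x\theta}/x!$ and the conditional law QB$(\rho,\theta/\lambda,x)$, the identity $(1+x\theta/\lambda)^{x-1}=(\lambda+x\theta)^{x-1}/\lambda^{x-1}$ kills the QB denominator, $\binom{x}{y}/x!=1/(y!(x-y)!)$ separates the variables, and the residual sum over $v=x-y$ is $\lambda e^{a}/a$ with $a=(1-\rho)\lambda$ by normalization of the GP$(a,\theta)$ pmf (all terms nonnegative, so the rearrangement is harmless); the final cancellations give exactly $\rho\lambda(\rho\lambda+y\theta)^{y-1}e^{-\rho\lambda-y\theta}/y!$. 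Your closing remark about why the dispersion parameter must be $\theta/\lambda$ is also the right diagnosis, and it quietly resolves an inconsistency in the paper itself: the main text introduces the operator as QB$(\rho,\theta/\lambda,x)$ but the later definition of the difference operator $\diamond$ writes QB$(p,\lambda/\theta_{1},x)$, and the paper also swaps the argument order of GP relative to its own pmf notation $P_{x}(\theta,\lambda)$; your calculation shows which convention is the one under which the closure property actually holds. The only thing I would add for completeness is the standing restriction on the dispersion parameter (here $0\le\theta<1$ in the role played by $\lambda$ in Eq.~(A.11)) needed for the GP$(a,\theta)$ normalization you invoke to be valid.
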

The thinned variable, $\rho_{\theta,\lambda} \circ X$, can be interpreted as the number of survivors from a population described by $X$.
\paragraph{Properties}
\begin{itemize}
\item Expected value: $E[\rho_{\theta,\lambda} \circ X]=\rho\cdot \mu_{X}$
\item Covariance: $E[\rho_{\theta,\lambda} \circ X,X]=\rho\cdot \sigma_{X}^{2}$
\end{itemize}

\medskip

\renewcommand{\thesection}{B}
\renewcommand{\theequation}{B.\arabic{equation}}
\renewcommand{\thefigure}{B.\arabic{figure}}
\renewcommand{\thetable}{B.\arabic{table}}
\setcounter{table}{0}
\setcounter{figure}{0}
\setcounter{equation}{0}

\section{Proofs of the results of the paper}\label{App:proofs}
\subsection{Convolution property of the Generalized Poisson distribution in Section \ref{App:distrib}} \label{Proof:convolution}
\begin{proof}[Proof of Lemma \ref{Lemma:convo}]
Let $X\sim GPD(\theta_{1},\theta_{2},\lambda)$ and $Y\sim GPD(\theta_{3},\theta_{4},\lambda)$. We can write each r.v. as 
\begin{equation}
X=X_{1}-X_{2}
\mbox{   and   }
Y=Y_{1}-Y_{2}
\end{equation}
where $X_{1} \sim GP(\theta_{1},\lambda)$, $X_{2} \sim GP(\theta_{2},\lambda)$, $Y_{1} \sim GP(\theta_{3},\lambda)$ and $Y_{2} \sim GP(\theta_{4},\lambda)$. Therefore we can write
\begin{equation}
\begin{split}
X+Y&=X_{1}-X_{2}+Y_{1}-Y_{2}\\
&= (X_{1}+Y_{1})-(X_{2}+Y_{2})\\
&= GP(\theta_{1}+\theta_{3},\lambda)-GP(\theta_{2}+\theta_{4},\lambda) \sim GPD(\theta_{1}+\theta_{3},\theta_{2}+\theta_{4},\lambda).
\end{split}
\end{equation}
We can generalized the result as follows. Let $Z_{i} \overset{\text{i.i.d}}{\sim} GPD(\theta_{1},\theta_{2},\lambda)$. Then we have
\begin{equation}
\sum_{i=1}^{n} Z_{i} \sim GPD(\sum_{i=1}^{n}\theta_{1i},\sum_{i=1}^{n}\theta_{2i},\lambda).
\end{equation}

In the same way we can prove that the difference of two r.v. GPD distributed, is again a GPD.\\

Let $X\sim GPD(\theta_{1},\theta_{2},\lambda)$ and $Y\sim GPD(\theta_{3},\theta_{4},\lambda)$. We can write each r.v. as 
\begin{equation}
X=X_{1}-X_{2}
\mbox{   and   }
Y=Y_{1}-Y_{2}
\end{equation}
where $X_{1} \sim GP(\theta_{1},\lambda)$, $X_{2} \sim GP(\theta_{2},\lambda)$, $Y_{1} \sim GP(\theta_{3},\lambda)$ and $Y_{2} \sim GP(\theta_{4},\lambda)$. Therefore we can write
\begin{equation}
\begin{split}
X-Y&=X_{1}-X_{2}-Y_{1}+Y_{2}\\
&= (X_{1}+Y_{2})-(X_{2}+Y_{1})\\
&= GP(\theta_{1}+\theta_{4},\lambda)-GP(\theta_{2}+\theta_{3},\lambda) \sim GPD(\theta_{1}+\theta_{4},\theta_{2}+\theta_{3},\lambda).
\end{split}
\end{equation}
\end{proof}
\subsection{Proofs of the results in Section \ref{Model}}
\begin{proof}[Proof of Remark \ref{RemEquiv}]
Let $X\sim GP(\theta_{1},\lambda)$ and $Y\sim GP(\theta_{2},\lambda)$, then $Z=(X-Y)$ follows a $GPD(\theta_{1}, \theta_{2}, \lambda)$. We know that $P(Z=z)=P(X=x)\cdot P(Y=y)$, therefore we can name $S=Y$ and we substitute $S$ in $Z=X-Y$, obtaining $X=S+Z$. Now we can write:
\begin{equation}\label{eq35}
P(Z=z)=e^{-(\theta_{1}+\theta_{2})}\cdot e^{-z\lambda} \sum_{s=max(0,-z)}^{+\infty} \frac{\theta_{1}\theta_{2}}{s!(s+z)!} (\theta_{2}+\lambda s)^{s-1} (\theta_{1}+\lambda(s+z))^{s+z-1} \cdot e^{-2\lambda s}
\end{equation}
which is the probability of a $GPD(\theta_{1}, \theta_{2}, \lambda)$. We can now introduce the new parametrization of the probability density function of the GPD. Define
\begin{equation}\label{e214r}
\begin{cases} \mu = \theta_{1}-\theta_{2}\\
\sigma^{2}=\theta_{1}+\theta_{2}\end{cases},
\end{equation}
thus, we can rewrite both parameters $\theta_{i}$, $i=1,2$, with respect to $\mu$ and $\sigma^{2}$:
\begin{equation}\label{eq:25}
\begin{cases} \theta_{1}= \frac{\sigma^{2}+\mu}{2}\\
\theta_{2}=\frac{\sigma^{2}-\mu}{2}\end{cases}
\end{equation}
By substituting  \ref{eq:25} into equation \ref{eq35} we have
\begin{small}
\begin{eqnarray}
P(Z=z)&=& e^{(\frac{\sigma^{2}+\mu}{2}+\frac{\sigma^{2}-\mu}{2})}e^{-z\lambda}\sum_{s=max(0,-z)}^{+\infty} \frac{\left(\frac{\sigma^{2}+\mu}{2}\right)\left(\frac{\sigma^{2}-\mu}{2}\right) }{s!(s+z)!}\left[ \frac{\sigma^{2}+\mu}{2}+(s+z)\lambda\right]^{s+z-1}\nonumber\\
&\quad& \left[ \frac{\sigma^{2}-\mu}{2}+s\lambda\right]^{s-1} e^{-2\lambda s}\label{p214r}
\end{eqnarray}
\end{small}
Carrying out the operations in Eq. \ref{p214r} we obtain Eq. \ref{e214r}.
\end{proof}

\begin{proof}[Proof of Remark \ref{remMom}]
If $Z\sim GPD(\theta_{1},\theta_{2},\lambda)$ in the parametrization of \cite{Con1986}, the moments are given in equations \ref{Eq:meanGPD}-\ref{Eq:KurGPD}.
By using our reparametrization of the GDP
\begin{equation}\label{eq:29}
\theta_{1}= \frac{\sigma^{2}+\mu}{2},\quad \theta_{2}=\frac{\sigma^{2}-\mu}{2}
\end{equation}
in equations \ref{Eq:meanGPD}-\ref{Eq:KurGPD}, we obtain
\begin{equation}
E(Z)=\frac{\theta_{1}-\theta_{2}}{1-\lambda}=\frac{\mu}{1-\lambda} 
\end{equation}
\begin{equation}
V(Z)=\frac{\theta_{1}+\theta_{2}}{(1-\lambda)^{3}}=\frac{\sigma^{2}}{(1-\lambda)^{3}} = \kappa_{2}
\end{equation}
\begin{equation}
\begin{split}
S(Z) &= \frac{\mu^{(3)}}{\sigma^{3}} =  \frac{\kappa_{3}}{\kappa_{2}^{3/2}}\\
&=\frac{\mu(1+2\lambda)}{(1-\lambda)^{5}}\left( \frac{(1-\lambda)^{3}}{\sigma^{2}}\right)^{3/2}  \\
&= \frac{\mu}{\sigma^{3}} \frac{(1+2\lambda)}{\sqrt{1-\lambda}}
\end{split}
\end{equation}
\begin{equation}
\begin{split}
K(Z) &=  \frac{\kappa_{4}+3\kappa_{2}^{2}}{\sigma^{4}}\\
&=\left( \frac{\sigma^{2}(1+8\lambda+6\lambda^{2})}{(1-\lambda)^{7}} +\frac{3\sigma^{4}}{(1-\lambda)^{6}} \right) \left( \frac{(1-\lambda)^{6}}{\sigma^{4}} \right) \\
&=3+\frac{1+8\lambda+6\lambda^{2}}{\sigma^{2}(1-\lambda)}
\end{split}
\end{equation}
where $\kappa_{i}$, $i=2,3,4$ are respectively the second, third and fourth cumulants.
\end{proof}

\begin{proof}[Proof of Proposition \ref{P1}]
Let $\psi_{j}$ be the coefficient of $z^{j}$ in the Taylor expansion of $G(z)D(z)^{-1}$. We have 
\begin{equation}\label{EqG3}
\begin{split} \mu &= E[Z_{t}] \\ 
&= E[E[Z_{t}\vert \mathcal{F}_{t-1}]]\\
&= E\left[ \alpha_{0}D^{-1}(1)+\sum_{j=1}^{\infty} \psi_{j} Z_{t-j}\right] \\
&= \alpha_{0}D^{-1}(1)+\mu D^{-1}(1)G(1) \\
\end{split}
\end{equation}
\begin{equation}\label{EqG4}
\Rightarrow \mu = \frac{\alpha_{0}}{D(1)-G(1)}=\alpha_{0} \left( 1-\sum_{i=1}^{p}\alpha_{i} -\sum_{j=1}^{q}\beta_{j} \right)^{-1}=\alpha_{0} K^{-1}(1).
\end{equation}
Where we go from line two to line three of eq \ref{EqG3} as follows
\begin{equation}
\begin{split}
E[Z_{t}\vert \mathcal{F}_{t-1}] &= E[\alpha_{0}D^{-1}(1)+H(B)Z_{t}\vert \mathcal{F}_{t-1}]\\
&= E[\alpha_{0}D^{-1}(1)]+E[H(B)Z_{t}\vert \mathcal{F}_{t-1}]\\
&= \alpha_{0}D^{-1}(1) + \sum_{j=1}^{\infty} \psi_{j} Z_{t-j}.
\end{split}
\end{equation}
Following \cite{Fer2006}, to go from line three to line four of \ref{EqG3}:
\begin{equation}
\begin{split}
E[\alpha_{0}D^{-1}(1) + \sum_{j=1}^{\infty} \psi_{j} Z_{t-j}] &=E[\alpha_{0}D^{-1}(1)] + E[\sum_{j=1}^{\infty} \psi_{j} Z_{t-j}]\\
&= \alpha_{0}D^{-1}(1)+ \sum_{j=1}^{\infty} \psi_{j} E[Z_{t-j}] \\
&=  \alpha_{0}D^{-1}(1)+ \mu H(1) \\
&= \alpha_{0}D^{-1}(1)+ \mu D^{-1}(1)G(1).
\end{split}
\end{equation}
From \ref{EqG4}, a necessary condition for the second-order stationarity of the integer-valued process $\lbrace Z_{t} \rbrace$ is: $\left( 1-\sum_{i=1}^{p}\alpha_{i} -\sum_{j=1}^{q}\beta_{j} \right)>0$.
\end{proof}
\subsection{Proof of the results in Section \ref{Sec:PropModel}}\label{Proofs:S3}
\begin{proof}[Proof of Proposition \ref{Zrepr}]
\begin{equation*}
     \begin{split}
Z_{t}^{(n)}&=X_{t}^{(n)}-Y_{t}^{(n)} \\
&= (1-\lambda) U_{1t}+(1-\lambda)\sum_{i=1}^{n} \varphi_{1i}^{(t-i)} \circ X_{t-i}^{(n-i)} - (1-\lambda) U_{2t}-(1-\lambda)\sum_{i=1}^{n} \varphi_{2i}^{(t-i)} \circ Y_{t-i}^{(n-i)} \\
&= (1-\lambda)(U_{1t} -  U_{2t}) + (1-\lambda)\sum_{i=1}^{n} \left[ (\varphi_{1i}^{(t-i)} \circ X_{t-i}^{(n-i)})- (\varphi_{2i}^{(t-i)} \circ Y_{t-i}^{(n-i)}) \right]\\
&= (1-\lambda) U_{t}+(1-\lambda)\sum_{i=1}^{n} \varphi_{i}^{(t-i)} \diamond Z_{t-i}^{(n-i)}.
\end{split}
\end{equation*}
\end{proof}

\begin{proof}[Proof of Proposition \ref{asprop}]
In order to prove the almost sure convergence of $\lbrace Z_{t}^{(n)} \rbrace$ we will prove that the difference of two sequences $\lbrace X_{t}^{(n)} \rbrace$ and $\lbrace Y_{t}^{(n)} \rbrace$ that have an almost sure convergence, will have an almost sure convergence.\\
We know that $Z_{t}^{(n)}=X_{t}^{(n)}-Y_{t}^{(n)}$, where $X_{t}^{(n)}$ and $Y_{t}^{(n)}$ are two sequences of GP random variable. From \cite{Zhu2012} we have 
\begin{equation*}
X_{n}(\omega) \overset{\text{a.s.}}{\longrightarrow} X(\omega)\Longrightarrow \mathbb{P}(\lbrace \omega : \lim_{n \to \infty} X_{n}(\omega) = X(\omega) \rbrace) = 1
\end{equation*}
and 
\begin{equation*}
Y_{n}(\omega) \overset{\text{a.s.}}{\longrightarrow} Y(\omega)\Longrightarrow \mathbb{P}(\lbrace \omega : \lim_{n \to \infty} Y_{n}(\omega) = Y(\omega) \rbrace) = 1.
\end{equation*}
Let 
\begin{equation*}
A=\lbrace \omega : \lim_{n \to \infty} X_{n}(\omega) = X(\omega) \rbrace
\end{equation*}
and
\begin{equation*}
B=\lbrace \omega \in \Omega\times\Omega : \lim_{n \to \infty} (aX_{n}(\omega) + bY_{n}(\omega))= aX(\omega)+bY(\omega) \rbrace ,\;\; \forall a,b \in \mathbb{R}.
\end{equation*}
Now we show the almost sure convergence of the sum $(aX_{n}(\omega) + bY_{n}(\omega))$.

\begin{equation}\label{asp}
\begin{split}
	\int_{\Omega} \mathbb{I}_{B}(\omega) d\mathbb{P}(\omega) 
	&= \int_{(\Omega\cap A) \cup (\Omega\cap A^{C})} \mathbb{I}_{B}(\omega) d\mathbb{P}(\omega) \\
	&= 	\int_{\Omega} \mathbb{I}_{B}(\omega) \mathbb{I}_{A}(\omega) d\mathbb{P}(\omega) + \int_{\Omega} \mathbb{I}_{B}(\omega) \mathbb{I}_{A^{C}}(\omega) d\mathbb{P}(\omega)\\
	&= \int_{\Omega} \mathbb{I}_{B}(\omega) \mathbb{I}_{A}(\omega) d\mathbb{P}(\omega) + \int_{\Omega } \mathbb{I}_{A^{C}}(\omega) \left( \int_{\Omega} \mathbb{I}_{B\cap A^{C}}(\omega) d\mathbb{P}(\omega \vert \omega ') \right) d\mathbb{P}(\omega ') \\
	&=  \mathbb{P}(B\vert A)\mathbb{P}(A) + \mathbb{P}(B\vert A^{C})\underbrace{ \mathbb{P}(A^{C})}_{=0}\\
	&=\mathbb{P}(\lbrace \omega : \lim_{n \to \infty} (aX_{n}(\omega) + bY_{n}(\omega))= aX(\omega)+bY(\omega) \rbrace \vert A)\underbrace{\mathbb{P}(A)}_{=1}\\
	&= \mathbb{P}(\lbrace \omega : a \lim_{n \to \infty} X_{n}(\omega) = aX(\omega) - bY(\omega) + bY(\omega) \rbrace) =1
	\end{split}
\end{equation}
Therefore, if $X_{n}^{(\omega)} \overset{\text{a.s.}}{\longrightarrow} X(\omega)$ and $Y_{n}^{(\omega)} \overset{\text{a.s.}}{\longrightarrow} Y(\omega)$
\begin{equation*}
\Rightarrow  aX_{n}(\omega) + bY_{n}(\omega) \overset{\text{a.s.}}{\longrightarrow}  aX(\omega)+bY(\omega), \;\; \forall a,b \in \mathbb{R}.
\end{equation*}
$\forall a,b \in \mathbb{R}$. Hence, for $a=1$ and $b=-1$, this is true for the difference $Z_{t}^{(n)}=X_{t}^{(n)}-Y_{t}^{(n)}$.
\end{proof}

\begin{proof}[Proof of Proposition \ref{msprop}]
We use again the fact that $Z_{t}^{(n)}=X_{t}^{(n)}-Y_{t}^{(n)}$ and the following lemma.
\begin{lemma}\label{lem1}
If $X_{n}(\omega)$ and $Y_{n}(\omega)$ have a mean-square limit 
\begin{equation*}
X_{n}(\omega)  \overset{L^{2}}{\longrightarrow} X(\omega)
\end{equation*}
\begin{equation}
Y_{n}(\omega)  \overset{L^{2}}{\longrightarrow} Y(\omega)
\end{equation}
also their sum will have a mean-square limit.
\begin{equation}
aX_{n}(\omega) + bY_{n}(\omega) \overset{L^{2}}{\longrightarrow}  a X(\omega) + bY(\omega), \;\;\;\; \forall a,b\in \mathbb{R}
\end{equation}
\end{lemma}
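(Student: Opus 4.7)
The plan is to reduce the claim to the triangle inequality for the $L^{2}$ norm. Recall that, by definition, $X_{n}\overset{L^{2}}{\longrightarrow}X$ means $E[(X_{n}-X)^{2}]\to 0$ as $n\to\infty$, and similarly for $Y_{n}$ and $Y$. What we need to show is that $E[(aX_{n}+bY_{n}-aX-bY)^{2}]\to 0$ as $n\to\infty$ for arbitrary fixed scalars $a,b\in\mathbb{R}$.

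The first step is to rewrite the difference as $aX_{n}+bY_{n}-aX-bY=a(X_{n}-X)+b(Y_{n}-Y)$ and apply Minkowski's inequality (the triangle inequality for the $L^{2}$ norm $\|W\|_{2}=\sqrt{E[W^{2}]}$), obtaining
\begin{equation*}
\|a(X_{n}-X)+b(Y_{n}-Y)\|_{2}\;\leq\;|a|\,\|X_{n}-X\|_{2}+|b|\,\|Y_{n}-Y\|_{2}.
\end{equation*}
By the hypotheses, both terms on the right-hand side converge to $0$ as $n\to\infty$, so the left-hand side does as well. Squaring yields the desired conclusion $E[(aX_{n}+bY_{n}-aX-bY)^{2}]\to 0$, i.e.\ $aX_{n}+bY_{n}\overset{L^{2}}{\longrightarrow}aX+bY$.

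There is essentially no obstacle here: the result is a direct consequence of the fact that $L^{2}(\Omega,\mathcal{F},\mathbb{P})$ is a normed vector space. The only minor bookkeeping point is to justify that all quantities involved are square-integrable, which follows because $X_{n},Y_{n},X,Y\in L^{2}$ by assumption and $L^{2}$ is closed under finite linear combinations. One could alternatively expand $(a(X_{n}-X)+b(Y_{n}-Y))^{2}$ and use the Cauchy--Schwarz inequality on the cross term $E[(X_{n}-X)(Y_{n}-Y)]\leq\|X_{n}-X\|_{2}\|Y_{n}-Y\|_{2}$, but the Minkowski approach is cleaner and avoids any case analysis on the signs of $a$ and $b$.
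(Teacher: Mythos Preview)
Your proof is correct: the Minkowski/triangle-inequality argument in $L^{2}$ is the standard route and is carried out cleanly. The paper itself does not actually prove Lemma~\ref{lem1}; it merely states it inside the proof of Proposition~\ref{msprop} and then applies it with $a=1$, $b=-1$, treating the result as a well-known fact about $L^{2}$ convergence. So your write-up in fact supplies an argument the paper omits.
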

Hence, by setting $a=1$ and $b=-1$ we will obtain that Lemma \ref{lem1} will be valid also for the difference of two sequences
\begin{equation}
X_{n}(\omega) - Y_{n}(\omega)  \overset{L^{2}}{\longrightarrow} X(\omega) - Y(\omega)
\end{equation}
and we can say that $Z_{t}^{(n)}$ converges to $Z_{t}$ in $L^{2}(\Omega, \mathcal{F}, \mathbb{P})$.
\end{proof}

\begin{proof}[Proof of Proposition \ref{prop:pgf}]
\begin{equation}
\begin{split}
g_{\mathbf{Z}}(\mathbf{t}) &=\mathbb{E}\left[ \prod_{i=1}^k t_{i}^{Z_{i}}\right]= \mathbb{E}\left[ \prod_{i=1}^k t_{i}^{X_{i}}\right]\mathbb{E}\left[ \prod_{i=1}^k \frac{1}{t_{i}^{Y_{i}}}\right]= g_{\mathbf{X}}(\mathbf{t})g_{\mathbf{Y}}(\mathbf{t^{-1}})
\end{split}
\end{equation}
\end{proof}

\begin{proof}[Proof of Proposition \ref{prop:moments}]
As said before, $Z_{t}^{(n)}=X_{t}^{(n)}-Y_{t}^{(n)}$. Where $X_{t}^{(n)}$ and $Y_{t}^{(n)}$ are finite sums of independent Generalized Poisson variables and it follows that $Z_{t}^{(n)}$ is a finite sum of Generelized Poisson difference variables. As shown by \cite{Zhu2012}, the first two moments of $X_{t}$ and $Y_{t}$ are finite: $E[X_{t}]=\mu_{X}\leq C_{1}$, $E[Y_{t}]=\mu_{Y}\leq C_{1}'$, $V[X_{t}]  = \sigma^{2}_{X}\leq C_{2}$, $V[Y_{t}]=\sigma^{2}_{Y} \leq C_{2}'$, therefore, 
\begin{equation}
E[Z_{t}] = E[X_{t}] - E[Y_{t}] = \mu_{X} - \mu_{Y}\leq \mu_{X} + \mu_{Y} \leq C_{1}+C_{1}'
\end{equation}
is finite and
\begin{equation}
V[X_{t}-Y_{t}]=V[X_{t}] + V[X_{t}] = \sigma^{2}_{X} + \sigma^{2}_{Y}\leq C_{2}+C_{2}'
\end{equation}
is also finite, where $Cov(X_{t},Y_{t})=0$ given that $X_{t}$ and $Y_{t}$ are independent and where $C_{i}$ and $C_{i}'$, with $i=1,2$ are constants.
\end{proof}

\begin{proof}[Proof of the results in Example \ref{Example1}]
For $k \geq 2$:
\begin{equation}
\gamma_{Z}(k) = \alpha_{1} \gamma_{Z}(k-1) = \alpha_{1}^{k-1} \gamma_{Z}(1)
\end{equation}
For $k=1$:
\begin{equation}
\gamma_{Z}(1) = Cov(Z_{t},Z_{t-1} = \alpha_{1} \gamma_{Z}(0) = \alpha_{1} V(Z_{t} = \alpha_{1}[\phi^{3} E(\sigma^{2^{*}}_{t})] + \alpha_{1}V(\mu_{t}).
\end{equation}
For $k \geq 1$ we have
\begin{equation}
\gamma_{\mu}(k) = \alpha_{1} \gamma_{\mu}(k-1) = \alpha_{1}^{k} V(\mu_{t}).
\end{equation}
For $k=0$:
\begin{equation}
\begin{split}
\gamma_{\mu}(k) &= V(\mu_{t})= \alpha_{1} \gamma_{Z}(1) \\
&= \alpha_{1}\left\lbrace \alpha_{1}[\phi^{3} E(\sigma^{2^{*}}_{t})+ V(\mu_{t})]\right\rbrace = \alpha_{1}^{2}[\phi^{3} E(\sigma^{2^{*}}_{t})]+\alpha_{1}^{2}V(\mu_{t})
\end{split}
\end{equation}
Therefore,
\begin{equation}
V(\mu_{t}) = \frac{\alpha_{1}^{2}[\phi^{3} E(\sigma^{2^{*}}_{t})]}{1-\alpha_{1}^{2}}
\end{equation}
and
\begin{equation}
\begin{split}
V(Z_{t}) &= \phi^{3} E(\sigma^{2^{*}}_{t}) + V(\mu_{t}) \\
&= \phi^{3} E(\sigma^{2^{*}}_{t}) +\frac{\alpha_{1}^{2}[\phi^{3} E(\sigma^{2^{*}}_{t})]}{1-\alpha_{1}^{2}} = \frac{\phi^{3} E(\sigma^{2^{*}}_{t})}{1-\alpha_{1}^{2}}
\end{split}
\end{equation}
where $\phi=\frac{1}{1-\lambda}$.
Finally, the autocorrelations are derived as follows:
\begin{equation}
\rho_{\mu}(k) = \frac{\gamma_{\mu}(k)}{V(\mu_{t})} = \frac{\alpha_{1}^{k}\,V(\mu_{t})}{V(\mu_{t})} = \alpha_{1}^{k}
\end{equation}
\begin{equation}
\begin{split}
\rho_{Z}(k)&= \frac{\gamma_{Z}(k)}{V(Z_{t})} = \alpha_{1}^{k-1}\, \gamma_{Z}\,(1) \frac{1-\alpha_{1}^{2}}{\phi^{3} E(\sigma^{2^{*}}_{t})}\\
&= \alpha_{1}^{k-1}\,\frac{\alpha_{1}(1-\alpha_{1}^{2})\phi^{3} E(\sigma^{2^{*}}_{t})+\alpha_{1}^{3}\phi^{3} E(\sigma^{2^{*}}_{t})}{1-\alpha_{1}^{2}} \frac{1-\alpha_{1}^{2}}{\phi^{3} E(\sigma^{2^{*}}_{t})} = \alpha_{1}^{k}
\end{split}
\end{equation}
\end{proof}

\begin{proof}[Proof of the results in Example \ref{Example2}]
For $k \geq 2$:
\begin{equation}
\gamma_{Z}(k) = \alpha_{1} \gamma_{Z}(k-1) + \beta_{1} \gamma_{Z}(k-1) = (\alpha_{1} + \beta_{1})^{k-1} \gamma_{Z}(1)
\end{equation}

For $k=1$:
\begin{equation}
\begin{split}
\gamma_{Z}(1) &= Cov(Z_{t},Z_{t-1}) = \alpha_{1} \gamma_{Z}(0) + \beta_{1} \gamma_{\mu}(0) \\
&= \alpha_{1} V(Z_{t})+ \beta_{1} V(\mu_{t}) = \alpha_{1}[\phi^{3} E(\sigma^{2^{*}}_{t})] + (\alpha_{1}+\beta_{1})V(\mu_{t}).
\end{split}
\end{equation}
For $k \geq 1$ we have
\begin{equation}
\gamma_{\mu}(k) = \alpha_{1} \gamma_{\mu}(k-1) + \beta_{1} \gamma_{\mu}(k-1) = (\alpha_{1} + \beta_{1})^{k} V(\mu_{t}).
\end{equation}
For $k=0$:
\begin{equation}
\begin{split}
\gamma_{\mu}(k) &= V(\mu_{t})= \alpha_{1} \gamma_{Z}(1) + \beta_{1} \gamma_{\mu}(1) \\
&= \alpha_{1}\left\lbrace \alpha_{1}[\phi^{3} E(\sigma^{2^{*}}_{t})] + (\alpha_{1}+\beta_{1})V(\mu_{t})\right\rbrace + \beta_{1} \left[ (\alpha_{1} + \beta_{1}) V(\mu_{t}) \right] \\
&= \alpha_{1}^{2}[\phi^{3} E(\sigma^{2^{*}}_{t})]+(\alpha_{1} + \beta_{1})^{2}V(\mu_{t})
\end{split}
\end{equation}
Therefore,
\begin{equation}
V(\mu_{t}) = \frac{\alpha_{1}^{2}[\phi^{3} E(\sigma^{2^{*}}_{t})]}{1-(\alpha_{1} + \beta_{1})^{2}}
\end{equation}
and
\begin{equation}
\begin{split}
V(Z_{t}) &= \phi^{3} E(\sigma^{2^{*}}_{t}) + V(\mu_{t}) = \phi^{3} E(\sigma^{2^{*}}_{t}) +\frac{\alpha_{1}^{2}[\phi^{3} E(\sigma^{2^{*}}_{t})]}{1-(\alpha_{1} + \beta_{1})^{2}}\\
&= \frac{\phi^{3} E(\sigma^{2^{*}}_{t})[1-(\alpha_{1} + \beta_{1})^{2}+\alpha_{1}^{2}]}{1-(\alpha_{1} + \beta_{1})^{2}}
\end{split}
\end{equation}
where $\phi=\frac{1}{1-\lambda}$.
The autocorrelations are derived as follows:
\begin{equation}
\rho_{\mu}(k) = \frac{\gamma_{\mu}(k)}{V(\mu_{t})} = \frac{(\alpha_{1}+\beta_{1})^{k}\,V(\mu_{t})}{V(\mu_{t})} = (\alpha_{1}+\beta_{1})^{k}
\end{equation}
\begin{equation}
\begin{split}
\rho_{Z}(k)&= \frac{\gamma_{Z}(k)}{V(Z_{t})} = (\alpha_{1}+\beta_{1})^{k-1}\, \gamma_{Z}\,(1) \frac{1-(\alpha_{1} + \beta_{1})^{2}}{\phi^{3} E(\sigma^{2^{*}}_{t})[1-(\alpha_{1} + \beta_{1})^{2}+\alpha_{1}^{2}]}\\
&= (\alpha_{1}+\beta_{1})^{k-1}\, \frac{\alpha_{1}[1-\beta_{1}(\alpha_{1}+\beta_{1})]}{1-(\alpha_{1} + \beta_{1})^{2}+\alpha_{1}^{2}}
\end{split}
\end{equation}
\end{proof}

\renewcommand{\thesection}{C}
\renewcommand{\theequation}{C.\arabic{equation}}
\renewcommand{\thefigure}{C.\arabic{figure}}
\renewcommand{\thetable}{C.\arabic{table}}
\setcounter{table}{0}
\setcounter{figure}{0}
\setcounter{equation}{0}

\section{Numerical Illustration}
We consider 400 samples from a two GPD-INGARCH(1,1)	and two simulation settings: one with low persistence and the other with high persistence. The first setting has parameters $\lambda=0.4$, $\alpha_{1}=0.25$, $\beta_{1}=0.23$, $\alpha_{0}=-0.2$ and $\phi = 22.78$, while the second setting with parameters $\lambda=0.6$,$\alpha_{1}=0.53$, $\beta_{1}=0.25$, $\alpha_{0}=-0.2$ and $\phi = 26.25$.  We run the Gibbs sampler for 1,010,000 iterations, discard the first 10,000 draws to avoid dependence from initial conditions, and finally apply a thinning procedure with a factor of 100 to reduce the dependence between consecutive draws.\\

The following figures show the posterior approximation of $\alpha_{1}$,$\beta_{1}$ and $\lambda$. For illustrative purposes we report in Fig. \ref{fig:GibbsPost2}-\ref{fig:GibbsACF} the MCMC output for one MCMC draw before removing the burn-in sample and thinning, while in Fig.\ref{fig:GibbsPostABT}-\ref{fig:GibbsACFABT} the MCMC output after removing the burn-in sample and thinning.\\

\subsection{Before thinning}
\begin{figure}[H]
\centering
\setlength{\tabcolsep}{1pt}
\begin{tabular}{cc}
(a) Low persistence & (b) High persistence\\
\includegraphics[scale=0.40]{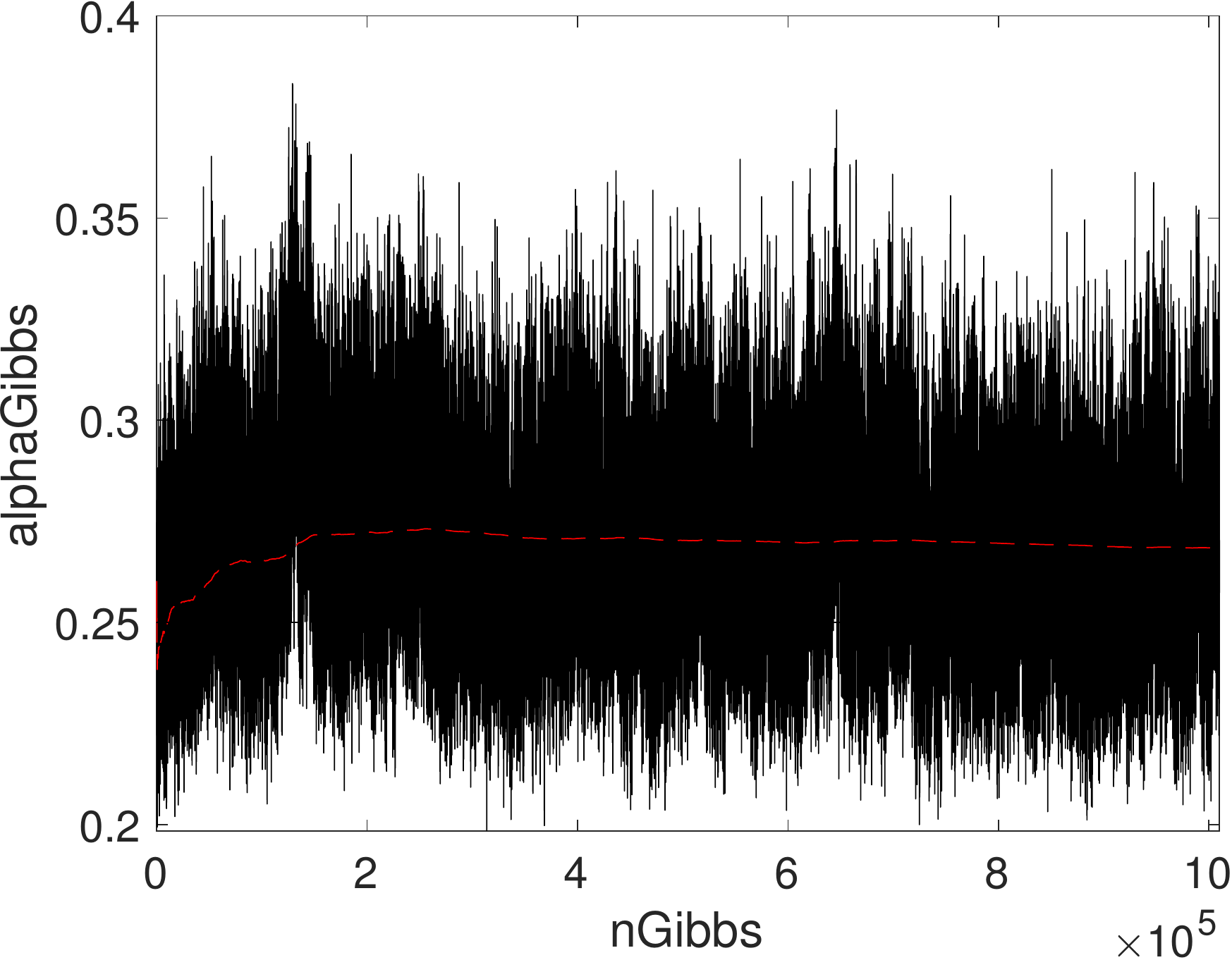} &
\includegraphics[scale=0.40]{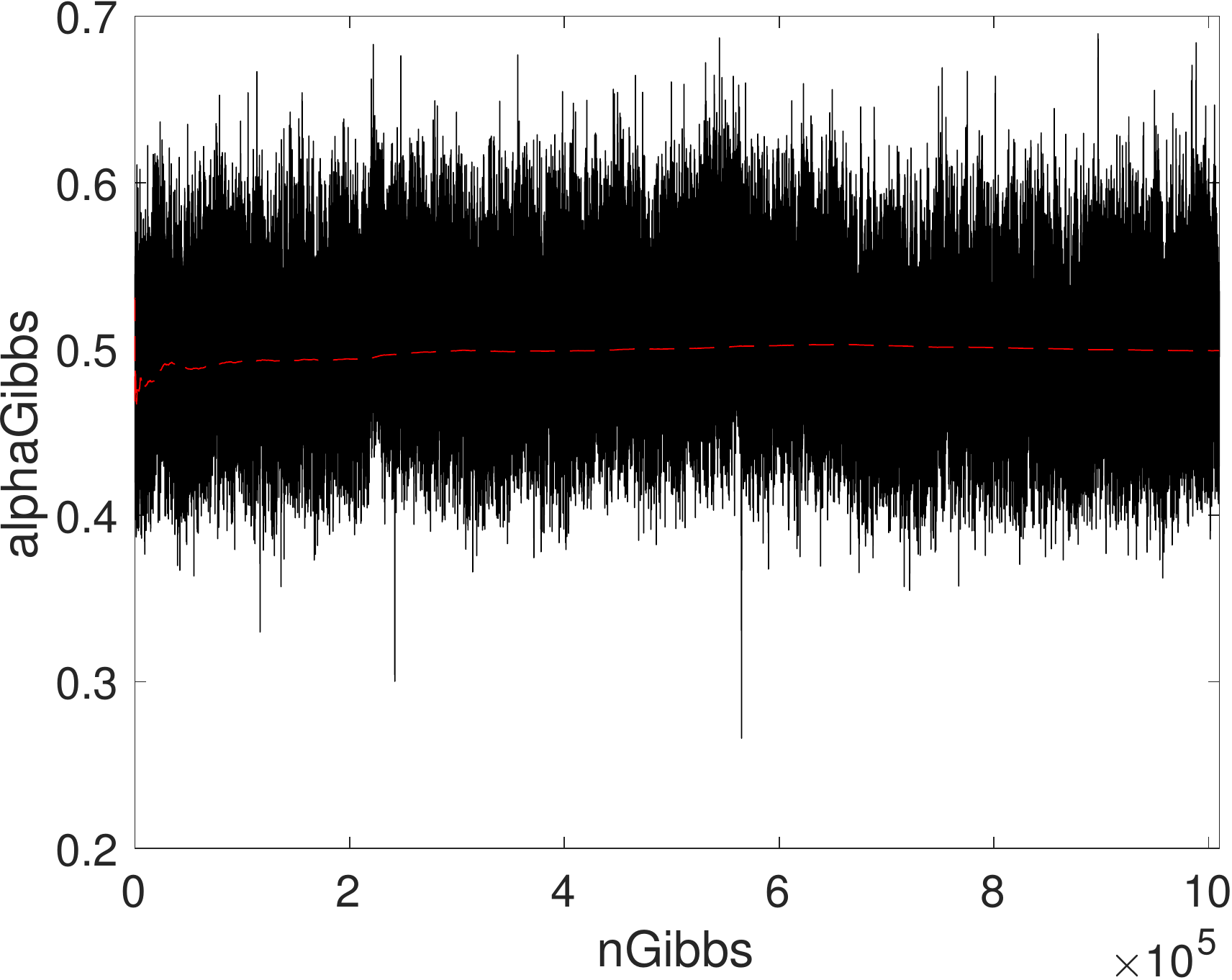}\\
\includegraphics[scale=0.40]{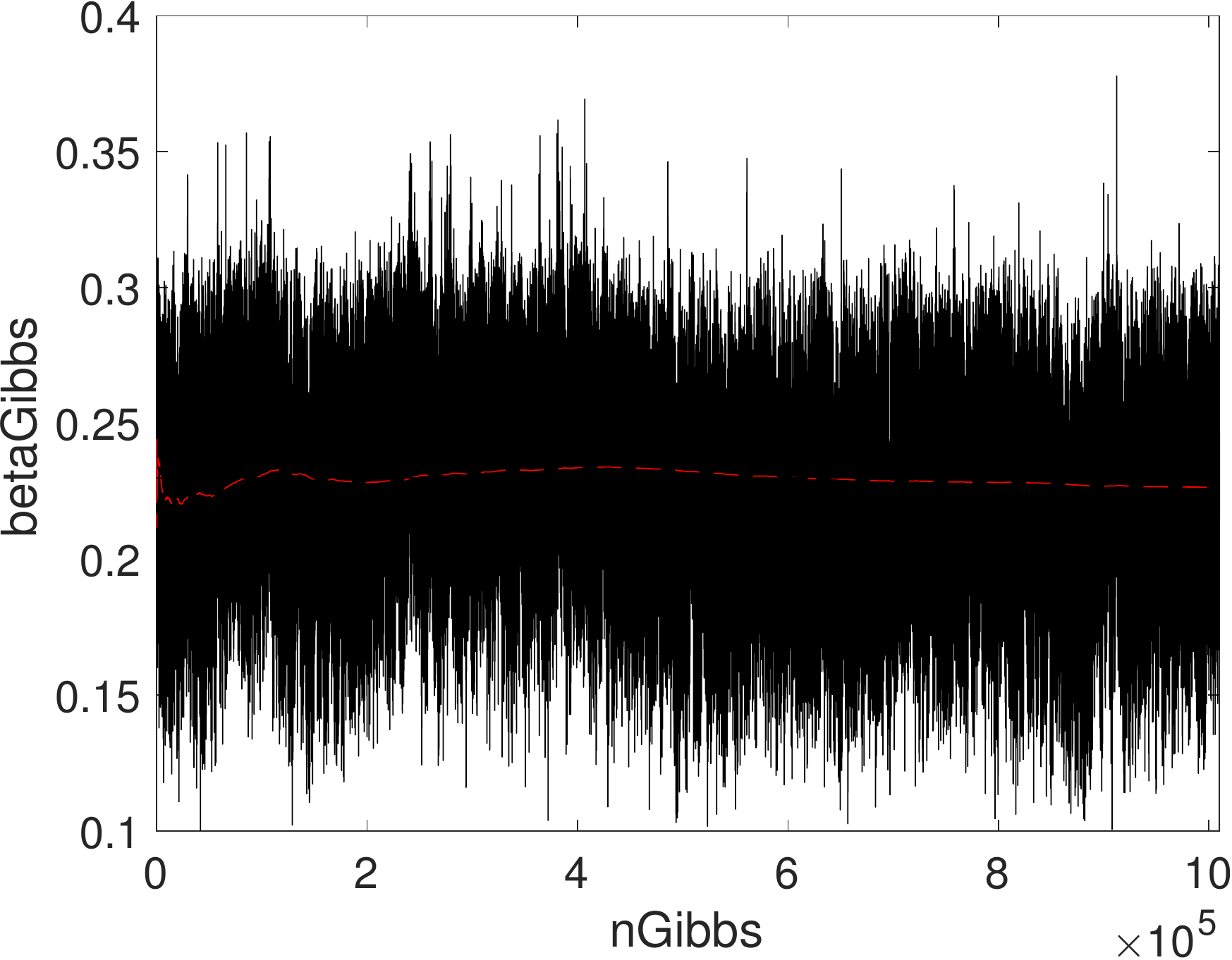} &
\includegraphics[scale=0.40]{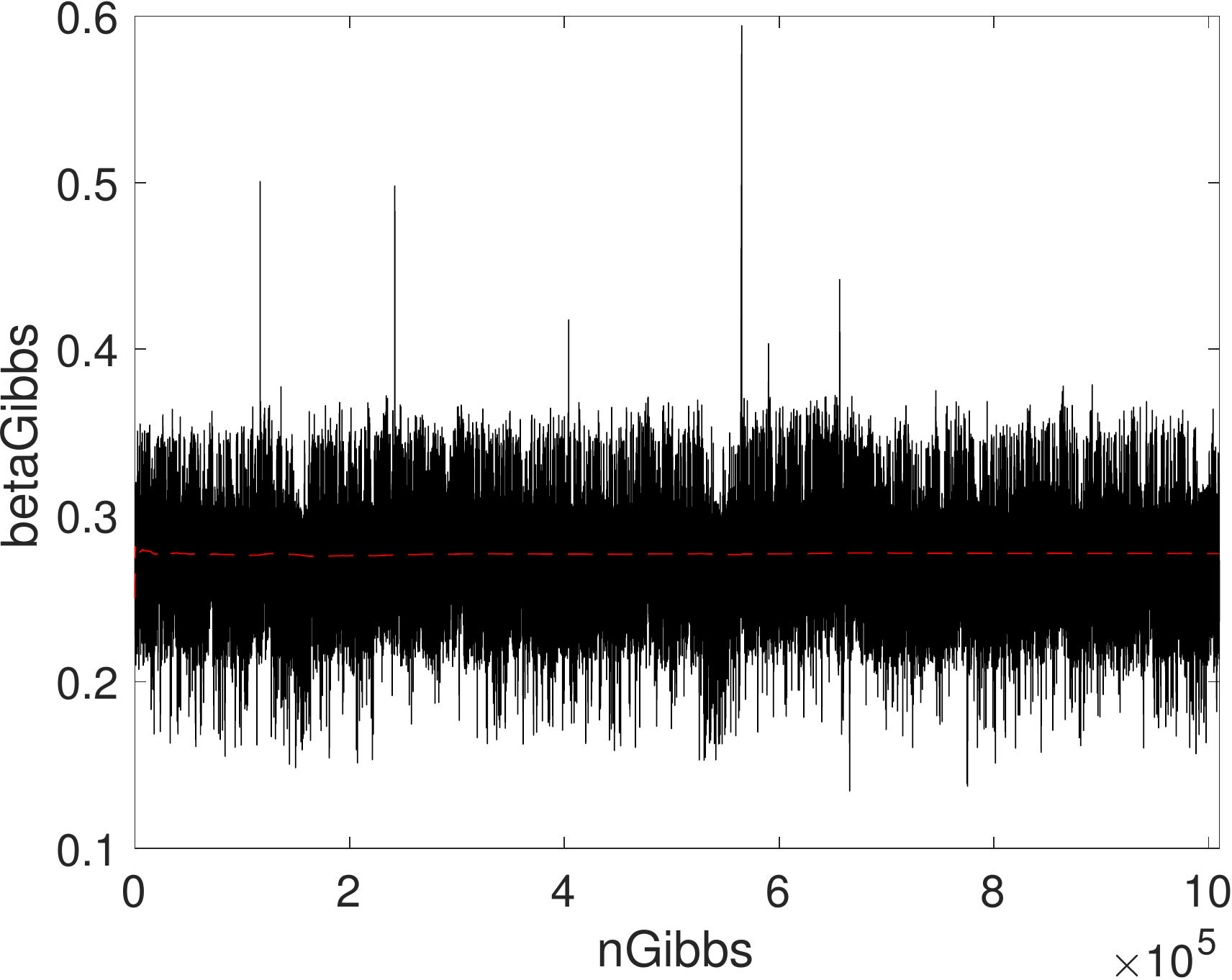}\\
\includegraphics[scale=0.40]{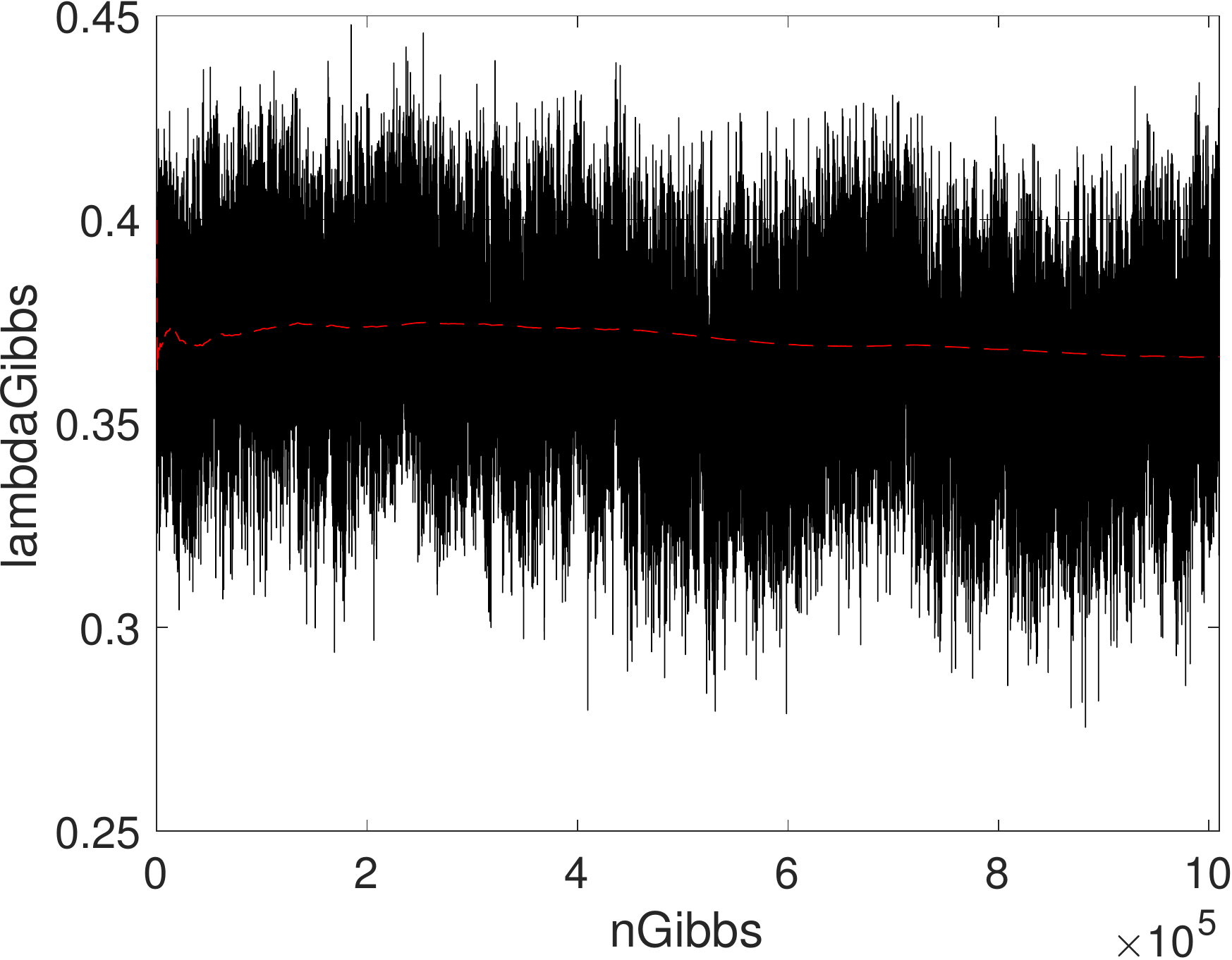} &
\includegraphics[scale=0.40]{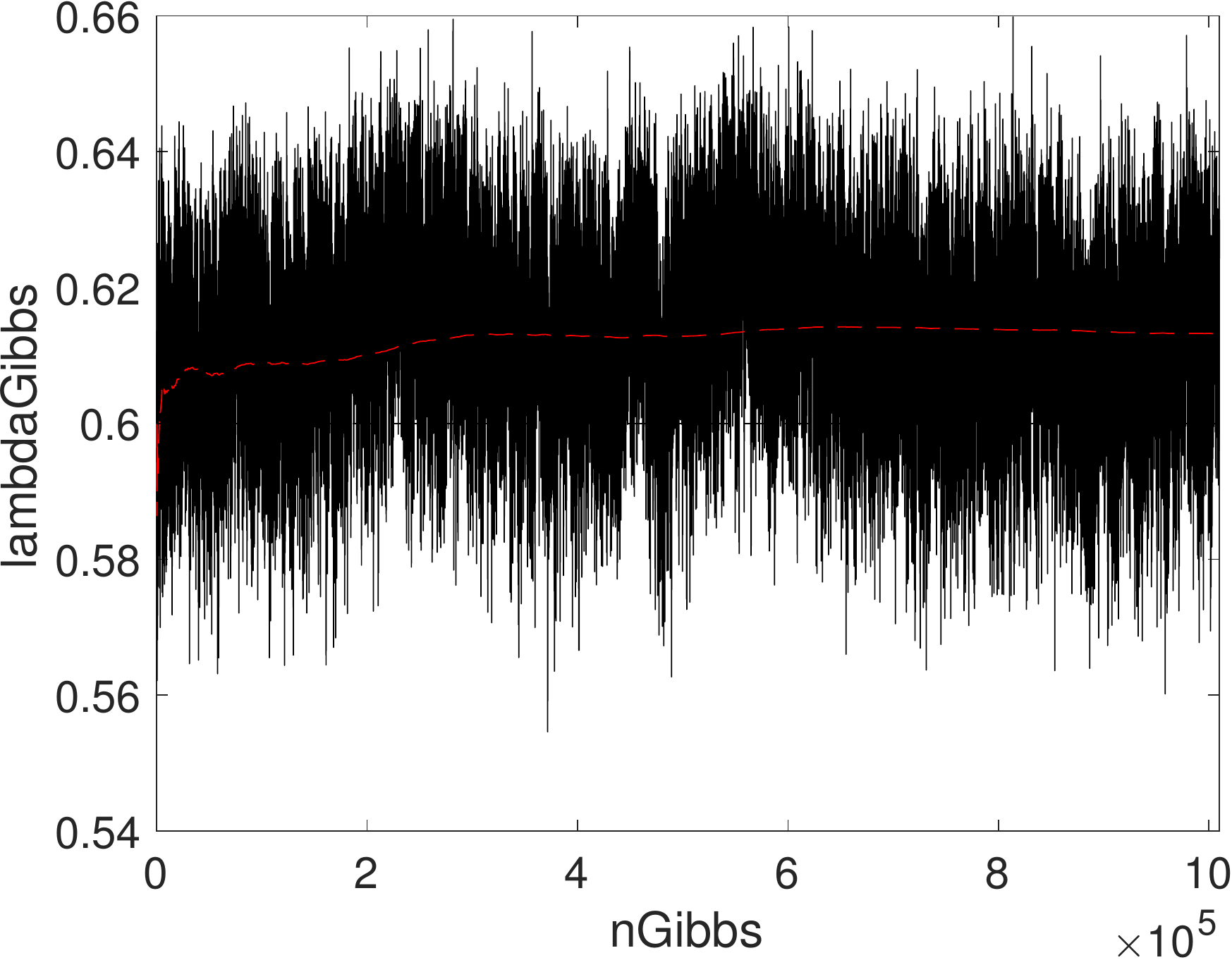} \\
\end{tabular}
\caption{MCMC plot for the parameters in the two setting: low persistence and high persistence.}
\label{fig:GibbsPost2}
\end{figure}

\begin{figure}[H]
\centering
\setlength{\tabcolsep}{-5pt}
\begin{tabular}{cc}
(a) Low persistence & (b) High persistence\\
\includegraphics[scale=0.40]{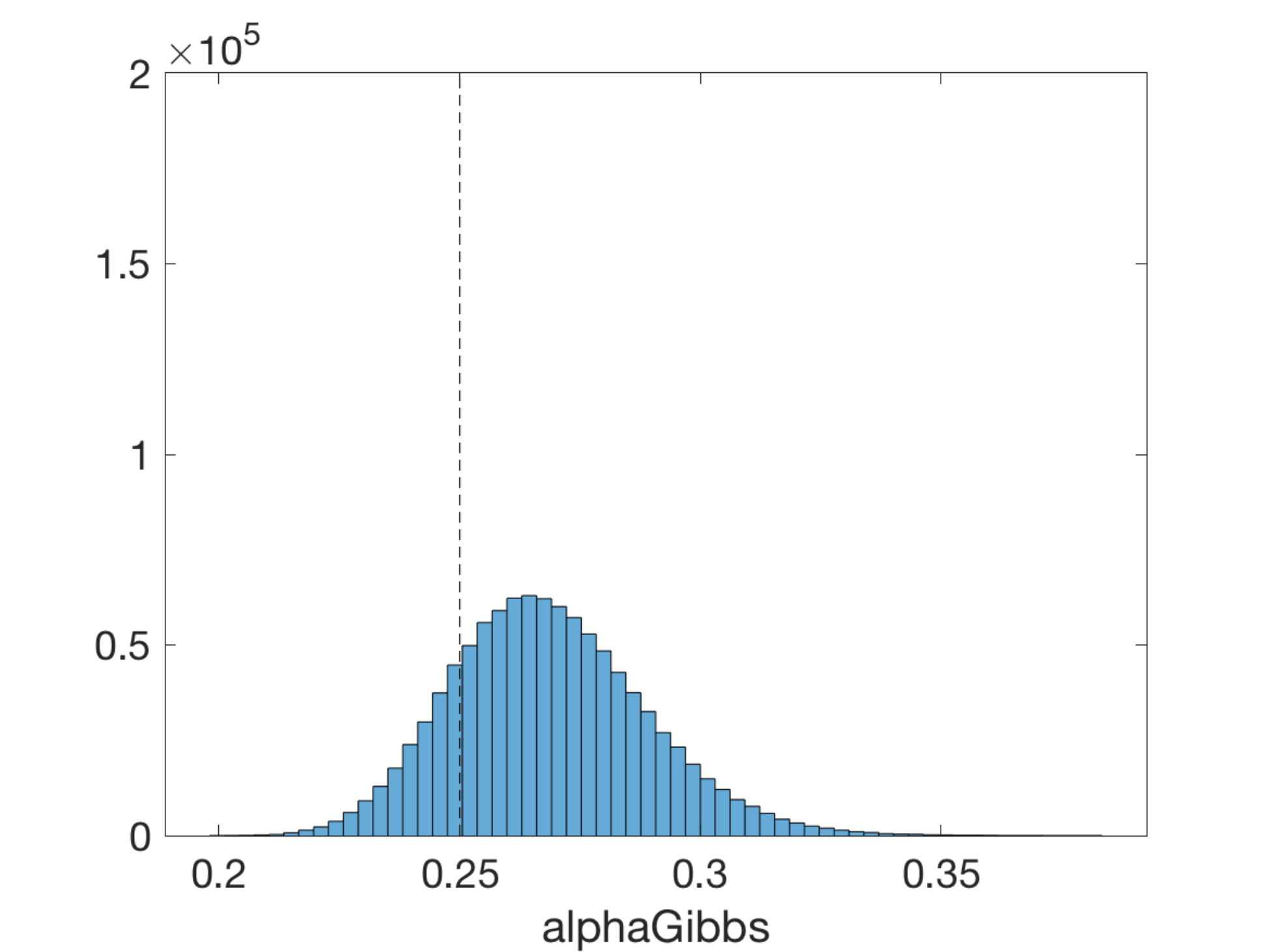} &
\includegraphics[scale=0.40]{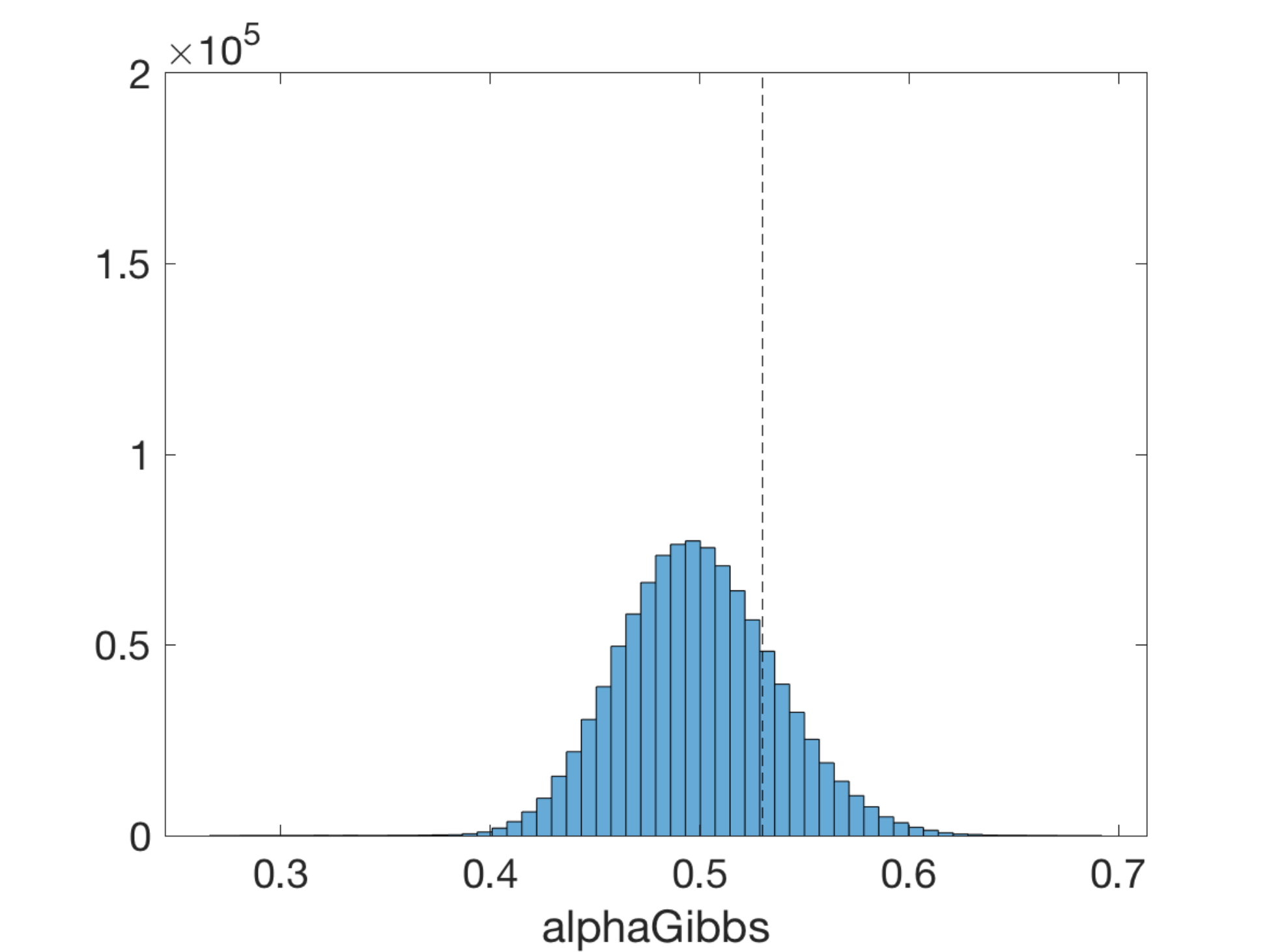}\\
\includegraphics[scale=0.40]{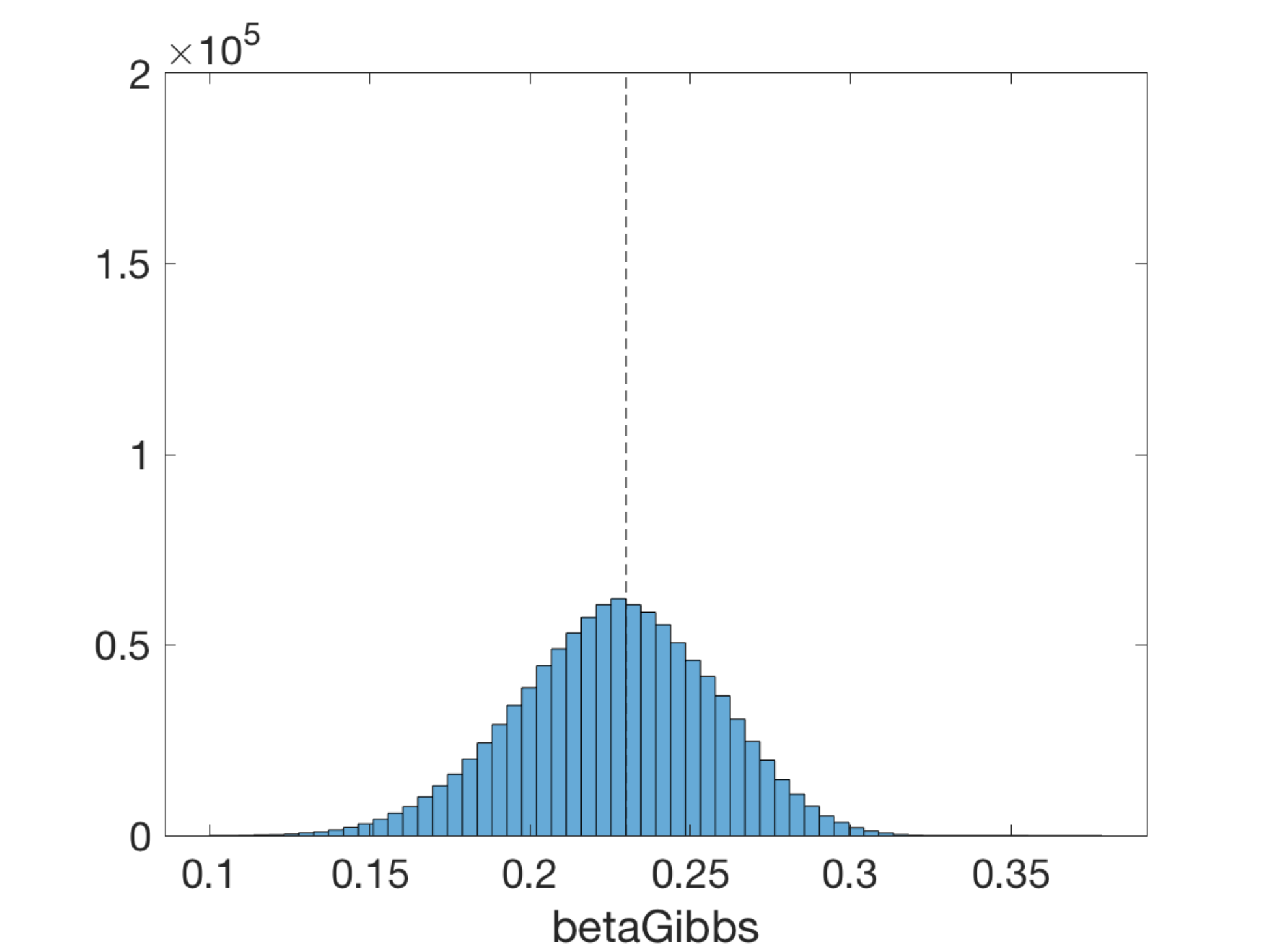} &
\includegraphics[scale=0.40]{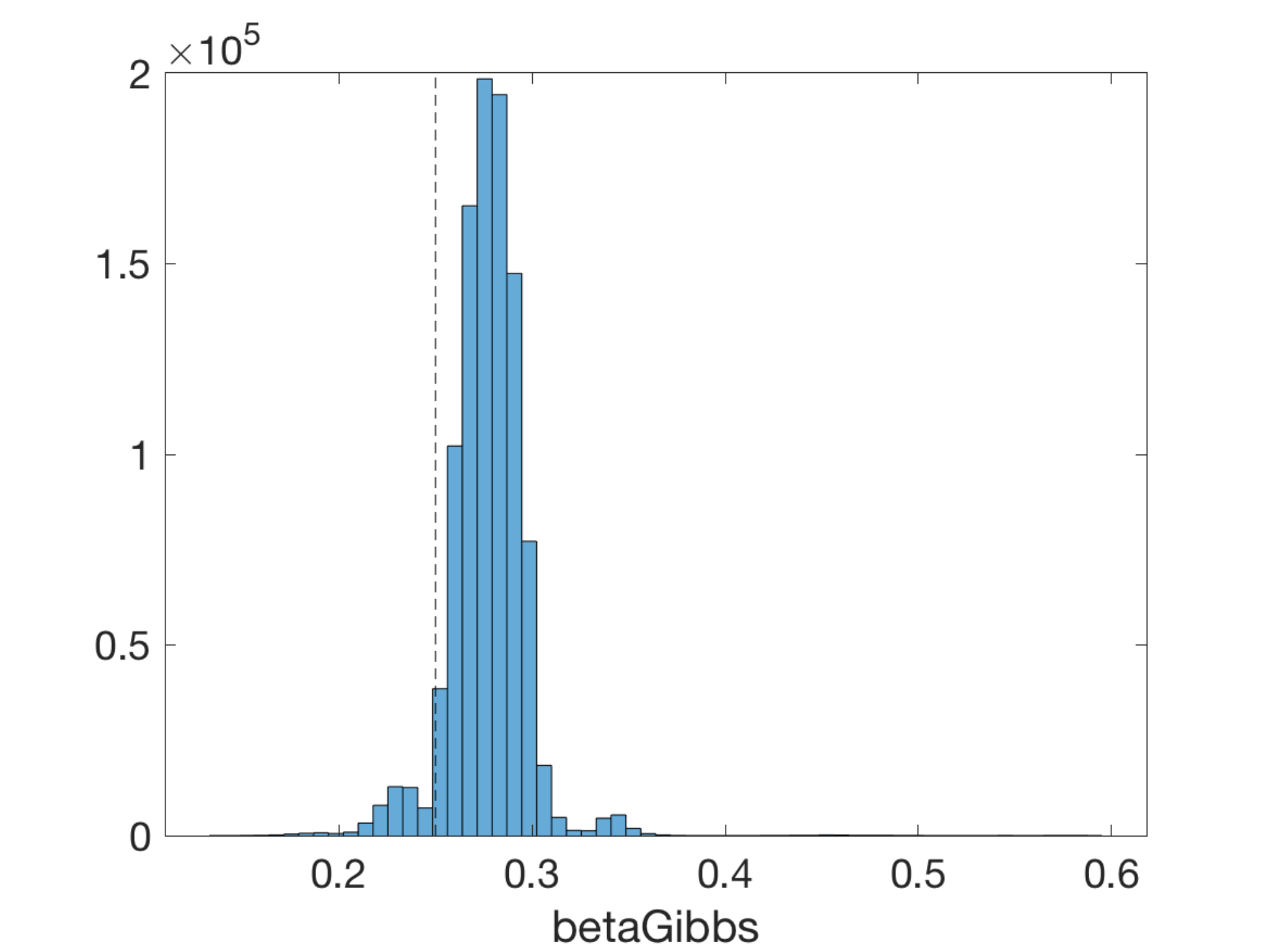}\\
\includegraphics[scale=0.40]{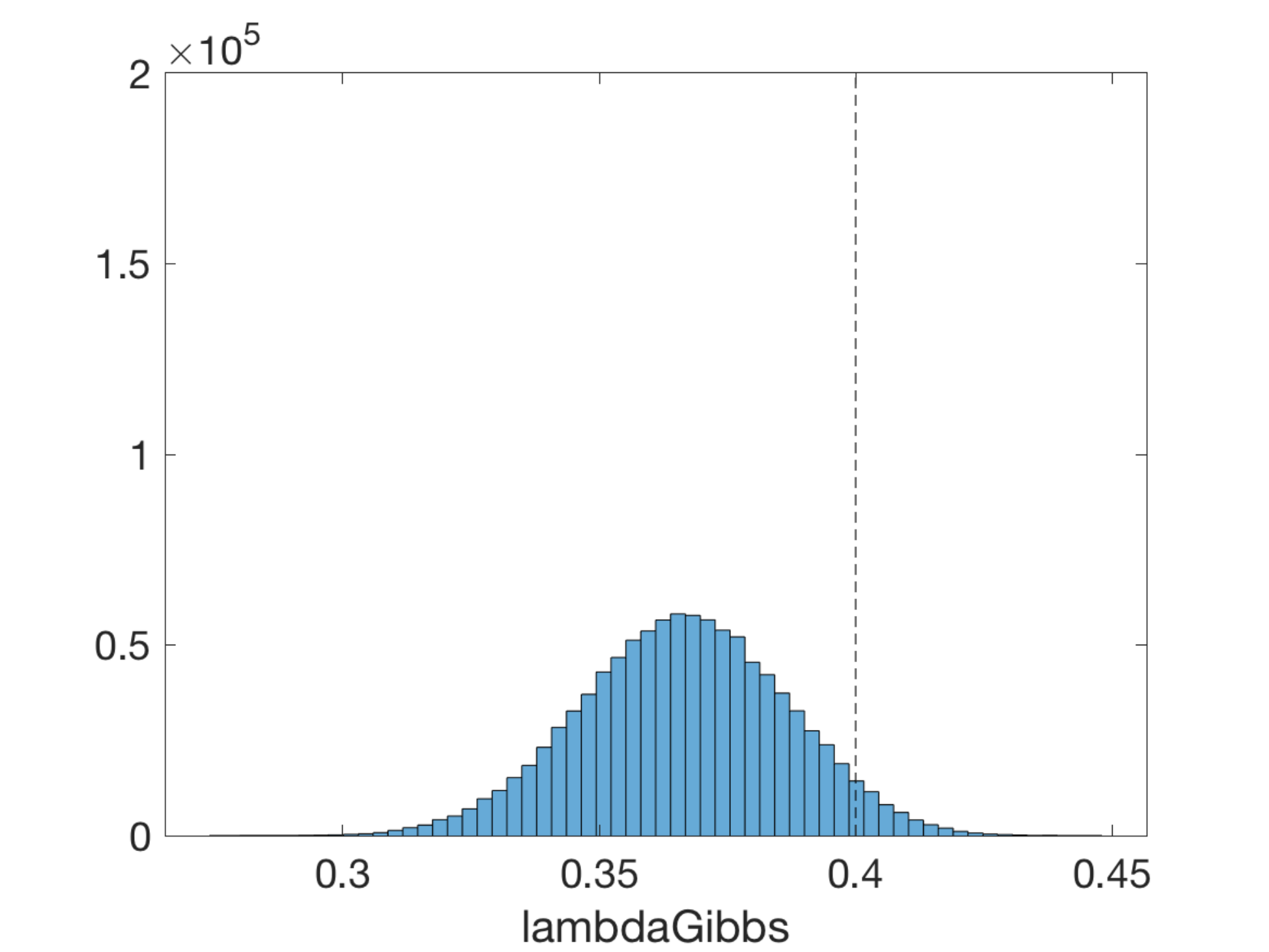} &
\includegraphics[scale=0.40]{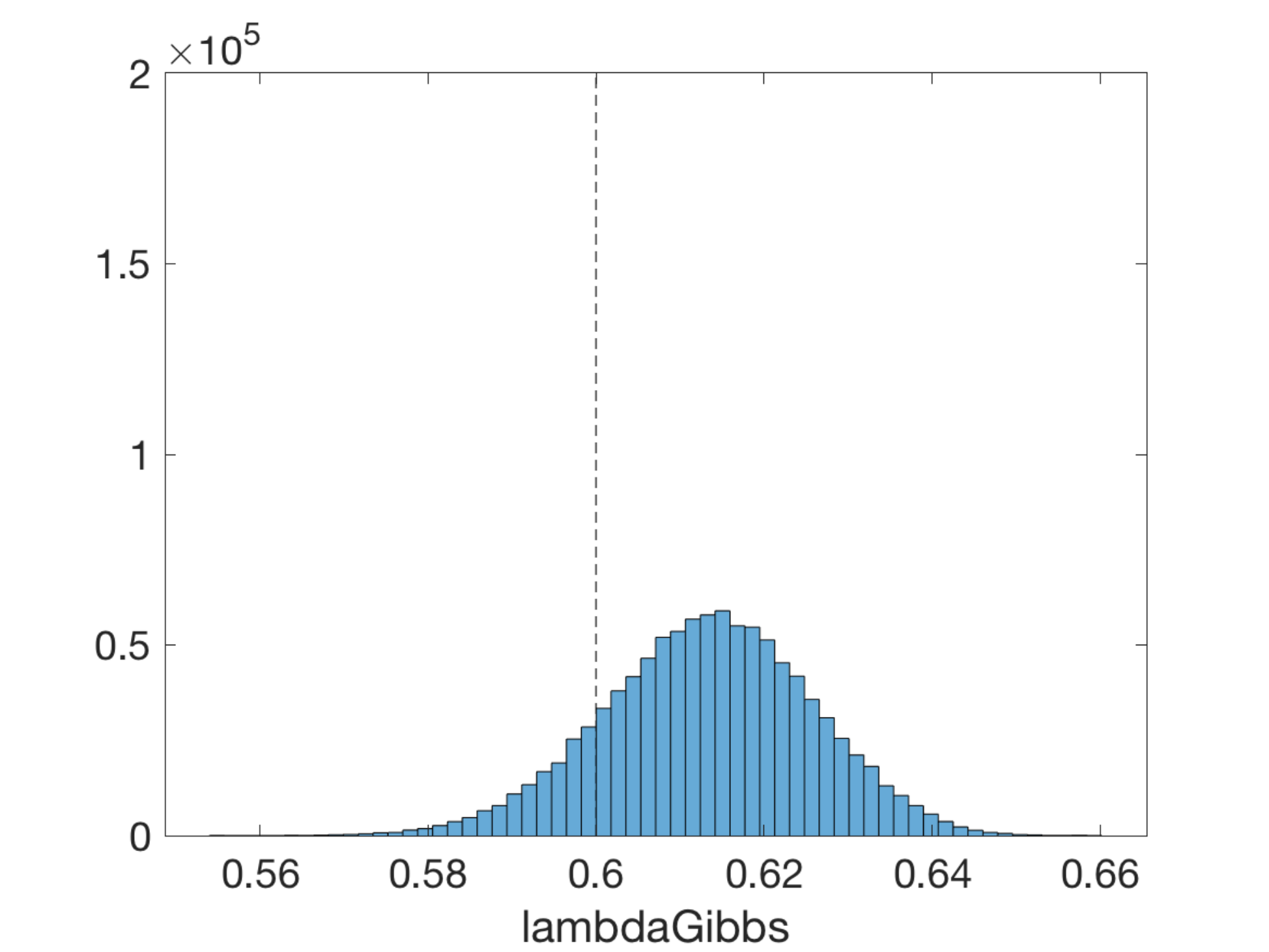} \\
\end{tabular}
\caption{Histograms of the MCMC draws for the parameters in both settings: low persistence and high persistence.}
\label{fig:GibbsDraws}
\end{figure}

\begin{figure}[H]
\centering
\begin{tabular}{cc}
(a) Low persistence & (b) High persistence\\
\begin{scriptsize} $\alpha$ \end{scriptsize} & \begin{scriptsize} $\alpha$ \end{scriptsize}\\
\includegraphics[scale=0.40]{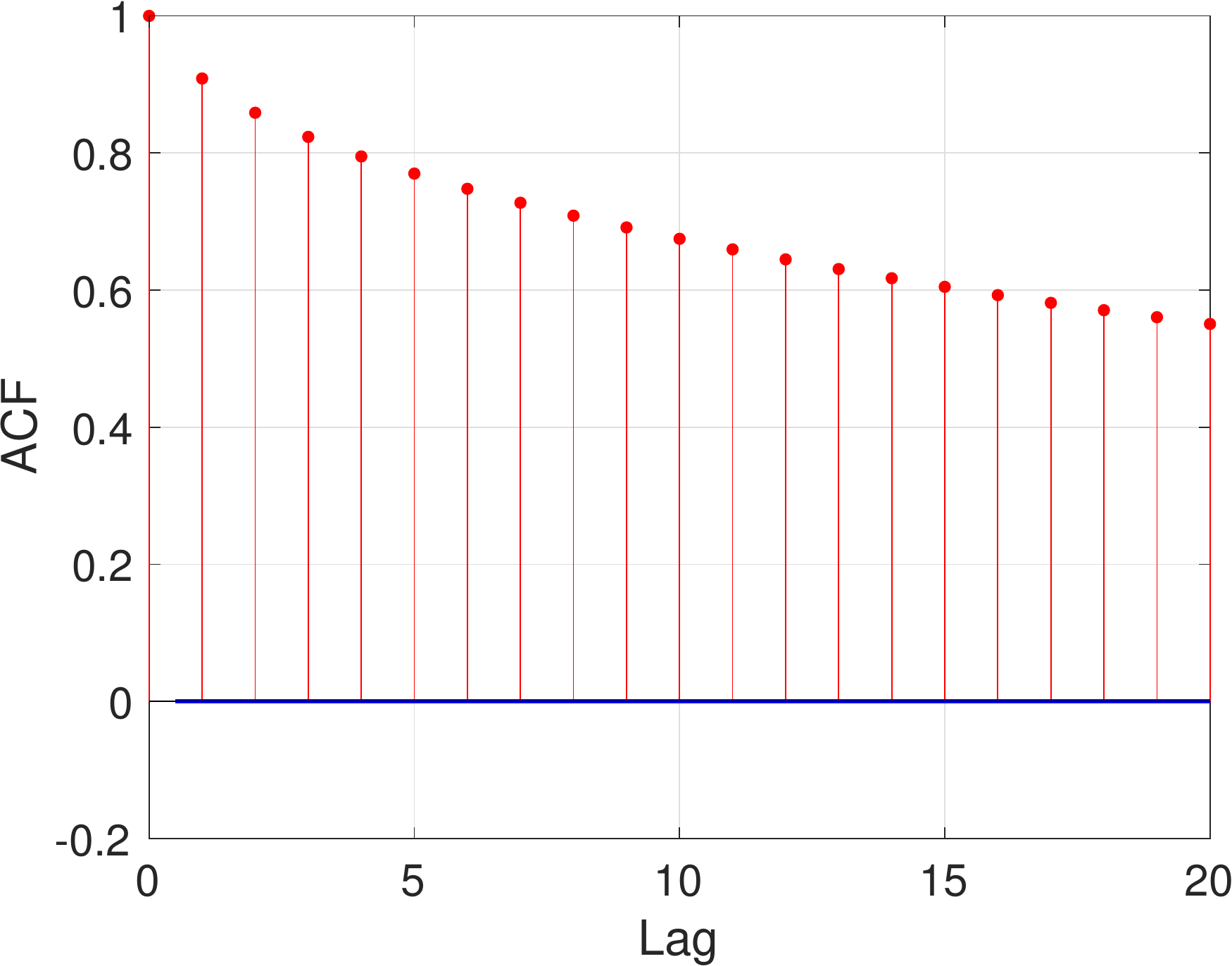} &
\includegraphics[scale=0.40]{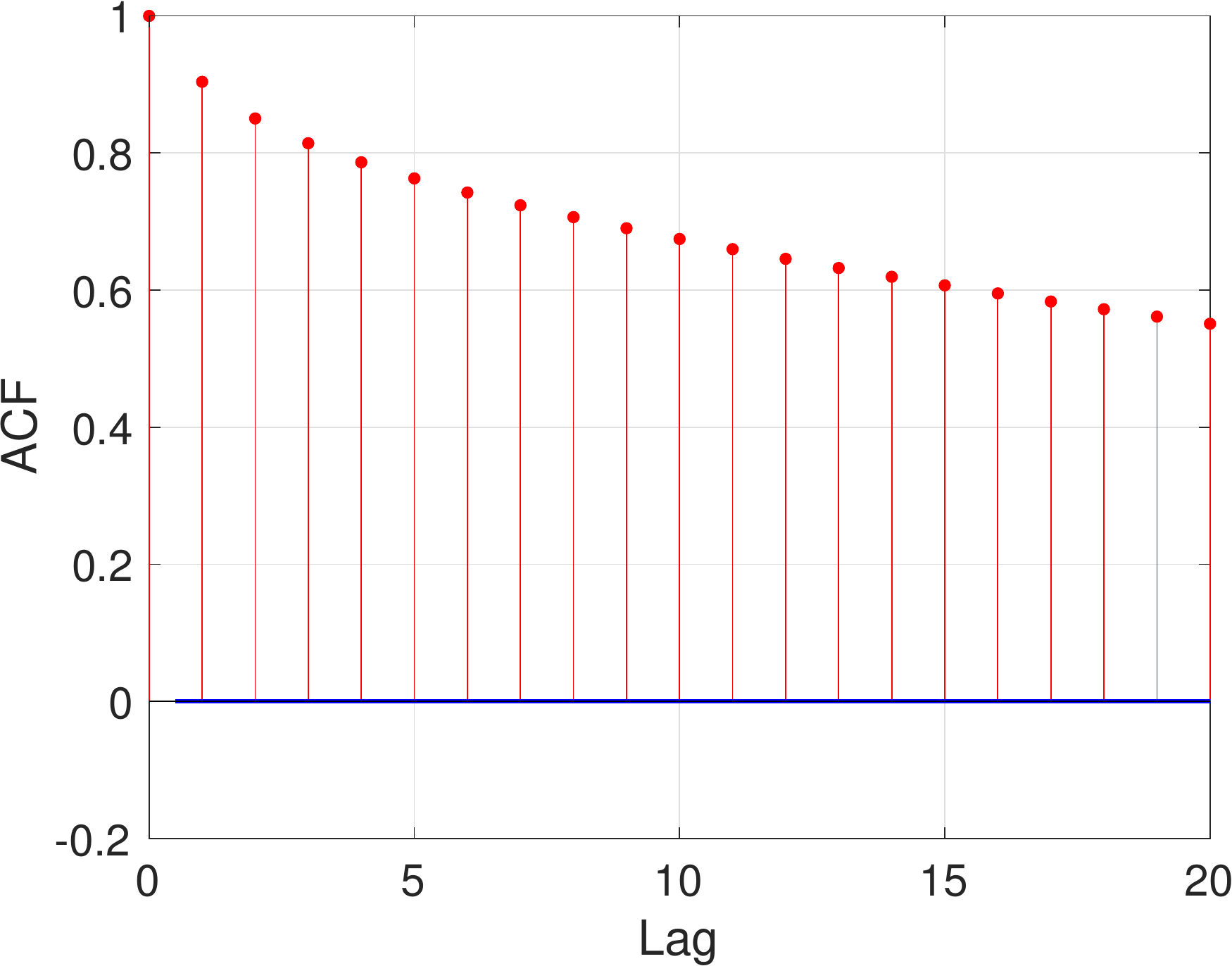}\\
\begin{scriptsize} $\beta$ \end{scriptsize} & \begin{scriptsize} $\beta$ \end{scriptsize}\\
\includegraphics[scale=0.40]{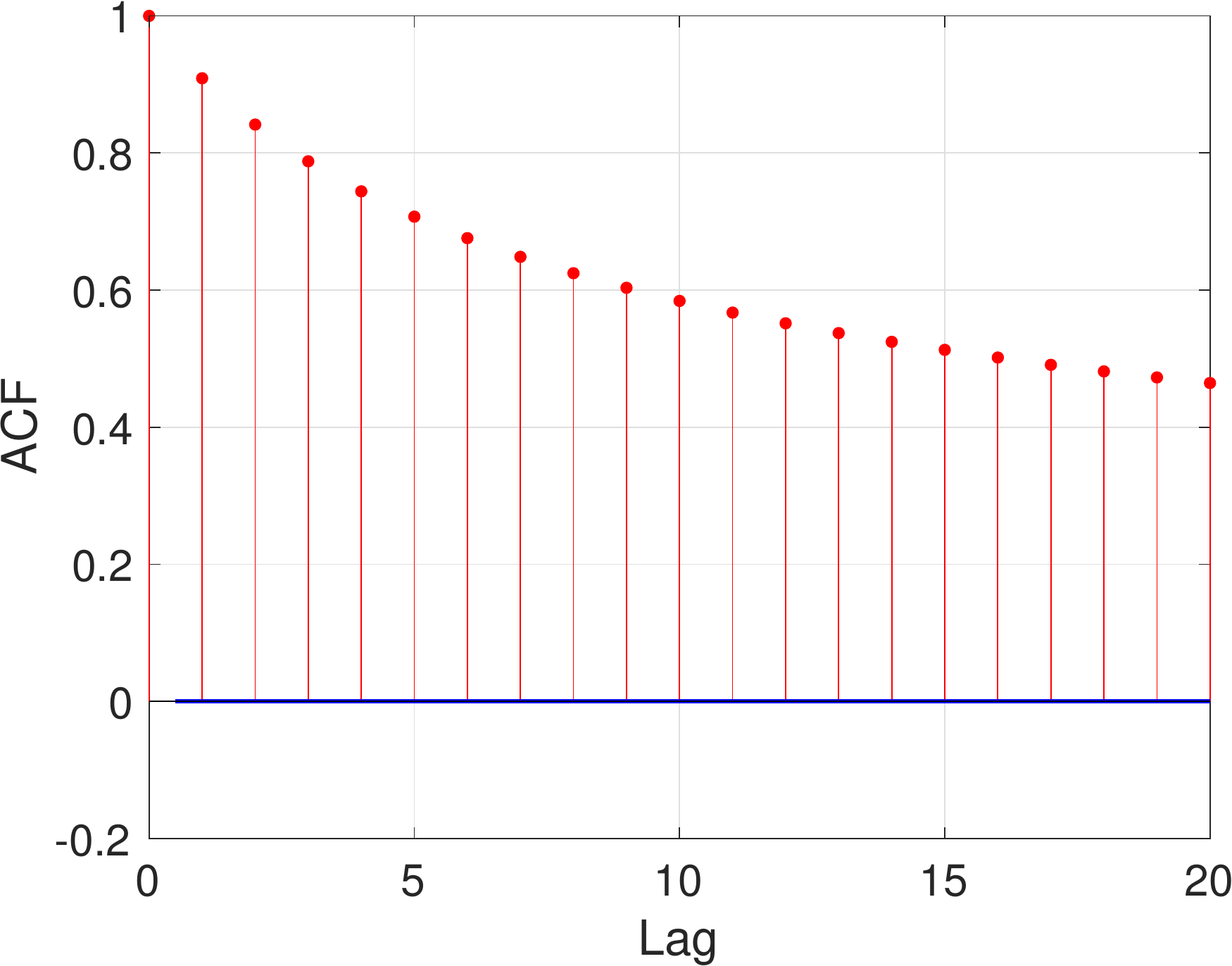} &
\includegraphics[scale=0.40]{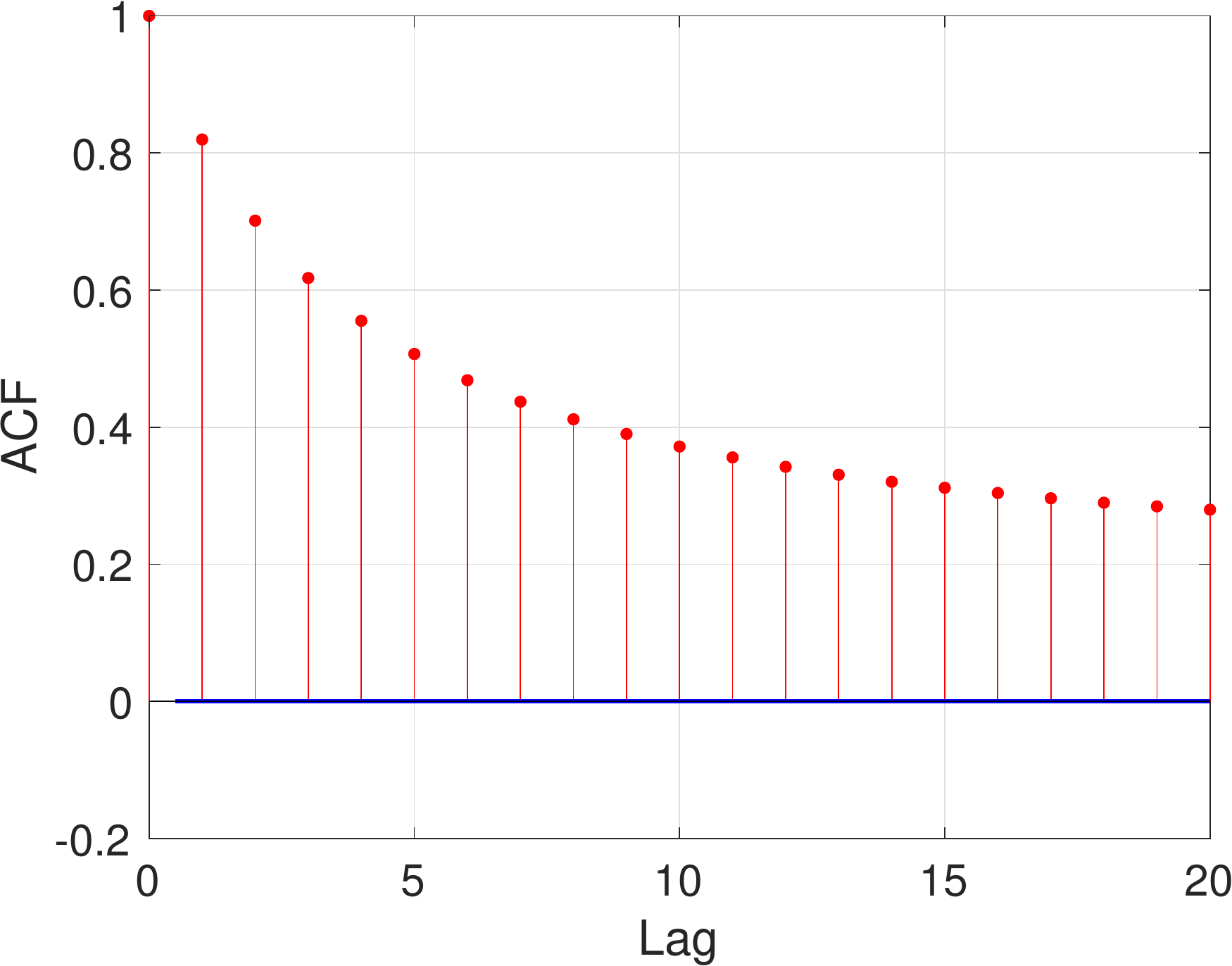}\\
\begin{scriptsize} $\lambda$ \end{scriptsize} & \begin{scriptsize} $\lambda$ \end{scriptsize}\\
\includegraphics[scale=0.40]{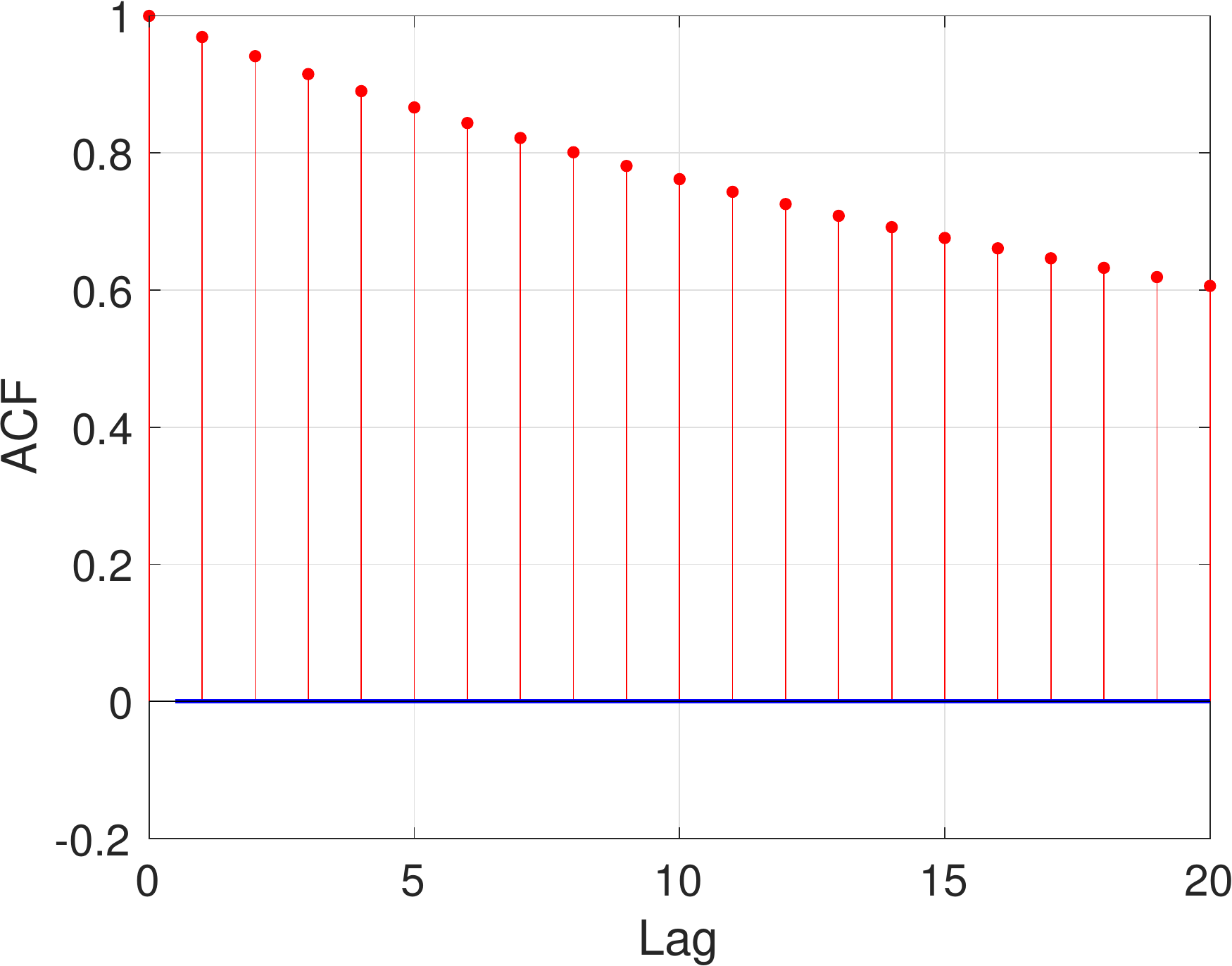} &
\includegraphics[scale=0.40]{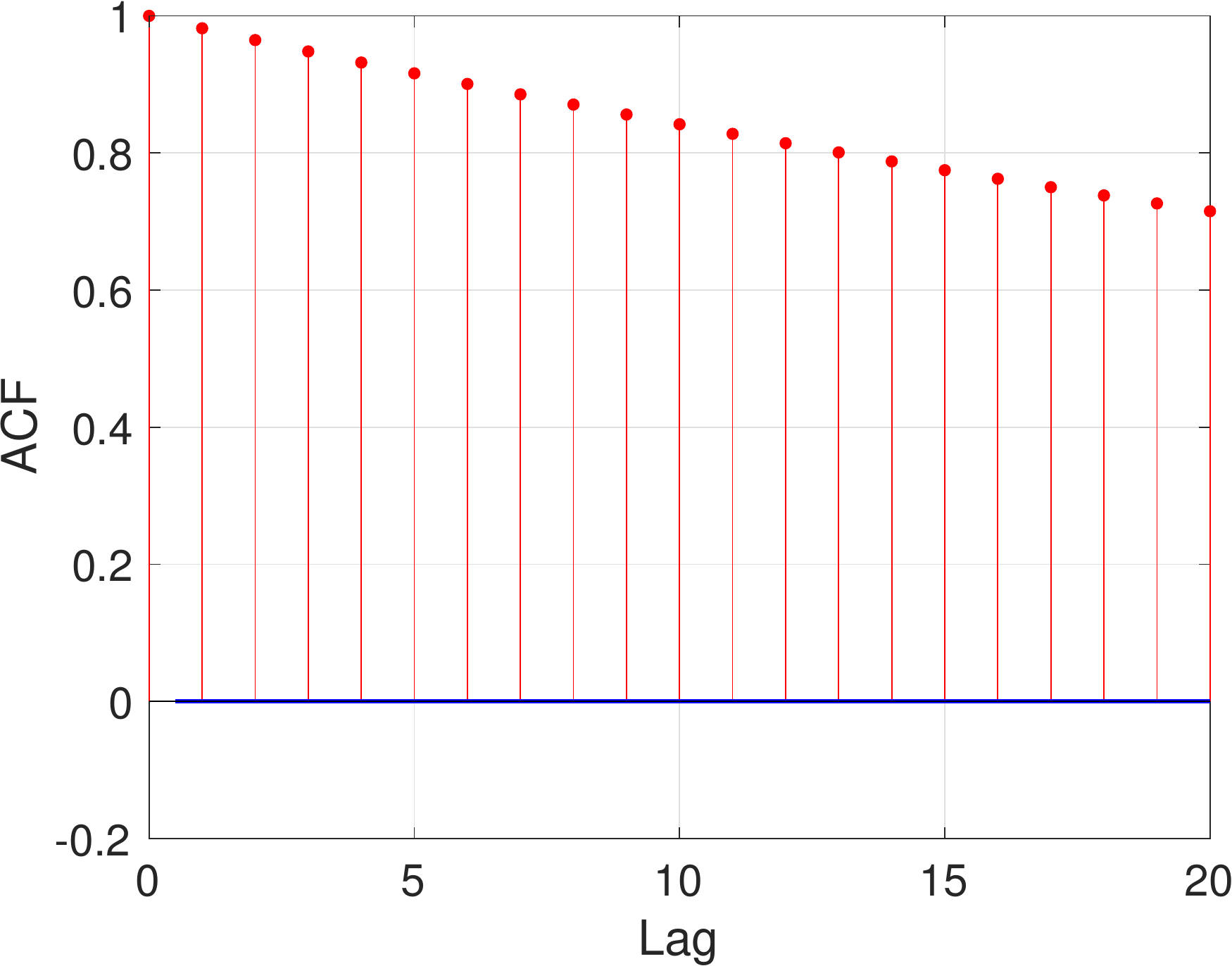} \\
\end{tabular}
\caption{Autocorrelation function for the parameters in both low persistence and high persistence settings.}
\label{fig:GibbsACF}
\end{figure}

\subsection{After thinning}

\begin{figure}[H]
\centering
\setlength{\tabcolsep}{1pt}
\begin{tabular}{cc}
(a) Low persistence & (b) High persistence\\
\includegraphics[scale=0.40]{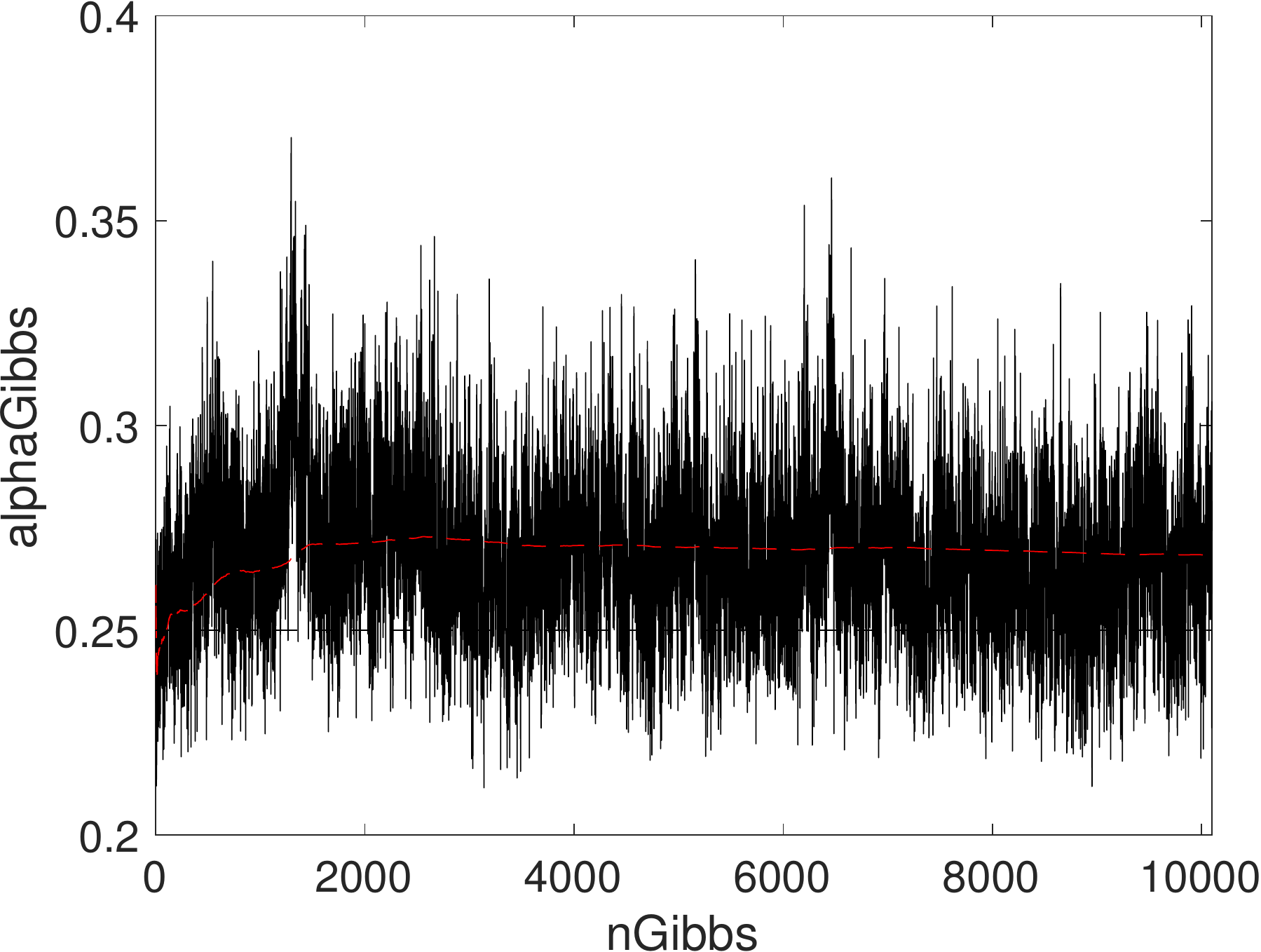} &
\includegraphics[scale=0.40]{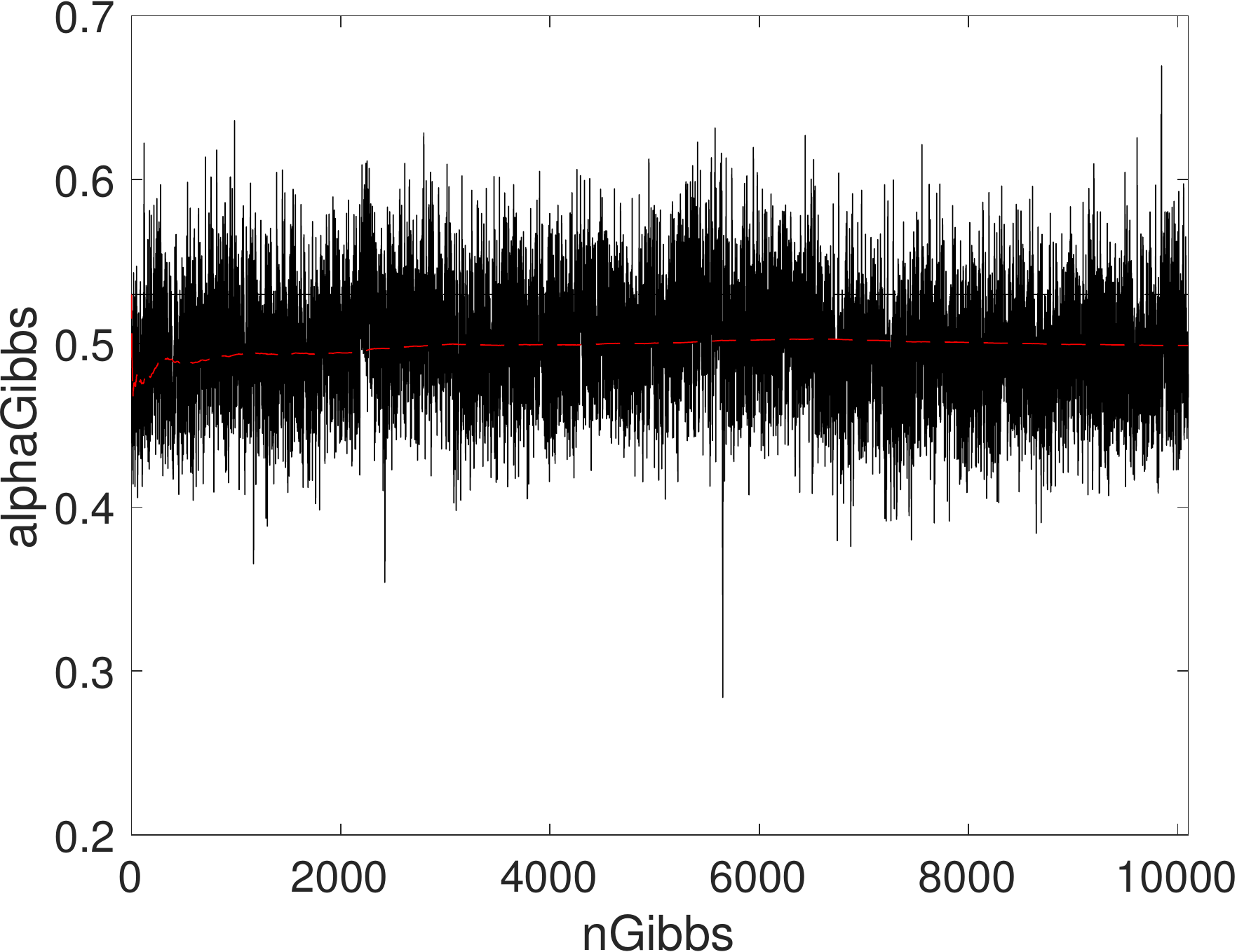}\\
\includegraphics[scale=0.40]{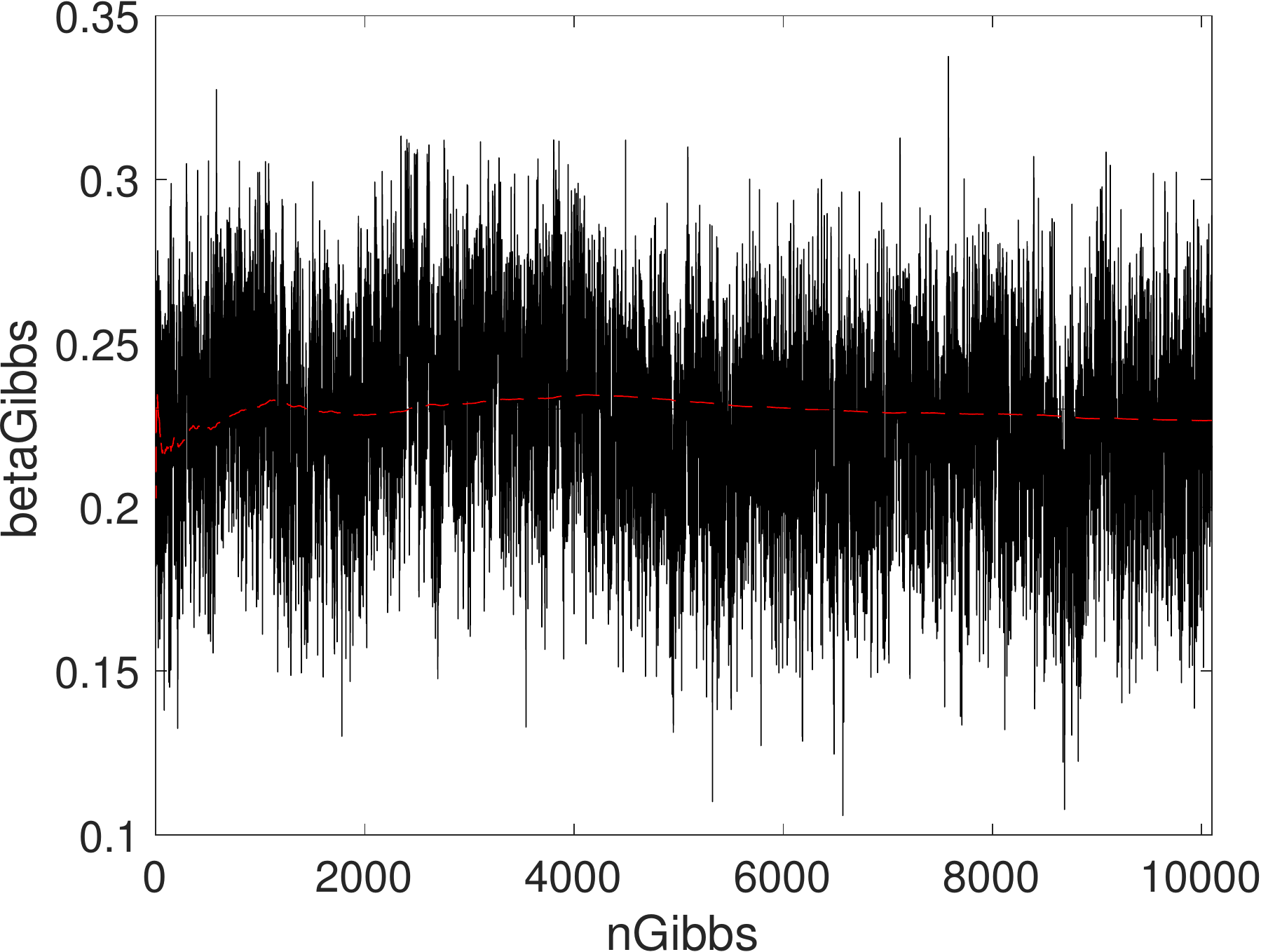} &
\includegraphics[scale=0.40]{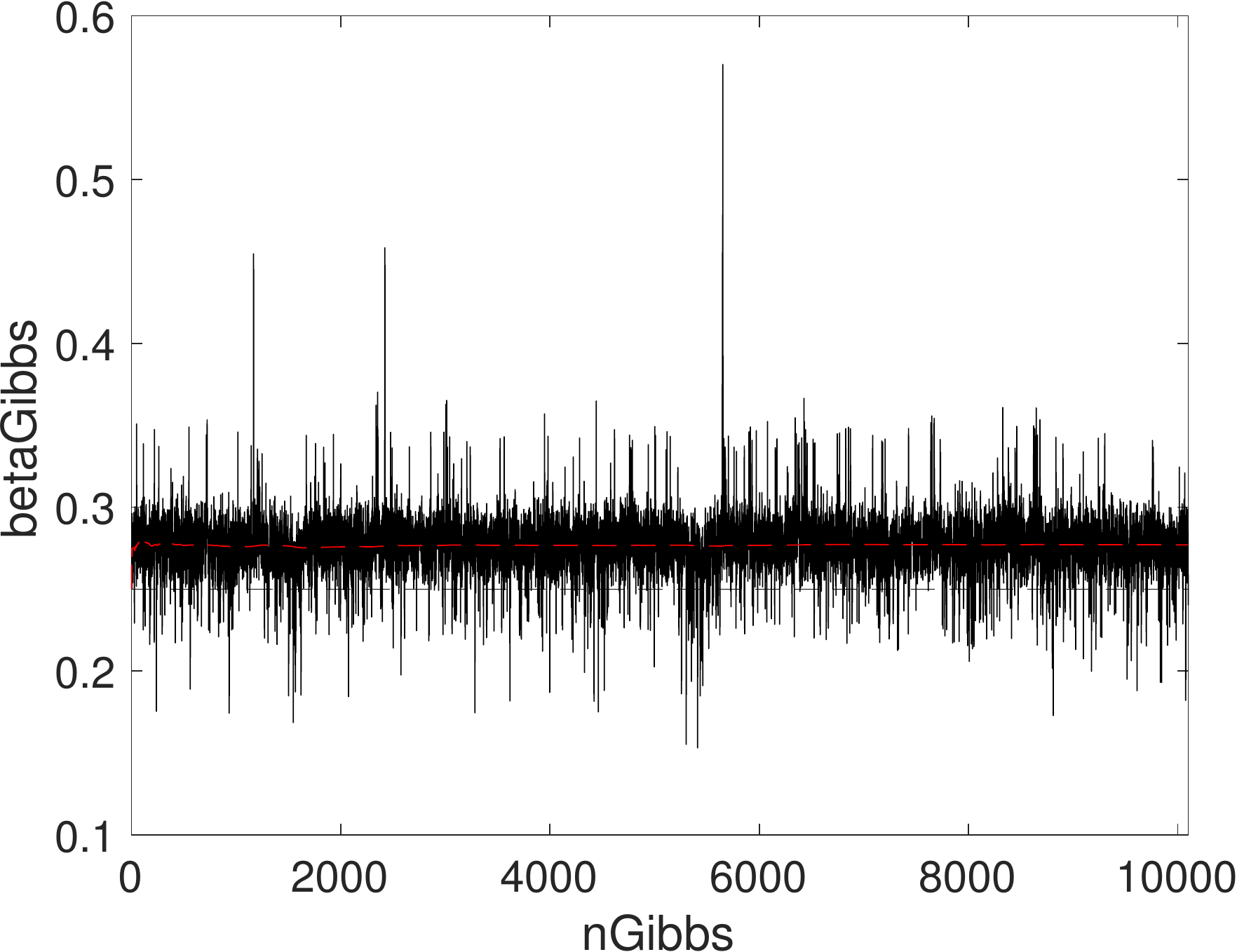}\\
\includegraphics[scale=0.40]{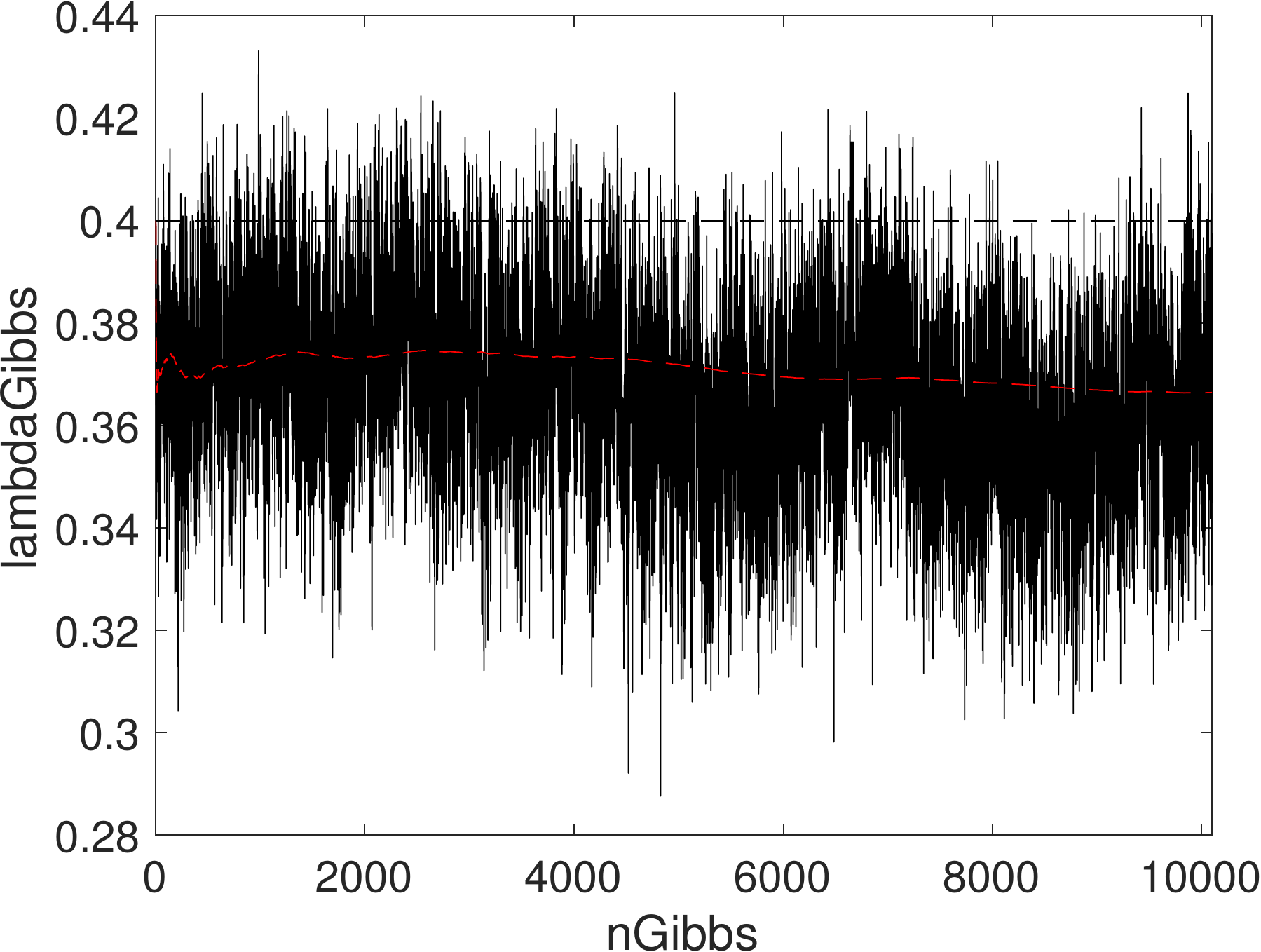} &
\includegraphics[scale=0.40]{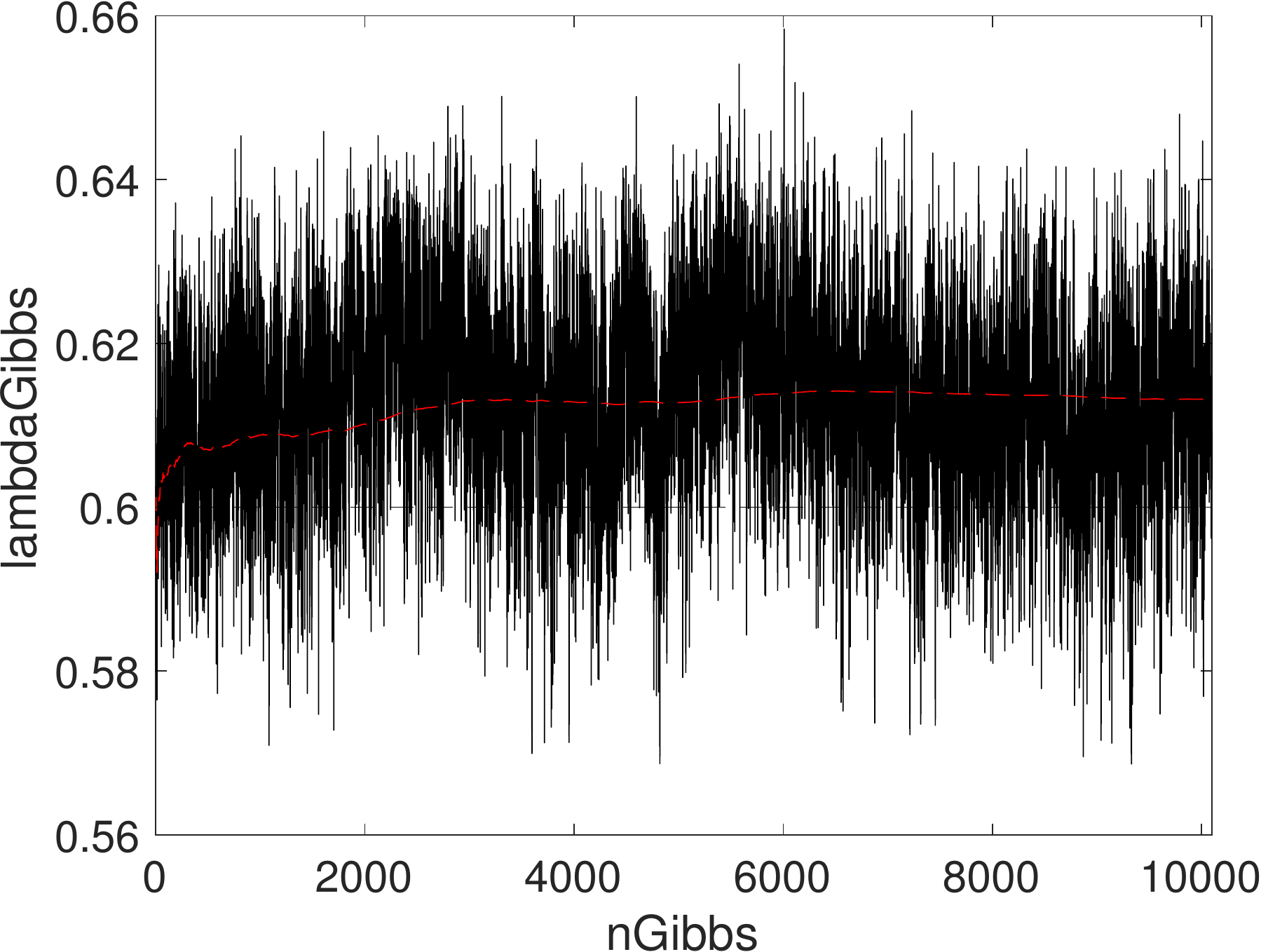} \\
\end{tabular}
\caption{MCMC plot for the parameters in the two setting: low persistence and high persistence, after removing the burn-in sample and thinning.}
\label{fig:GibbsPostABT}
\end{figure}

\begin{figure}[H]
\centering
\setlength{\tabcolsep}{-5pt}
\begin{tabular}{cc}
(a) Low persistence & (b) High persistence\\
\includegraphics[scale=0.40]{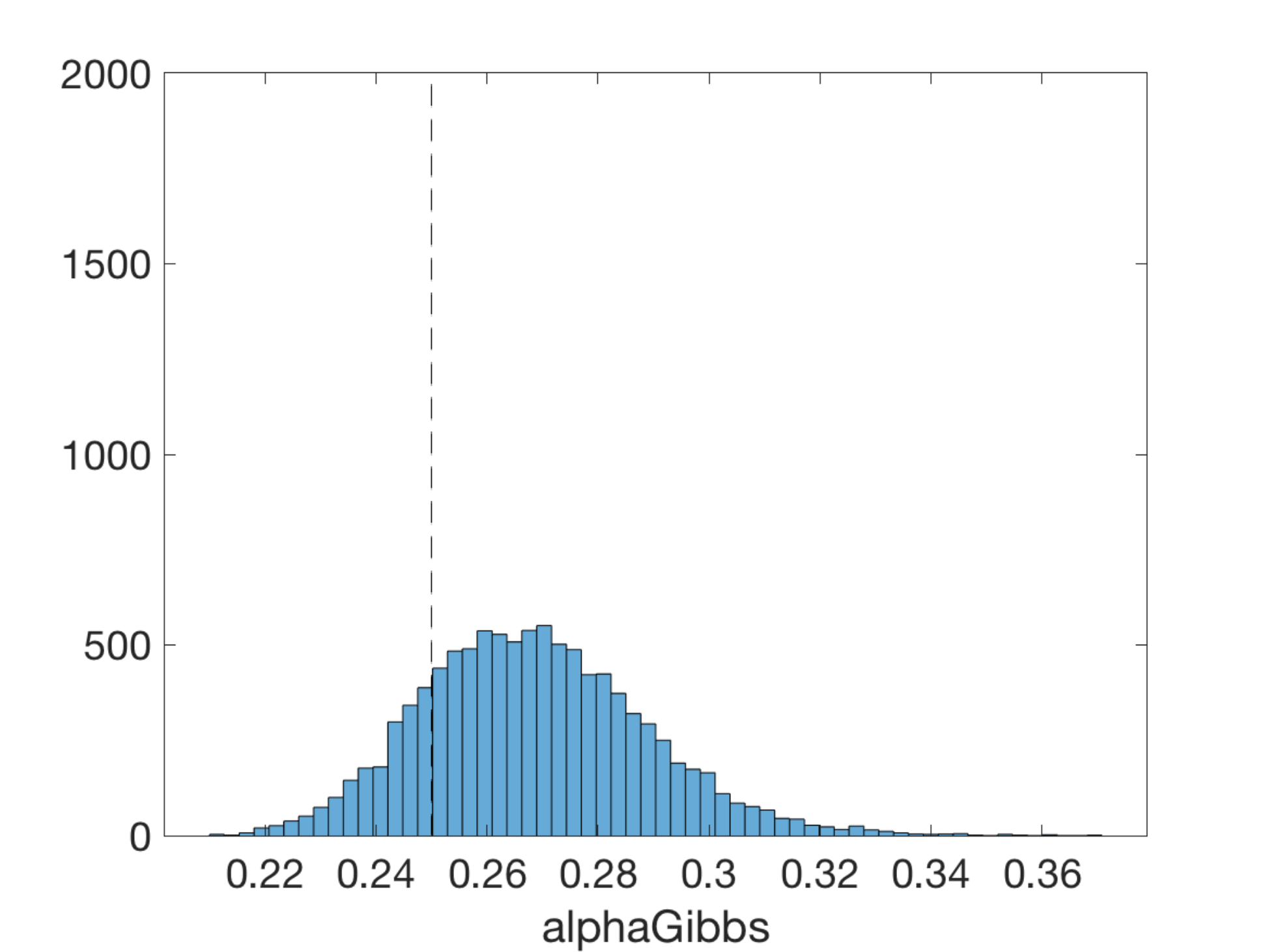} &
\includegraphics[scale=0.40]{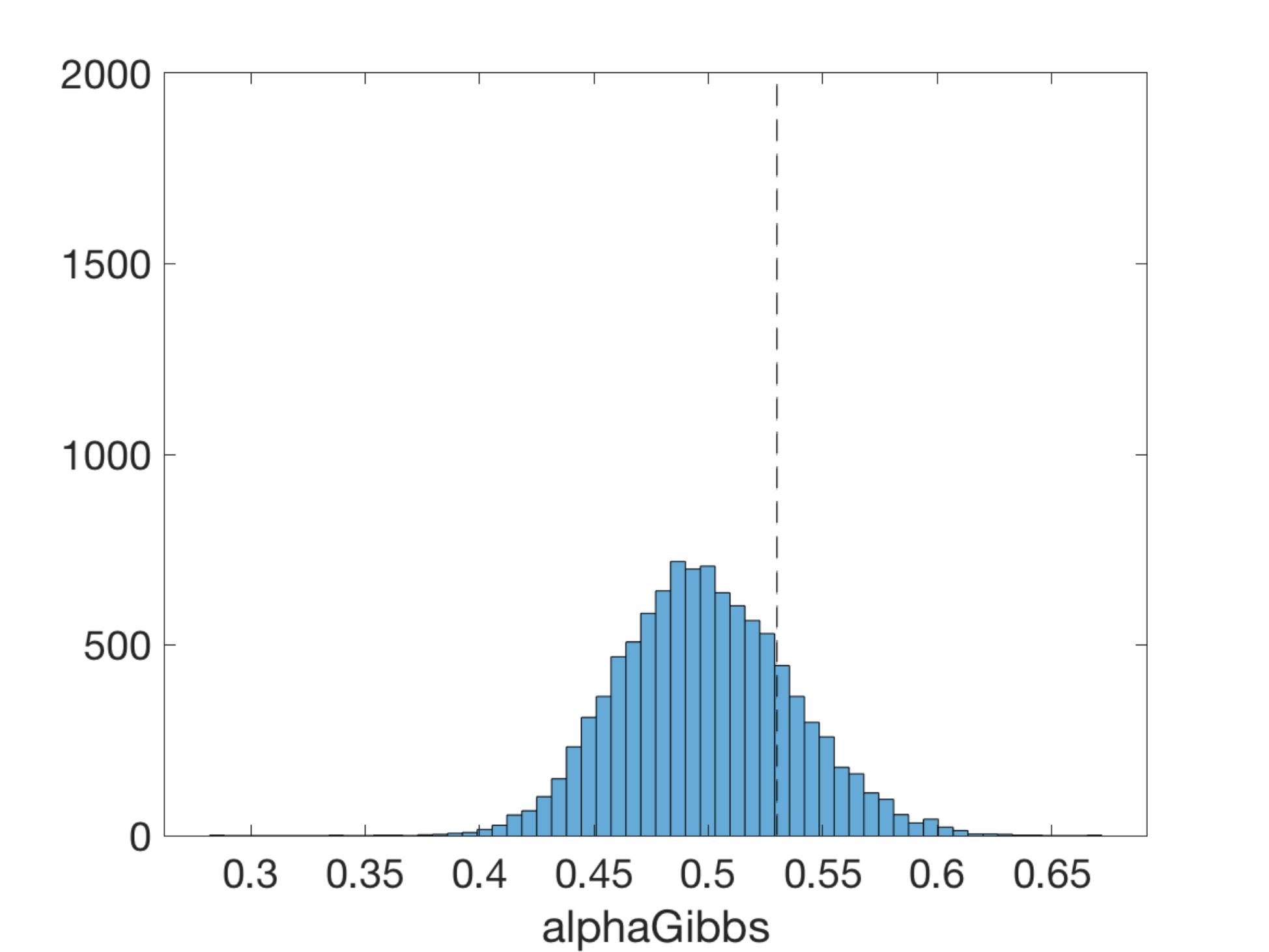}\\
\includegraphics[scale=0.40]{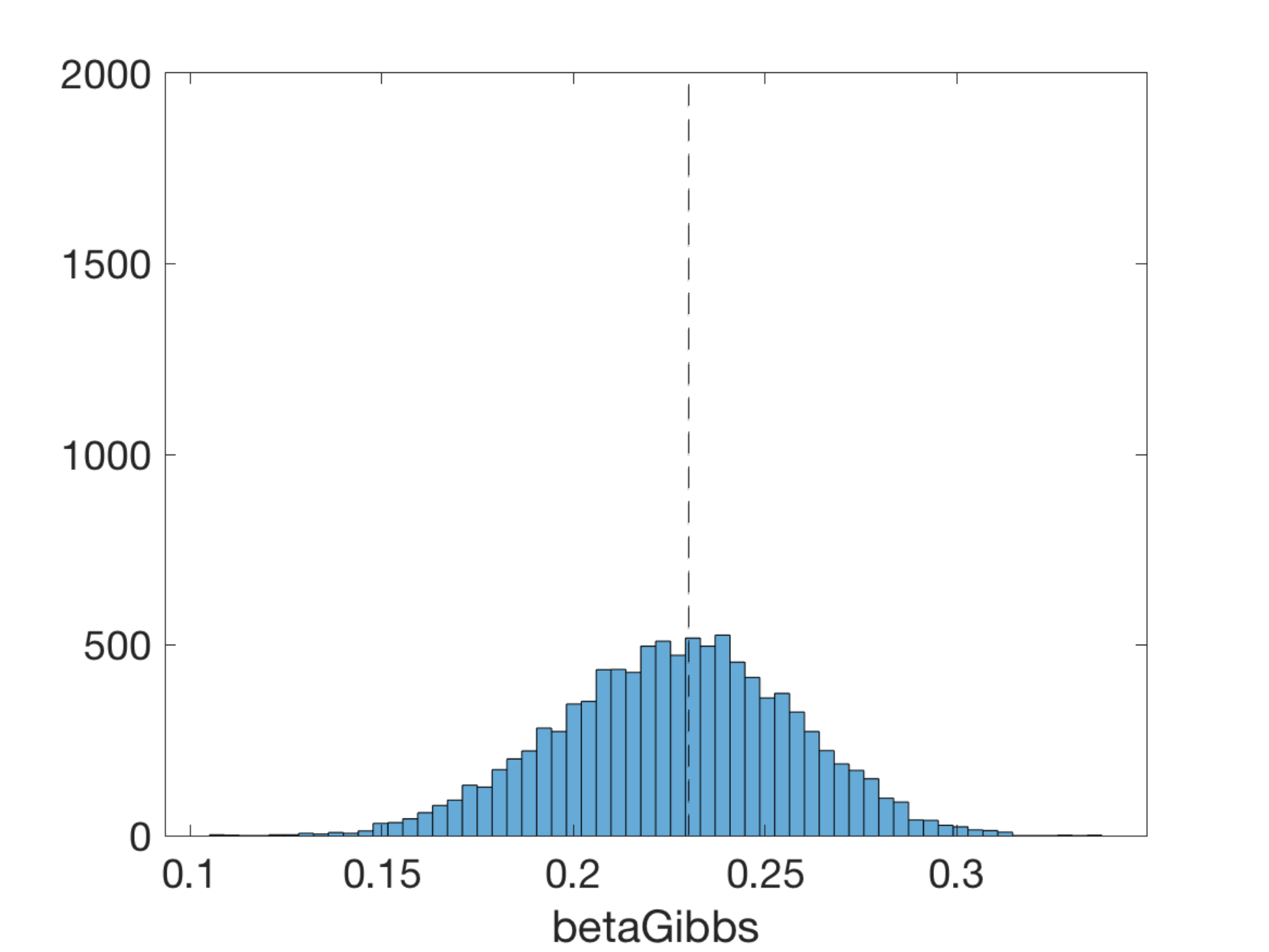} &
\includegraphics[scale=0.40]{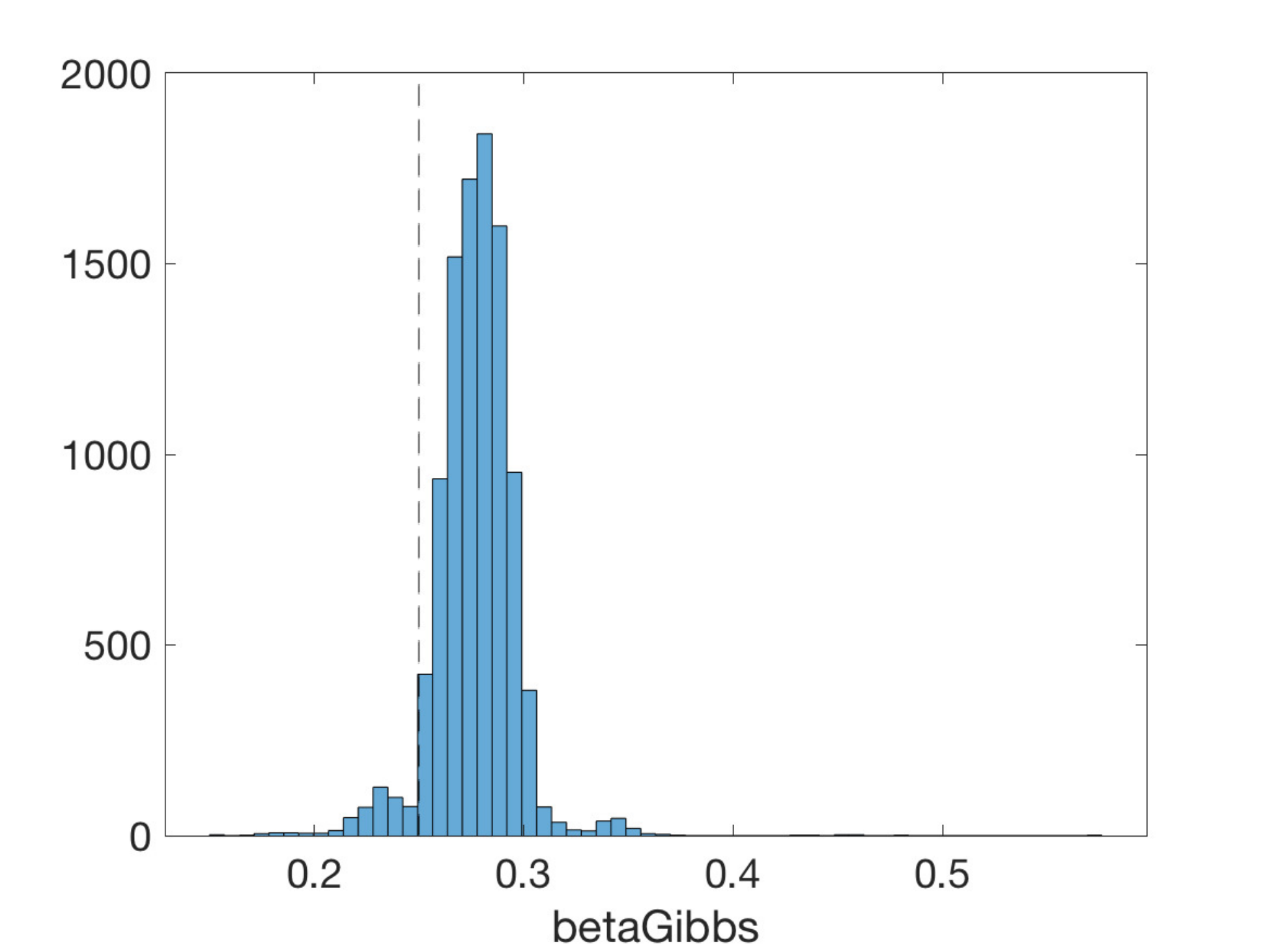}\\
\includegraphics[scale=0.40]{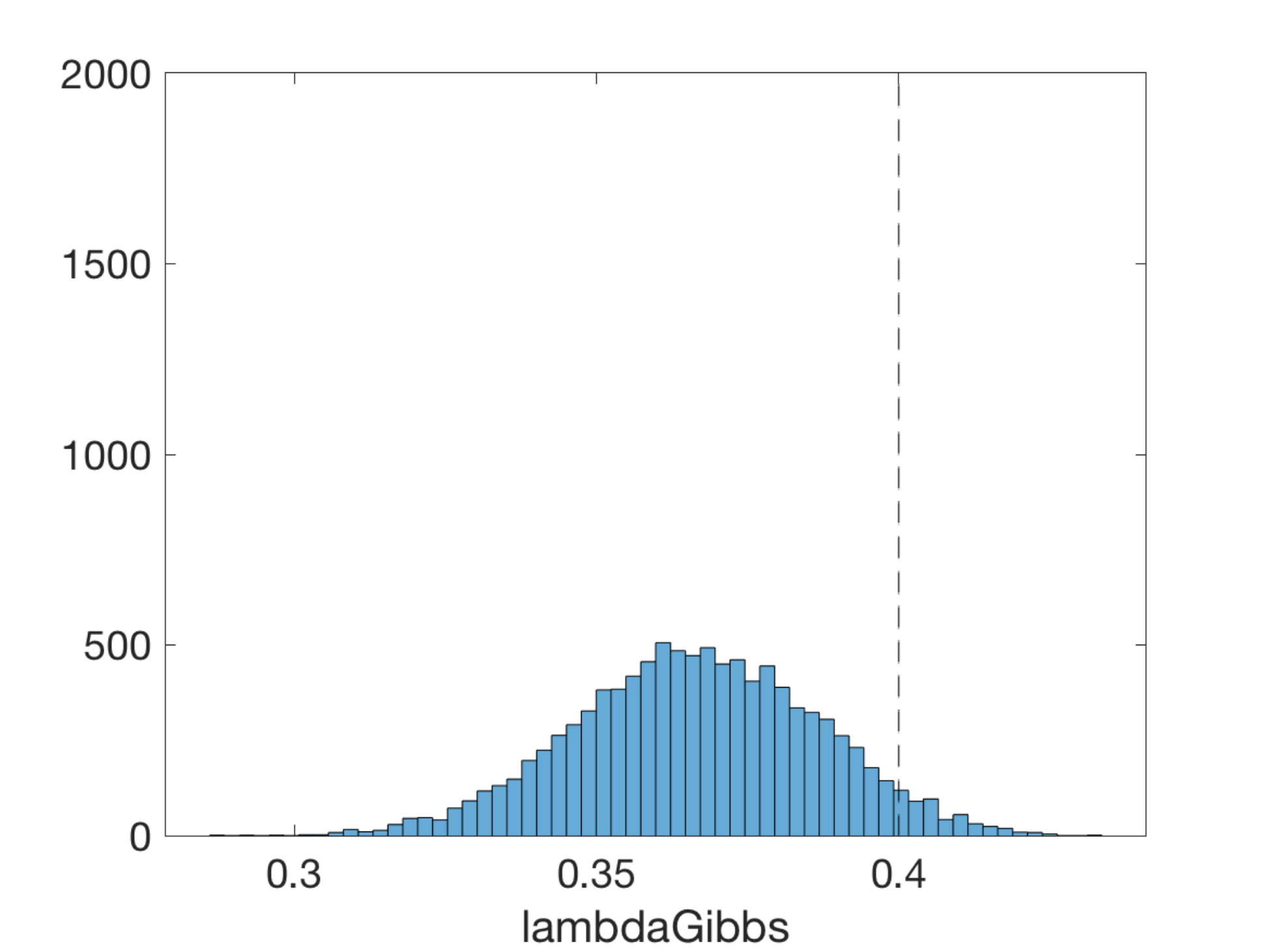} &
\includegraphics[scale=0.40]{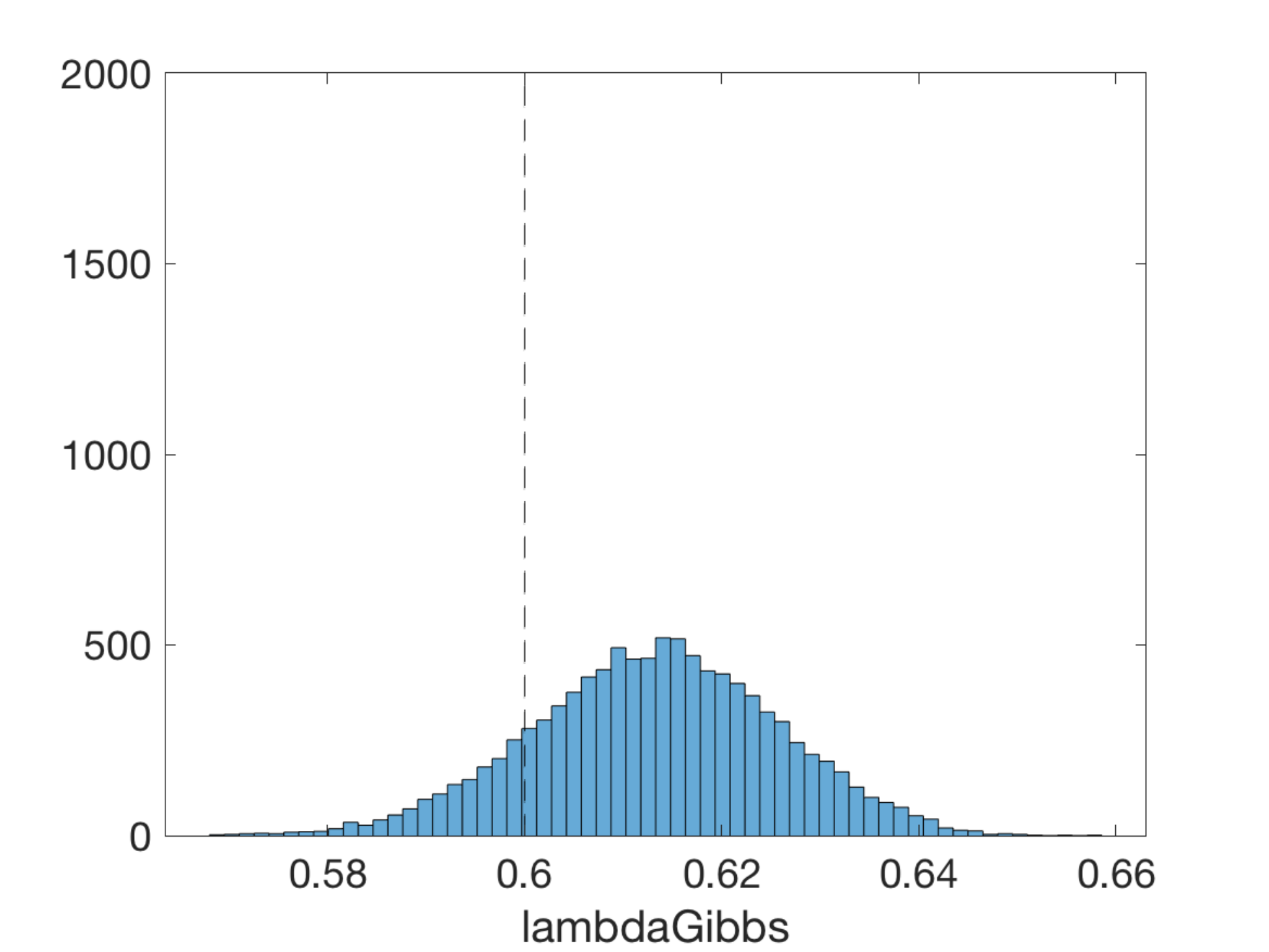} \\
\end{tabular}
\caption{Histograms of the MCMC draws for the parameters in both settings: low persistence and high persistence, after removing the burn-in sample and thinning.}
\label{fig:GibbsHistABT}
\end{figure}

\begin{figure}[H]
\centering
\begin{tabular}{cc}
(a) Low persistence & (b) High persistence\\
\begin{scriptsize} $\alpha$ \end{scriptsize} & \begin{scriptsize} $\alpha$ \end{scriptsize}\\
\includegraphics[scale=0.40]{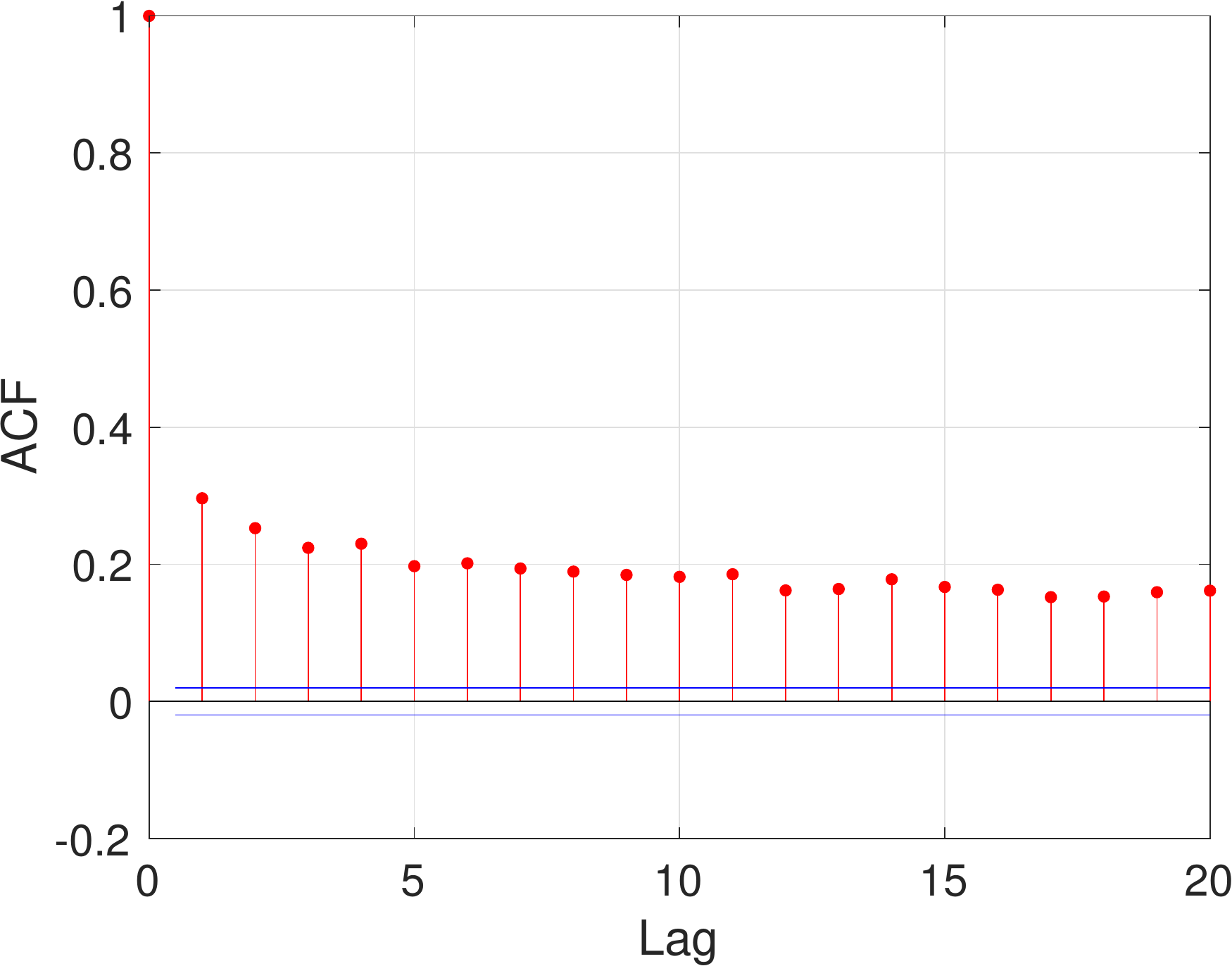} &
\includegraphics[scale=0.40]{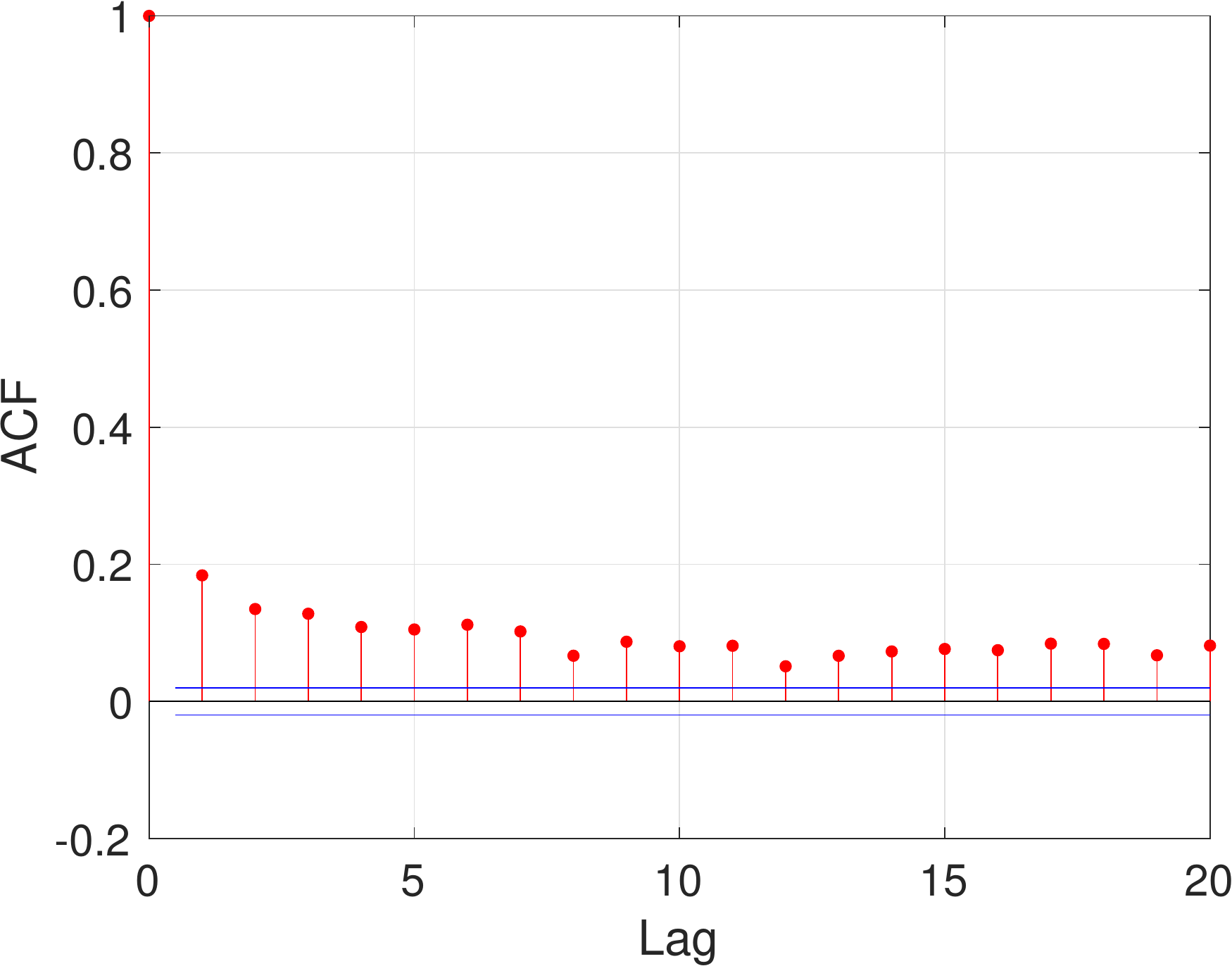}\\
\begin{scriptsize} $\beta$ \end{scriptsize} & \begin{scriptsize} $\beta$ \end{scriptsize}\\
\includegraphics[scale=0.40]{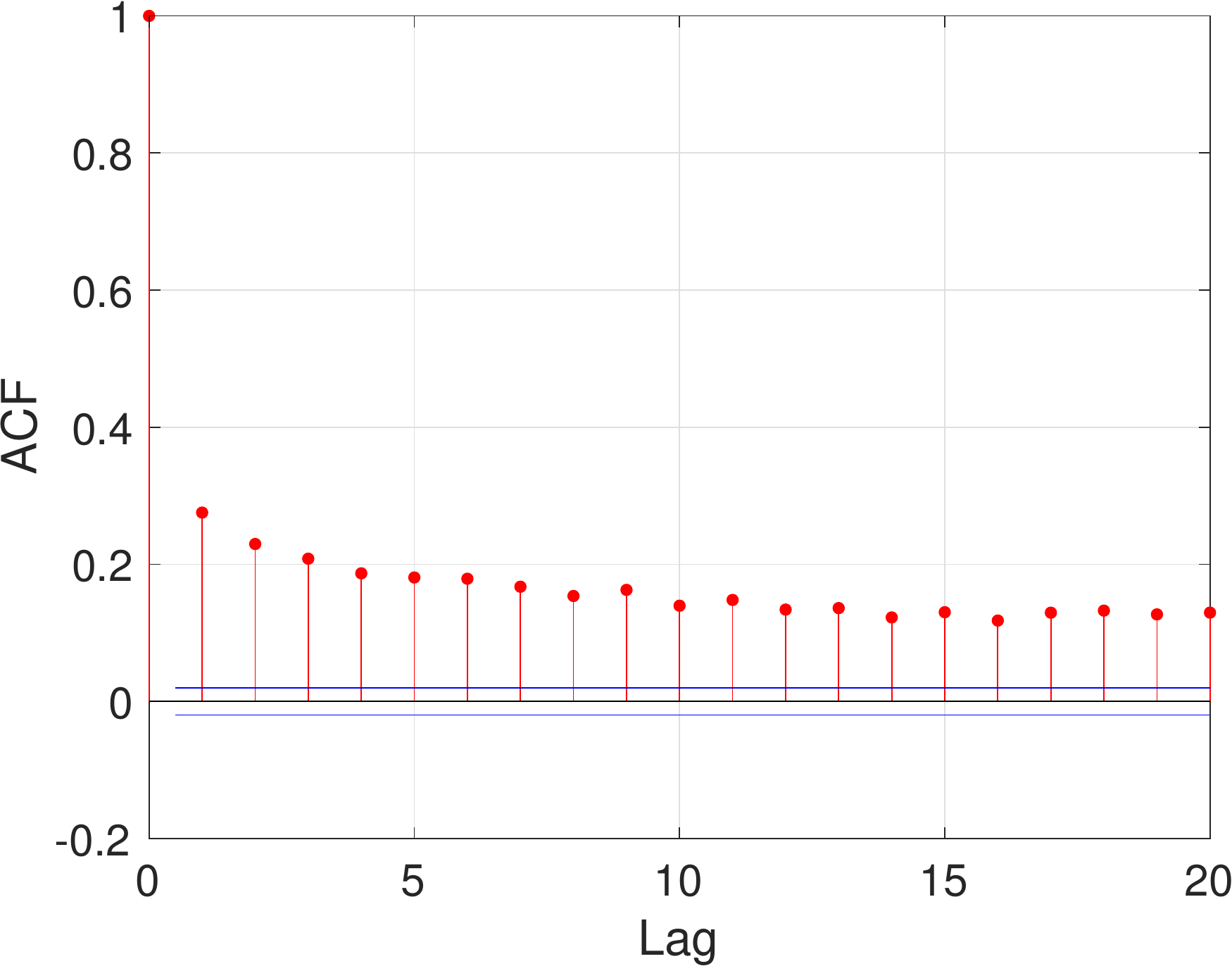} &
\includegraphics[scale=0.40]{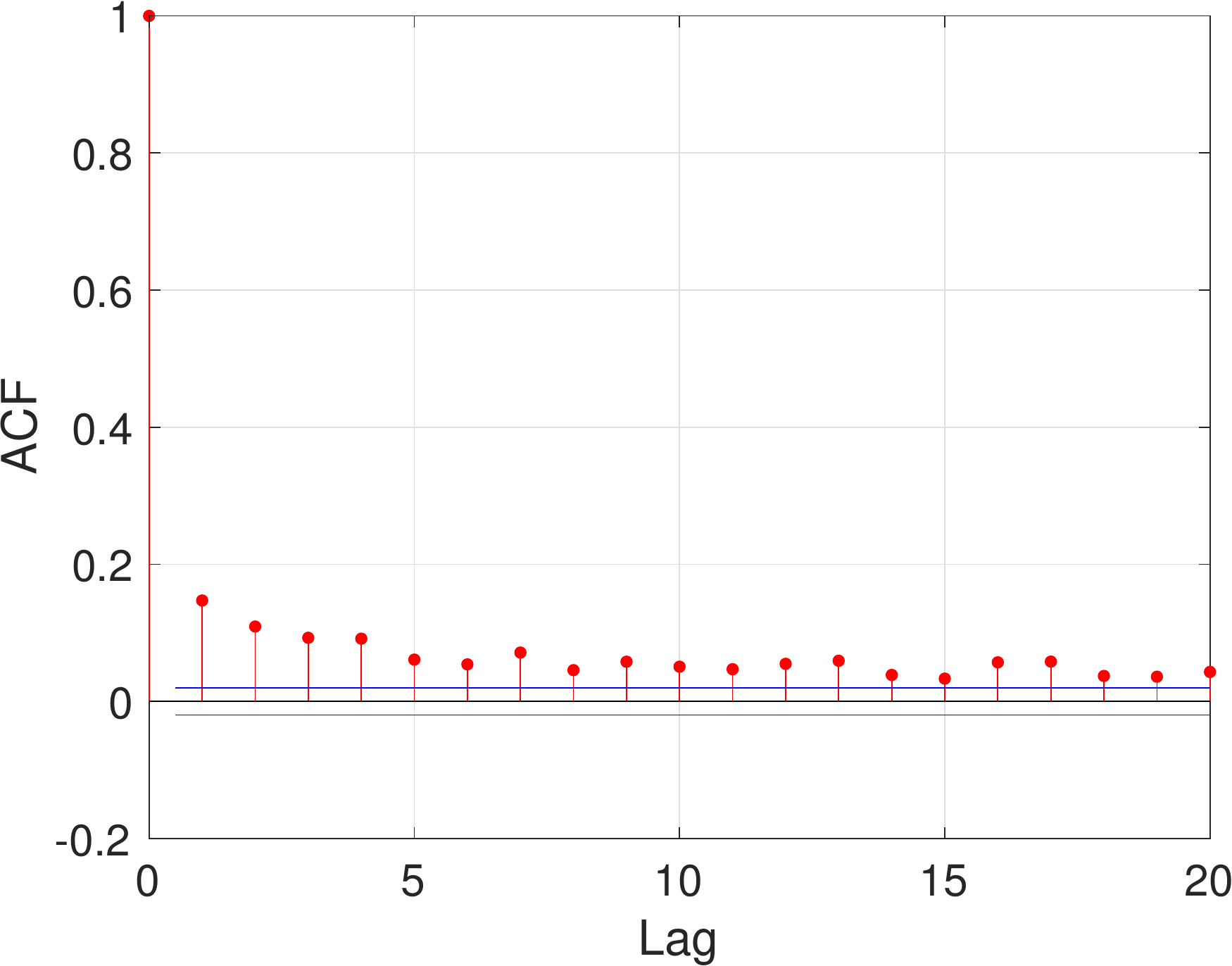}\\
\begin{scriptsize} $\lambda$ \end{scriptsize} & \begin{scriptsize} $\lambda$ \end{scriptsize}\\
\includegraphics[scale=0.40]{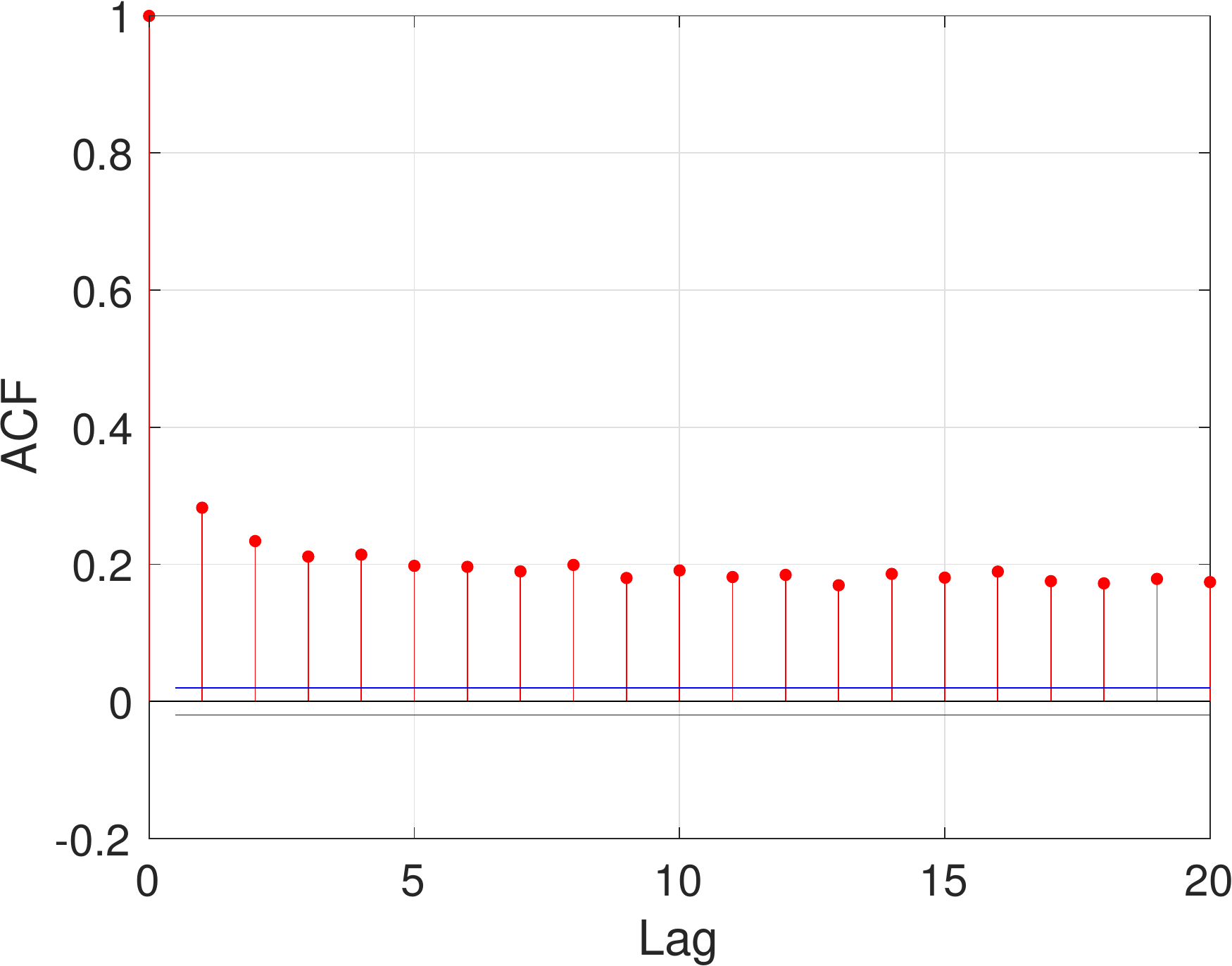} &
\includegraphics[scale=0.40]{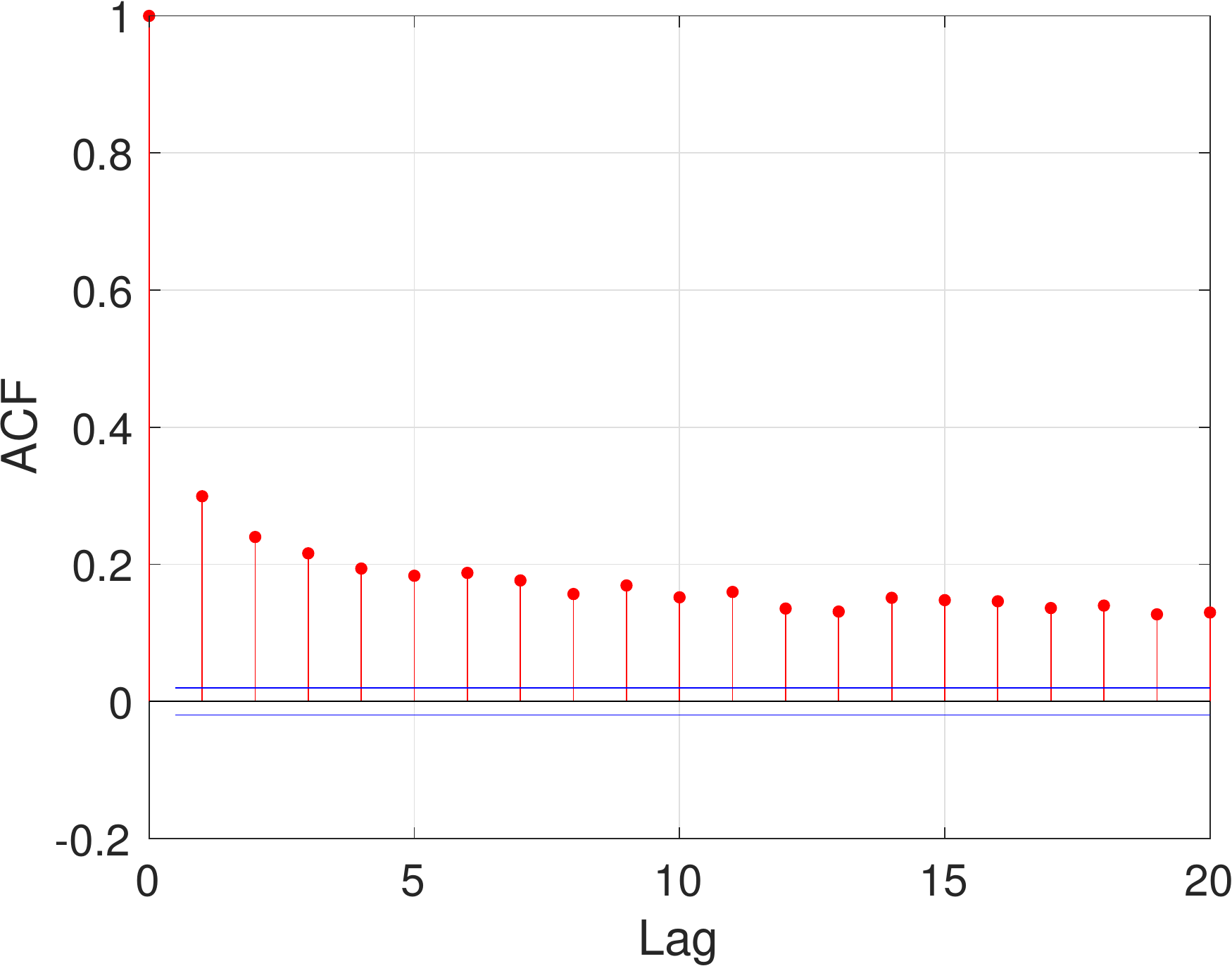} \\
\end{tabular}
\caption{Autocorrelation function for the parameters in both low persistence and high persistence settings, after removing the burn-in sample and thinning.}
\label{fig:GibbsACFABT}
\end{figure}

%

\end{document}